\documentclass[11pt]{article} 
\usepackage{xspace}
\usepackage{float}
\usepackage{marginnote}
\usepackage{misc} 
\usepackage{amsmath}
\usepackage{amssymb}
\usepackage{enumerate}
\usepackage{epstopdf}
\usepackage{times}
\newcommand{\poly}{\mathrm{poly}}
\newcommand{\mnote}{\marginnote}

\usepackage{graphicx}
\usepackage{subfigure}
\usepackage{amsfonts}
\usepackage{amsmath}
\usepackage[svgnames,table]{xcolor}
\usepackage[symbol]{footmisc}
\renewcommand*{\thefootnote}{\arabic{footnote}}
\usepackage{booktabs}

\usepackage{euscript}
\usepackage{ifthen}

\usepackage{algorithm}
\usepackage[noend]{algpseudocode}
\usepackage{eqparbox}

\usepackage{color}
\setlength{\textwidth}{6.7in}
\setlength{\topmargin}{0.00in} 
\setlength{\headheight}{0in}
\setlength{\headsep}{0.0in}
\setlength{\textheight}{8.9in} 
\setlength{\oddsidemargin}{-0.1in} 
\setlength{\evensidemargin}{-0.1in}

\newcommand{\boundary}{\EuScript{K}}

\newcommand{\ty}[1]{\tilde{y}(#1)}
\newcommand{\tys}[1]{\tilde{y}^*(#1)}

\newcommand{\ignore}[1]{}
\newcommand{\sync}{\textsc{Sync}}

\newcommand{\dist}{{\sf c}}

\newcommand{\wt}{{\tt \sf c}}

\newcommand{\Mi}[1]{M^{(#1)}}

\newcommand{\Gr}{G}
\newcommand{\R}{{\cal R}}

\newcommand{\opt}{\textsc{Opt}}

\newcommand{\dir}[1]{\overrightarrow{#1}}

\newcommand{\searchandswitch}{\textsc{SearchAndSwitch}}
\newcommand{\switch}{\textsc{Switch}}

\newcommand{\construct}{\textsc{Construct}}
\newcommand{\hungarian}{\textsc{HungarianSearch}}

\newcommand{\reduce}{\textsc{Reduce}}
\newcommand{\reduceslack}{\textsc{ReduceSlack}}

\newcommand{\fastmatch}{\textsc{FastMatch}}


\newcommand{\numphases}{\sqrt{n}/r^{1/4}}
\newcommand{\unmatchedrem}{\sqrt{n} / r^{1/4}}
\newcommand{\bruteforceconstruct}{O(\sqrt{r}(m_j + n_j \log{n_j}))}

\newcommand{\steptwoweight}{\lceil \numphases \rceil  \sqrt{r}}
\newcommand{\inactiveH}{B^\mathcal{I}_H} 
\newcommand{\activeH}{B^\mathcal{A}_H}

\newcommand{\csetsymdiff}{\mathcal{C}_\oplus}
\newcommand{\psetsymdiff}{\mathcal{P}_\oplus}

\newcommand{\project}[3]{\dir{P}_{#1,#2,#3}}

\newcommand{\yjs}{y^*_j}
\newcommand{\sjs}{s^*}

\usepackage{todonotes}

\newboolean{disable-comments} 
\newboolean{disable-todos}
 
\setboolean{disable-comments}{true}
\setboolean{disable-todos}{true}

\newcommand{\tnote}[1]{\ifthenelse{\boolean{disable-comments}}{}{ \mnote{#1}}}   

\newcommand{\todonote}[1]{\ifthenelse{\boolean{disable-todos}}{}{ \todo[inline]{#1}}}

\newcommand{\CTotalIteration}{O((n^2/r) \log^2{r})}
\newcommand{\CTotalDelta}{O(n\sqrt{r}\log{r}\log{n})}

\newcommand{\CIteration}{O((n/\sqrt{r})\log^2{r})}


\newtheorem{lemma}{Lemma}[section]

\newtheorem{cor}[lemma]{Corollary}

\newtheorem{definition}[lemma]{Definition}

\begin{document}

\begin{titlepage}

\title{A Faster Algorithm for Minimum-Cost Bipartite Matching in Minor-Free Graphs \thanks{This work is supported by the National Science Foundation under grant NSF-CCF 1464276.}}
\author{ Nathaniel Lahn\thanks{Department of Computer Science, Virginia Tech. Email: \texttt{lahnn@vt.edu}}
\and
Sharath Raghvendra\thanks{Department of Computer Science, Virginia Tech. Email:
 \texttt{sharathr@vt.edu}}    
}
\date{}
\maketitle

\maketitle

\begin{abstract}
We give an $\tilde{O}(n^{7/5} \log (nC))$-time\footnote{We use $\tilde{O}(\cdot)$ to suppress logarithmic terms throughout the paper.} algorithm to compute a minimum-cost maximum cardinality matching (optimal matching) in $K_h$-minor free graphs with $h=O(1)$ and integer edge weights having magnitude at most $C$. This improves upon the $\tilde{O}(n^{10/7}\log{C})$ algorithm of Cohen et al. [SODA 2017] and the $O(n^{3/2}\log (nC))$ algorithm of Gabow and Tarjan [SIAM J. Comput. 1989].

For a graph with $m$ edges and $n$ vertices, the well-known Hungarian Algorithm computes a shortest augmenting path in each phase in $O(m)$ time, yielding an optimal matching in $O(mn)$ time. The Hopcroft-Karp [SIAM J. Comput. 1973], and Gabow-Tarjan [SIAM J. Comput. 1989] algorithms compute, in each phase, a maximal set of vertex-disjoint shortest augmenting paths (for appropriately defined costs) in $O(m)$ time. This reduces the number of phases from $n$ to $O(\sqrt{n})$ and the total execution time to $O(m\sqrt{n})$.

In order to obtain our speed-up, we relax the conditions on the augmenting paths and iteratively compute, in each phase, a set of carefully selected augmenting paths that are not restricted to be shortest or vertex-disjoint. As a result, our algorithm computes substantially more augmenting paths in each phase, reducing the number of phases from $O(\sqrt{n})$ to $O(n^{2/5})$. By using small vertex separators, the execution of each phase takes $\tilde{O}(m)$ time on average.
For planar graphs, we combine our algorithm with efficient shortest path data structures to obtain a minimum-cost perfect matching in $\tilde{O}(n^{6/5} \log{(nC)})$ time. This improves upon the recent $\tilde{O}(n^{4/3}\log{(nC)})$ time algorithm by Asathulla et al. [SODA 2018].

\ignore{
An $n$-vertex bipartite graph $G( A \cup B=V,E)$ is said to be $r$-clusterable for some integer $r$ if $G$ can be partitioned into $O(n/\sqrt{r})$ edge-disjoint subgraphs such that (i) each subgraph has $O(\sqrt{r})$ boundary vertices (defined as vertices that participate in more than one subgraph), and (ii) total number of boundary vertices is $O(n/\sqrt{r})$. Our main result is that given a bipartite graph $G$ with non-negative integer edge costs and a $\Theta(n^{0.4})$-clustering for $G$, we can compute a minimum cost maximum cardinality matching in $G$ in $\tilde{O}(mn^{0.4}\log nC)$ time where $C$ is the largest edge cost and $m$ denotes the number of edges. In contrast, for arbitrary bipartite graphs, the best algorithm (due to Gabow and Tarjan '89) takes $O(mn^{0.5}\log nC)$ time.
\smallskip

An important class of graphs that admits an efficiently computable $r$-clustering for all $r$ is graphs that exclude a fixed minor. For such graphs, we obtain an $\tilde{O}(n^{1.4}\log nC)$ time algorithm for minimum cost perfect matching, improving the previous best run-time of $\tilde{O}(n^{1.5}\log nC)$. 

  Our algorithm uses a compression based scaling paradigm (similar to the one introduced in Asatullah et al. SODA\ 2018) and it computes a compressed residual graph $H(V_{H}, E_H)$ containing  $O(n/\sqrt{r})$  vertices and $O(n)$ edges in $O(m\sqrt{r})$ time with no more than  $n/\sqrt{r}$  vertices remaining unmatched. $V_H$ contains all the boundary and free vertices  and   any pair of them  is connected by an edge if they are connected inside a cluster.  

In our novel contribution, within a scale, we design a phase similar in style to GT-Algorithm  but which runs on the the compressed graph $H$.   We  accelerate the convergence by aggressively raising dual values for free vertices in our compressed graph and we converge quickly to the optimal matching in $O(\sqrt{V_H})=O(\sqrt{\frac{n}{\sqrt{r}}})$ phases. This is $O(\sqrt{V})$ in case of GT-Algorithm).  Each phase (on an average) takes $O(n)$ time and so, the second step of our algorithm takes a total of $O(n\sqrt{n/\sqrt{r}})$ time. 

Any change to the matching requires us to update the compressed representation. We achieve this by simply recomputing the edges of $H$ that belong to the affected cluster.  Using an argument similar to Asathulla et al. (SODA 2018), we can bound the number of affected clusters through the algorithm by $\tilde{O}((n/\sqrt{r}) \log n)$ and the cost to update them by $\tilde O(mr)$ time in total. Setting $r=n^{0.4}$, we ge the desired running time.     
}

\end{abstract}
\end{titlepage}
\section{Introduction}

Consider a bipartite graph $\Gr(A \cup B, E)$ with $|A|=|B|=n$. For the edge set $E\subseteq A\times B$, let every edge $(a,b)\in E$ have a \emph{cost} specified by $\dist(a,b)$. A \emph{matching} $M\subseteq E$ is a set of vertex-disjoint edges whose cost $\wt(M)$ is given by $\sum_{(a,b)\in M}\dist(a,b)$. $M$ is a \emph{maximum cardinality} matching if  $M$ is the largest possible set of vertex-disjoint edges. A \emph{minimum-cost maximum cardinality matching} is a maximum cardinality matching with the smallest cost.  \ignore{In this paper, we design an $\tilde{O}(n^{7/5})$ time algorithm to compute a minimum-cost maximum cardinality matching for the case where $\Gr$ is a $K_h$-minor free graph.}

In this paper, we present an efficient algorithm for any graph that admits an \emph{$r$-clustering}. A \emph{clustering} of a graph $G$ is a partitioning of $G$ into edge disjoint pieces. A vertex is a \emph{boundary} vertex if it participates in more than one piece of this clustering. For a parameter $r > 0$, an $r$-clustering of a graph is a clustering of $G$ into edge-disjoint pieces $\{\R_1,\ldots,\R_k\}$ such that $k = \tilde{O}(n/\sqrt{r})$, every piece $\R_j$ has at most $O(r)$ vertices, and each piece has $\tilde{O}(\sqrt{r})$ boundary vertices. Furthermore, the total number of boundary vertices, counted with multiplicity, is $\tilde{O}(n/\sqrt{r})$. 

For any directed graph $G$ with an $r$-clustering, one can \emph{compress} $G$ to a graph $H$ as follows. The vertex set of $H$ is the set of boundary vertices and we add an edge in $H$ if the two boundary vertices are connected by a directed path inside one of the pieces. It is easy to see that $H$ has $\tilde{O}(n/\sqrt{r})$ vertices and $\tilde{O}(n)$ edges. 

\ignore{We would like to note that for any graph with an efficiently computable $O(\sqrt{n})$ separator on all subgraphs, an $r$-clustering can be efficiently computed by recursively applying the weighted separator theorems of Djidjev and Gilbert \cite{weighted_separators}. Therefore, both planar graphs and $K_h$-minor free graphs admit an efficiently computable $r$-clustering. }

In this paper, we design an $\tilde{O}(mr + m \unmatchedrem)$ time algorithm to compute minimum-cost matching  for bipartite graphs that admit an $r$-clustering. Setting $r=n^{2/5}$ minimizes the running time to $\tilde{O}(mn^{2/5})$.  For several natural classes of graphs such as planar graphs and $K_h$-minor free graphs, there are fast algorithms to compute an $r$-clustering for any given value of $r$. For such graphs, we obtain faster minimum-cost matching algorithms.
\ignore{
\paragraph{Unweighted Matching} In arbitrary bipartite graphs with $n$ vertices and $m$ edges, Ford and Fulkerson's algorithm~\cite{ford_fulkerson} 
iteratively computes, in each phase, an augmenting path in $O(m)$ time, leading to an
 execution time of $O(mn)$. Hopcroft and Karp's algorithm~\cite{hk_sicomp73}, in each phase, computes a maximal set of vertex-disjoint shortest augmenting paths in $O(m)$ time. This reduces the number of phases from $n$ to $O(\sqrt{n})$ and the total execution time to $O(m\sqrt{n})$. More recently, two new methods have been introduced for computing maximum matchings. Using the electric flow-based paradigm a maximum cardinality matching can be computed in $\tilde{O}(m^{10/7})$\cite{madry2013navigating} time. Using the matrix multiplication approach, a maximum cardinality matching can be computed in $\tilde{O}(n^\omega)$. Here, $\omega$ is the matrix multiplication complexity exponent, which is $O()$.
}

\paragraph{Previous work.}  In an arbitrary bipartite graph with $n$ vertices and $m$ edges, Ford and Fulkerson's algorithm~\cite{ford_fulkerson} 
iteratively computes, in each phase, an augmenting path in $O(m)$ time, leading to a maximum cardinality matching in $O(mn)$ time. Hopcroft and Karp's algorithm (HK-Algorithm)~\cite{hk_sicomp73}, in each phase, computes a maximal set of vertex-disjoint shortest augmenting paths in $O(m)$ time. This reduces the number of phases from $n$ to $O(\sqrt{n})$ and the total execution time to $O(m\sqrt{n})$. For planar graphs, multiple-source multiple-sink max-flow can be computed in $O(n\log^3{n})$ time~\cite{multiple_planar_maxflow}. This also gives an $O(n\log^3{n})$ algorithm for maximum cardinality bipartite matching on planar graphs. Such matching and flow algorithms in planar graphs are based on a reduction to computing shortest paths in the planar dual graph. However, it is unclear how such a reduction extends to minimum cost matching or $K_h$-minor free graphs.

\ignore{ It is also possible to compute maximum matchings in arbitrary graphs (not necessarily bipartite) using Gaussian elimination~\cite{fast_matrix_matching}. This randomized algorithm runs in $O(n^{\omega})$ time; here $\omega$ is the best known exponent for the matrix multiplication problem. There is also an $\tilde{O}(m^{10/7})$ time algorithm to compute maximum cardinality matching using the electric flow method~\cite{madry2013navigating}.    
}
\ignore{\paragraph{Other approaches for matching.}
It is also possible to compute maximum matchings in arbitrary graphs (not necessarily bipartite) using Gaussian elimination~\cite{fast_matrix_matching}. This randomized algorithm runs in $O(n^{\omega})$ time; here $\omega$ is the best known exponent for the matrix multiplication problem. There is also an $\tilde{O}(m^{10/7})$ time algorithm to compute maximum cardinality matching using the electric flow method~\cite{madry2013navigating}.

Sankowski extended the Gaussian elimination based approach to compute weighted matchings in arbitrary graphs in $O(n^{\omega}C)$ time~\cite{fast_matrix_weighted}; note that the running time scales linearly with the largest edge-cost.
Using a randomized guassian elimination based approach, we can compute minimum-cost perfect  matching of any graph in $O(n^{\omega}C)$ time. Recently,~\cite{cohen2017negative} gave a $\tilde{O}(m^{10/7} \log C)$ result for unit capacity minimum-cost maximum flow, that can also be used to solve the minimum-cost perfect bipartite matching problem in the same time complexity. }

\ignore{Bipartite matching is a special case of computing maximum flow in graphs with multiple sources and sinks. For arbitrary graphs, one can easily reduce a multiple-source multiple-sink maximum flow problem to a single-source single-sink case by introducing  an artificial source and an artificial sink vertex and connecting them to all sources and sinks. When applied to planar graphs, however, this reduction does not preserve planarity, and therefore the algorithms for the multiple-source multiple-sink problem in planar graphs are distinctly different from the single-source single-sink problem. 

\ignore{In unweighted planar graphs,  there is extensive work on computing flows in planar graphs with multiple sources and sinks. For the case where it is known of how much commodity is produced and consumed at each source and sink, there is an algorithm to compute a valid flow which runs in $O(n \log^2 n/\log\log n)$ time~\cite{multiple_planar_maxflow_known}.  After considerable effort, it was shown that maximum flow with multiple sources and multiple sinks in planar graphs can be solved in $O(n \log^3 n)$ time~\cite{multiple_planar_maxflow}. Computing perfect matching  in bipartite graphs is a special case where each source produces and each sink consumes exactly one unit of commodity.  Therefore, this algorithm also applies to computing perfect matching in planar bipartite graphs.}  Existing maximum flow matching algorithms designed for planar graphs rely upon  a reduction of computing maximum flow in planar graphs to computing shortest paths in the dual planar graph. Using efficient shortest path data structures on dual planar graphs, these algorithms compute the maximum flow. Extending these algorithms to non-planar graphs such as $K_h$-minor free graphs is challenging as the reduction of maximum flow to shortest path in dual graph relies on planarity of the graph and does not apply here. Furthermore, the efficient shortest path data structures also rely heavily on planarity and it is not clear how these structures can be extended to an arbitrary $K_h$-minor free graph. 

For planar graphs, there is no obvious way to extend the approach of reducing maximum flow to finding shortest paths in the dual graph to the weighted setting. As a consequence, the results for maximum cardinality bipartite matching in planar graphs do not extend to computing minimum-cost matching.
}

In weighted graphs with $n$ vertices and $m$ edges, the well-known Hungarian method computes a minimum-cost maximum cardinality matching in  $O(mn)$ time~\cite{hungarian_56} by iteratively computing a shortest augmenting path.  Gabow and Tarjan designed a cost-scaling algorithm (GT-Algorithm) to compute a minimum-cost perfect matching in $O(m\sqrt{n}\log (nC))$, where $C$ is the largest cost on any edge of the graph~\cite{gt_sjc89}. Their method, like the Hopcroft-Karp algorithm, computes a maximal set of vertex disjoint shortest (for an appropriately defined augmenting path cost) augmenting paths in each phase. To assist in computing these paths, they introduce an error of $+1$ to the cost of each matching edge, giving a total error of $O(n)$ for any matching. Using the scaling approach, they are able to compute an optimal matching despite these errors. Furthermore, they are able to show that the total error in all augmenting paths computed during each scale is $O(n\log{n})$, which in turn bounds the total number of edges on all the augmenting paths by $O(n\log{n})$.

Recently, Asathulla~\etal~\cite{soda-18} gave an $\tilde{O}(n^{4/3}\log{(nC)})$ scaling algorithm (AKLR-Algorithm) for minimum-cost perfect matching in planar graphs. We have outlined their approach in Algorithm \ref{alg:planar}. Initially, their algorithm executes $O(\sqrt{r})$ iterations of the GT-Algorithm in order to match all but $O(n/\sqrt{r})$ vertices (line \ref{alg:planar_line:preproc}). Next, they use the $r$-clustering to construct the compressed residual graph $H$ (line \ref{alg:planar_line:construct}) consisting of $O(n/\sqrt{r})$ vertices and $O(n)$ edges. Each edge of the compressed residual graph is given a weight equal to that of the shortest path between the two vertices in the corresponding piece.

\begin{figure}
\label{fig:runtimes}
\begin{tabular}{*4l}\toprule
  & \textbf{Arbitrary Graphs} & \textbf{$K_h$-Minor Free Graphs} & \textbf{Planar Graphs}\\\midrule
 \textbf{Strongly Polynomial}	&$O(mn)$~~\cite{hungarian_56}	 					&      $\tilde{O}(n^{3/2})$~~\cite{lt}        &       $\tilde{O}(n^{3/2})$~~\cite{lt}       	\\
 \textbf{Cost Scaling Algorithms}			&   $\tilde{O}(m\sqrt{n}\log{(nC)})$~~\cite{gt_sjc89}&      $O(n^{3/2}\log{(nC)})$~~\cite{gt_sjc89}      	&       $\tilde{O}(n^{4/3}\log{(nC)})$~~\cite{soda-18}\\
 \textbf{Matrix Multiplication} &   $\tilde{O}(n^{\omega}C)$~~\cite{fast_matrix_weighted} 	 	&      $\tilde{O}(n^{\omega/2}C)$~~\cite{kawarabayashi2010separator,fast_matrix_matching_planar,fast_matrix_weighted,wulff2011separator}  	&       $\tilde{O}(n^{\omega/2}C)$~~\cite{fast_matrix_matching_planar,fast_matrix_weighted}      \\
 \textbf{Electric Flow}			&   $\tilde{O}(m^{10/7}\log{C})$~~\cite{cohen2017negative}       	&      $\tilde{O}(n^{10/7}\log{C})$~~\cite{cohen2017negative}     	&       $\tilde{O}(n^{10/7}\log{C})$~~\cite{cohen2017negative}   		\\
 \textbf{Our Results}			&   \multicolumn{1}{c}{--}										&      $\tilde{O}(n^{7/5}\log{(nC)})$      	&       $\tilde{O}(n^{6/5}\log{(nC)})$			\\
\bottomrule
\end{tabular}
\caption{Comparison of execution times of various matching algorithms. Here $C$ is the largest cost edge; $\omega<2.373$ is the exponent of matrix multiplication complexity. Our algorithm is the fastest for $K_h$-minor free graphs (resp. planar graphs) when $C= 2^{O(n^{0.1})}$ (resp. $C=2^{O(n^{0.3})}$) and $C = \tilde{\Omega}(n^{0.22})$ (resp. $C = \tilde{\Omega}(n^{0.02})$).}
\end{figure}

The AKLR-Algorithm iteratively finds a single shortest augmenting path each iteration (line \ref{alg:planar_line:search}) and augments the matching along this path (line \ref{alg:planar_line:augment}). Their use of planar shortest path data structures allows each augmenting path to be found quickly, in time proportional to the number of vertices of the compressed graph, i.e., $\tilde{O}(n/\sqrt{r})$. Augmenting along a path forces portions of the compressed residual graph  to be updated. In order to limit the number of these updates, they introduce an error of $\sqrt{r}$ on each boundary vertex of the compressed graph. Like the GT-Algorithm, they are able to show that the total error of all augmenting paths computed by the AKLR-Algorithm is $O(n\log n)$ and so, the augmenting paths can use at most $O((n/\sqrt{r})\log{n})$ boundary vertices, which in turn bounds the total number of updates. Furthermore, only $O(n/\sqrt{r})$ vertices have an error of $\sqrt{r}$ and so, the total error of the optimal matching is still $O(n)$, and is removed using the scaling approach. 

Two other approaches for weighted matching problems include the randomized matrix multiplication approach ($\tilde{O}(n^{\omega/2}C)$ time) ~\cite{fast_matrix_weighted} and the electric-flow based approach ($\tilde{O}(m^{10/7}\log{C}$) time)~\cite{cohen2017negative}. Here, $\omega < 2.373$ is the exponent of matrix multiplication complexity. For a comparison of various approaches' running times, see Figure \ref{fig:runtimes}.
\ignore{
Using a standard reduction, they showed that their algorithm can also be used to compute minimum-cost maximum cardinality matching. Both the Hungarian Algorithm and the Gabow-Tarjan Algorithm iteratively compute minimum cost augmenting path and augment the matching along the path.  In order to obtain a speed-up, Gabow and Tarjan  observed that for graphs where the edges have positive integers as edge costs, and the optimal matching has a cost of $O(n)$, one can compute several minimum cost augmenting paths in the same phase. To assist in finding these paths, they introduced an additive error of $1$ on every edge that is not in the matching and iteratively compute the minimum-cost paths consisting only of admissible edges (edges with zero slack). The error of $+1$ on every non-matching edge results in longer paths having a larger cost. As a consequence, their algorithm would pick many short (both in cost and length) augmenting paths in each phase. This is similar to how the Hopcroft-Karp algorithm behaves in the unweighted setting. As a result, they not only obtain a running time that is similar to the running time of Hopcroft and Karp's algorithm but also guarantee that the length of the augmenting paths is $O(n \log n)$. The additive error of $1$ on each non matching edge results in a total error of $n$ in the overall matching cost. This error can be removed by the use of the scaling paradigm.

Lipton and Tarjan~\cite{lt} used planar separators to design an algorithm to compute minimum-cost maximum cardinality matching in $O(n^{3/2}\log n)$ time.   One can also extend this algorithm to any graph with a separator of $O(\sqrt{n})$, including $K_h$-minor free graphs, provided this separator can be computed quickly.   One can also use Gabow and Tarjan's scaling algorithm to compute minimum-cost matching in $O(n^{3/2} \log nC)$ time. By combining the scaling paradigm with an $r$-clustering based compression of residual graphs, Asathulla~\etal \ designed a new algorithm to compute minimum-cost perfect matching in planar bipartite graphs in $\tilde{O}(n^{4/3}\log nC)$ time. In this paper, we combine their result with further improvements to obtain a $\tilde{O}(n^{6/5}\log{nC})$ algorithm.

There are asymptotically faster algorithms for maximum cardinality matching on $K_h$-minor free graphs. Using the Gaussian elimination based method, Wulff-Nilsen~\cite{wulff2011separator} gave a roughly $O(\text{poly}(h)n^{1.236})$ algorithm for computing maximum matching on $K_h$-minor free graphs using; see also~\cite{yuster2007maximum}. A slightly better bound of $O(n^{\omega/2})$ was given by Kawarabayashi and Reed~\cite{kawarabayashi2010separator}, but at the expense of a very large dependency on $h$ in the running time.
}

\ignore{For planar graphs, there is no obvious way to extend the approach of reducing maximum flow to finding shortest paths in the dual graph to the weighted setting. As a consequence, the results for maximum cardinality bipartite matching in planar graphs do not extend to computing minimum-cost matching.
Lipton and Tarjan~\cite{lt} used planar separators to design an algorithm to compute minimum-cost maximum cardinality matching in $O(n^{3/2}\log n)$ time.   One can also extend this algorithm to any graph with a separator of $O(\sqrt{n})$, including $K_h$-minor free graphs, provided this separator can be computed quickly.   One can also use Gabow and Tarjan's scaling algorithm to compute minimum-cost matching in $O(n^{3/2} \log nC)$ time. By combining the scaling paradigm with an $r$-clustering based compression of residual graphs, Asathulla~\etal \ designed a new algorithm to compute minimum-cost perfect matching in planar bipartite graphs in $\tilde{O}(n^{4/3}\log nC)$ time. In this paper, we combine their result with further improvements to obtain a $\tilde{O}(n^{6/5}\log{nC})$ algorithm. We summarize their result next.
}
\renewcommand*{\thefootnote}{\fnsymbol{footnote}}
\begin{algorithm}
\caption{A scale of the AKLR-Algorithm with complexities.}
\label{alg:planar}
\begin{algorithmic}[1]
\State $M \gets \emptyset$\Comment{\bf Time}{}
\State \label{alg:planar_line:preproc}\textbf{Preprocessing Step}: Run $\sqrt{r}$ iterations of GT-Algorithm $\cdot\cdot\cdot\cdot\cdot\cdot\cdot$ \Comment{$\tilde{O}(n\sqrt{r})$}{}
\State\label{alg:planar_line:construct}Compute the compressed graph $H$ $\cdot\cdot\cdot\cdot\cdot\cdot\cdot\cdot\cdot\cdot\cdot\cdot\cdot\cdot\cdot\cdot\cdot\cdot\cdot\cdot\cdot\cdot\cdot\cdot\cdot\cdot\cdot\cdot\cdot$ \Comment{$\tilde{O}(n)$}{}
\For{$i$ from 1 to $O(n/\sqrt{r})$}
\State \label{alg:planar_line:search}$P \gets $ \hungarian($H$) $\cdot\cdot\cdot\cdot\cdot\cdot\cdot\cdot\cdot\cdot\cdot\cdot\cdot\cdot\cdot\cdot\cdot\cdot\cdot\cdot\cdot\cdot\cdot\cdot\cdot\cdot\cdot\cdot\cdot\cdot\cdot$ \Comment{$\tilde{O}(n/\sqrt{r})$ per iteration}{}
\State \label{alg:planar_line:augment} Augment $M$ along $P$ and update $H$ $\cdot\cdot\cdot\cdot\cdot\cdot\cdot\cdot\cdot\cdot\cdot\cdot\cdot\cdot\cdot\cdot\cdot\cdot\cdot\cdot\cdot\cdot\cdot\cdot\cdot$ \Comment{$\tilde{O}(r)\footnotemark[1]$ per iteration }{}
\EndFor\label{planarPhaseLoop}
\State \textbf{return} $M$
\end{algorithmic}
\end{algorithm}
\footnotetext[1]{The given time is averaged over all iterations.}
\renewcommand*{\thefootnote}{\arabic{footnote}}
\ignore{
In weighted graphs with $n$ vertices and $m$ edges, the well-known Hungarian method computes a minimum-cost maximum cardinality matching in  $O(mn)$ time~\cite{hungarian_56}.  Gabow and Tarjan designed a cost-scaling method to compute a minimum-cost perfect matching in $O(m\sqrt{n}\log nC)$, where $C$ is the largest cost on any edge of the graph~\cite{gt_sjc89}. Using a standard reduction, they showed that their algorithm can also be used to compute minimum-cost maximum cardinality matching. Both the Hungarian Algorithm and the Gabow-Tarjan Algorithm iteratively compute minimum cost augmenting path and augment the matching along the path.  In order to obtain a speed-up, Gabow and Tarjan  observed that for graphs where the edges have positive integers as edge costs, and the optimal matching has a cost of $O(n)$, one can compute several minimum cost augmenting paths in the same phase. To assist in finding these paths, they introduced an additive error of $1$ on every edge that is not in the matching and iteratively compute the minimum-cost paths consisting only of admissible edges (edges with zero slack). The error of $+1$ on every non-matching edge results in longer paths having a larger cost. As a consequence, their algorithm would pick many short (both in cost and length) augmenting paths in each phase. This is similar to how the Hopcroft-Karp algorithm behaves in the unweighted setting. As a result, they not only obtain a running time that is similar to the running time of Hopcroft and Karp's algorithm but also guarantee that the length of the augmenting paths is $O(n \log n)$. The additive error of $1$ on each non matching edge results in a total error of $n$ in the overall matching cost. This error can be removed by the use of the scaling paradigm.

\paragraph{Planar graph matching algorithm.} The algorithm of Asathulla~\etal\   computes a minimum cost perfect matching in planar graphs using the scaling paradigm. Each scale consists of two steps. In the pre-processing step, this algorithm executes $O(n^{1/3})$ iterations of Gabow-Tarjan after which there are $O(n^{2/3})$ free vertices remaining. The algorithm creates an $r$-division (a notion stronger than $r$-clustering defined for planar graphs) for $r=n^{2/3}$ with $O(n/r)=O(n^{1/3})$ pieces and $O(n/\sqrt{r})=O(n^{2/3})$ boundary vertices. Using this $r$-division, the algorithm builds a compressed residual graph using only the $O(n^{2/3})$ boundary and free vertices as the vertex set. An edge is added between every pair of these vertices if they are connected within some piece by a path.  Since there are $O(\sqrt{r})=O(n^{1/3})$ boundary vertices per piece, there are at most $O(n^{2/3})$ such edges per piece and $O(n)$ edges in total. The edges that are added within any piece form a Dense Distance Graph (DDG) and can be efficiently stored in a data structure when the underlying graph is planar. This data structure then allows computation of the shortest augmenting path in $\tilde{O}(n^{2/3})$ time, ~\cite{fr_dijkstra_06, kaplan_monge_12} which is proportional to the number of vertices of the compressed residual graph. Therefore, the algorithm computes the remaining $O(n^{2/3})$ augmenting paths in total time $\tilde{O}(n^{4/3})$. After augmenting the matching, the residual graph changes and so does its compressed representation. Therefore, after every augmentation, the algorithm reconstructs the Dense Distance Graph of every piece that the augmenting path passes through. Each such update takes $O(r \log{r}) = \tilde{O}(n^{2/3}) $ time. In order to reduce the number of pieces visited by the augmenting paths, the algorithm sets a high error of $+\sqrt{r}$   (instead of $+1$) on every edge that borders a boundary vertex. Since there are at most $O(n/\sqrt{r})$ boundary vertices, the total error introduced will still be  $O(n)$. However, due to the high cost in using a boundary vertex, we can show that a minimum net-cost path will use only $O(\log n)$ boundary vertices on average. So, the total update time over $O(n/\sqrt{r})$ augmenting paths is $O(n/\sqrt{r}\log n)\times O(r \log{r}) = O(n\sqrt{r}\log n \log{r}) = O(n^{4/3}\log^2{n})$. Unlike previous approaches for planar graphs, this approach does not depend on the reduction of computing maximum flow to computing the shortest path in the dual planar graph. However, the speed-up achieved in this algorithm is  on account of efficient shortest path data structures. Such  data structures are not known for directed $K_h$-minor free graphs. Hence, this approach  does not extend directly to $K_h$-minor free graphs.
}
\ignore{
\paragraph{$r$-clustering in graphs.} In this paper, we consider graphs that admit an \emph{$r$-clustering} which we define next. A clustering of a graph $G$ is a partitioning of $G$ into edge disjoint pieces. Given any clustering, we define a vertex as a boundary vertex if it participates in more than one piece of the clustering. For a parameter $r > 0$, an $r$-clustering of a graph is a partitioning of $G$ into edge-disjoint pieces $\{\R_1,\ldots,\R_k\}$ such that $k = \tilde{O}(n/\sqrt{r})$, every piece $\R_j$ has at most $O(r)$ vertices, and each piece has $\tilde{O}(\sqrt{r})$ boundary vertices. Furthermore, the total number of boundary vertices, counted with multiplicity, is $\tilde{O}(n/\sqrt{r})$. 
}

\ignore{
In this paper, we design an $\tilde{O}(mr + m \unmatchedrem)$ time algorithm to compute minimum-cost matching  for bipartite graphs that admit an $r$-clustering. The running time is minimized to $\tilde{O}(mn^{2/5})$ when $r=n^{2/5}$  For several natural classes of graphs such as planar graphs and $K_h$-minor free graphs, there are fast algorithms to compute an $r$-clustering for any given value of $r$. For such graphs, we obtain faster minimum-cost matching algorithms.
}

\subsection{Our results}
In this paper, we design an  algorithm to compute minimum-cost perfect matching in bipartite graphs with an $r$-clustering. Our algorithm runs in $\tilde{O}(mr + m \unmatchedrem)$ time. For $r=n^{2/5}$, we obtain an $\tilde{O}(mn^{2/5})$ time algorithm to compute the optimal matching. As consequences, we obtain the following results:

\begin{itemize}
\item For $K_h$-minor free graphs, we obtain an $\tilde{O}(n^{7/5}\log{(nC)})$ time algorithm to compute the minimum-cost matching. In comparison, a min-cost matching can be computed in $\tilde{O}(n^{10/7} \log{C})$ time~\cite{cohen2017negative}.
\item For planar graphs, our approach leads to an execution time of $\tilde{O}(n^{6/5}\log{(nC)}) $ improving the previous $\tilde{O}(n^{4/3}\log{(nC)})$ time algorithm by Asathulla~\etal~\cite{soda-18}.

\ignore{ 
\item Although the main focus of this paper is an $\tilde{O}(n^{7/5}\log{nC})$ algorithm for $H$-minor free graphs, the result extends to other classes of graphs closed under an $O(\sqrt{n})$ balanced separator property. By recursively using balanced separators, each piece of the graph can be recursively split such that both the numbers of vertices and boundary vertices per piece decreases geometrically throughout each level of the recursion. This standard technique is applicable to any graph with a separator theorem based on a result by Djidjev  and Gilbert \cite{djidjev_gilbert_separators}. On such graphs with $m$ vertices and $n$ edges where a balanced $O(\sqrt{n})$-sized separator can be computed quickly, our algorithm takes $\tilde{O}(mn^{2/5}\log{C})$ time.\footnote{To accommodate such graphs, our presentation will take into account the number of edges throughout the paper.}

}
\end{itemize} 

An $r$-clustering can be quickly constructed for any graph with $m$ edges, $n$ vertices, and an efficiently computable $O(\sqrt{n})$-sized separator on all subgraphs. For such a graph, our algorithm computes a minimum-cost matching in $\tilde{O}(mn^{2/5}\log{(nC)})$ time.\footnote{To accommodate such graphs, our presentation will take into account the number of edges throughout the paper.}
The reduction of Gabow and Tarjan from maximum cardinality minimum-cost matching to minimum-cost perfect matching preserves the  $r$-clustering in the input graph. Therefore, we can  use the same reduction to also compute a minimum-cost maximum cardinality matching in $\tilde{O}(mn^{2/5})$ time. Our results are based on a new approach to speed-up augmenting path based matching algorithms, which we describe next.

\begin{algorithm}
\caption{A scale of our algorithm for $K_h$-minor free graphs with complexities.}\label{alg:hmfree}
\begin{algorithmic}[1]
\State $M \gets \emptyset$\Comment{{\bf Time}}{}
\State \textbf{Preprocessing Step}: Run $\sqrt{r}$ iterations of GT-Algorithm $\cdot\cdot\cdot\cdot\cdot\cdot\cdot$ \Comment{$\tilde{O}(m\sqrt{r})$}{}
\State Compute compressed residual graph $H$ $\cdot\cdot\cdot\cdot\cdot\cdot\cdot\cdot\cdot\cdot\cdot\cdot\cdot\cdot\cdot\cdot\cdot\cdot\cdot\cdot\cdot\cdot\cdot\cdot\cdot\cdot\cdot$ \Comment{$\tilde{O}(m\sqrt{r})$}{}
\For{$i$ from 1 to $O(\sqrt{n}/r^{1/4})$} \label{step2startline}
\State Execute \fastmatch\ to find many augmenting paths, $\cdot\cdot\cdot\cdot\cdot\cdot\cdot\cdot\cdot$ \Comment{$\tilde{O}(m)^*$ per iteration}{}
\EndFor\label{phaseLoop}
\State\label{step2endline}~~~~~Augment and update $H$ for all paths $\cdot\cdot\cdot\cdot\cdot\cdot\cdot\cdot\cdot\cdot\cdot\cdot\cdot\cdot\cdot\cdot\cdot\cdot\cdot\cdot\cdot\cdot\cdot\cdot\cdot$ \Comment{$\tilde{O}(mr^{5/4}/\sqrt{n}))^*$ per iteration}{}
\For{$i$ from 1 to $O(\sqrt{n}/r^{1/4})$}\label{step3startline}
\State $P \gets$ \hungarian($G$) $\cdot\cdot\cdot\cdot\cdot\cdot\cdot\cdot\cdot\cdot\cdot\cdot\cdot\cdot\cdot\cdot\cdot\cdot\cdot\cdot\cdot\cdot\cdot\cdot\cdot\cdot\cdot\cdot\cdot\cdot\cdot$ \Comment{$\tilde{O}(m)$ per iteration}{}
\State\label{step3endline} Augment $M$ along $P$ $\cdot\cdot\cdot\cdot\cdot\cdot\cdot\cdot\cdot\cdot\cdot\cdot\cdot\cdot\cdot\cdot\cdot\cdot\cdot\cdot\cdot\cdot\cdot\cdot\cdot\cdot\cdot\cdot\cdot\cdot\cdot\cdot\cdot\cdot\cdot\cdot\cdot\cdot\cdot$ \Comment{$\tilde{O}(n)$ per iteration}{}
\EndFor\label{postprocLoop}
\State \textbf{return} $M$
\end{algorithmic}
\end{algorithm}

\subsection{Our approach}     
The HK, Hungarian, GT, and AKLR algorithms rely upon computing, in each phase, one or more vertex-disjoint minimum-cost augmenting paths, for an appropriate cost definition. To assist in computing these paths, each algorithm defines a weight on every vertex.

For instance, the HK-Algorithm assigns a layer number to every vertex by conducting a BFS from the set of free vertices in the residual graph.   Any augmenting path that is computed in a \emph{layered graph} -- a graph consisting of edges that go from a vertex of some layer $i$ to layer $i+1$ -- is of minimum length. Similarly, the Hungarian, GT and AKLR algorithms  assign a dual weight to every vertex satisfying a set of constraints, one for each edge. Any augmenting path in an \emph{adimissible graph} -- containing edges for which the dual constraints are ``tight'' and have zero slack -- is a minimum-cost augmenting path for an appropriate cost. Hungarian and AKLR Algorithms iteratively compute such augmenting paths and augment the matching along these paths.

The GT-Algorithm (resp. HK-Algorithm) computes, in each phase, a maximal set of vertex-disjoint augmenting paths in the admissible graph (resp. layered graph) by iteratively conducing a partial DFS from every free vertex. Each such DFS terminates early if an augmenting path is found. Moreover, every vertex visited by this search is immediately discarded from all future executions of DFS for this phase. This leads to an $O(m)$ time procedure to obtain a maximal set of vertex-disjoint augmenting paths.

In order to obtain a speed-up, we deviate from these traditional matching algorithms as follows. In each phase, we compute substantially more augmenting paths that are not necessarily minimum-cost or vertex-disjoint. We accomplish this by allowing the admissible graph to have certain edges with positive slack. We then conduct a partial-DFS on this admissible graph. Unlike traditional methods, we do not discard vertices that were visited by the DFS and instead allow them to be reused. As a result, we discover more augmenting paths. Revisits increase the execution time per phase. Nonetheless, using the existence of an $r$-clustering, we bound the amortized execution time by $O(m)$ per phase.

So, how do we guarantee that our algorithm computes significantly more augmenting paths in each phase?  The HK-Algorithm measures progress made by showing that the length of the shortest augmenting path increases by at least one at the end of each phase. After $\sqrt{n}$ phases, using the fact that the length of the shortest augmenting path is at least $\sqrt{n}$, one can bound the total number of free vertices by $O(\sqrt{n})$.  

In the GT-Algorithm, the dual weights assist in measuring this progress. Gabow and Tarjan show that the free vertices of one set, say $B$, increases by at least $1$ in each phase whereas the dual weights of free vertices of $A$ always remains $0$. After $\sqrt{n}$ phases, using the fact that the dual weights of all free vertices is at least $\sqrt{n}$, one can bound the total number of free vertices by $\sqrt{n}$. Note that this observation uses the fact that at the beginning of each scale, the cost of the optimal matching is $O(n)$. 

In our algorithm, which is also based on the scaling paradigm, we achieve a faster convergence by aggressively increasing the dual weight of free vertices of $B$ by $O(n^{1/5})$ while maintaining the dual weights of free vertices of $A$ at $0$. So, the progress made in one phase of our algorithm is comparable to the progress made by $O(n^{1/5})$ phases of GT-Algorithm. As a result, at the end of $O(n^{2/5})$ phases, the dual weight of every free vertex is at least $n^{3/5}$ and the number of free vertices remaining is no more than $O(n^{2/5})$. Each of the remaining can be matched in $O(m)$ time by conducting a simple Hungarian Search leading to an execution time of $O(mn^{2/5})$.

\ignore{In the context of weighted matching, using the scaling paradigm, it is often the case that edges have non-negative integer costs and the optimal matching cost for a scale is $O(n)$. In each scale, Gabow and Tarjan's algorithm iteratively computes a set of vertex-disjoint augmenting paths and  augments the matching along these paths. Each iteration, the dual weight of every unmatched vertex of $B$ increases by at least $1$ whereas the dual weight of every unmatched vertex of $A$ remains $0$. After $\sqrt{n}$ iterations, the dual weight of each free vertex of $B$ is at least $\sqrt{n}$. It can be shown that the sum of the dual weights of all unmatched vertices cannot exceed the cost of the optimal matching.}
\ignore{
\begin{eqnarray}
\sum_{v\in A_F\cup B_F}y(v)\le |B_F|\sqrt{n} \le \wt(M_{\opt}) \le O(n).\label{eq:convergence}\end{eqnarray}

Therefore, it follows that there are at most $O(\sqrt{n})$ unmatched vertices remaining. Each of these unmatched vertices can then be matched iteratively in $O(m)$ time leading to a total execution time of $O(m\sqrt{n})$. To further improve the execution time, one possibility is to aggressively increase the dual weights of free vertices of $B$ by a value $\Omega(n^{\eps})$ for some constant  $0<\eps<1$. In this case, after $n^{(1-\eps)/2}$ iterations, there are $O(n^{(1-\eps)/2})$ free vertices remaining. If each iteration could be executed in $O(m)$ time, the resulting execution time would be $O(mn^{(1-\eps)/2})$. 
}

\ignore{
Hopcroft and Karp~\cite{hk_sicomp73}  achieve an execution time to $O(m\sqrt{n})$ by iteratively computing a maximal set of vertex-disjoint shortest augmenting paths in $O(m)$ time. After each iteration,  the length of the shortest augmenting path strictly increases and so, after $\sqrt{n}$ iterations, the length of the shortest augmenting path is at least $\sqrt{n}$. Using this fact, it can be shown that there are $O(\sqrt{n})$ free vertices remaining, each of which can  then be matched   by iteratively finding and augmenting along a single augmenting path in $O(m)$ time. To improve this further, one possibility is computing many more augmenting paths in each iteration, with each iteration taking $O(m)$ time, causing the length of the shortest augmenting path after each iteration to increase by $\Omega(n^{\eps})$ for some constant  $0<\eps<1$. In this case, after $n^{(1-\eps)/2}$ iterations, there are $O(n^{(1-\eps)/2})$ free vertices remaining, each of which can then be matched in $O(m)$ time. The resulting execution time is $O(mn^{(1-\eps)/2})$. 

In the context of weighted matching, using the scaling paradigm, we can assume that the edges have non-negative integer costs and the optimal matching cost is $O(n)$. In each scale, Gabow and Tarjan's algorithm iteratively computes a set of vertex-disjoint augmenting paths and  augments the matching along these paths. Each iteration, the dual weight of every unmatched vertex of $B$ increases by at least $1$ where as the dual weight of every unmatched vertex of $A$ remains $0$. After $\sqrt{n}$ iterations, the dual weight of each free vertex of $B$ is at least $\sqrt{n}$. It can be shown that the sum of the dual weights of all unmatched vertices cannot exceed the cost of the optimal matching and so,\begin{eqnarray}\sum_{v\in A_F\cup B_F}y(v)\le |B_F|\sqrt{n} \le \wt(M_{\opt}) \le O(n).\label{eq:convergence}\end{eqnarray}    It follows that there are at most $O(\sqrt{n})$ unmatched vertices remaining. Each of these unmatched vertices can then be matched iteratively in $O(m)$ time leading to a total execution time of $O(m\sqrt{n})$. The execution time could be improved  to $O(mn^{(1-\eps)/2})$ if, in each iteration, the  dual weight of free vertices of $B$ were aggressively increased by $\Omega(n^{\eps})$ for some $\eps > 0$ while maintaining the dual weight of free vertices of $A$ as $0$. This strategy could be compared to increasing the length of the shortest augmenting path by more than $O(1)$ per iteration in the unweighted setting.

}

\ignore{\begin{figure}[H]
  \centering
  \includegraphics[width=5cm,scale=1]{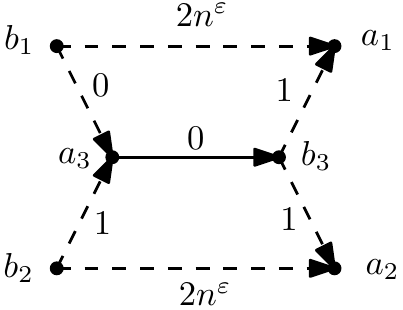}
  \caption{The edges weights constitute slacks. If the dual weights of the free vertices $b_1$ and $b_2$ were both increased by $n^\eps$ then a minimum slack augmenting path $\langle b_1, a_3, b_3, a_1 \rangle$ may be found. However, to maintain feasibility, the dual weight of $a_1$ would have to decrease by $n^\eps - 1$. Negative dual weights for free vertices negatively affects convergence. }
\label{fig:NotAllAtOnce}
\end{figure}

Such an  aggressive increase of dual weights  in arbitrary graphs is incredibly challenging. For an example difficulty, see Figure \ref{fig:NotAllAtOnce}. For graphs that admit an $r$-clustering, however, we are able to apply this strategy and obtain faster matching algorithms. First, we allow an error of $\delta_{uv} > \sqrt{r}$ for each boundary edge $(u,v)$. Similar to Asathulla~\etal,~\cite{soda-18} we guarantee that this introduces a total error of at most $O(n)$ which is acceptable in the scaling paradigm.  Each iteration, we increase the dual weights of free vertices of $B$ by $\sqrt{r}$, using the errors on the boundaries of pieces as `buffers' to absorb the change. This technique allows free vertices of $A$ to continue having dual weights of $0$.   }

Next, we present an overview of our algorithm.

\ignore{Recently, we have designed a new minimum-cost algorithm for planar graphs. This algorithm is based on extending the scaling paradigm to planar graphs along with a graph compression scheme. In this paper, we consider graphs that, for a parameter $r > 0$,  can be decomposed in $O(n/r)$ edge disjoint pieces such that each piece has $O(r)$ vertices and at most $O(\sqrt{r})$ boundary vertices, i.e., vertices that also participate   in other pieces. We refer to the partitioning of this graph into pieces as an $r$-clustering of the graph. Several natural graphs such as $H$-minor free graph admits an $r$-clustering.

 Bipartite matching is a fundamental problem in graph theory and has been extensively studied for both unweighted and weighted graphs.   For the problem of computing bipartite matching of largest cardinality in unweighted graphs with $n$ vertices and $m$ edges, Ford and Fulkerson~\cite{ford_fulkerson} presented an algorithm to compute a perfect matching
by iteratively computing $n$ augmenting paths each of which takes $O(m)$ time, leading to an
 $O(mn)$ time algorithm. Hopcroft and Karp~\cite{hk_sicomp73}  improved the execution time to $O(m\sqrt{n})$ by iteratively computing a maximal set of vertex-disjoint shortest augmenting path in $O(m)$ time and converging
to an optimal matching in $O(\sqrt{n})$ iterations. It is also possible to compute maximum matchings in arbitrary graphs (not necessarily bipartite) using Gaussian elimination~\cite{fast_matrix_matching}. This randomized algorithm runs in $O(n^{\omega})$ time; here $\omega$ is the best known exponent for the matrix multiplication problem. 

In weighted graphs with $n$ vertices and $m$ edges, the well-known Hungarian method computes a minimum-cost maximum cardinality matching in  $O(mn)$ time~\cite{hungarian_56}. Gabow and Tarjan designed a cost-scaling method to compute a minimum-cost perfect matching in $O(m\sqrt{n}\log nC)$, where $C$ is the largest cost on any edge of the graph~\cite{gt_sjc89}. Using a standard reduction, they showed that their algorithm can also be used to compute minimum-cost maximum cardinality matching.    Sankowski extended the Gaussian elimination based approach to compute weighted matchings in arbitrary graphs in $O(n^{\omega}C)$ time~\cite{fast_matrix_weighted}. 

Bipartite matching has also been extensively studied for many special classes of graphs such as planar graphs. Bipartite matching is a special case of computing maximum flow in graphs with multiple sources and sinks. For arbitrary graphs, one can easily reduce a multiple-source multiple-sink maximum flow problem to a single-source single sink case by introducing  an artificial source and an artificial sink vertex and connecting them to all sources and sinks. When applied to planar graphs, however, this reduction does not preserve planarity, and therefore the algorithms for the multiple-source multiple-sink problem in planar graphs are distinctly different from the single-source single-sink problem. In unweighted planar graphs,  there is extensive work on computing flows in planar graphs with multiple sources and sinks. For the case where it is known of how much commodity is produced and consumed at each source and sink, there is an algorithm to compute a valid flow which runs in $O(n \log^2 n/\log\log n)$ time~\cite{multiple_planar_maxflow_known}. Computing perfect matching  in bipartite graphs is a special case where each source produces and each sink consumes exactly one unit of commodity.  Therefore, this algorithm also applies to computing perfect matching in planar bipartite graphs. After considerable effort on many special cases, the problem of computing maximum flow in planar graphs with multiple sources and multiple sinks was solved in $O(n \log^3 n)$ time~\cite{multiple_planar_maxflow}. This algorithm can be applied to compute maximum cardinality matching in planar graphs. 

Almost all of  algorithms for planar graphs  rely on a reduction of computing maximum flow in planar graphs to computing shortest paths in the dual planar graph.  For fast computation of shortest paths in dual graphs, these algorithms use efficient    data structures that are designed to work only for planar graphs.   Therefore, these approaches are useful only for planar graphs. There has been effort to extend these approaches to graphs that are can be embedded on a $2$-manifold. The guassian elimination based method can be  used to design  $O(n^{\omega/2})$ time randomized algorithm for planar graphs~\cite{fast_matrix_matching_planar}. A similar approach has yielded fast algorithms to compute maximum matchings in graphs with an excluded minor.

For weighted planar graphs, Lipton and Tarjan~\cite{lt} used planar separators to design an algorithm to compute minimum-cost maximum cardinality matching in $O(n^{3/2}\log n)$ time.   One can also extend this algorithm to any graph with a small separator provided this separator can be computed quickly.   One can also use Gabow and Tarjan's scaling algorithm to compute minimum-cost matching in $O(n^{3/2} \log nC)$ time. By combining the scaling paradigm with the separator based method, Asathulla~\etal \ designed a new algorithm to compute minimum-cost perfect matching in bipartite graphs in $O(n^{4/3}\log nC)$ time.
}

\subsection{Overview of the algorithm} 
Our algorithm uses a bit-scaling framework similar to that of the AKLR-Algorithm. Algorithm~\ref{alg:hmfree} provides an overview of each scale of our algorithm. We split Algorithm~\ref{alg:hmfree} into three steps. The \emph{first step}, corresponding to line 2, we call the preprocessing step. We reference lines 3--6 as the \emph{second step} and lines 7--9 as the \emph{third step}. Note that lines 1--3 and 7--10 are almost identical (with minor differences in implementation) to lines 1--3 and 4--7 of Algorithm~\ref{alg:planar}. 
Similar to the AKLR-Algorithm, after the preprocessing step, we have $O(n/\sqrt{r})$ free vertices remaining. The second step  takes this matching, iteratively calls \fastmatch\ and returns a matching with only $O(\sqrt{n}/r^{1/4})$ free vertices. An execution of \fastmatch\ is similar to a phase of the GT-Algorithm.
 In lines 7--9, we match the remaining free vertices by simply finding one augmenting path at a time.\\

\noindent We give an overview of the second step next. For full details, see Section \ref{subsec:step2}.
\begin{itemize}
\item We associate a slack with every edge of the residual graph $G$ and its compressed representation $H$.  The \emph{projection} of  an edge $(u,v)$ of the compressed residual graph $H$ is a path with the smallest total slack between $u$ and $v$ inside any piece $\R_j$ of the $r$-clustering. The slack of edge $(u,v)$ is simply the total slack on the projection.  For any edge (in $G$ or $H$) directed from $u$ to $v$, we define slack so that if the dual weight of $u$ increases (in magnitude) by $c$ the slack on $(u,v)$ decreases by $c$ whereas if the dual weight of $v$ increases (in magnitude) by $c$, so does the slack. In the residual graph, we say that any edge between two non-boundary vertices is admissible only if it has a zero slack. However, we allow for admissible edges of the residual graph that are incident on a boundary vertex to have a slack of $\sqrt{r}$. As a result, an augmenting path of admissible edges in the residual graph need not be a shortest augmenting path. For an edge $(u,v)$ of the compressed residual graph, we define it to be admissible if it has a slack at most $\sqrt{r}$. 

\item The \fastmatch\ procedure conducts a DFS-style search on the admissible graph of $H$ from every free vertex $v$. For every vertex  of $H$ that does not lead to an augmenting path, this search procedure raises its dual weight (magnitude)  by $\sqrt{r}$. As a result, the DFS either finds an augmenting path and matches $v$ or raises the dual weight of $v$ by $\sqrt{r}$ as desired.
When the search procedure finds an augmenting path $P$ of admissible edges in $H$, it adjusts the dual weights (using a procedure called \sync) and projects $P$  to an admissible augmenting path in the residual graph of $G$.
\item   Unlike the GT or HK algorithms, our algorithm immediately augments along this admissible path and does not throw away vertices visited by the search.  This causes the augmenting paths computed during the \fastmatch\ procedure to not necessarily be vertex-disjoint. Furthermore, vertices of the graph $H$ can be visited multiple times within the same execution of the \fastmatch\ procedure. We describe an example of such a scenario; see Figure \ref{fig:create-new} and the discussion at the end of this section for more details. Due to these revisits, unlike the GT-Algorithm (where each phase takes $O(m)$ time), we cannot bound the time taken by the \fastmatch\ procedure.  We note, however, that every vertex visited by the search either lies on an augmenting path or has its dual weight (magnitude) increased by $\sqrt{r}$ . Therefore, any vertex $v$ of $H$ that is unsuccessfully visited  $\sqrt{n}/r^{1/4}$ times by the search will have a dual weight magnitude of at least $\sqrt{n}r^{1/4}$. In order to limit the number of visits of a vertex, whenever a vertex $v \in B$ whose dual weight exceeds $\sqrt{n}r^{1/4}$ is visited, the search procedure immediately computes a projection $\dir{P}$ of the current DFS search path. This projection forms an alternating path from a free vertex to $v$ in $G$. After setting $M \leftarrow M \oplus \dir{P}$, $v$ is a free vertex with dual weight (magnitude) at least $\sqrt{n}r^{1/4}$ . The vertex $v$ is then marked as inactive and will not participate in any future execution of the \fastmatch\ procedure.  The second step of the algorithm ends when all remaining free vertices become inactive (this happens after $O(\sqrt{n}/r^{1/4})$ executions of \fastmatch). Using the fact that the optimal matching has cost of $O(n)$, we can show that the number of inactive free vertices cannot exceed $\tilde{O}(\sqrt{n}/r^{1/4})$.  

\item Due to the fact that an augmenting path $P$   in $H$ computed by the search procedure need not be a path with a minimum cost, its projection $\dir{P}$ may be a non-simple path in the underlying graph $G$. To avoid creating such non-simple projections, when our DFS style search encounters a cycle $C$, the algorithm computes its projection $\dir{C}$ and flips the edges on the cycle immediately by setting $M \leftarrow M\oplus \dir{C}$. This modification requires us to update all pieces that contain edges of $\dir{C}$. When the search finds an augmenting path (resp. cycle) in the admissible graph of $H$, due to the active elimination of cycles, we can guarantee that its projection indeed a simple path (resp. cycle) of admissible edges.

\ignore{\item Within each scale, we execute a three-step algorithm. In the first step, we execute the Gabow-Tarjan Algorithm for $O(\sqrt{r})$ iterations to match all but $\tilde{O}(n/\sqrt{r})$ vertices. This takes $O(m\sqrt{r})$ time.
\item In the second step, we introduce an additive error of at most $\delta_{uv} > \sqrt{r}$ on every boundary edge $(u,v)$. We show that doing so does not asymptotically increase the total error in the cost of the matching. Using the scaling paradigm, we can eliminate this error to obtain the optimal matching (see Section~\ref{sec:scaling}).
\item In the second step, similar to Asathulla~\etal, we compress the residual graph into a weighted directed graph $H$ with the $\tilde{O}(n/\sqrt{r})$ boundary and free (unmatched) vertices  as the vertex set. There is an edge between two boundary vertices if and only if they are connected by a directed path in one of the pieces of the $r$-clustering. This compressed residual graph $H$ has $\tilde{O}(n/\sqrt{r})$ vertices and  $\tilde{O}(n)$ edges. Due to lack of efficient data structures, the time taken to compute edges of any piece with $m_j$ edges and $n_j$ vertices is $\tilde{O}((m_j+n_j \log n) \sqrt{r})$. We also define a compressed feasible matching by assigning dual weights to the vertices of $H$ and defining feasibility constraints and slacks for every edge of $H$ (see Section~\ref{algorithmforscale}).

\item  When the partial DFS\ style search finds a path between two free vertices in $H$, the algorithm has to immediately compute the projection and augment the matching.  Unlike the GT or HK algorithms, our algorithm does not throw away vertices visited by the search.  This  may cause several vertices of the graph to be visited multiple times within the same phase. For an example of such a scenario, see Figure \ref{fig:create-new}, described below. Due to these revisits, unlike the GT-Algorithm (where each phase takes $O(m)$ time), we cannot bound the time taken by a \emph{single} phase of our algorithm.  We note, however, that every vertex visited by the DFS style search either lies on an augmenting path or has its dual weight (magnitude) increased by $\sqrt{r}$ . Therefore, any vertex $v$ of $H$ that is unsuccessfully visited  $O(\sqrt{n}/r^{1/4})$ times by the search will have a dual weight of a magnitude that is at least $\sqrt{n}r^{1/4}$. In order to overcome the difficulty posed by multiple visits, whenever a vertex $v \in B$ whose dual weight exceeds $\sqrt{n}r^{1/4}$ is visited, the search procedure immediately computes a projection $\dir{P}$ of the current DFS search path. This projection forms an alternating path from a free vertex to $v$ in $G$. After setting $M \leftarrow M \oplus \dir{P}$, $v$ is a free vertex with dual weight (magnitude) at least $\sqrt{n}r^{1/4}$ . $v$ is then marked as inactive and will not participate in any future execution of the \fastmatch\ procedure.  The \fastmatch\ procedure terminates when all free vertices become inactive. Using the fact that the optimal matching has cost of $O(n)$, we can show that the number of inactive vertices cannot exceed $\tilde{O}(\sqrt{n}/r^{1/4})$.  

\item In the second step of the algorithm, searches will be conducted starting from active free vertices of $B$, which have a dual weight at most $\sqrt{n}r^{1/4}$. Over the course of the algorithm, certain free vertices of $B$ accumulate a large dual weight of $\sqrt{n}r^{1/4}$ after which they are made inactive. The second step terminates when all free vertices of $B$ become inactive. Instead of conducting a global search for the minimum net-cost augmenting path in each iteration (as done in the GT-Algorithm and the algorithm of~\cite{soda-18}) we  initiate, from each free active vertex of $B$ ,  a local DFS style search over the set of admissible edges in $H$. An edge in $H$ is admissible if its slack is less than $\sqrt{r}$.  In this search, we expand a path in $H$ by including the smallest slack edge going out of the last vertex on the path. Note that such a procedure does not yield the minimum net-cost augmenting path. However, it is reasonably close to the minimum net-cost path, and the small difference allows us to compute several augmenting paths in one iteration.}

\begin{figure}
  \centering
  \includegraphics[width=\textwidth,scale=1]{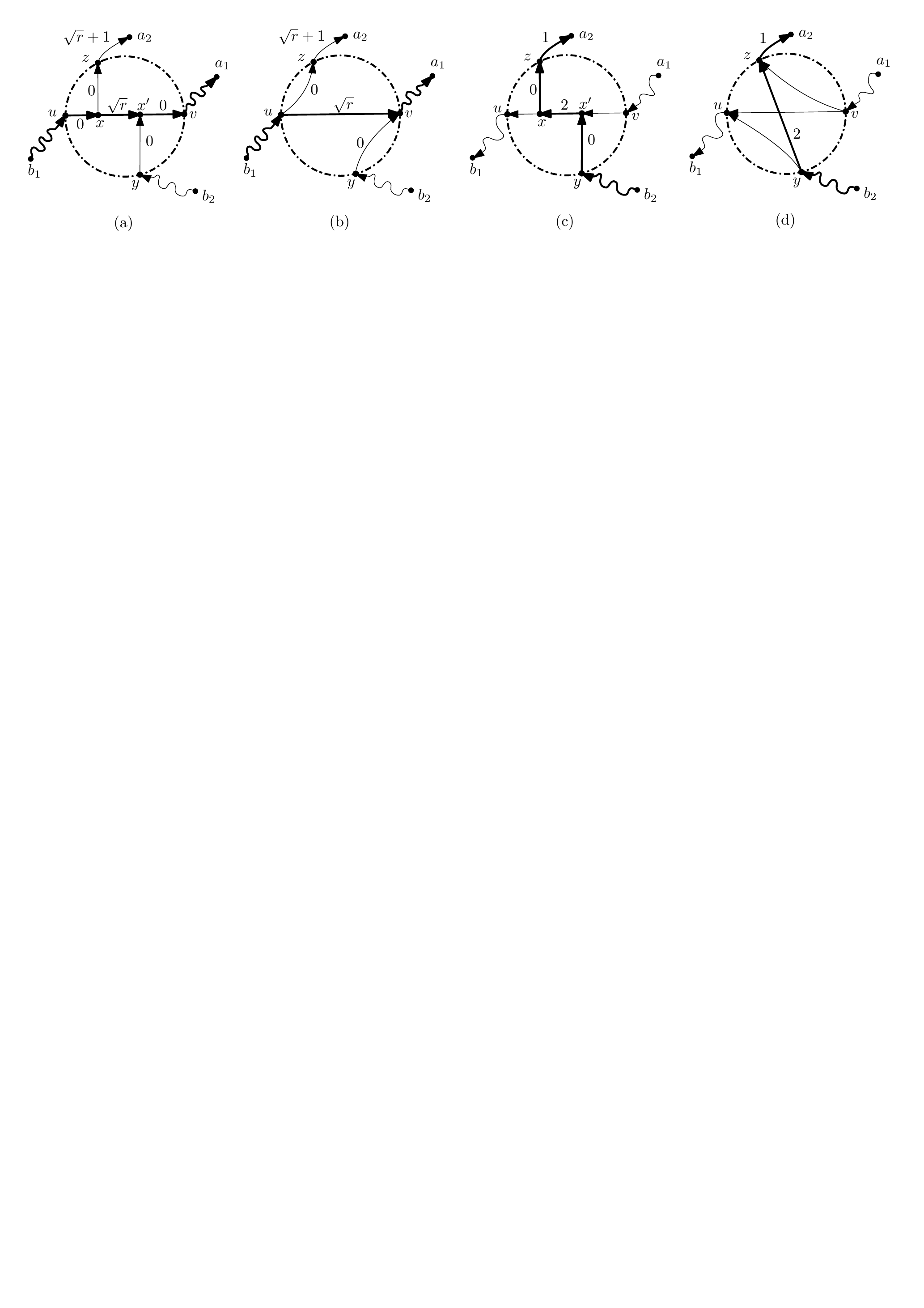}
\caption{
An example where a vertex $z$ is visited multiple times during a phase of \fastmatch. The status of a piece in the $r$-clustering and its compressed representation is given both before and after augmentation, with edge slacks. By augmenting along a path containing $(u,v)$, a new path is created using $(y, z)$, causing the revisit.
}
\ignore{
  \caption{This is an example situation where the algorithm may backtrack from a vertex multiple times in a phase. Here, (a) gives the initial state of the piece in $G$ and (b) gives the initial state of the piece in $H$. Consider if the end of the DFS search path is $u$. The edge $(u,z)$ in $H$ is explored, but no admissible outgoing edges from $z$ are found, and the algorithm backtracks from $z$. Then, the algorithm explores $(u,v)$ and eventually finds an augmenting path. However, later, a new search from a different free vertex of $B$ could begin and visit $y$. Since the directions of path from $u$ to $v$ in the residual graph were flipped, a new path from $y$ to $z$ was created. Therefore, $z$ could be visited again through $y$.}
}
\label{fig:create-new}
\end{figure}
\item The total time taken by the \fastmatch\ procedure can be attributed to  the time taken by the DFS style search to find augmenting paths, alternating paths, and alternating cycles (search operations) and the time taken to project, flip the edges, and update the compressed residual graph for the paths an cycles (update operations). The dual weight of any vertex cannot exceed $\sqrt{n}r^{1/4}$ and so, every vertex is visited by the search $O(\sqrt{n}/r^{1/4})$ times.  Since the compressed residual graph has $O(n)$ edges, the total time taken by the search operations is bounded by $\tilde{O}(n^{3/2} / r^{1/4})$. Like Asathulla \etal\ we must argue that the total number of compressed graph updates is small. However, unlike their algorithm, we must also account for alternating paths and cycles instead of just augmenting paths. Despite this, we show that the total length of all the cycles, alternating paths and augmenting paths computed in the compressed graph $H$  does not exceed $O((n/\sqrt{r}) \log n)$. The total time for the update operations is $\tilde{O}(mr)$. 
\ignore{
\item At the end of the second step, all free vertices of $B$ are inactive. Therefore, at the beginning of the third step, every free vertex has a dual weight of at least $\sqrt{n}r^{1/4}$. \ignore{Using an argument similar to~\eqref{eq:convergence}, we show that there are no more than $\tilde{O}(\sqrt{n}/r^{1/4})$ free vertices remaining.}Since the optimal matching has cost $O(n)$, we can show that there are $\tilde{O}(\sqrt{n}/r^{1/4})$ free vertices remaining. Each of these vertices are matched in $\tilde{O}(m)$ time by iteratively computing an augmenting path and augmenting the matching along the path in $\tilde{O}(m\sqrt{n}/r^{1/4})$ time.   }
   \end{itemize}

\ignore{As mentioned, we assign an error of $+2\sqrt{r}$ for every boundary edge Within each scale, our algorithm executes three steps. In the first step, we execute a standard implementation of $\sqrt{r}$ rounds of Gabow and Tarjan's algorithm for a single scale. After this step, there are $O(n/\sqrt{r})$ free vertices remaining. The second step (explained below) of our algorithm uses a compressed residual graph in conjunction with an aggressive increase of the dual weight of free vertices to match all but $O(\sqrt{n/\sqrt{r}})$ vertices in $O(m\sqrt{n/\sqrt{r}}+mr \log r)$ time. At the start of the third step, every free vertex has a dual weight of at least $\sqrt{nr}$. The third step of the algorithm uses Hungarian search to iteratively match the remaining $O(\sqrt{n/\sqrt{r}})$ vertices in $O(m)$ time each. The combined execution time is $O(mr +m\sqrt{n/\sqrt{r}})$. 

The first and the third steps of the algorithm are straight-forward. We describe the second step of the algorithm in greater detail. At the start of the second step, we have $O(n/\sqrt{r})$ free vertices remaining. There are several differences between our approach and the traditional augmenting path based approach, some of which we list below.

\begin{itemize} 
\item We use a compressed representation of residual graph. This allows us to compute augmenting paths by updating the dual weights of free and boundary vertices alone.
\ \end{itemize}

We fix an $r$-clustering and build a compressed residual graph with the free vertices and the boundary vertices of the $r$-clustering as the vertex set. Two vertices from this set are connected by an edge if there is a path between them in the residual graph that is completely inside one of the pieces. As in the case of Asathulla~\etal, the number of edges in the compressed residual graph is $O(r)$ per piece and $O(n)$ in total. However, unlike Asathulla~\etal we do not have access to any efficient data structures for our graph.    

At the end of each phase of Gabow and Tarjan's algorithm, it is guaranteed that the dual weight of all free vertices of $B$ is increased by at least $1$ where as the dual weight of the free vertices of $A$ remain $0$. Since the sum of the dual weights of the free vertices cannot exceed the cost of the optimal matching, after $\sqrt{n}$ phases,  we get $(\wt(M_{\opt})=O(n)) \ge \sum_{v \in A_F\cup B_F} y(v) \ge |B_F| \sqrt{n}$, implying that there are $O(\sqrt{n})$ free vertices remaining.

To obtain the speed-up, in each phase, we increase the dual weight of free vertices of $B$ aggressively by $\Theta(\sqrt{r})$ while maintaining the dual weight of each free vertex of $A$ at $0$. There are  several challenges to accomplish this: 
\begin{itemize}
\item At the start of any phase, the slack on the edges of any augmenting path is at least $1$. Any increase of the dual weight by a value $> 1$ can cause the slack on several augmenting paths to reduce to $0$ and several free vertices of $A$ may aggregate a negative dual weight. In order to avoid this from happening, we absorb any ``excess slack'' in one of the edges incident on the boundary vertices. With sufficient care, we show that this guaratees that all free vertices of $A$ have a dual weight of $0$.       
\item Vertices repeated
\item
\end{itemize}

 In particular, the $+1$ error on non-matching edges help in three ways:

\begin{itemize}
\item It helps us bound the total length of all the augmenting paths by $O(n \log n)$
\item After augmenting along a path, the weights of edges change (an error of +1 is introduced to the weight of the matching edges that were removed after augmentation). The slack on such edges increases by $1$. Due to this, all vertices of this augmenting path do not participate in any more augmenting paths during the current phase.   Therefore, at the end of the phase,  there are no augmenting paths of zero slack edges and so, the dual weights increase by at least 1 in each phase.  After $O(\sqrt{n})$ phases, each free vertex will have a dual weight of at least $\sqrt{n}$. \end{itemize} This leads to a total running time of  $O(m\sqrt{n})$ for each scale and $O(\log nC)$ scales.  Using the scaling paradigm, they eliminate any error in cost that is introduced due to the  $+1$ that was added to any edge.

In our algorithm, we use the $r$-clustering of the graph and break the graph into $O(n/\sqrt{r})$ edge-disjoint pieces each of $O(r)$ size. In addition, each piece has only $O(\sqrt{r})$ vertices that also participate in other pieces (such vertices are boundary vertices). In total there are $O(n/\sqrt{r})$ boundary vertices. 

Similar to Gabow-Tarjan algorithm, we use a scaling approach where each scale corresponds to a bit in the edge costs where each edge cost is multiplied by $(n+1)$. Thus the computation proceeds over $O(\log nC)$ scales. In each scale, our algorithm  speeds-up the Gabow-Tarjan $O(m n^{1/2})$ computation time per scale to $\tilde{O}(mn^{2/5})$ time as follows: 

\begin{itemize}
\item Within each scale we first execute the Gabow-Tarjan Algorithm for $O(\sqrt{r})$ iterations to match all but $O(n/\sqrt{r})$ vertices. This takes $O(m\sqrt{r})$ time.
\item We introduce an additive error of $+\sqrt{r}$ on every edge not in the matching incident on a boundary vertex. We show that doing so does not asymptotically increase the total error in the cost of the matching. Using the scaling paradigm, we can eliminate this error to obtain the optimal matching (see Section~\ref{sec:planarfeas}).
\item Next, we compress the residual graph into a weighted directed graph $H$ with the $O(n/\sqrt{r})$ boundary and free (unmatched) vertices  as the vertex set. There is an edge between two boundary vertices if and only if they are connected by a directed path in one of the pieces of the $r$-division. This compressed residual graph $H$ has $O(n/\sqrt{r})$ vertices and  $O(n)$ edges (see Section~\ref{algorithmforscale}). 

\item The shortest path between two free vertices in $H$ can be used to compute the minimum cost augmenting path. Using efficient data structures and an efficient implementation of Dijkstra's algorithm by Fakcharoenphol and Rao (described in~\cite{fr_dijkstra_06},~\cite{kaplan_monge_12}, and ~\cite{klein_mssp_05}), we compute the minimum cost augmenting path in $H$ in time $\CIteration$. Therefore, the total time taken to compute the remaining $O(n/\sqrt{r})$ augmenting paths is $\CTotalIteration$ (see Section~\ref{sec:algoritm}).  
\item  After we augment the matching, the residual graph changes and so does its compressed representation $H$. For  every piece of the $r$-division that the augmenting path passes through, we recompute its edges in $H.$  Each such affected piece can be updated in $O(r \log{r})$ time. Although every augmenting path can potentially pass through all the $O(n/r)$ pieces, we prove that throughout the course of the algorithm, the total number of pieces that the augmenting paths touch is $O((n/\sqrt{r})\log n)$ (see Lemma~\ref{lem:path-lengths}), and therefore, the total time taken to recompute these pieces is only $\CTotalDelta$.
\item Together, the total time taken by the algorithm for computation over a single scale is $O((n^2/r)\log^2{r} + n\sqrt{r}\log{r}\log{n})$ which is $O(n^{4/3}\log^2{n})$ when $r=n^{2/3}$.
\end{itemize}    
}
Summing up over all scales, and setting $r=n^{2/5}$ gives the claimed running time. 

\paragraph{Discovery of augmenting paths that are not vertex-disjoint.}
We present a scenario where an execution of \fastmatch\ procedure causes a vertex $z$ of $H$ to be visited twice leading to the discovery of two augmenting paths that are not vertex-disjoint; see Figure \ref{fig:create-new} (edge weights represent slacks). Here, (a) and (c) represent the states of the residual graph within a piece before and after augmenting along the first path. (b) and (d) give the compressed graph counterparts of (a) and (c) respectively. $b_1, b_2$ are free vertices of $B$ and $a_1,a_2$ are free vertices of $A$. Suppose that the \fastmatch\ procedure begins a DFS search from $b_1$, eventually adds $u$ to the search path, and explores the admissible edge $(u,z)$ with slack $0$ (Figure 2(b)).  The search procedure now adds $z$ to the search path. Now suppose the execution of DFS-style search from $z$ does not lead to any augmenting path of admissible edges and the search backtracks from $z$. At this time, the dual weight (magnitude) of $z$ increases by $\sqrt{r}$ making the edge $(z,a_2)$ admissible. After backtracking from $z$, the search proceeds along $(u,v)$ and finds an augmenting path from $b_1$ to $a_1$ in $H$. In order to create an admissible projection, that algorithm increases the dual weight (magnitude) of $u$ by $\sqrt{r}$ and as a result reduces the slack on $(x,x')$ to $0$.  After augmentation, the new edge $(y,z)$ (in Figure 2(d)) is also admissible. Now suppose, in the same execution of \fastmatch\ a different DFS begins from $b_2$ and eventually adds the vertex $y$ to the path. Both $(y,z)$ and $(z,a_2)$ are admissible. So, our search visits $z$ for a second time and also finds an augmenting path to $a_2$. Note that both augmenting paths found use $(x,x')$. Furthermore, $z$ was visited multiple times during a single execution of \fastmatch\ procedure. 
\ignore{
 The algorithm searches from $u$ and explores the admissible edge $(u,z)$ of $H$, eventually resulting in a backtrack from $z$. Upon backtracking from $z$, the dual weight of $z$ increases in magnitude, making all incoming edges inadmissible. The search eventually finds an augmenting path that uses the edge $(u,v)$. Prior to augmentation, the slack on $(u,v)$ and $(u,z)$ is reduced nearly to 0. After augmentation, a new path from $y$ to $z$ is created, and $z$ can be visited again via the new admissible edge $(y,z)$ of $H$ in a subsequent search from a different free vertex.
}

\medskip
\noindent
{\bf Organization.} The remainder of the paper is organized as follows.
In Section~\ref{sec:background} we present the background on the matching algorithms that serve as the building blocks for our approach. In particular, we present a new variant of Gabow and Tarjan's algorithm that our search procedure is based on. In Section~\ref{sec:preliminaries} we define the notion of an $\R$-feasible matching based on an $r$-clustering and related concepts. Section \ref{sec:compressed-graph} describes our definition of the compressed graph and compressed feasibility, which differs slightly from that in \cite{soda-18}. In Section~\ref{sec:scaling}, we describe the overall scaling algorithm. In section~\ref{algorithmforscale}, we present our algorithm for each scale. In Section~\ref{analysis} we prove the correctness and efficiency of our algorithm. In Section~\ref{sec:planar}, we combine our algorithm with shortest path data structures designed for planar graphs to achieve an $O(n^{6/5})$ time algorithm. We conclude in Section~\ref{sec:conclusion}. Proofs of some of the lemmas are presented in the appendix.

\section{Background}
\label{sec:background}
In this section, we  present definitions relevant to matching and give an overview of the Hungarian Algorithm. We use ideas from a new variant of the GT-Algorithm which we present in Section \ref{subsec:gt}.
\paragraph{Preliminaries on matching.}
Given a matching $M$ on a bipartite graph, an \emph{alternating path} (resp. cycle) is a simple path (resp. cycle) whose edges alternate between those in $M$ and those not in $M$. We refer to any vertex that is not matched in $M$ as a \emph{free vertex}. Let $A_F$ (resp. $B_F$)be the set of free vertices of $A$ (resp. $B$).  An \emph{augmenting path} $P$ is an alternating path between two free vertices. We can \emph{augment} $M$ by one edge along $P$ if we remove the edges of $P \cap M$ from $M$ and add $P \setminus M$ to $M$. After augmenting, the new matching is given by $M \leftarrow M \oplus P$, where $\oplus$ is the symmetric difference operator. For a matching $M$, we define a directed graph  called the residual graph $\Gr_M(A\cup B, E_M)$. We represent a directed edge from $a$ to $b$ as $\dir{ab}$. For every edge $(a,b) \in E\cap M$, we have an edge $\dir{ab}$ in $E_M$ and for every edge $(a,b)\in E\setminus M$, there is an edge $\dir{ba}$ in $E_M$. Note that $G$ and $G_M$ have the same vertex set and edge set with the edges of $G_M$ directed depending on their membership in the matching $M$. For simplicity in presentation, we treat the vertex set and the edge set of $G$ and $G_M$ as identical.  So, for example, a matching $M$ in the graph $G$ is also a matching in the graph $G_M$. It is easy to see that a  path $\dir{P}$ in $\Gr_M$ is a directed path if and only if this path is an alternating path in $\Gr$.

\paragraph{Hungarian Algorithm.}In the Hungarian Algorithm,  for every vertex $v$ of the graph $\Gr$, we maintain a dual weight $y(v)$. A \emph{feasible} matching consists of a matching $M$ and a set of dual weights $y(\cdot)$ on the vertex set such that for every edge $(u,v)$ with $u \in B$ and $v\in A$, we have

\begin{eqnarray}
y(u) + y(v) &\le & \dist(u,v), \label{eq:feas1}\\
y(u) + y(v) & = & \dist(u,v)\quad \mbox{for } (u,v) \in M. \label{eq:feas2}
\end{eqnarray}

\noindent For the Hungarian algorithm, we define the \emph{net-cost} of an augmenting path $P$ as follows:

$$\phi(P)=\sum_{(a,b)\in P\setminus M}\dist(a,b) -\sum_{(a,b)\in P\cap M}\dist(a,b).$$
We can also interpret the net-cost of a path as the increase in the cost of the matching due to augmenting it along $P$, i.e., $\phi(P)=\wt(M\oplus P) - \wt(M)$. We can extend the definition of net-cost to alternating paths and cycles in a straight-forward way.

The Hungarian algorithm starts with $M=\emptyset$. In each iteration, it computes a minimum net-cost augmenting path $P$ and updates $M$ to $M\oplus P$. The algorithm terminates when we there is a perfect matching.
During the course of the algorithm, it maintains the invariant that there are no alternating cycles with negative net-cost. 

It can be shown that any perfect matching $M$ is a min-cost perfect matching iff there is no alternating cycle with negative net-cost with respect to $M$. Any perfect matching $M$ that satisfies the feasibility conditions (\ref{eq:feas1}) and (\ref{eq:feas2}) has this property. Thus it is sufficient for the Hungarian algorithm to find a  feasible perfect matching. 

In order to find the minimum net-cost augmenting path, the Hungarian algorithm uses a simple Dijkstra-type search  procedure called the Hungarian Search. The Hungarian Search computes the minimum net-cost path as follows. For any edge $(u,v)$, let $\dist(u,v)-y(u)-y(v)$ be the \emph{slack} of $(u,v)$. Consider a directed graph $\Gr_M'$, which is the same as the residual graph $\Gr_M$  except the cost associated with each edge is equal to its slack. It can be shown that the minimum weight directed path in $\Gr_M'$ corresponds to the minimum net-cost augmenting path in $\Gr$. Since the slack on every edge is non-negative by feasibility condition \eqref{eq:feas1},  the graph $\Gr_M'$ does not have any negative cost edges. Therefore, we can simply use Dijkstra's algorithm to compute the minimum net-cost augmenting path. After this, the dual weights are updated in such a way that the invariants are satisfied. See~\cite{hungarian_56} for details.\tnote{There may be a better reference for Hungarian search than this one?} The Hungarian algorithm computes $O(n)$ augmenting paths each of which can be computed by Hungarian Search in $O(m)$ time. Therefore the total time taken is $O(mn)$. At any stage of the algorithm, the matching $M$ and the set of dual weights $y(\cdot)$  satisfy the following invariants:

\begin{itemize}
\item[(i)] $M$ and the set of dual weights $y(\cdot)$ form a feasible matching.
\item[(ii)] For every vertex $b\in B$, $y(b) \ge 0,$ and if $b$ is a free vertex then the dual weight $y(b)=\max_{v\in B} y(v)$.
\item[(iii)] For every vertex $a\in A$, $y(a) \le 0,$ and if $a$ is a free vertex then the dual weight $y(a)=0$.
\end{itemize}

\noindent  

The dual weights of vertices in $B$ are always non-negative and the dual weights of vertices in $A$ are non-positive, a property also satisfied by our algorithm. Next, we introduce a modified version of  Gabow and Tarjan's algorithm. Some elements of our algorithm will be based on this variant.

\subsection{Modified Gabow-Tarjan Algorithm}
\label{subsec:gt}
We begin by giving an overview of the algorithm and describe the steps it takes between two successive scales. After that, we present the algorithm inside each scale. The steps taken by this algorithm  are  different from Gabow and Tarjan's original algorithm and its correctness requires a proof. Instead of providing a proof of correctness here, we adapt this algorithm for our setting and provide a proof for our case directly. 

As in the Hungarian algorithm, the Gabow-Tarjan algorithm also maintains a dual weight for every vertex of $\Gr$. We  define a \emph{$1$-feasible matching} to consist of a matching $M$ and set of dual weights $y(v)$ such that for every edge between $u \in A$ and $v\in B$ we have

\begin{eqnarray}
& &y(u) + y(v) \le \dist(u,v) + 1, \label{eq:feas3}\\
& &y(u) + y(v) \ge  \dist(u,v) -1\quad \mbox{for } (u,v) \in M. \label{eq:feas4}
\end{eqnarray}
\ignore{The above conditions with the $+1$ removed from (\ref{eq:feas1}) are identical to the dual feasibility conditions of the linear program corresponding to optimal bipartite matching.}
 Gabow and Tarjan presented a similar feasibility constraints for the minimum-cost degree constrained subgraph (DCS) of a bipartite  multigraph \cite{gt_sjc89}. Note, however, that our definition of $1$-feasibility  is different from the original definition of   $1$-feasibility as given by Gabow and Tarjan:
\begin{eqnarray}
& &y(u) + y(v) \le \dist(u,v) + 1, \label{eq:gtfeas5}\\
& &y(u) + y(v) =  \dist(u,v) \quad \mbox{for } (u,v) \in M. \label{eq:gtfeas6}
\end{eqnarray}
 A \emph{$1$-optimal matching} is a perfect matching that is $1$-feasible. We define the \emph{slack} of an edge to be $\dist(u,v)-y(u) - y(v) + 1$ if $(u,v) \not\in M$ and $y(u) +y(v) - \dist(u,v) +1$ if $(u,v) \in M$. The following lemma relates $1$-optimal matchings to the optimal matchings.
\begin{lemma}
\label{optrel}
For a bipartite graph $\Gr(A\cup B, E)$ with an integer edge cost function $\dist$, let $M$ be a $1$-optimal matching and $M_{\opt}$ be the optimal matching. Then,
$\dist(M) \le \dist(M_{\opt}) + 2n.$
\end{lemma}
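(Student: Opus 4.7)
The plan is to sandwich the quantity $\sum_v y(v)$ between $\dist(M) - n$ from below and $\dist(M_{\opt}) + n$ from above, using the $1$-feasibility conditions applied to the two perfect matchings $M$ and $M_{\opt}$. The key observation is that for any perfect matching $N$, the sum $\sum_{(u,v) \in N}(y(u)+y(v))$ telescopes into $\sum_v y(v)$ because every vertex is covered exactly once.

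First, I would apply the lower bound from the matching-edge condition \eqref{eq:feas4} to $M$: since $y(u)+y(v) \ge \dist(u,v) - 1$ for every $(u,v) \in M$ and $|M| = n$, summing over all matching edges gives $\sum_v y(v) = \sum_{(u,v) \in M}(y(u)+y(v)) \ge \dist(M) - n$.

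Next, I would apply the upper bound \eqref{eq:feas3} to $M_{\opt}$: since $M_{\opt}$ is a perfect matching and every edge of $\Gr$ (including those in $M_{\opt}$) satisfies $y(u) + y(v) \le \dist(u,v) + 1$, summing over the $n$ edges of $M_{\opt}$ gives $\sum_v y(v) = \sum_{(u,v) \in M_{\opt}}(y(u)+y(v)) \le \dist(M_{\opt}) + n$. Chaining these two inequalities through the common middle term $\sum_v y(v)$ yields $\dist(M) - n \le \dist(M_{\opt}) + n$, i.e., $\dist(M) \le \dist(M_{\opt}) + 2n$.

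There is no real obstacle here; the only subtlety is that the lemma as stated implicitly assumes $\Gr$ admits a perfect matching (so that $M_{\opt}$ exists and has $n$ edges) and that the cost function is such that all these sums are well defined. The argument does not actually require integrality of $\dist$; integrality is used elsewhere (in the scaling framework) to conclude that a $1$-optimal matching can be converted into a truly optimal one, but not in this particular comparison.
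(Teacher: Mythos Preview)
Your proof is correct and is the standard argument. The paper does not actually supply a proof of this lemma, but the analogous Lemma~\ref{lem:optrel2} (for $\R$-optimal matchings) is proved by the same mechanism: applying the two feasibility inequalities to $M\setminus M_{\opt}$ and $M_{\opt}\setminus M$ respectively and observing that the dual contributions cancel because both matchings are perfect. Your version, which sums over all of $M$ and all of $M_{\opt}$ separately and sandwiches $\sum_v y(v)$, is equivalent.
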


For every $(a,b) \in E$, suppose we redefine the edge weight to be $\dist^*(a,b) = (2n+1)\dist(a,b)$. This uniform scaling of edge costs preserves the set of optimal matchings and guarantees that any sub-optimal matching has a cost that is at least $2n+1$ greater than the optimal cost.
Thus, a $1$-optimal matching with the edge weights $\dist^*(\cdot,\cdot)$ corresponds to an optimal matching with the original edge weights $\dist(\cdot,\cdot)$. 

We now describe  the bit-scaling paradigm. For any edge $(u,v)$, let $b_1, b_2\ldots b_\ell$ be the binary representation of $\dist^*(u,v)$. Let $\dist_{i}(u,v)$ correspond to the most significant $i$ bits of $\dist^*(u,v)$. The Gabow-Tarjan Algorithm consists of \emph{scales}. The algorithm for any scale $i$  takes  a bipartite graph on $A, B,$ with a cost function $\dist_{i}(\cdot,\cdot)$, and a set of dual weights $y(v)$ for every vertex $v\in A \cup B$  as input. Let $\dist^i(u,v) = \dist_i(u,v) - y(u)-y(v)$. Then, $\dist^i(\cdot,\cdot)$ satisfies the following at the beginning of the $i$th scale:
\begin{itemize}
\item For every edge $(u,v)$, $\dist^i(u,v) \ge 1$, and,
\item The cost of a $1$-optimal matching with respect to $\dist^i(\cdot,\cdot)$ is $O(n)$.

\end{itemize}
Given such an input, the algorithm for each scale returns a perfect matching $M$ and a set of dual weights $y(\cdot)$ so that $M, y(\cdot)$ is a $1$-optimal matching.   The input to the  first scale is the graph $G(A \cup B, E)$ with the cost $\dist_1(\cdot,\cdot)$ and a set of   dual weights of $-1$ on every vertex of $B$ and  dual weights of $0$ on every vertex of $A$.  It is easy to see that $\dist^1(\cdot,\cdot)$ satisfies the two conditions. For any scale $i$, the algorithm  computes a matching $M$ and dual weights $y'(\cdot)$ so that $M, y'(\cdot)$ is a $1$-optimal matching with respect to the costs $c^i(\cdot)$. For every vertex $v \in A\cup B$, let $y(v)$ be the sum of the dual weight $y'(v)$ and the initial dual weight assigned at the start of the scale $i$ to $v$. Then, it can be shown that $M, y(\cdot)$ is $1$-optimal with respect to $\dist_i(\cdot,\cdot)$. 

For any $i \ge 1$, we use the  $1$-optimal matching $M, y(\cdot)$  returned by the algorithm for scale $i$ to generate an input for scale $i+1$ as follows. First, we set the slack (with respect to $\dist_i(\cdot,\cdot)$) of every edge of the $1$-optimal matching $M$ of scale $i$ to $0$. For any edge $(u,v) \in M$, this can be done by reducing the dual weight of one of its vertices, say $u$ by $s(u,v)$, i.e., $y(u) \leftarrow y(u) - s(u,v)$. Note that any reduction in dual weight of $u$ does not violate~\eqref{eq:feas3} or~\eqref{eq:feas4} and so $M, y(\cdot)$ remains $1$-optimal.    
After this, we transfer the dual weights from scale $i$ to scale $i+1$  by simply setting, for any vertex $v \in A \cup B$, $y(v) \leftarrow 2y(v)-2$. Therefore, at the beginning of scale $i+1$, the  reduced cost $\dist^{i+1}(u,v)=\dist_{i+1}(u,v) - y(u)- y(v)$ on every edge is at least $2$ and every edge $(u,v)$ in the $1$-optimal matching $M$ of scale $i$ has  $\dist^{i+1}(u,v) \le 6$. So, the cost of an optimal matching and also any $1$-optimal matching with respect to the $\dist^{i+1}(\cdot,\cdot)$  is upper bounded by $O(n)$ as desired.

 \paragraph{Algorithm for each scale.}Now, we present the algorithm for scale $i$. We refer to an edge $(u,v)$ as \emph{admissible} if
it has a slack of $0$, i.e., \begin{eqnarray*}
 y(u) + y(v) &=& \dist^{i+1}(u,v) +1 \qquad \textrm{if $(u,v) \not\in M,$}\\
y(u) + y(v) &=& \dist^{i+1}(u,v) - 1 \qquad \textrm{if $(u,v) \in M.$} 
 \end{eqnarray*} 
An \emph{admissible graph} is the set of admissible edges. The algorithm runs in two stages. The first stage of the algorithm executes $O(\sqrt{n})$ iterations. In each iteration, the algorithm  initiates a DFS search from each free vertex in $B_F$. If it finds an augmenting path in the admissible graph, then it augments the matching right away.  Consider a DFS\ initiated from $b \in B_F$. Let $P = \langle u_1, ..., u_k\rangle$ be the current path of the DFS with $u_1=b$.

\begin{itemize}
\item  If there is no admissible edges outgoing from $u_k$, then remove $u_k$ from $P$. If $u_k \in B$, set $y(u_k) \leftarrow y(u_k)+1$. Otherwise, $u_k \in A$, and set $y(u_k) \leftarrow y(u_k)-1$.

\item Otherwise, suppose there is an admissible edge from $u_k$ to  a vertex $v$. If $v$ is a free vertex of $A$, then the algorithm has found an augmenting path of admissible edges and it augments the matching along the path. Otherwise, it adds $v$ to the path as vertex $u_{k+1}$ and continues the search from $v$.

\end{itemize} This completes the description of a DFS search for augmenting paths.

In a single iteration, a DFS\ is initiated from every free vertex of $B_F$. At the end of the iteration, it can be shown that there are no augmenting paths consisting of only admissible edges and the dual weights of every free vertex $b \in B_F$ are increased by exactly $1$ when $b$ is removed from $P$.  After $\sqrt{n}$ iterations, the dual weights of vertices in $B_F$ will be  $\sqrt{n}$.  It can be shown that, the sum of dual weights of $A_F$ and $B_F$ cannot exceed the cost of a $1$-optimal matching, i.e., $O(n)$. Furthermore, the dual weights of vertices of $A_F$ are maintained as 0. So, 

$$ \sum_{v\in B_F\cup A_F} y(v) \ge |B_F|\sqrt{n} = O(n).$$  This bounds  $|B_F|$  by $O(\sqrt{n})$. After this, we iteratively (for $O(\sqrt{n})$ iterations) execute Hungarian Search in $O(m)$  time to find an augmenting path and augment the matching.  The total computation time for a single scale is $O(m \sqrt{n})$, and summed over all $O(\log (nC))$ scales, total time taken is  $O(m \sqrt{n}\log{(nC)})$. Gabow and Tarjan show that   the total length of all the augmenting paths found  is $O(n \log n)$, and a similar argument can be applied here.

There are two aspects in which this description of Gabow and Tarjan's algorithm differs from the original GT-Algorithm. First, in the first $\sqrt{n}$  phases, we avoid doing a Hungarian Search and only conduct several partial depth-first searches. Second, for any partial DFS, all vertices of $B$ (resp. $A$) that are visited by the DFS\ but do not lead to an augmenting path undergo an increase (resp. decrease) in their dual weight. The dual weights of $B$ start at $0$ and only increase during the algorithm. Therefore, dual weights of $B$ are non-negative. Similarly, the dual weights of $A$ start at $0$ and may reduce during the algorithm. So, dual weights of vertices in $A$ are\ non-positive. So,  if the partial DFS   visits any vertex, except for vertices that are along any augmenting path $P$, there will be an increase in the magnitude of its dual weight by $1$. Whenever a vertex is visited, perhaps each of its $\mathrm{deg}(v)$ neighbors could be explored. By careful analysis, the time taken to execute the first $\sqrt{n}$ phases can also be bounded by $O(\sum_{v \in A\cup B}\mathrm{deg}(v)|y(v)|+ \sum_{j=1}^n |P_j|) = O(m\sqrt{n} + n\log n)$, where $P_j$ is the $j^{th}$ augmenting path computed by the algorithm.

In the next section, we introduce definitions for feasibility, admissiblity and net-cost as used in our algorithm.

\section{Preliminaries}
\label{sec:preliminaries}
In this section, we introduce conditions for feasibility, admissibility and the definitions of slack and net-cost as used in our algorithm. Using these definitions, we establish critical properties (Lemma~\ref{lem:admissible-feasible} and~\ref{lem:ynetcost}) of paths in an admissible graph. These definitions and properties are based on an $r$-clustering, which we formally introduce next.

\paragraph{$r$-clustering of a graph $G$.} Consider a partitioning of any graph $G(V,E)$ into $l$  edge-disjoint subgraphs called \emph{pieces} and denoted by  $\R(G) =\{\R_1(V_1, E_1),\ldots, \R_l(V_l, E_l)\}$.  For each piece $\R_j$, the vertex set $V_j$ and the edge set $E_j$ of any piece $\R_j$ is the set  $E_j = \{(a,b) \mid a,b \in V_j\}$. Furthermore, $\bigcup_{j=1}^l V_j = V$ and $E=\bigcup_{j=1}^l E_j =E$.\  A vertex $v \in V$ which has edges incident from two or more pieces is called a \emph{boundary vertex}. An edge which is adjacent to one or more boundary vertices is a \emph{boundary edge}. Any other edge is an \textit{interior} edge. Let $\boundary_j$ denote the set of boundary vertices of $\R_j$ and let $\boundary = \bigcup_{j=1}^l \boundary_j$. 
\begin{definition}
\label{def:r-division}

A partition $\R(G) = \{\R_1(V_1, E_1),\ldots, \R_l(V_l, E_l)\}$ of a graph $G$
is an   $r$-clustering  if $l=O(n/\sqrt{r})$, for each $\R_j$, $|V_j|=O(r)$ and $|\boundary_j|=O(\sqrt{r})$. Furthermore, the total number of boundary vertices is $|\boundary|\le\sum_{j} |\boundary_j| = O(n/\sqrt{r})$. 
 Let $k_1, k_2$ be  constants such that $ \max_{j} {|\boundary_j|} \le k_1\sqrt{r}$ and  $|\boundary| \le k_2 n / \sqrt{r}$. Also, let $|V_j|=n_j$ and $|E_j|=m_j$.

\end{definition}

\noindent For any choice of $r$, an $r$-clustering can be computed on $K_h$-minor free graphs in $O(m \log n + n^{1 + \epsilon}\sqrt{r})$ time~\cite{wulff2011separator}.  Note that the $r$-clustering computed in \cite{wulff2011separator} has $\tilde{O}(n/\sqrt{r})$ pieces, $\tilde{O}(n/\sqrt{r})$ boundary nodes (counting multiplicities), and $\tilde{O}(\sqrt{r})$ boundary nodes per piece. For simplicity in exposition, we present for the case without the $\poly(\log{n})$ terms. After presenting our algorithm, we briefly describe in Section~\ref{subsec:hminor}, the necessary changes to account for these $\poly(\log{n})$ terms.  

\paragraph{Convention for notation.} Throughout this paper, we will deal with a bipartite graph $G$. For any vertex $u \in A\cup B$, throughout this paper we use $\lambda_u = -1$ if $u \in A$ and $\lambda_u = 1$ if $u \in B$.  For simplicity of analysis, we assume without loss of generality that $\sqrt{r}$ is an integer. Given a matching $M$ and a set of dual weights $y(\cdot)$, we refer to its residual graph by $G_M$. Note  that the vertex and edge sets of $G$ and $G_M$  are identical (except for the directions) and a matching, alternating path or an alternating cycle in $G$ is also a matching, directed path or a directed cycle in $G_M$. So, if there is any subset  $P$ of edges in $G$, we will also use $P$ to denote the same subset of edges in $G_M$, the directions of these edges are determined by whether or not an edge is in $M$ .  We will define a net-cost for an alternating path (or cycle) $P$ in our algorithm and denote it by $\phi(P)$. Any directed path or cycle in $G_M$ will inherit its net-cost from $G$. During the course of our algorithm, for any weighted and directed graph $K$, we will  use the notation $K'$ to be the graph identical to $K$ where the cost of any directed edge in the graph is replaced by its slack.
Recall that an $r$-clustering $\R(G)$ partitions the edges of $G$. Since $G$, $G_M$ and $G_M'$ have the same underlying set of edges, $\R(G)$ can be seen as an $r$-clustering of $G_M$ and $G_M'$ as well.

We next introduce a notion of feasibility that is based on an $r$-clustering. We assume that we are given an $r$-clustering, $\R(\Gr) =\{\R_1(V_1, E_1),\ldots, \R_l(V_l, E_l)\}$ with ($l = O(n/\sqrt{r})$).  Recall that we denote by $\boundary$ the boundary vertices of $\R$ and by $\boundary_j$ the set of boundary vertices in $\R_j$. For every edge $uv \in E_j$, we define a $0/1$ indicator variable $i_{uv}$ to be $1$ iff $uv$ is a boundary edge in $\R$. We define a value $\delta_{uv}$ to be $\max\{1, i_{uv}\frac{m_j n}{m\sqrt{r}}, 2 i_{uv}\sqrt{r}\}$. For any edge induced subgraph $G^*(V^*,E^*)$ of $\Gr(A\cup B,E)$, we say that a matching $M\subseteq E^*$ and a set of dual weights $y(\cdot)$ on vertices of $V^*$ are 
\emph{$\R$-feasible } if every edge $(u,v) \in E^*$ satisfies the following two conditions:

\begin{eqnarray}
& &y(u) + y(v) \le \dist(u,v) + \delta_{uv} \quad \mbox{for } (u,v) \notin M, \label{eq:feas5}\\
& &y(u) + y(v) \ge \dist(u,v) - \delta_{uv}  \quad \mbox{for } (u,v) \in M. \label{eq:feas6}
\end{eqnarray}

The algorithm in~\cite{soda-18} used a very similar definition except that it uses a different $\delta_{uv}$. Additionally, here we allow matching edges to violate the traditional feasibility constraints.  We can define the slack of an edge $(u,v)$ denoted by $s(u,v)$ with respect to the dual weights as
\begin{eqnarray}
& & s(u,v)=\dist(u,v) + \delta_{uv} - y(u) - y(v)\quad \mbox{for } (u,v) \notin M,\label{eq:slack1}\\
& &s(u,v) = y(u) + y(v) - \dist(u,v) + \delta_{uv}  \quad \mbox{for } (u,v) \in M. \label{eq:slack2}
\end{eqnarray}
We define \emph{admissible} edges next. Any boundary edge $(u,v)$ is admissible only if $s(u,v) \le \sqrt{r}$. Any edge $(u,v)$ that does not border a boundary vertex is admissible only if $s(u,v)=0$. Note that in both cases an admissible edge satisfies  $$\delta_{uv} \ge 2s(u,v).$$Our algorithm will compute admissible paths and cycles with respect to $\R$-feasible matchings. For any such path $P$, we update the matching $M$ by setting $M \leftarrow M\oplus P$. The following lemma shows that the new matching after such an operation remains $\R$-feasible.

\begin{lemma}
\label{lem:admissible-feasible}
Given an $\R$-feasible matching $M,y(\cdot)$ on any bipartite graph $G(V,E)$, let $\dir{P}$ be a path or cycle in $G_M$ consisting of only admissible edges. Then, $M' \leftarrow\ M \oplus \dir{P}, y(\cdot)$ is also an $\R$-feasible matching. Furthermore, the slack on every edge $(u,v)$ of $\dir{P}$ with respect to $M', y(\cdot)$ is at least  $\delta_{uv}$. 
\end{lemma}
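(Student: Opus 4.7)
The plan is to observe that only edges lying on $\dir{P}$ switch their membership status between $M$ and $M'$; every other edge retains both its dual weights and its matching status, and therefore automatically inherits $\R$-feasibility from the hypothesis. So the argument reduces to verifying the two feasibility inequalities~\eqref{eq:feas5}--\eqref{eq:feas6} for each edge $(u,v)\in \dir{P}$ with respect to the new matching $M'$.

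For such an edge $(u,v)$, I would argue via a single identity: the definitions~\eqref{eq:slack1}--\eqref{eq:slack2} give
\[
s_M(u,v) + s_{M'}(u,v) \;=\; 2\delta_{uv},
\]
because flipping the matching status of $(u,v)$ while holding $y(u),y(v),\dist(u,v)$ fixed interchanges the two expressions for slack, whose sum is $2\delta_{uv}$. Hence the new slack satisfies $s_{M'}(u,v) = 2\delta_{uv} - s_M(u,v)$. Since $(u,v)$ is admissible in $(M,y(\cdot))$, the excerpt already records that $\delta_{uv} \ge 2s_M(u,v)$, i.e.\ $s_M(u,v)\le \delta_{uv}/2$. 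Substituting yields
\[
s_{M'}(u,v) \;\ge\; 2\delta_{uv} - \tfrac{\delta_{uv}}{2} \;=\; \tfrac{3\delta_{uv}}{2} \;\ge\; \delta_{uv} \;\ge\; 0.
\]
The last chain of inequalities simultaneously delivers both conclusions: $s_{M'}(u,v)\ge 0$ is precisely the $\R$-feasibility inequality for $(u,v)$ in $M'$ (in whichever of the two forms~\eqref{eq:feas5}--\eqref{eq:feas6} applies after the flip), and $s_{M'}(u,v)\ge \delta_{uv}$ is the stated lower bound on the slack of every edge of $\dir{P}$ in the post-augmentation instance.

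I do not expect any real obstacle here: the lemma is essentially a bookkeeping identity combined with the explicitly stated admissibility slackness $\delta_{uv}\ge 2s(u,v)$. The only point that warrants a sentence of care is that $\dir{P}$ may be a cycle (rather than a path between two free vertices), but this does not matter for the argument above, which is purely edge-by-edge; the structural fact that $\dir{P}$ is alternating is only used to ensure that flipping $\dir{P}\cap M$ out and $\dir{P}\setminus M$ in yields a legitimate matching $M'$, a standard consequence of $\dir{P}$ being a path or cycle in $G_M$.
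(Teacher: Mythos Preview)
Your argument is correct. The key identity $s_M(u,v)+s_{M'}(u,v)=2\delta_{uv}$ is valid (immediate from~\eqref{eq:slack1}--\eqref{eq:slack2}), and combined with the admissibility bound $\delta_{uv}\ge 2s_M(u,v)$ stated just before the lemma, it gives $s_{M'}(u,v)\ge \tfrac{3}{2}\delta_{uv}\ge\delta_{uv}$ in one line.

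The paper proves the same lemma by a four-way case split (matching vs.\ non-matching edge, boundary vs.\ interior edge), in each case plugging the relevant admissibility threshold ($s=0$ or $s\le\sqrt r$) into the feasibility inequality for the flipped status and checking it directly. Your route is more compact: the sum identity absorbs all four cases simultaneously, and the single inequality $\delta_{uv}\ge 2s_M(u,v)$ (which the paper already abstracts out) is all that is needed. What the paper's case analysis buys is explicitness about where the thresholds $0$ and $\sqrt r$ enter and how much room is left in each case; what your version buys is brevity and a slightly sharper bound ($\tfrac{3}{2}\delta_{uv}$ rather than $\delta_{uv}$, though the paper only claims and only needs the latter).
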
 
\begin{proof}
First, consider any edge $(u,v) \in P\cap M$.\ Suppose $(u,v)$ is not a boundary edge, so $\delta_{uv} = 1$. Then the slack $s(u,v)$ is zero and 
\begin{eqnarray}
y(u)\ +\ y(v)\ - \dist(u,v)\ + \delta_{uv}  = 0, \\
y(u)\ +\ y(v)\ = \dist(u,v)\ - \delta_{uv}\ < \dist(u,v) + \delta_{uv}.
\end{eqnarray}
Thus, $(u,v)$ is $\R$-feasible with respect to $M'$ and the slack on the edge with respect to $M'$ is at least $\delta_{uv}$. Otherwise, $(u,v)$ is a boundary edge, and $\delta_{uv} \geq 2\sqrt{r}$. Since $(u,v) $ is an admissible edge in the matching $M$, $(u,v)$ will satisfy \begin{eqnarray}
s(u,v) =  y(u)\ +\ y(v)\ - \dist(u,v)\ + \delta_{uv}\ \leq \sqrt{r}, \\
y(u)\ +\ y(v)\ \le \dist(u,v)\ - \delta_{uv}\ +\ \sqrt{r} < \dist(u,v)\ + \delta_{uv}.
\end{eqnarray}
So, feasibility condition \eqref{eq:feas5} holds for $(u,v)$ with respect to $M'$ and the slack of $(u,v)$ is at least $\delta_{uv}$.

Next, consider $(u,v) \in P\setminus M$. Suppose $(u,v)$ is not a boundary edge, so $\delta_{uv}=1$. Then we have
 \begin{eqnarray*}
 \dist(u,v) + \delta_{uv} - y(u) - y(v) &=& 0,\\
 y(u) + y(v) = \dist(u,v) + \delta_{uv} &>& \dist(u,v) - \delta_{uv}.
 \end{eqnarray*}
 So, feasibility condition \eqref{eq:feas6} holds for $(u,v)$ with respect to the matching $M'$ and dual weights $y(\cdot)$ and the slack of $(u,v)$ is at least $\delta_{uv}$. Otherwise, $(u,v)$ is a boundary edge, and $\delta_{uv} \ge 2\sqrt{r}$. Then the admissible edge $(u,v)$ will satisfy \begin{eqnarray*}
  \dist(u,v)\ + \delta_{uv} - y(u) - y(v) &\leq& \sqrt{r}, \\
y(u)\ +\ y(v)\ \ge  \dist(u,v)\ + \delta_{uv}\ - \sqrt{r} &>& \dist(u,v) - \delta_{uv}.
\end{eqnarray*}
So, feasibility condition \eqref{eq:feas6} is satisfied with respect to the matching $M'$ and dual weights $y(\cdot)$ and the slack of $(u,v)$ is at least $\delta_{uv}$.  Since all edges of $G$ satisfy the $\R$-feasibility conditions \eqref{eq:feas5} and \eqref{eq:feas6}, the matching $M'$ and $y(\cdot)$ is $\R$-feasible.
\end{proof}

An \emph{$\R$-optimal} matching is a perfect matching that is $\R$-feasible. Our algorithm, for the graph $G(A\cup B,E)$ and the $r$-clustering $\R(G)$,  computes an $\R$-optimal matching $M$ along with its dual weights $y(\cdot)$. Note that the notion of $\R$-feasible matching can be defined for any edge induced subgraph of $G$. In the context of this paper, the  only induced subgraphs that we consider are pieces from the set $\{\R_1,\ldots,\R_l\}$ that are given by the $r$-clustering of the graph.  Our algorithm  will maintain an $\R$-feasible matching for each piece. Throughout this paper, we fix the $r$-clustering in the definition of $\R$-feasibility to be $\R(G)$. For any  $\R$-feasible matching, when obvious from the context, we will not explicitly mention the induced subgraph the matching is defined on.

For our algorithm, we will define the  \emph{net-cost} of an edge $(u,v)$, $\phi(u,v)$ as 
\begin{eqnarray}
& &\phi(u,v) = \dist(u,v) + \delta_{uv} \quad \mbox{for } (u,v) \notin M, \label{eq:netcost1}\\
& &\phi(u,v) = -\dist(u,v) + \delta_{uv} \quad \mbox{for } (u,v) \in M. \label{eq:netcost2}
\end{eqnarray}
For any set of edges $S$, we can define the net cost as
$$\phi(S) = \sum_{(u,v) \in S} \phi(u,v),$$ and the total slack of $S$ can be defined in a similar fashion. Our interest is in net costs for the case where $S$ is an augmenting path, alternating path, or alternating cycle. Consider if we have any matching $M$ and let $M' \leftarrow M \oplus S$. Then, 
\begin{equation}
\label{netcost3}
\phi(S) = \dist(M') - \dist(M) + \sum_{(u,v) \in S} \delta_{uv}.
\end{equation}
From the definitions of net-cost and slack, we get the following relation:
\begin{eqnarray}  s(a,b) &=& \phi(a,b) + y(a) + y(b)  \qquad \textrm{if }(a,b) \in M,\\  
s(a,b) &=& \phi(a,b) - y(a) - y(b) \qquad \textrm{ if } (a,b) \not\in M.
\end{eqnarray}

For any vertex $u \in A\cup B$, throughout this paper we use $\lambda_u = -1$ if $u \in A$ and $\lambda_u = 1$ if $u \in B$. Given any directed path (resp. cycle) $\dir{P}$ from a vertex $u$ to a vertex $v$ in $G_M$ if we add the above equations over all the edges of $\dir{P}$, we get

\begin{equation}
\label{slackcost}
\sum_{(a,b) \in \dir{P}} s(a,b) = \sum_{(a,b) \in \dir{P}} \phi(a,b) + \lambda_v y(v) - \lambda_u y(u) = \phi(\dir{P}) +\lambda_v y(v) - \lambda_uy(u).
\end{equation}

Despite allowing for slacks in the admissible edges, the following lemma shows that for any admissible path the difference in the dual weights of its first and last vertex is related to the change in the matching cost and the number of pieces visited by this path.  
\begin{lemma} \label{lem:ynetcost}
Given an $\R$-feasible matching $M, y(v)$,  suppose we have a simple alternating path or simple alternating cycle $\dir{P}$ from $u$ to $v$ ($u=v$ if $\dir{P}$ is a cycle) consisting only of admissible edges. Then,
\begin{equation}\label{eq:ynetcost}\lambda_u y(u) - \lambda_v y(v) \ge \wt(M\oplus \dir{P}) - \wt(M) + \sum_{(p,q) \in \dir{P}} \frac{\delta_{pq}}{2}. \end{equation} \end{lemma}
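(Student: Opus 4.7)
The plan is to combine the three identities already established in the excerpt: the slack-versus-net-cost relation \eqref{slackcost}, the net-cost-versus-matching-cost relation \eqref{netcost3}, and the admissibility inequality $\delta_{pq}\ge 2s(p,q)$ noted right after the definition of admissible edges.

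First I would start from \eqref{slackcost}, which gives
\[
\lambda_u y(u) - \lambda_v y(v) \;=\; \phi(\dir{P}) \;-\; \sum_{(p,q)\in \dir{P}} s(p,q).
\]
This identity is stated for a directed path in $G_M$, but as the telescoping argument behind it leaves the endpoint contribution $\lambda_v y(v)-\lambda_u y(u)$ intact, it applies verbatim to a cycle, where $u=v$ and the left-hand side vanishes. Next I would plug in \eqref{netcost3}, rewriting $\phi(\dir{P})$ as $\wt(M\oplus \dir{P})-\wt(M)+\sum_{(p,q)\in \dir{P}}\delta_{pq}$, to obtain
\[
\lambda_u y(u) - \lambda_v y(v) \;=\; \wt(M\oplus \dir{P})-\wt(M) \;+\; \sum_{(p,q)\in \dir{P}}\Bigl(\delta_{pq}-s(p,q)\Bigr).
\]

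Finally I would invoke admissibility. Every edge of $\dir{P}$ is admissible, and the definition in Section \ref{sec:preliminaries} guarantees $s(p,q)\le \delta_{pq}/2$ on such an edge (both for non-boundary edges, where $s=0$ and $\delta\ge 1$, and for boundary edges, where $s\le \sqrt{r}$ and $\delta\ge 2\sqrt{r}$). Summing this inequality over $\dir{P}$ gives $\sum_{(p,q)\in \dir{P}} s(p,q) \le \sum_{(p,q)\in \dir{P}} \delta_{pq}/2$, hence $\sum (\delta_{pq}-s(p,q)) \ge \sum \delta_{pq}/2$. Substituting yields exactly \eqref{eq:ynetcost}.

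There is no real obstacle: the lemma is essentially a bookkeeping consequence of the definitions of slack, net-cost, and admissibility. The only point that needs mild care is making sure the derivation is uniform for both the path case and the cycle case; in the cycle case the left-hand side is $0$ and the inequality reads as a non-positivity statement, which is automatic from $s\le \delta/2$ and the telescoping formula. Since $\dir{P}$ is assumed simple, no edge is double-counted on either side, so no adjustment is needed there.
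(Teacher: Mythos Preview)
Your proof is correct and follows essentially the same route as the paper's: combine \eqref{slackcost} and \eqref{netcost3} to express $\lambda_u y(u)-\lambda_v y(v)$ as $\wt(M\oplus\dir P)-\wt(M)+\sum(\delta_{pq}-s(p,q))$, then use admissibility to get $\delta_{pq}-s(p,q)\ge \delta_{pq}/2$. The only cosmetic difference is that the paper re-derives $\delta_{pq}\ge 2s(p,q)$ by splitting into the boundary and non-boundary cases, whereas you invoke the already-stated inequality directly.
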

\begin{proof}
Plugging in \eqref{netcost3} into equation \eqref{slackcost}, we get 
\begin{eqnarray*}
\sum_{(p,q) \in \dir{P}} s(p,q)&=&\dist(M\oplus \dir{P}) - \dist(M) + \sum_{(p,q) \in \dir{P}} \delta_{pq}  +\lambda_v y(v) - \lambda_uy(u),\\
\lambda_u y(u)-\lambda_v y(v)&=& \dist(M\oplus \dir{P}) - \dist(M) + \sum_{(p,q) \in \dir{P}} \delta_{pq} - s(p,q).    
\end{eqnarray*}
Consider the case where $\delta_{pq} = 1$. Since $(p,q)$ is admissible, $s(p,q) = 0$ and we get $\delta_{pq} - s(p,q) = 1$. Otherwise, $\delta_{pq} \geq 2\sqrt{r}$, $(p,q)$ is a boundary edge, and $s(p,q) \leq \sqrt{r}$. Then we get $\delta_{pq} - s(p,q) \geq \delta_{pq} / 2$. Summing over both cases gives us
\begin{equation*}
\sum_{(p,q) \in \dir{P}} \delta_{pq} - s(p,q) \geq \sum_{(p,q) \in \dir{P}} \frac{\delta_{pq}}{2},
\end{equation*}
which gives us \eqref{eq:ynetcost}.
\end{proof}   

\begin{cor} \label{cpr:ynetcost}
Given an $\R$-feasible matching $M, y(v)$,  suppose we have a simple alternating path or simple alternating cycle $\dir{P}$ from $u$ to $v$ ($u=v$ if $\dir{P}$ is a cycle) consisting only of admissible edges and let $B(\dir{P})$ denote the edges  participating in $\dir{P}$ that are incident on boundary vertices. Then,
\begin{equation}\label{eq:ynetcost2}\lambda_u y(u) - \lambda_v y(v) \ge \wt(M\oplus \dir{P}) - \wt(M) + |B(P)|\sqrt{r}. \end{equation} \end{cor}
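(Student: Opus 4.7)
The plan is to derive this as an immediate consequence of Lemma~\ref{lem:ynetcost} by lower-bounding the summation $\sum_{(p,q) \in \dir{P}} \delta_{pq}/2$ using the definition of $\delta_{pq}$ on boundary edges.

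First, I would invoke Lemma~\ref{lem:ynetcost} directly to obtain
\begin{equation*}
\lambda_u y(u) - \lambda_v y(v) \ge \wt(M\oplus \dir{P}) - \wt(M) + \sum_{(p,q) \in \dir{P}} \frac{\delta_{pq}}{2}.
\end{equation*}
It then suffices to prove that $\sum_{(p,q) \in \dir{P}} \delta_{pq}/2 \ge |B(\dir{P})|\sqrt{r}$.

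Next, I would split the sum into contributions from boundary edges (edges in $B(\dir{P})$) and from non-boundary edges. Recall from the definition preceding~\eqref{eq:feas5} that $\delta_{uv} = \max\{1, i_{uv}\frac{m_j n}{m\sqrt{r}}, 2 i_{uv}\sqrt{r}\}$, so for every boundary edge $(p,q)$ we have $i_{pq}=1$ and therefore $\delta_{pq} \ge 2\sqrt{r}$, giving $\delta_{pq}/2 \ge \sqrt{r}$. For every non-boundary edge, $\delta_{pq} \ge 1 > 0$, so its contribution to the sum is nonnegative. Discarding the non-boundary contributions and keeping only the boundary contributions yields
\begin{equation*}
\sum_{(p,q) \in \dir{P}} \frac{\delta_{pq}}{2} \ge \sum_{(p,q) \in B(\dir{P})} \frac{\delta_{pq}}{2} \ge |B(\dir{P})|\sqrt{r},
\end{equation*}
which, combined with the inequality from Lemma~\ref{lem:ynetcost}, gives~\eqref{eq:ynetcost2}.

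There is no real obstacle here; the corollary is a direct weakening of the lemma obtained by noting that boundary edges contribute at least $\sqrt{r}$ each to $\sum \delta_{pq}/2$ and non-boundary edges contribute nonnegatively. The only point worth being careful about is confirming that the definition of $\delta_{uv}$ indeed guarantees $\delta_{pq} \ge 2\sqrt{r}$ on every boundary edge, which follows from the $2 i_{uv}\sqrt{r}$ term in the max.
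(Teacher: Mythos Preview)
Your proposal is correct and matches the paper's own proof essentially verbatim: the paper simply states that the corollary follows from~\eqref{eq:ynetcost} together with the fact that $\delta_{pq}\ge 2\sqrt{r}$ for all boundary edges, which is exactly the argument you spell out.
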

\begin{proof}
This follows easily from~\eqref{eq:ynetcost} and the fact that all values of $\delta_{pq}$ for boundary edges $(p,q)$ are at least $2\sqrt{r}$.
\end{proof}   

\section{Compressed Residual Graph}
\label{sec:compressed-graph}
In this section, we formally present the compressed residual graph and compressed feasibility. Note that there are a few key differences between our definition and the compressed residual graph as defined in~\cite{soda-18}. We highlight these differences in the discussion below. 

\paragraph{Active and inactive free vertices.}
During the \fastmatch\ procedure, the dual weight of a free vertex $b \in B$  may exceed a pre-determined upper bound  $\beta = \steptwoweight$. Such vertices are called \textit{inactive}. More specifically, a free vertex $b$ is inactive with respect to an $\R$-feasible matching $M, y(\cdot)$ if $y(b) \geq \beta$. All other free vertices of $B$ are \textit{active}. In our algorithm, each piece $\R_j$ will have a corresponding $\R$-feasible matching. Therefore, each piece has its own set of inactive vertices, $B_j^{\mathcal{I}}$, and its own set of active vertices, $B_j^{\mathcal{A}}$.
\paragraph{Compressed residual graph $H$.} 
We now define a compressed residual graph  $H$, which will be useful in the fast execution of the second step.  This definition is similar to the one in \cite{soda-18} with two differences. First, the vertex set of $H$ is modified to include inactive and active free vertices. Second, we will allow any vertex $v$ in $H$ to have an edge to itself, i.e., self-loops.

For a matching $M$, let $G_M$ denote the (directed) residual graph with respect to $M$.  Let $\R(G_{M})=\R(G)$ be the $r$-clustering of $G_M$ as given by Definition~\ref{def:r-division}. Let $A_F$ and $B_F$ denote the set of free vertices (vertices not matched by $M$) of $A$ and $B$ respectively. Let $B_F^\mathcal{A}$ (resp. $B_F^\mathcal{I}$) be the set of free vertices that are also active (resp. inactive). Our sparse graph $H$ will be a {\em weighted multi-graph} whose vertex set $V_H$ and the edge set  $E_H$ is defined next. 

We define the vertex set $V_j^H$ and the edge set $E_j^H$  for each  piece $\R_j$. The vertex set  $V_{H}$ and the edge set  $E_H$ are  simply the union of all the vertices and edges across all pieces.  For every piece $\R_j$, $V_j^H$ contains  the boundary vertices  $\boundary_j$. Also, if there is at least one internal vertex of $B$    that is also an active free vertex, i.e., $(V_j\setminus \boundary_j) \cap B_F^\mathcal{A} \neq \emptyset$, then we create a special vertex $b_j^{\mathcal{A}}$ to represent all vertices of this set in $V_j^H$. We also create a vertex $b_j^\mathcal{I}$ to represent all inactive free vertices of $(V_j \setminus \boundary_j)\cap B_F^\mathcal{I}$. Similarly, we create a vertex $a_j$  to represent all  the free vertices in $(V_j\setminus \boundary_j) \cap A_F$ if any exist.  We refer to the three additional vertices for $\R_j$ the \emph{free internal vertices} of $\R_j$ and refer to $b_j^\mathcal{A}$ (resp. $b_j^\mathcal{I}$) as the active (resp. inactive) free internal vertex. We set  $V_j^H = \boundary_j \cup \{a_j, b_j^{I}, b_{j}^{\mathcal{A}}\}$, $A_j^H = (\boundary_j \cap A) \cup \{a_j\}$, and $B_j^H = (\boundary_j \cap B)\cup \{b^\mathcal{A}_j, b_j^\mathcal{I}\}$. The free vertices of $B$  in piece $\R_j$ of the compressed graph $H$ are represented by $B_j^F= (B_F \cap \boundary_j) \cup \{b_j^\mathcal{A}, b_j^\mathcal{I}\}$ and the free vertices of $A$ in piece $\R_j$ of $H$ are represented by $A_j^F = (A_F \cap \boundary_j)\cup \{a_j\}$.  The vertex set $V_H$ of $H$ is thus given by $V_H = \bigcup_{j=1}^{l} V_{j}^H$. We also define the sets $B_H= \bigcup_{j=1}^l B_j^H$  and $A_H=\bigcup_{j=1}^l A_j^H$. The free vertices of  $B$ in $H$ are denoted by $B_H^F=\bigcup_{j=1}^l B_j^F$ and the free vertices of $A$ is denoted by  $A_H^F=\bigcup_{j=1}^l A_j^F$ . The vertices in $H$ represent sets of vertices in $G$. Given a vertex in $G$, we describe the corresponding vertex in $H$ as a \textit{representative vertex}.

\noindent Next we define the set of edges $E_j^{H}$ for each pieces $\R_j$.  Each edge $(u,v)$ in $E_j^H$ will represent a corresponding shortest net-cost path from $u$ to $v$ in $\R_j$. We denote this path as $\dir{P}_{u,v,j}$ and describe this mapping from an edge of $E_H$ to its corresponding path in $G$ as \textit{projection}. For any $u, v \in V_j^H$, where $u$ and $v$ are allowed to be the same vertex, there is an edge from $u$ to $v$ in each of the following four cases.

\begin{itemize}
\item[1)] $u, v\in \boundary_j$, i.e., $u$ and $v$ are boundary vertices and there is a directed path $\dir{P}$ from $u$ to $v$ in $G_M$ that only passes through the  edges of the piece $\R_j$. Let $\dir{P}_{u,v,j}$  be the path consisting only of edges of $\R_j$ that has the smallest net-cost. We denote this type of edge as a \emph{boundary-to-boundary} edge. When $u=v$, $\dir{P}_{u,v,j}$ is the smallest net-cost cycle  inside $\R_j$ that contains the vertex $u$. Since $\R_j$ is a bipartite graph, such a cycle must consist of at least $4$ edges.\ 

\item[2)] $u = \{b_j^\mathcal{A},b_j^\mathcal{I}\}$, $v \in \boundary_j$, and there is a directed path $\dir{P}$ in $G_M$ from some free vertex in $B_F^{\mathcal{A}}\cap (V_j\setminus \boundary_j)$ (if $u=b_j^\mathcal{A}$) or $B_F^\mathcal{I} \cap (V_j\setminus \boundary_j)$ (if $u = b_j^\mathcal{I}$) to $v$ that only passes through the  edges of $\R_j$. Let $\dir{P}_{u,v,j}$  be the path consisting only of  edges of $\R_j$ that has the smallest net-cost. 

\item[3)] $u \in \boundary_j $, $v = a_j$, and there is a directed path $\dir{P}$ in $G_M$ from $u$ to some free vertex in $A_F\cap (V_j\setminus \boundary_j)$  that only passes through the  edges of $\R_j.$ Let $\dir{P}_{u,v,j}$  be the path consisting only of  edges of $\R_j$ that has the smallest net-cost.
 
\item[4)] $u = \{b_j^{\mathcal{A}},b_j^\mathcal{I}\}$ and $v = a_j$ are free vertices and there is a directed path $\dir{P}$ in $G_M$ from some vertex in the set $B_F^{\mathcal{A}}\cap (V_j\setminus \boundary_j)$ (if $u = b_j^\mathcal{A}$) or $B_F^\mathcal{I} \cap (V_j\setminus \boundary_j)$ (if $u=b_j^\mathcal{I}$) to a vertex in the set $A_F\cap (V_j\setminus \boundary_j)$ that only passes through the  edges\ in $\R_j$. Let $\dir{P}_{u,v,j}$  be the path consisting only of  edges of $\R_j$ that has the smallest net-cost. 
\end{itemize}
\begin{figure}
  \centering
  \includegraphics[width=17cm,scale=1]{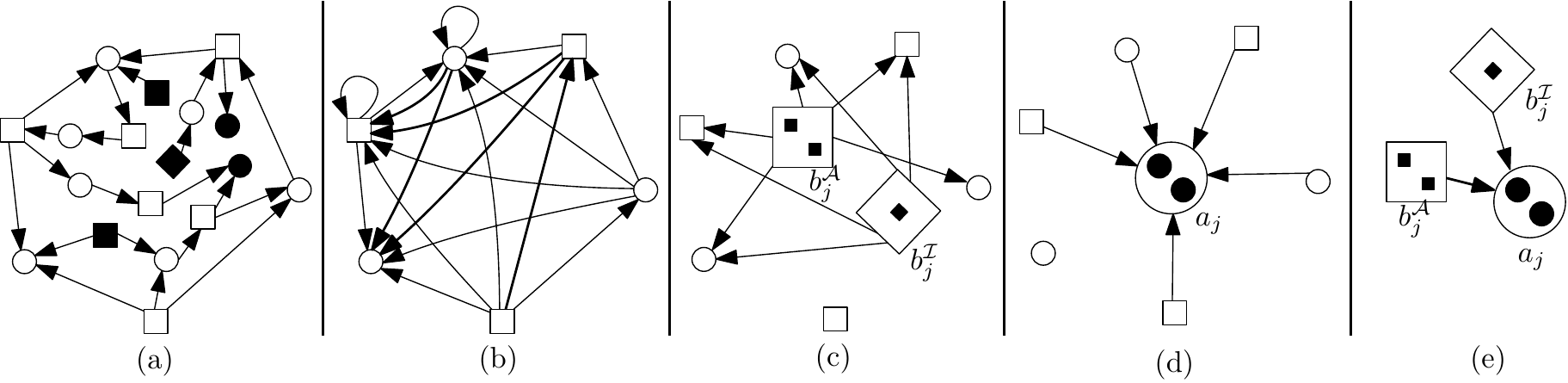}
  \caption{(a) A piece $\R_j$. The squares represent vertices of $B$ and the circles represent vertices of $A$. Filled vertices are free internal vertices. Specifically, the filled diamond represents an inactive free vertex, and the filled squares represent active free vertices. (b) The boundary-to-boundary edges and self-loop edges of $H$ for $\R_j$. (c) The edges from $b_j^{\mathcal{A}}$ and $b_j^{\mathcal{I}}$ to $\boundary_j$. (d) The edges from $\boundary_j$ to $a_j$. (e) There is a single edge from both $b_j^{\mathcal{A}}$ and $b_j^{\mathcal{I}}$ to $a_j$.}
\label{fig:H}
\end{figure}

See Figure~\ref{fig:H} for an example piece of $H$ from a piece of $G$. We set the weight of each edge to be $\phi(\dir{P}_{u,v,j})$.  We also refer to this edge $(u,v)$ as an edge of piece $\R_j$ in $H$ and denote the set of all edges of piece $\R_j$ as $E_j^H$.  The set of edges of $H$ is simply $E_H = \bigcup_j E_j^H$. Note that $H$ is a multi-graph as there can be directed path from $u$ and $v$ in multiple pieces. 

Note that the number of vertices in $H$ is $O(n/\sqrt{r}$). For each vertex $v \in \boundary$, let $\theta_v$ be the number of pieces in which it belongs. Counting multiplicity, the number of boundary vertices is $O(n/\sqrt{r})$ and therefore,$$\sum_{v\in \boundary} \theta_v = O(n/\sqrt{r}).$$
Any boundary vertex $v$ in $H$ can have edges to at most $\sqrt{r}$ boundary vertices inside any piece it participates in. Therefore, the total number of edges can be bounded by $$\sum_{v \in \boundary} \theta_v\sqrt{r} = O(n),$$   
leading to the following lemma.
\begin{lemma}
\label{lem:hsize}
The compressed residual graph $H$ has $O(n/\sqrt{r})$ vertices and $O(n)$ edges.
\end{lemma}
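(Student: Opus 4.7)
The plan is to prove both bounds by summing contributions piece-by-piece, using the size guarantees from the $r$-clustering in Definition~\ref{def:r-division}. The argument is essentially a counting exercise; the only subtlety is to make sure the three ``free internal'' vertices $a_j, b_j^{\mathcal{A}}, b_j^{\mathcal{I}}$ added per piece do not dominate either count.

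For the vertex count, I would write $V_H \subseteq \bigcup_j V_j^H$ with $V_j^H = \boundary_j \cup \{a_j, b_j^{\mathcal{I}}, b_j^{\mathcal{A}}\}$, giving
\[
|V_H| \le \sum_{j=1}^{l} |\boundary_j| + 3l.
\]
The first term is $O(n/\sqrt{r})$ by the $r$-clustering property $|\boundary| \le \sum_j |\boundary_j| \le k_2 n/\sqrt{r}$, and the second is $O(l) = O(n/\sqrt{r})$ since there are only $l=O(n/\sqrt{r})$ pieces. Hence $|V_H|=O(n/\sqrt{r})$.

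For the edge count, I would charge every edge of $E_j^H$ to its tail (starting vertex) in piece $\R_j$. Each boundary vertex $v \in \boundary_j$ has at most $|\boundary_j|+1 = O(\sqrt{r})$ possible heads inside piece $j$ (boundary-to-boundary edges, including the self-loop, plus the single edge into $a_j$ of type~3). Similarly, each of the at most two special $B$-vertices $b_j^{\mathcal{A}}, b_j^{\mathcal{I}}$ contributes at most $|\boundary_j|+1 = O(\sqrt{r})$ outgoing edges (types~2 and~4). Letting $\theta_v$ denote the number of pieces containing a boundary vertex $v$, so that $\sum_{v \in \boundary} \theta_v \le \sum_j |\boundary_j| = O(n/\sqrt{r})$, I would obtain
\[
|E_H| = \sum_{j=1}^{l} |E_j^H| \;\le\; \sum_{v \in \boundary} \theta_v \cdot O(\sqrt{r}) \;+\; 2l \cdot O(\sqrt{r}) \;=\; O(n) + O(n) \;=\; O(n).
\]

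I do not anticipate a real obstacle. The one point to be careful about is that $H$ is a multigraph, so two vertices $u,v$ that both occur as boundary vertices of multiple common pieces generate one edge per such piece; charging to the tail within a specific piece naturally handles this multiplicity, which is why the per-piece bound $|\boundary_j|+1$ on the out-degree of a tail is the right quantity to multiply through.
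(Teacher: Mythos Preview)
Your proposal is correct and follows essentially the same approach as the paper: both sum $\theta_v$ over boundary vertices to get $O(n/\sqrt{r})$ and multiply by the $O(\sqrt{r})$ per-piece out-degree to bound $|E_H|$. You are in fact slightly more thorough than the paper, which glosses over the $3l$ extra vertices and the $O(l\sqrt{r})$ edges contributed by the free internal vertices $b_j^{\mathcal{A}}, b_j^{\mathcal{I}}$; your explicit accounting of these lower-order terms is a welcome clarification but does not change the argument.
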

 This completes the description of the compressed residual graph $H$.
The compressed residual graph defined here differs from the one in~\cite{soda-18} in two ways: We classify free vertices as active and inactive and we allow for self-loops in the compressed residual graph $H$. We describe the reasons for introducing these changes. 

\ignore{
\paragraph{Need for active and inactive vertices.}
It is necessary to differentiate between active and inactive vertices because, unlike in our variant of Gabow-Tarjan, we cannot guarantee that every vertex in $H$ is visited only once in each phase of the algorithm. In fact, we can construct an example where a vertex can be visited an arbitrary number of times in the same  phase. In Figure~\ref{fig:create-new}, a DFS style search on the admissible edges of $H$ (those with slack $\le \sqrt{r}$) visits $u$ and then $z$. There are no admissible edges from $z$ and so, the search backtracks and goes to $v$ which leads to an augmenting path. After augmenting along the path, the edge $(y,z)$ is created in $H$. Now, in the same iteration, another DFS from a different free vertex may visit $y$ and then $z$ again (note that $(y,z)$ has a slack less than $\sqrt{r}$).  

Similar to the Gabow-Tarjan's variant, we can guarantee that every time a vertex in $H$ is visited, it is either part of an augmenting path (or alternating cycle/path), or its dual weight increases by $\sqrt{r}$.  A vertex that is visited many times during the execution of the algorithm, therefore, will accumulate a dual weight with a very large magnitude. When $|\ty{v}|$ for some  vertex $v$ in $B_H$  reaches the upper bound $\beta$ in the second step,  we flip edges along an alternating path to make $v$ an inactive free vertex, without decreasing the size of the matching. The vertex $v$ will then remain unused for the remainder of the second step.  
}
 \paragraph{Need for self loops.}  
Unlike Gabow and Tarjan's algorithm, we do not compute the shortest augmenting path. So, in order to guarantee that the paths we find in $H$ are simple, we need to actively look for possible cycles. Such a cycle may lie entirely within a piece and involve only a single boundary vertex. Including self-loops in $H$ is helpful in detecting such cycles.

\paragraph{Compressed feasibility.} Next, we will define the requirements for a \emph{compressed feasible matching}. We denote as $M_j$ the edges of $M$ that belong to piece $\R_j$.  $M = \bigcup_{j=1}^l M_j$. For any vertex $v \in V_H$, let $\lambda_v$ be $-1$ if $v \in A_H$ and $1$ if $v \in B_H$. For each piece $\R_j$, we maintain a dual weight $y_j(v)$ for every vertex in $v\in V_j$. These dual weights $y_j(\cdot)$ along with $M_j$ form an $\R$-feasible matching.  Additionally, we store a dual weight $\ty{v}$  for every $v \in V_{H}$.    
We say that the dual weights $\ty{\cdot}$ are \emph{$H$-feasible} if they satisfy the following conditions. For each piece $\R_j$,  and for every directed edge $(u,v) \in E_j^H$,   
\begin{eqnarray}
\lambda_u\ty{u} - \lambda_v\ty{v} &\le & \phi(\dir{P}_{u,v,j}).  \qquad \label{eq:uvpath1}
\end{eqnarray} 

 For any graph $G$, and an $r$-clustering $\R(G)$, we say that a matching $M$, a set of dual weights $y_j(\cdot)$ for the vertices of each piece $\R_j$, and a set of dual weights $\ty{\cdot}$ for the vertices $V_{H}$, form a compressed feasible matching if the following conditions (a)--(e) are satisfied.

\begin{enumerate}[(a)]
\item \label{cf-a} For every vertex $v \in A_H$, $\ty{v} \leq 0$ and for every free vertex $v \in A_H^F$, $\ty{v} = 0$.
\item \label{cf-b} Let $y_{max}$ be $\max_{v \in  \activeH} \ty{v}$. For every vertex $v \in B_{H}$, $\ty{v} \geq 0$ and for all active vertices $v \in \activeH$, $y_{max} - \sqrt{r} \le \ty{v}\le y_{max}$. For all inactive vertices $v\in \inactiveH$, $\ty{v} \geq \beta$.

\item \label{cf-c} For every piece $\R_j$, the matching $M_j $  and dual weights $y_j(\cdot)$ form an $\R$-feasible matching. 
\item \label{cf-d} The dual weights $\ty{\cdot}$ are $H$-feasible.

\item \label{cf-e} For each piece $\R_j$ and any $v \in \boundary_{j}$ ,  $|\ty{v}| \ge |y_j(v)|$. For every vertex $a \in (V_{j}\setminus \boundary_j)\cap A_F$, $|y_j(a)|=0 $. \end{enumerate}

Using (a) and (b), we can restate the $H$-feasibility conditions compactly as  
\begin{equation}
|\ty{u}| - |\ty{v}| \le \phi(\dir{P}_{u,v,j}). \label{eq:uvpathsum}
\end{equation}
and we can define the slack of an edge $(u,v) \in E_H$ to be $s_H(u,v) = \phi(\dir{P}_{u,v,j}) -|\ty{u}|+|\ty{v}|$. Let $u_0$ be the first vertex and $v_\ell$ be the last vertex of $\dir{P}_{u,v,j}$. Note that, if $\ty{u} = y_j(u_0)$ and $\ty{v} = y_j(v_\ell)$ then, using \eqref{slackcost}, 
\begin{equation}  \label{eq:slackHsumslacksG}
s_H(u,v) = \sum_{(u',v') \in \project{u}{v}{j}} s(u',v').
\end{equation}
We say that  an edge $(u,v) \in E_H$ is \emph{admissible} if the slack $s_H(u,v) \leq \sqrt{r}$. 

 Note that a boundary vertex has many different dual weights assigned to it, one for each of the pieces it belongs to. During the course of our algorithm,  the magnitudes of the dual weights of vertices in $H$ only increase (with a few exceptions).  As we do not immediately update the dual weights of all vertices in $G$,   for  some piece $\R_j$  the dual weight $y_j(\cdot)$ may not reflect the updated dual weight. This condition is captured by (e).  
\paragraph{Conventions for notations in a compressed residual graph.}For every boundary or free internal vertex, we define the \emph{representative} of $v$ to be $v$ itself if $v$ is a boundary node. Otherwise, if $v$ is a free internal vertex, then it is  one of the three free internal vertices $a_j$, $b_j^{\mathcal{I}}$  or $b_j^{\mathcal{A}}$ depending on whether $v$ is a free vertex of $A$, free inactive vertex or a free active vertex respectively. We denote  the representative of $v$ by $rep(v)$. For simplicity in exposition, wherever convenient, we will abuse notations and use $v$ to also denote this  representative $rep(v)$ in $H$. Following our convention, we set $H'$  to be a graph identical to $H$ with the weight of every edge replaced by its slack.

The compressed residual graph allows our algorithm to search for paths by modifying $\ty{\cdot}$ values in $H$ without explicitly modifying the $y_j(\cdot)$ for every piece $\R_j$.  In doing so, there may be a free vertex $b \in B_F\cap V_j$ such that $\ty{rep(b)}$ may exceed $\beta$ whereas $y_j(b)$ remains below $\beta$. In such a situation, our convention is to assume $b$ to be inactive with respect to $\R_j$. Thus, when a vertex of $H$ becomes inactive, all the vertices it represents also become inactive. 
This convention fits with the notion that the values $\ty{\cdot}$ can be seen as up to date, while the values $y_j(\cdot)$ are lazily updated. 

To overcome mild technical challenges encountered in the presentation of the algorithm, we introduce two useful procedures next.   

\paragraph{Procedures that reduce dual weight magnitudes.} We  introduce two  procedures, called \reduce$(b_j^{\mathcal{A}} /b_j^{\mathcal{I}} , \alpha)$ and \reduceslack$(v)$, that allow us to reduce the dual weights of  vertices of $B$  without violating compressed feasibility. During the second step of the algorithm, the dual weights may decrease in magnitude only within these two procedures. Otherwise, the magnitude of the dual weights only increase.  We describe these procedures next.

\reduce\ takes as input a free active (resp. inactive) internal vertex $b_j^{\mathcal{A}}$ (resp. $b_j^{\mathcal{I}}$), and a value $\alpha$ such that $0 \leq y_{max} - \sqrt{r} \leq \alpha \le \ty{b_j^{\mathcal{A}}}$ (resp. $\beta \le \alpha \le\ty{b_j^{\mathcal{I}}}$). For all $v \in (V_j \setminus \boundary_j) \cap B_F^{\mathcal{A}}$ (resp. $B_F^{\mathcal{I}}$), if $y_j(v) \geq \alpha$, it sets the dual weight $y_j(v) \leftarrow \alpha$ . Then it sets $\ty{b_j^{\mathcal{A}}}$ (resp. $\ty{b_j^{\mathcal{I}}}$) to $\alpha$. 

\reduceslack\ takes as input a matched vertex $v \in B$. Let $u\in A$ be the vertex that $v$ is matched to, and let $(u,v)$ belong to the piece $\R_j$. The procedure sets $y_j(v) \leftarrow  y_j(v) - s(u,v)$. If $v$ is also a boundary vertex, i.e., $v \in \boundary_j$, then it sets $\ty{v} \leftarrow y_j(v)$ and for every other piece $\R_{j'}$ such that $v\in V_{j'}$, it sets $y_{j'}(v) \leftarrow y_j(v)$.

For a discussion on why the \reduce\ and \reduceslack\ procedures do not violate the compressed feasibility conditions, see Section \ref{A:reduce} of the appendix. From that discussion, we get the following Lemma.

\begin{lemma}
\label{lem:reduce}
Invoking \reduce\ or \reduceslack\ procedures on a compressed feasible matching will  not violate any of the compressed feasibility conditions (a)--(e).
\end{lemma}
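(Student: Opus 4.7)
The proof strategy is to verify each of the five compressed-feasibility conditions (a)--(e) separately for each procedure. Three structural facts drive the argument: both procedures only decrease dual-weight magnitudes; the vertices whose weights change are of very restricted type (free internal $B$-vertices for \reduce, a single matched vertex $v$ for \reduceslack); and the projection identity \eqref{eq:slackHsumslacksG} together with non-negativity of $\R$-feasible slacks links changes in $\ty{\cdot}$ to changes in the underlying piece-level slacks.

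For \reduce($b_j^{\mathcal{A}}, \alpha$) (the $b_j^{\mathcal{I}}$ case is symmetric), the procedure only touches $\ty{b_j^{\mathcal{A}}}$ and the $y_j$-values of free internal $B$-vertices of $\R_j$. Thus (a) holds vacuously, and (e) is also preserved because no boundary vertex or $A$-side weight is modified. Condition (b) follows immediately from the precondition $y_{max}-\sqrt{r}\le\alpha\le\ty{b_j^{\mathcal{A}}}$ together with the observation that $y_{max}$ can only decrease. Condition (c) is preserved because free vertices are not incident to matching edges, so \eqref{eq:feas6} is vacuous for them, while decreasing $y_j(v)$ only loosens \eqref{eq:feas5}. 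For (d), the key structural observation is that $b_j^{\mathcal{A}}$ and $b_j^{\mathcal{I}}$ appear only as \emph{sources} of edges in $E_j^H$ (cases (2) and (4) of the edge-set definition), so decreasing $|\ty{\cdot}|$ at these vertices only decreases the left-hand side of \eqref{eq:uvpathsum}.

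For \reduceslack($v$) acting on a matched edge $(u,v)\in\R_j$, condition (a) is unaffected. For (c), the assignment $y_j(v)\gets y_j(v)-s(u,v)$ drives the slack of $(u,v)$ to exactly zero, so both \eqref{eq:feas5} and \eqref{eq:feas6} hold for it; every other edge incident to $v$, both inside $\R_j$ and inside any other piece $\R_{j'}\ni v$, is non-matching (since $v$ has a unique matched neighbour $u$), and its non-matching inequality only becomes slacker. Condition (e) is preserved because, in the boundary case, $\ty{v}$ and every $y_{j'}(v)$ are updated to the same new value, so $|\ty{v}|=|y_{j'}(v)|$ is maintained.

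The delicate conditions are (b) and (d), both of which are only nontrivial when $v\in\boundary_j$ and $\ty{v}$ itself is updated. For (d), after the update $\ty{v}=y_{j'}(v)$ holds for every piece $\R_{j'}\ni v$, which is exactly the hypothesis needed to invoke the projection identity \eqref{eq:slackHsumslacksG} at endpoints of compressed edges incident to $v$: the new $H$-slack decomposes as a sum of non-negative underlying $\R$-slacks (plus a non-negative contribution from the source endpoint via (e)), so $H$-feasibility is preserved. For (b), I need to argue that the new $y_j(v)=\dist(u,v)-y_j(u)-\delta_{uv}$ is non-negative and, in the active case, still lies within $[y_{max}-\sqrt{r},\, y_{max}]$ (with $y_{max}$ possibly decreasing). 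This relies on global invariants the outer algorithm maintains on the magnitudes of $y_j(u)$ for matched $A$-vertices and on when \reduceslack\ is invoked, which go beyond what compressed feasibility alone supplies. I expect (b) to be the main obstacle of the proof, since it is the only condition whose verification cannot be closed purely locally.
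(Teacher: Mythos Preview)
Your treatment of \reduce\ is essentially the same as the paper's and is correct. The issue is with \reduceslack, where you have misread condition~(b). The interval constraint $y_{\max}-\sqrt{r}\le \ty{v}\le y_{\max}$ in~(b) applies only to vertices of $\activeH$, which by definition are \emph{free} active vertices of $B$; the inactive clause likewise concerns only free vertices. The input $v$ to \reduceslack\ is a \emph{matched} vertex of $B$, so the only part of~(b) that applies to it is the blanket requirement $\ty{v}\ge 0$. Hence what you call ``the main obstacle'' is simply not present: you only need to check non-negativity of the new value $y_j(v)-s(u,v)=\dist(u,v)-y_j(u)-\delta_{uv}$, which the paper dispatches in one line (using $y_j(u)\le 0$ for $u\in A$ and the scale precondition $\dist(u,v)\ge\delta_{uv}$). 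Your worry that~(b) ``cannot be closed purely locally'' is therefore based on a misreading of the condition.

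For~(d) under \reduceslack, your projection-identity route is not wrong in spirit, but the paper's argument is more direct and avoids the bookkeeping you would otherwise need at the source endpoint. The paper observes that $(u,v)$ is the \emph{unique} edge of the residual graph directed into $v$ (since $v\in B$ is matched), so every incoming compressed edge $(u',v)\in E_H$ has a projection that ends with $(u,v)$; consequently $s_H(u',v)\ge s(u,v)$ before the update, and since \reduceslack\ decreases $|\ty{v}|$ by exactly $s(u,v)$, the incoming $H$-slacks stay non-negative. Outgoing edges from $v$ only gain slack. This sidesteps the issue you would face when the source $u'$ is a free internal vertex $b_j^{\mathcal{A}}$, where condition~(e) does not directly give you the inequality $|\ty{u'}|\ge |y_j(u'_0)|$ that your identity-based argument would need.
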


\section{Our Scaling Algorithm}
\label{sec:scaling}

As in the Gabow-Tarjan Algorithm, for every edge $(a,b) \in E$,  we redefine its weight to be $\dist^*(a,b) = (kn +1)\dist(a,b)$, where $k$ is a constant defined in the upcoming Lemma \ref{lem:optrel2}. Since this uniform scaling of edge costs preserves the set of optimal matchings,
Lemma~\ref{optrel2} implies that a $\R$-optimal matching of the vertices of
$A,B$ with edge weights $\dist^*(\cdot,\cdot)$ corresponds to an optimal matching with the original edge costs $\dist(\cdot,\cdot)$.
For any edge $(u,v)$, let $b_1, b_2\ldots b_\ell$ be the binary representation of $\dist^*(u,v)$. Let $\dist_{i}(u,v)$, correspond to the most significant $i$ bits of $\dist^*(u,v)$.
The following lemma bounds the cost of any $\R$-optimal matching on $G$.
\begin{lemma}
\label{optrel2}
\label{lem:optrel2}
For a bipartite graph $\Gr(A\cup B, E)$ with a positive integer edge cost function $\dist$, let $M$ be an $\R$-optimal matching and $M_{\opt}$ be some optimal matching. Then,
$\dist(M) \le \dist(M_{\opt}) + kn$
where $k = (2k_1 + 4k_2 + 1)$ is a constant.
\end{lemma}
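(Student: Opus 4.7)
The plan is to relate $\wt(M)$ and $\wt(M_{\opt})$ through the $\R$-feasible dual weights, handling edges that appear in both $M$ and $M_{\opt}$ by working with the symmetric difference. Although condition \eqref{eq:feas6} gives only an upper bound on $y(u)+y(v)$ in terms of $\dist(u,v)$ for matching edges (and no useful lower bound on $\dist(u,v)$), I can sidestep this by writing
$$\wt(M) - \wt(M_{\opt}) \;=\; \wt(M \setminus M_{\opt}) \;-\; \wt(M_{\opt} \setminus M).$$
Edges in $M \setminus M_{\opt}$ lie in $M$, so \eqref{eq:feas6} yields $\dist(u,v) \le y(u)+y(v)+\delta_{uv}$; edges in $M_{\opt}\setminus M$ are not in $M$, so \eqref{eq:feas5} yields $\dist(u,v) \ge y(u)+y(v)-\delta_{uv}$. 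Substituting and using that both $M$ and $M_{\opt}$ are perfect matchings---so each vertex incident to an edge of $M\setminus M_{\opt}$ is also incident to (a different) edge of $M_{\opt}\setminus M$---the dual-weight contributions from $M\setminus M_{\opt}$ and $M_{\opt}\setminus M$ both equal $\sum_{v\in V(M\oplus M_{\opt})} y(v)$ and cancel, leaving
$$\wt(M) - \wt(M_{\opt}) \;\le\; \sum_{(u,v)\in M\oplus M_{\opt}} \delta_{uv}.$$

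The remaining step is to bound this sum by $kn$ using the structural properties of the $r$-clustering. I split edges of $M\oplus M_{\opt}$ into interior edges (where $\delta_{uv}=1$) and boundary edges (where $\delta_{uv} \le \frac{m_j n}{m\sqrt{r}}+2\sqrt{r}$, with $j$ being the unique piece containing the edge since pieces are edge-disjoint). Since $|M\oplus M_{\opt}| \le 2n$, the interior edges contribute at most $2n$. For the boundary term $\frac{m_j n}{m\sqrt{r}}$, I group by piece: each boundary vertex participates in at most two edges of $M \oplus M_{\opt}$ (one per matching), so $\R_j$ contributes at most $2|\boundary_j|\le 2k_1\sqrt{r}$ boundary edges, yielding total $\sum_j 2k_1\sqrt{r}\cdot\frac{m_j n}{m\sqrt{r}}=2k_1 n$ via $\sum_j m_j = m$. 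For the boundary term $2\sqrt{r}$, the total number of boundary edges in $M \oplus M_{\opt}$ is at most $2|\boundary|\le 2k_2 n/\sqrt{r}$, contributing at most $4k_2 n$.

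Combining gives $\wt(M) - \wt(M_{\opt}) \le (2k_1 + 4k_2 + c)n$ for a small constant $c$; a sharper accounting---exploiting that interior edges in $M\cap M_{\opt}$ cancel exactly in the cost difference, so only non-boundary edges of $M\oplus M_{\opt}$ should be charged the $+1$, with one unit absorbed by the rounding in the definition of $\delta_{uv}$---should yield precisely the claimed constant $k=2k_1+4k_2+1$. The main obstacle is this bookkeeping: I must carefully ensure that each boundary vertex is charged at most twice (once for its $M$-edge and once for its $M_{\opt}$-edge) across the pieces it participates in, so that both the per-piece bound $|\boundary_j|\le k_1\sqrt{r}$ and the global bound $|\boundary|\le k_2 n/\sqrt{r}$ can be simultaneously invoked without double-counting. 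The conceptual core, however, is just a standard LP-duality gap argument adapted to the relaxed $\R$-feasibility definition.
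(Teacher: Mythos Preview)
Your approach is essentially identical to the paper's: work with the symmetric difference, apply \eqref{eq:feas6} to $M\setminus M_{\opt}$ and \eqref{eq:feas5} to $M_{\opt}\setminus M$, cancel the dual weights using that both matchings are perfect, and then bound $\sum_{(u,v)\in M\oplus M_{\opt}}\delta_{uv}$ via the per-piece bound $|\boundary_j|\le k_1\sqrt r$ and the global bound $|\boundary|\le k_2 n/\sqrt r$. Your worry about double-counting is unnecessary (each edge lies in exactly one piece, so summing ``$2|\boundary_j|$ boundary edges times $m_j n/(m\sqrt r)$'' over $j$ is legitimate), and your attempt to squeeze out the constant $+1$ is chasing a typo: the paper's own proof of this lemma actually concludes with $k=2k_1+4k_2+2$, not $+1$.
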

\begin{proof}
Edges that are in both $M$ and $M_\opt$ need not be considered. Since $M$, $y(\cdot)$ is $\R$-optimal, all edges of $M \setminus M_\opt$ satisfy~\eqref{eq:feas6}, and we have
\begin{equation}
\dist(M \setminus M_\opt) = \sum_{(u,v) \in M \setminus M_\opt} \dist(u,v) \leq \sum_{(u,v) \in M \setminus M_\opt} y(u) + y(v) + \delta_{uv}. \label{eq:costm}
\end{equation}
Every edge in $M_{\opt} \setminus M$ satisfies~\eqref{eq:feas5} so we have
\begin{equation}
\dist(M_{\opt} \setminus M) = \sum_{(u,v) \in M_{\opt} \setminus M} \dist(u,v) \geq \sum_{(u,v) \in M_{\opt} \setminus M} y(u) + y(v) - \delta_{uv}. \label{eq:costm2}
\end{equation}
By subtracting \eqref{eq:costm2} from \eqref{eq:costm}, we have
\begin{equation}
\label{eq:24}
\dist(M) - \dist(M_{\opt}) \leq \sum_{(u,v) \in M \setminus M_\opt} y(u) + y(v) - \sum_{(u,v) \in M_\opt \setminus M}y(u) + y(v) + \sum_{(u,v) \in M \oplus M_\opt} \delta_{uv}.
\end{equation}
Since $M_\opt$ and $M$ are both perfect matchings, $$\sum_{(u,v) \in M \setminus M_\opt} y(u) + y(v) - \sum_{(u,v) \in M_\opt \setminus M}y(u) + y(v) = 0.$$Therefore, it is sufficient to bound $\sum_{(u,v) \in M \oplus M_\opt} \delta_{uv}$.

$\delta_{uv}$ can take one one of three values for every edge: $1, 2\sqrt{r}$, or $\frac{m_j n}{m\sqrt{r}}$. There are at most $k_1 \sqrt{r}$ boundary vertices per piece $\R_j$, and at most  one edge in both $M$ and $M_\opt$ adjacent to each vertex, so there are at most $2k_1\sqrt{r}$ edges $(u,v)$ in $M \oplus M_\opt$ such that $\delta_{uv} = \frac{m_j n}{m\sqrt{r}}$. There are at most $k_2 n/\sqrt{r}$ boundary vertices in the $r$-clustering, so there are at most $2k_2n / \sqrt{r}$ edges in $M \oplus M_\opt$ for which $\delta_{uv} = 2\sqrt{r}$. For the other at most $2n$ edges $(u,v) $ of $M \oplus M_\opt$ , $\delta_{uv} = 1$. Therefore,

\begin{eqnarray}
\sum_{(u,v) \in M \oplus M_\opt} \delta_{uv} \nonumber 
&\leq& \ (\sum_{j} 2k_1\sqrt{r}\frac{m_j n}{m \sqrt{r}}) + 2\frac{k_2n}{\sqrt{r}} (2\sqrt{r})+ 2n, \nonumber \\
&\leq& \ 2nk_1(\sum_{j} \frac{m_j}{m}) + (4k_2+2)n, \nonumber \\
&\leq& (2k_1 + 4k_2 + 2) n. \label{eq:errorsum}
\end{eqnarray}

Let $k = (2k_1 + 4k_2 + 2)$. Then we have.
\begin{equation*}
\dist(M) \leq \dist(M_{\opt}) + kn.
\end{equation*}
\end{proof}
Using an almost identical argument, we can also bound the cost of any $\R$-feasible matching $M$ (not necessarily perfect) and dual weights $y(\cdot)$ by $\dist(M_{\opt}) +O(n)$ so long as every free vertex of $A$ has a dual weight of $0$ and every free vertex of $B$ has a positive dual weight.
\begin{lemma}
\label{lem:feasrel2}
For a bipartite graph $\Gr(A\cup B, E)$ with a positive integer edge cost function $\dist$, let $M$ along with $y(\cdot)$ be an $\R$-feasible matching such that every free vertex $b$, $y(v) \ge 0$ and for every free vertex $a \in A$, $y(a) =0$,  and let $M_{\opt}$ be any optimal matching. Then,
$\dist(M) \le \dist(M_{\opt}) + kn$
where $k = (2k_1 + 4k_2 + 1)$ is a constant.
\end{lemma}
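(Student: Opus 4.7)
The plan is to follow the skeleton of the proof of Lemma~\ref{lem:optrel2}, diverging only at the point where both matchings being perfect was used to cancel the dual-weight sums. First I would invoke $\R$-feasibility exactly as in that proof: every edge $(u,v) \in M \setminus M_\opt$ satisfies \eqref{eq:feas6} (it lies in $M$) and every edge $(u,v) \in M_\opt \setminus M$ satisfies \eqref{eq:feas5} (it does not lie in $M$), yielding
\[
\dist(M)-\dist(M_\opt)\;\le\; \sum_{(u,v)\in M\setminus M_\opt}(y(u)+y(v))\;-\sum_{(u,v)\in M_\opt\setminus M}(y(u)+y(v))\;+\;\sum_{(u,v)\in M\oplus M_\opt}\delta_{uv},
\]
which is the direct analogue of~\eqref{eq:24}.

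The main new step is to analyze the difference of the two dual-weight sums, which no longer vanishes because $M$ is not necessarily perfect. I would do a vertex-by-vertex accounting, using the fact that $M_\opt$ is a perfect matching. A vertex $u$ that is matched in $M$ has its $M$-partner either equal to its $M_\opt$-partner (in which case the edge lies in $M\cap M_\opt$ and $y(u)$ appears in neither sum) or different from it (in which case $y(u)$ appears exactly once in each sum); in either sub-case the net contribution of $u$ is zero. A vertex $u$ that is free in $M$, on the other hand, is still matched in $M_\opt$, so $y(u)$ appears exactly once in the second sum and zero times in the first, contributing $-y(u)$. Hence the difference of the dual-weight sums equals $-\sum_{v\in A_F\cup B_F} y(v)$.

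Finally I would plug in the hypotheses at free vertices: $y(a)=0$ for every $a \in A_F$ and $y(b)\ge 0$ for every $b\in B_F$, so the sum $\sum_{v\in A_F\cup B_F} y(v)$ is nonnegative and the dual-weight difference is at most $0$. This reduces the bound to
\[
\dist(M)-\dist(M_\opt)\;\le\; \sum_{(u,v)\in M\oplus M_\opt}\delta_{uv},
\]
and the right-hand side is bounded by $kn$ by the identical three-way case split on the possible values of $\delta_{uv}$ that produced~\eqref{eq:errorsum}. The only delicate point is the vertex-by-vertex cancellation argument; once that is in place, the remainder of the proof is essentially a restatement of the computations already carried out in Lemma~\ref{lem:optrel2}, and no new ideas are required.
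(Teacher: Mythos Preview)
Your proposal is correct and follows essentially the same approach as the paper's proof: reduce to~\eqref{eq:24}, argue that the dual-weight difference equals $-\sum_{v\in A_F\cup B_F} y(v)$ via the same vertex-by-vertex accounting (using that $M_\opt$ is perfect), and then bound this by zero using the sign hypotheses on free vertices before invoking~\eqref{eq:errorsum}. Your write-up is in fact a bit more careful about the sign than the paper's own sketch.
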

\begin{proof}
Using \eqref{eq:24}, it is sufficient to show that
$$\sum_{(u,v) \in M \setminus M_\opt} y(u) + y(v) - \sum_{(u,v) \in M_\opt \setminus M}y(u) + y(v) \leq 0.$$ Since $M_\opt$ is perfect, we can rewrite the left side as $\sum_{u \in F}y(u)$ where $F$ is the set of free vertices with respect to $M$. The fact that all free vertices have nonnegative dual weight gives the lemma.
\end{proof}

Our algorithm consists of \emph{scales}. The input to any scale $i$ is  a bipartite graph on $A, B,$ with a cost function $\dist_{i}(\cdot,\cdot)$, and a set of dual weights $y(v)$ for every node $v\in A \cup B$. Let the $\dist^i(u,v) = \dist_i(u,v) - y(u)-y(v)$ be the \textit{reduced cost} of $(u,v)$. Reduced costs satisfy the following properties: 
\begin{itemize}
\item[(E1)]For every edge $(u,v)$, $\dist^i(u,v) \ge \delta_{uv}$, and,
\item[(E2)] The cost of a $1$-optimal matching with respect to $\dist^i(\cdot,\cdot)$ is $O(n)$.

\end{itemize}

Given an input with these properties, the algorithm within a scale returns an $\R$-optimal matching $M$ and dual weights $y'(\cdot)$ with respect to the reduced costs $\dist^{i}(\cdot,\cdot)$. Reduced costs do not affect the optimal matching. Furthermore, for every vertex $v \in A\cup B$, let $y(v)$ be the sum of the dual weight $y'(v)$ and the  dual weight of $v$ that was provided as input to scale  $i$.   It can be shown that $M, y(\cdot)$ is also $\R$-optimal with respect to $\dist_i(\cdot,\cdot)$.

The input to the  first scale is the graph $G(A\cup B, E)$ with the cost $\dist_1(\cdot,\cdot)$ and a set of   dual weights of $-1$ on every internal vertex $u$ of $B$, dual weight of $-\max_{v \in V} \delta_{uv}$ for every boundary vertex $u$ of $B$ and  dual weights of $0$ on every vertex of $A$.  It is easy to see that $\dist^1(\cdot,\cdot)$ satisfies (E1) and (E2). For any scale $i$, using $\dist^i(\cdot,\cdot)$  as the cost, the algorithm for a scale (described below) computes a matching $M$ and dual weights $y'(\cdot)$ so that $M, y'(\cdot)$ is an $\R$-optimal matching.  

For any scale $i \ge 1$, we use the  $\R$-optimal matching $M, y(\cdot)$  returned by the algorithm for scale $i$ to generate an input for scale $i+1$ as follows. It is possible that for any matching edge $(u,v)$ of the $\R$-optimal matching $M$,  the sum of the dual weights $y(u)+y(v)$ greatly exceeds $\dist(u,v)$ (see~\eqref{eq:feas6}). Prior to moving the dual weights to scale $i+1$, the algorithm sets for all edges $(a,b) \in M$, where $a \in A$ and $b \in B$,
\begin{equation*}
y(b) \leftarrow y(b) - s(a,b).
\end{equation*}
It is easy to see that this results in an $\R$-optimal matching $M$ and new dual weights $y(\cdot)$ such that the slack of every matching edge is 0. Note that reducing $y(b)$ only increases the slack on adjacent nonmatching edges. The algorithm transfers the dual weights from scale $i$ to scale $i+1$ as follows.
For any vertex $v \in A \cup B$,

\begin{equation*}
y(v) \leftarrow 2y(v)- 2\max_{u \in N(v)} \delta_{uv}.
\end{equation*}

 Therefore, at the beginning of scale $i+1$, the  reduced cost $\dist^{i+1}(u,v)=\dist_{i+1}(u,v) - y(u)- y(v)$ on every edge is at least $\delta_{uv}$ implying (E1). The cost of the optimal matching for scale $i+1$, $M_{i}$, with respect to the new costs $\dist^{i+1}(\cdot,\cdot)$ is at most  $\sum_{(u,v) \in M_i} 2\delta_{uv} + \sum_{v \in A \cup B} 2\max_{u \in N(v)} \delta_{uv} + n$. Following similar steps as those used in showing~\eqref{eq:errorsum} gives that the cost of the optimal matching or $\R$-optimal matching with respect to $\dist^{i+1}(\cdot,\cdot)$ is $O(n)$, implying (E2).

 In the next section, we describe an algorithm for each scale. This algorithm takes a graph with positive integer edge costs where the cost of each edge $\dist(u,v) \ge \delta_{uv}$ (condition (E1))and  the optimal matching has a cost of $O(n)$ (condition (E2)). Given this input, it computes an $\R$-optimal matching in $\tilde{O}(nm^{2/5})$ time. After $O(\log{((kn+1)C))}$ scales, the $\R$-optimal matching returned by our algorithm will also be an optimal matching.

\section{Algorithm For Each Scale}
\label{algorithmforscale}

Our algorithm takes a bipartite graph $G(A\cup B, E)$ and its $r$-clustering as input. Each edge $(u,v)$ of this graph has a positive integer cost of $\dist(u,v)$ with  $\dist(u,v) \ge \delta_{uv}$ and the optimal matching has a cost no more than $O(n)$. Given such an input, it produces an $\R$-optimal matching.  

The algorithm has three steps. The \emph{first step} (also called the preprocessing step) of the algorithm will execute $\sqrt{r}$ iterations of a scale of the GT-Algorithm~\cite{gt_sjc89}. This can be executed in $O(m\sqrt{r})$ time. At the end of this step, the algorithm has a $1$-feasible matching $M$ and dual weights $y(\cdot)$ that satisfy the original dual feasibility conditions of Gabow and Tarjan (\eqref{eq:gtfeas5} and~\eqref{eq:gtfeas6}) and there are at most $O(n/\sqrt{r})$ free vertices. Additionally, from the properties of GT-Algorithm, for every free vertex $a \in A$, $y(a) =0$, and for every free vertex $b \in B$, $y(b) \geq \sqrt{r}$. Furthermore, the dual adjustments  performed during an iteration of GT-Algorithm only decrease dual weights of $A$ and increase dual weights of $B$. A 1-feasible matching also satisfies the requirements for an $\R$-feasible matching (\eqref{eq:feas5} and~\eqref{eq:feas6}). Therefore, at the end of the first step, we have the following.

\begin{lemma}
\label{lem:preprocess}
At the end of the first step of our algorithm, the matching $M$ and the dual weights $y(\cdot)$ form an  $\R$-feasible matching for the graph $G$, and the number of free vertices with respect to $M$ is at most $O(n/\sqrt{r})$. For every vertex $a \in A$, $y(a) \leq 0$ and for every vertex $b \in B$, $y(b) \geq 0$. For every free vertex $a \in A$, $y(a) =0$, and for every free vertex $b \in B$, $y(b)=\max_{b' \in B} y(b') \geq \sqrt{r}$.
\end{lemma}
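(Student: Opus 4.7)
}
The plan is to derive each claim as a straightforward consequence of the known invariants maintained by the GT-Algorithm (as recalled in Section \ref{subsec:gt}), combined with the observation that $1$-feasibility is a strictly stronger condition than $\R$-feasibility.

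First I would verify $\R$-feasibility of the returned pair $(M, y(\cdot))$. After $\sqrt{r}$ iterations of a single scale of GT-Algorithm, the standard invariants give that $(M,y(\cdot))$ satisfies~\eqref{eq:gtfeas5}--\eqref{eq:gtfeas6}, i.e., $y(u)+y(v) \le \dist(u,v)+1$ for every edge and $y(u)+y(v)=\dist(u,v)$ for every matching edge. Since for every edge $(u,v)$ the value $\delta_{uv}\ge 1$ by definition, the inequality~\eqref{eq:gtfeas5} implies~\eqref{eq:feas5}, while~\eqref{eq:gtfeas6} together with $\delta_{uv}\ge 1$ implies both~\eqref{eq:feas5} and~\eqref{eq:feas6} for matching edges. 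Thus $(M,y(\cdot))$ is $\R$-feasible.

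Second I would verify the dual-weight sign conditions. By the standard description of GT-Algorithm in Section~\ref{subsec:gt} (and summarized there as the analogue of invariants (ii)--(iii) of the Hungarian Algorithm), at the beginning of any scale all dual weights of $B$ are initialized to be non-negative (indeed exactly the appropriate scaled transferred values from the previous scale, which are set to make reduced costs $\ge 1$) and weights of $A$ are non-positive, and each iteration only raises dual weights of vertices in $B$ (in particular, of free $B$-vertices by exactly $1$ each iteration) and only lowers dual weights of vertices in $A$. Free vertices in $A$ are never touched by the DFS (they are endpoints of successful augmenting paths only), so their dual weights remain at $0$ throughout the scale. In each iteration, the dual weight of every free $b\in B$ increases by exactly one; hence after $\sqrt{r}$ iterations $y(b)\ge \sqrt{r}$ for every $b\in B_F$. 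Moreover, the symmetric way the DFS raises dual weights of the free $B$-vertices ensures they all share the common maximum value, i.e., $y(b)=\max_{b'\in B} y(b')$ for every $b\in B_F$.

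Third I would bound $|A_F|+|B_F|$. Since $(M,y(\cdot))$ is $\R$-feasible, every free vertex of $A$ has dual weight $0$, and every free vertex of $B$ has dual weight at least $\sqrt{r}$, a direct application of Lemma~\ref{lem:feasrel2} gives
\[
|B_F|\,\sqrt{r}\ \le\ \sum_{v\in A_F\cup B_F} y(v)\ \le\ \dist(M)-\dist(M_{\opt})+\dist(M_{\opt})\ =\ O(n),
\]
where we used that $\dist(M)\le \dist(M_{\opt})+kn=O(n)$ by Lemma~\ref{lem:feasrel2} and that $\dist(M_{\opt})=O(n)$ by property (E2) of the scale input. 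Hence $|B_F|=O(n/\sqrt{r})$, and since $|A_F|=|B_F|$ in any scale (the initial matching is empty and augmentations preserve $|A_F|-|B_F|$), the total number of free vertices is $O(n/\sqrt{r})$.

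The main (minor) obstacle is checking that the GT invariant ``$y(b)=\max_{b'\in B} y(b')$ for every free $b\in B$'' does hold in our variant of GT-Algorithm; this requires observing that every free $b\in B$ is unsuccessfully visited (and thus has its weight raised) in every iteration, or is on an augmenting path in that iteration (in which case it becomes matched and no longer a free vertex of $B$). Everything else follows directly from the invariants summarized in Section~\ref{subsec:gt}.
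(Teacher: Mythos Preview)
Your approach matches the paper's: the lemma is stated there without a formal proof, and the preceding paragraph simply observes that the output of $\sqrt{r}$ iterations of the GT-Algorithm is $1$-feasible in the sense of~\eqref{eq:gtfeas5}--\eqref{eq:gtfeas6} (hence $\R$-feasible since $\delta_{uv}\ge 1$), that the standard GT invariants give the sign and free-vertex dual-weight conditions, and that the standard GT convergence argument bounds the number of free vertices by $O(n/\sqrt{r})$.

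There is one slip in your third step. The displayed chain
\[
\sum_{v\in A_F\cup B_F} y(v)\ \le\ \dist(M)-\dist(M_{\opt})+\dist(M_{\opt})
\]
does not follow from Lemma~\ref{lem:feasrel2}: that lemma only bounds $\dist(M)$, not $\sum_{v\ \text{free}} y(v)$, and in fact $\sum_{v\ \text{free}} y(v)\le \dist(M)$ is false in general (the correct inequality goes the other way in $\dist(M)$). What you need is the argument \emph{inside} the proof of Lemma~\ref{lem:feasrel2}, or equivalently the direct GT computation: summing~\eqref{eq:gtfeas5} over the edges of a perfect $M_{\opt}$ gives $\sum_{v} y(v)\le \dist(M_{\opt})+n$, while summing~\eqref{eq:gtfeas6} over $M$ gives $\sum_{v\ \text{matched}} y(v)=\dist(M)\ge 0$; subtracting yields $\sum_{v\ \text{free}} y(v)\le \dist(M_{\opt})+n=O(n)$ by (E2). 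With that correction your argument is complete and essentially identical to the paper's.
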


To match the remaining $O(n/{\sqrt{r}})$ unmatched vertices, the algorithm will use the $r$-clustering to construct a compressed residual graph with $O(n/\sqrt{r})$ vertices and $O(n)$ edges (See Section~\ref{sec:compressed-graph} for relevant definitions). The \emph{second step} consists of iteratively calling the \fastmatch\ procedure (lines \ref{step2startline}--\ref{step2endline} of Algorithm \ref{alg:hmfree}) to match all but $O(\sqrt{n}/r^{1/4})$ vertices. The \emph{third step} then iteratively matches the remaining vertices one at a time (lines \ref{step3startline}--\ref{step3endline} of Algorithm \ref{alg:hmfree}).

\ignore{The algorithm of \cite{soda-18} uses shortest path data structures for planar graphs to iteratively compute the remaining $O(n/\sqrt{r})$  augmenting paths. Computing each augmenting path takes  $\tilde{O}(n/\sqrt{r})$ time, so the total time taken is bounded by $\tilde{O}(n^2/r)$. However, while working with arbitrary graphs with an $r$-clustering, there are no known efficient shortest-path data structures, so  computing each path takes $O(m)$ time for a total time of $\tilde{O}(nm/\sqrt{r})$. One should note that for any value of $r$, this running time is  $\Omega(mn^{1/2}).$ To achieve an improved execution time, our algorithm will compute many augmenting paths in one round. This is similar to the Hopcroft-Karp and Gabow-Tarjan algorithms, which converge to a maximum matching in $O(\sqrt{n})$ phases. In our case, we show an even faster convergence in only $O(\sqrt{n}/r^{1/4})$ phases, with the search of each phase (on an average) taking $\tilde{O}(m)$ time. For $r=n^{2/5}$,\ we achieve a running time of $O(mn^{2/5})$ time.

The \emph{second step} of the algorithm takes the matching $M$ and set of dual weights $y(\cdot)$ produced by the first step as input. Using a compressed residual graph, it then computes an $\R$-feasible matching that matches all but $\tilde{O}(\numphases)$ vertices. 
}
\ignore{
In Step 2 and Step 3 , using the compressed residual graph, the algorithm will iteratively compute $\dir{P}$, where $\dir{P}$ is an augmenting path, alternating cycle, or alternating path consisting of only admissible edges, and sets $M \leftarrow M \oplus \dir{P}$.  We show that at the end of each such update, our algorithm maintains a compressed feasible matching. 
}
\ignore{ 
\begin{lemma}
Consider any algorithm $\EuScript{A}$ that computes a perfect matching by iteratively finding augmenting paths and augmenting the matching along these paths. Additionally, after each augmentation, suppose the algorithm also maintains an $\R$-feasible matching. Let $B_F^i$ be the free vertices  and let $y_i(\cdot)$ denote the dual weights after the $i$th augmentation. Let $y_i = \min_{v \in B_F^i} y_i(v)$. Then $\sum_{i=1}^n y_{i} = O(n \log n)$.
\end{lemma} 
}

At the end of the first step of our algorithm, we have a matching $M$ and a set of dual weights that form an $\R$-feasible matching. In Section~\ref{subsec:rtoplanar}, we describe a procedure called \construct\ that, given an $\R$-feasible matching on a piece in the $r$-clustering, computes and stores the edges of the compressed feasible matching for that piece.  In Section~\ref{subsec:planartor}, we describe another procedure called \sync\ that, given the edges of a compressed feasible matching of any piece, computes an $\R$-feasible matching. We can use these procedures to convert any $\R$-feasible matching into a compressed feasible matching and vice-versa. We also use these procedures in the second step of our algorithm in Section~\ref{subsec:step2}.\   

\subsection{Computing a compressed feasible matching from an $\R$-feasible matching}

\label{subsec:rtoplanar}
\ignore{At the end of the first step, we have an $\R$-feasible matching $M, y(\cdot)$ that also satisfies $y(u) = 0$ for all $u \in A_F$ and $y(v)=\max_{v' \in B} y(v')$ for all $v \in B_F$.} In this section, we will present an algorithm to compute a compressed residual graph and a compressed feasible matching from this $\R$-feasible matching $M$ and its set of dual weights $y(\cdot)$.  For every vertex $v \in A\cup B$, and for every piece $\R_j$ such that $v \in V_j$, we set $y_j(v) = y(v)$.\ For every boundary vertex $v \in \boundary$, we set $\ty{v}=y(v)$. We also set, for every piece $\R_j$, $\ty{a_j}=0$ and $\ty{b_j^\mathcal{A}}=\gamma$ where $\gamma \geq \sqrt{r}$ is the dual weight of all free vertices of $B_F$ from the first step. Note that $\beta > \gamma$ and therefore, there is no inactive free internal vertex at the end of the first step. So, we do not create a free inactive internal vertex for any piece. Note that, from Lemma~\ref{lem:preprocess}, conditions (a), (b) and (e) are trivially satisfied. The matching $M_j$ and $y_j(\cdot)$ form a 1-feasible matching. Edges satisfying $1$-feasibility conditions  also satisfy the $\R$-feasibility condition and so (c) is satisfied. The  next lemma shows that dual weights $\ty{\cdot}$ satisfy $H$-feasibility and therefore (d) holds.

\begin{lemma}
\label{lem:const1}
 Consider a matching $M_j$ and a set of dual weights $y(\cdot)$ for a piece $\R_j$ such that $M_j, y(\cdot)$ is $\R$-feasible. Suppose the dual weights of all vertices of $A$ in $V_j$ are non-positive and the dual weights of all vertices of $B$ in $V_j$ are non-negative. For any two vertices $u, v \in V_j$, let the directed path $\dir{P}_{u,v,j}$ be a minimum net-cost path from $u$ to $v$ in $\R_j$. Then,
\begin{eqnarray}
|y(u)| -| y(v)| &\le & \phi(\dir{P}_{u,v,j}). \label{eq:slack1-1}
\end{eqnarray}
Furthermore, 

\begin{equation} \label{eq:slack2-1}
    \sum_{(a,b) \in \dir{P}_{u,v,j}} s(a,b) = \phi(\dir{P}_{u,v,j}) - |y(u)| + |y(v)|.
\end{equation}

\end{lemma}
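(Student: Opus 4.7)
The plan is to derive both equations directly from the general slack/net-cost identity \eqref{slackcost} established in the preliminaries, combined with the sign assumption on dual weights and the non-negativity of slacks guaranteed by $\R$-feasibility.

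First I would apply \eqref{slackcost} to the directed path $\dir{P}_{u,v,j}$, which immediately gives
\[
\sum_{(a,b)\in \dir{P}_{u,v,j}} s(a,b) \;=\; \phi(\dir{P}_{u,v,j}) + \lambda_v y(v) - \lambda_u y(u).
\]
Next I would translate the $\lambda$-factors into absolute values using the hypothesis on signs: if $w \in A$ then $y(w) \le 0$ so $\lambda_w y(w) = -y(w) = |y(w)|$, and if $w \in B$ then $y(w) \ge 0$ so $\lambda_w y(w) = y(w) = |y(w)|$. In either case $\lambda_w y(w) = |y(w)|$, and substituting this for both endpoints yields \eqref{eq:slack2-1} verbatim.

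For \eqref{eq:slack1-1}, I would observe that $\R$-feasibility (conditions \eqref{eq:feas5} and \eqref{eq:feas6}) directly implies, by the slack definitions \eqref{eq:slack1} and \eqref{eq:slack2}, that every edge of $\R_j$ has $s(a,b) \ge 0$. Therefore the left-hand side of \eqref{eq:slack2-1} is non-negative, so $\phi(\dir{P}_{u,v,j}) - |y(u)| + |y(v)| \ge 0$, which is \eqref{eq:slack1-1} after rearrangement.

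There is essentially no hard step here; the lemma is really just a bookkeeping consequence of the preliminaries. The only point requiring a bit of attention is verifying the sign identity $\lambda_w y(w) = |y(w)|$ under the stated hypothesis, and noticing that the minimum-net-cost choice of $\dir{P}_{u,v,j}$ is not actually used in either bound (any alternating path from $u$ to $v$ inside $\R_j$ satisfies both statements); the minimum-net-cost specification matters only for the role $\dir{P}_{u,v,j}$ will play later as the projection of the edge $(u,v) \in E_j^H$ when deducing $H$-feasibility \eqref{eq:uvpath1} from \eqref{eq:slack1-1}.
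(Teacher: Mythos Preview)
Your proof is correct and follows essentially the same approach as the paper: apply the slack/net-cost identity \eqref{slackcost}, convert $\lambda_w y(w)$ to $|y(w)|$ using the sign hypothesis to obtain \eqref{eq:slack2-1}, and then invoke non-negativity of slacks from $\R$-feasibility to deduce \eqref{eq:slack1-1}. Your added remark that the minimality of $\dir{P}_{u,v,j}$ is not actually used in the argument is a nice clarification not present in the paper.
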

\begin{proof}
From equation~\eqref{slackcost} we have,
\begin{equation*}
\sum_{(a,b) \in \dir{P}} s(a,b) = \phi(\dir{P})  - \lambda_uy(u) +\lambda_v y(v).
\end{equation*}
From the fact that all vertices of $B$ have nonnegative dual weight and all vertices of $A$ have nonpositive dual weight, we get that $\lambda_u = |y(u)|$ and $\lambda_v = |y(v)|$. This gives~\eqref{eq:slack2-1}. From the fact that all slacks in an $\R$-feasible matching are nonnegative,~\eqref{eq:slack1-1} follows.

\end{proof}

\ignore{ 
  We can prove the following lemma using a proof similar to that of Lemma~\ref{lem:const1}. \begin{lemma}
\label{lem:const2}
Consider a matching $M$ and a set of dual weights $y(\cdot)$ on the vertices of the input graph $G(A\cup B,E)$ such that $M, y(\cdot)$ is $\R$-feasible.  Suppose all vertices of $A$ have a non-positive dual weight and all vertices of $B$ have a non-negative dual weight. For any two vertices $u, v \in A\cup B$, let the directed path $\dir{P}_{u,v}$ be a minimum net-cost path from $u$ to $v$ in $G_{M}$. Then,
\begin{eqnarray*}
|y(u)| -| y(v)| &\le & \phi(\dir{P}_{u,v}). 
\end{eqnarray*}
Furthermore, 

\begin{equation*} 
    \sum_{(a,b) \in \dir{P}_{u,v}} s(a,b) = \phi(\dir{P}_{u,v}) - |y(u)| + |y(v)|. 
\end{equation*}

\end{lemma}
} 

Using the following lemma, we will provide a procedure called \construct\ to compute all the edges of $H$.
   
\begin{lemma}
\label{lem:poswt}
Let $\R_{j}'$ be a directed graph identical to the  directed graph $\R_{j}$ except that the cost of any edge $(a,b)$ is set to be its slack $s(a,b)$. Then,
 for any two vertices $u, v$ in $\R_j$, the minimum net-cost directed path from $u$ to $v$ in $\R_j$ is the minimum cost directed path between $u$ and $v$ in $\R_j'$. The dual weights $y(u)$, $y(v)$ and the length of the shortest path in $\R_j'$ immediately give us the value of the minimum net-cost between $u$ and $v$ in $\R_j$. 
\end{lemma}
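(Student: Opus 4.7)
The plan is to observe that the net-cost of a directed path and its total slack differ only by an additive constant that depends only on the endpoints, not on the path itself. Once this is established, a path is a minimum net-cost path in $\R_j$ if and only if it is a minimum total-slack path, i.e., a shortest path in $\R_j'$.

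First I would invoke equation \eqref{slackcost} (which holds for \emph{any} directed path in $G_M$, not only for minimum net-cost ones — the derivation is a simple edge-by-edge telescoping). Applied to an arbitrary directed path $\dir{Q}$ from $u$ to $v$ lying entirely in $\R_j$, this gives
\begin{equation*}
\sum_{(a,b)\in \dir{Q}} s(a,b) \;=\; \phi(\dir{Q}) \;+\; \lambda_v y(v) \;-\; \lambda_u y(u).
\end{equation*}
Under the hypothesis of Lemma~\ref{lem:const1} (dual weights of $A$-vertices are non-positive and of $B$-vertices are non-negative), I would then replace $\lambda_u y(u)$ by $|y(u)|$ and $\lambda_v y(v)$ by $|y(v)|$, exactly as in the proof of Lemma~\ref{lem:const1}, to obtain $\sum_{(a,b)\in \dir{Q}} s(a,b) = \phi(\dir{Q}) - |y(u)| + |y(v)|$ for every such $\dir{Q}$.

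Since $-|y(u)|+|y(v)|$ depends only on the endpoints $u,v$, minimizing $\sum s(a,b)$ over all directed $u$-to-$v$ paths in $\R_j$ is equivalent to minimizing $\phi(\dir{Q})$ over the same collection. As $\R_j$ and $\R_j'$ have identical vertex and edge sets (with edge weights being net-costs and slacks respectively), this equivalence immediately yields that a minimum net-cost path in $\R_j$ is a shortest path in $\R_j'$ and vice versa. The second claim of the lemma is then a direct rearrangement: if $d_{j}'(u,v)$ denotes the shortest-path distance in $\R_j'$, then the minimum net-cost from $u$ to $v$ equals $d_j'(u,v) + |y(u)| - |y(v)|$.

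There is no real obstacle here; the only subtlety is to remember that \eqref{slackcost} is a per-path identity that telescopes for every directed walk, so it applies to every candidate path simultaneously and not just to the optimal one. The sign conventions on dual weights (which come from Lemma~\ref{lem:preprocess} and are maintained as invariants of the algorithm) are what convert the $\lambda_u y(u)$ and $\lambda_v y(v)$ terms into absolute values and hence yield the clean statement. I would also remark (as a by-product) that by $\R$-feasibility all slacks $s(a,b)$ are non-negative, so $\R_j'$ has non-negative edge weights and Dijkstra's algorithm can be used to compute the minima implicit in the lemma.
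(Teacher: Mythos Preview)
Your proposal is correct and follows essentially the same approach as the paper: both use the telescoping identity \eqref{slackcost} (the paper via its specialization \eqref{eq:slack2-1}) to show that $\sum s(a,b)$ and $\phi(\dir{Q})$ differ by the endpoint-only constant $|y(v)|-|y(u)|$, whence minimizing one is equivalent to minimizing the other. Your write-up is in fact slightly cleaner than the paper's (which contains a minor typo, summing over $P\setminus M$ instead of all of $P$), and your remark about non-negative slacks enabling Dijkstra is a useful addition.
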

\begin{proof}
Let $P$ be any alternating path between vertices $u$ and $v$. We can write the net-cost of $P$ as in equation \eqref{eq:slack2-1},
\begin{eqnarray*}
\min_P \phi(P) 
&=& \min_P \sum_{(a,b)\in P\setminus M} s_{}(a,b) + |y_{}(u)| - |y_{}(v)|. 
\end{eqnarray*}
For every path, $y(u)$ and $y(v)$ are the same. Therefore, we conclude that computing minimum net-cost path is equivalent to finding the minimum-cost path $P^*$ between $u$ and $v$ in $\R_j'$. Furthermore, the sum of the cost of $P^*$ with $y(u)$ and $y(v)$ will give the value of the minimum net-cost between $u$ and $v$. 
\end{proof} 
We define the slack on any directed edge $(u,v) \in E^{H}_j$ to be $$s_H(u,v) = \phi(\dir{P}_{u,v,j}) - |\ty{u}| + |\ty{v}|.$$ 
From the Lemma \ref{lem:const1} above, it follows that slack of the edge $(u,v)$ is  non-negative, and exactly equal to $\sum_{(a,b) \in \dir{P}_{u,v,j}} s(a,b)$, provided $\ty{u} = y_j(u)$ and $\ty{v} = y_j(v)$. Following our convention, we use $H'$ to denote the compressed residual graph with the same edge set as $H$ but with the edge weights being replaced with their slacks. Our initial choice of $\ty{\cdot}$ is $H$-feasible. To assist in the  execution of the second step of our algorithm, we explicitly compute the edges of $H$ and sort them in increasing order of their slacks.

\paragraph{The \construct\ procedure.} Using Lemma \ref{lem:poswt}, we describe a procedure that computes the edges $E_j^H$ for a piece $\R_j$. This process will be referenced as the \construct\ procedure. Note that in some graphs, such as planar graphs, \construct\ could use a faster algorithm; see Section \ref{sec:planar}. This procedure takes a piece $\R_j$ of $G_{M}$ as input and constructs the edges of $E_j^H$. We next give a summary of how to accomplish this, for further details, see appendix section \ref{A:construct}. Let $\R_{j}^{'}$ be the graph of $\R_j$, with all edge weights converted to their slacks according to the current matching $M_j$ and the current dual assignment $y_j(\cdot)$. From equation \eqref{eq:slack2-1} and Lemma \ref{lem:poswt}, it is sufficient to compute the shortest path distances in terms of slacks between all pairs of vertices in $V_j^H$. These distances in slacks can then each be converted to net-costs in constant time. Therefore, using $O(\sqrt{r})$ separate Dijkstra searches over $\R_j'$, with each search taking $O(m_j + n_j \log n_j)$ time, the edges of $E_j^H$ can be computed. From the discussion in appendix section \ref{A:construct}, we get the following Lemma and Corollary.

\begin{lemma}
\label{lem:construct}
Given an $\R$ feasible matching $M_j$, $y_j(\cdot)$, the \construct\  procedure builds the edges of $E_j^H$  in $\bruteforceconstruct$ time. 
 \end{lemma}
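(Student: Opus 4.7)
The plan is to invoke Lemma~\ref{lem:poswt} to reduce the task of computing minimum net-cost paths in $\R_j$ to shortest-path computations in the non-negatively weighted slack graph $\R_j'$. So as a first step I would construct $\R_j'$ in $O(m_j)$ time by replacing, for each edge $(a,b) \in E_j$, its cost $\dist(a,b)$ with its slack $s(a,b)$ under the given $\R$-feasible matching $M_j, y_j(\cdot)$ using \eqref{eq:slack1} and \eqref{eq:slack2}. Since $M_j, y_j(\cdot)$ is $\R$-feasible, every edge weight in $\R_j'$ is non-negative.

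Next, I would perform $O(\sqrt{r})$ Dijkstra searches on $\R_j'$. Concretely, I would run one Dijkstra search from each boundary vertex $u \in \boundary_j$. A single such search simultaneously produces (i) the slack-distance from $u$ to every other boundary vertex of $V_j^H$, giving every outgoing boundary-to-boundary edge of $u$, and (ii) the slack-distance from $u$ to every free internal vertex of $A \cap V_j$, the minimum of which gives the edge $(u, a_j)$. To obtain edges emanating from $b_j^{\mathcal{A}}$ and $b_j^{\mathcal{I}}$, I would attach two virtual source vertices $s^{\mathcal{A}}$ and $s^{\mathcal{I}}$, each joined by zero-slack edges to its corresponding free internal vertices of $B$, and run one Dijkstra from each. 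Since $|\boundary_j| = O(\sqrt{r})$ and only $O(1)$ virtual-source searches are added, the total number of Dijkstra runs is $O(\sqrt{r})$; each runs on a graph with $O(m_j)$ edges and $O(n_j)$ vertices in $O(m_j + n_j \log n_j)$ time using a standard heap, yielding the claimed $O(\sqrt{r}(m_j + n_j \log n_j))$ bound. For each resulting edge $(u,v) \in E_j^H$, Lemma~\ref{lem:const1} lets me recover the required net-cost $\phi(\dir{P}_{u,v,j})$ from its slack-distance in $O(1)$ time via $\phi(\dir{P}_{u,v,j}) = (\text{slack-dist}) + |y_j(u)| - |y_j(v)|$, so converting all $O(r)$ edges to net-costs is subsumed by the Dijkstra cost.

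The main subtlety I anticipate is correctly computing self-loop edges $(u,u)$ for boundary vertices $u$, since a vanilla Dijkstra from $u$ assigns distance $0$ to $u$ itself rather than to the shortest non-trivial alternating cycle through $u$. I would handle this by splitting $u$ into two copies $u^{\text{in}}$ and $u^{\text{out}}$ (routing all incoming edges of $u$ to $u^{\text{in}}$ and all outgoing edges from $u^{\text{out}}$) and computing the shortest $u^{\text{out}}$-to-$u^{\text{in}}$ distance in $\R_j'$. Since $\R_j$ is bipartite this cycle has length at least four, matching the definition in Section~\ref{sec:compressed-graph}, and the split does not increase the per-vertex Dijkstra budget. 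With this adjustment all four types of edges listed in the definition of $E_j^H$ are computed within the claimed time bound.
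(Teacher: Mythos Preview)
Your proposal is correct and essentially mirrors the paper's own proof in Appendix~\ref{A:construct}: reduce to shortest paths in the nonnegative slack graph $\R_j'$ via Lemma~\ref{lem:poswt}, run one Dijkstra per boundary vertex plus $O(1)$ virtual-source searches for the free internal vertices, handle self-loops by splitting the boundary vertex, and convert slack-distances back to net-costs in $O(1)$ per edge. The only cosmetic difference is that the paper computes the edges into $a_j$ by a single Dijkstra on the reversed graph $\R_j''$ from a virtual sink, whereas you read them off as minima over free internal $A$-vertices during each forward Dijkstra; both are correct and sit comfortably within the stated $O(\sqrt{r}(m_j+n_j\log n_j))$ bound.
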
 

\begin{cor}
\label{cor:preprocess}
Let $\R$-feasible matching $M$, $y(\cdot)$ be the matching computed at the end of the first step of our algorithm. Given $M$, $y(\cdot)$, we can use the \construct\ procedure to compute the graph $H$ in $O(\sqrt{r}(m + n\log{n}))$ time.
\end{cor}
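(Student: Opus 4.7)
The plan is to apply Lemma~\ref{lem:construct} once to each piece $\R_j$ of the $r$-clustering and sum the resulting time bounds, using the defining properties of the $r$-clustering to collapse the sum. At the end of the first step we have an $\R$-feasible matching $M$, which by definition restricts to an $\R$-feasible matching $M_j$, $y_j(\cdot) = y(\cdot)$ on each piece $\R_j$; hence Lemma~\ref{lem:construct} applies piece by piece. Running \construct\ on $\R_j$ produces the edge set $E_j^H$ in time $O(\sqrt{r}(m_j + n_j \log n_j))$, and the edges $E_H$ of the full compressed residual graph are the disjoint union $\bigcup_j E_j^H$.

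For the summation, the key facts are: (i) the pieces are edge-disjoint, so $\sum_j m_j = m$; and (ii) the total vertex count (with multiplicity) satisfies
\[
\sum_j n_j \;\le\; n + \sum_{v \in \boundary} \theta_v \;=\; n + O(n/\sqrt{r}) \;=\; O(n),
\]
by Definition~\ref{def:r-division}. Using the trivial bound $\log n_j \le \log n$, we get
\[
\sum_j O\bigl(\sqrt{r}(m_j + n_j \log n_j)\bigr) \;=\; O\!\left(\sqrt{r}\Bigl(\sum_j m_j + \log n \sum_j n_j\Bigr)\right) \;=\; O(\sqrt{r}(m + n\log n)),
\]
which is the claimed bound. (A small bookkeeping step is to note that initializing the per-piece dual weights $y_j(v)\gets y(v)$ for all $v\in V_j$ and setting $\ty{v}\gets y(v)$ for $v\in\boundary$, plus creating the free-internal representative vertices $a_j, b_j^{\mathcal A}, b_j^{\mathcal I}$, takes time $O(\sum_j n_j) = O(n)$, which is absorbed into the bound.)

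There is no real obstacle here: the lemma for a single piece is already in hand, and the only content is that the $r$-clustering parameters are chosen precisely so that aggregating over pieces does not blow up the totals beyond $m$ and $n$ (up to lower-order terms). The one subtlety worth flagging is that $\sum_j n_j$ is \emph{not} bounded by $n$ but rather $n + O(n/\sqrt r)$ due to boundary multiplicities, which still yields $O(n)$ and does not affect the asymptotic bound.
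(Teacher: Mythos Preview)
Your proof is correct and follows exactly the approach the paper intends: the paper's own justification is simply ``By summing over the sizes of all pieces of the $r$-clustering, we get Corollary~\ref{cor:preprocess},'' and you have carried out precisely that sum, with the same per-piece bound from Lemma~\ref{lem:construct}. Your treatment is in fact more detailed than the paper's, correctly handling the subtlety that $\sum_j n_j = n + O(n/\sqrt{r})$ due to boundary-vertex multiplicity.
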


\ignore{
Using these data structures, one can compute shortest path between a vertex $b \in B_H^F$ to a vertex $a \in A_H^F$ in $H'$ using an algorithm identical to the one presented in~\cite{kaplan_monge_12} in $O((n/\sqrt{r})\log^2{r})$ time. The algorithm also computes the distance $\dist_{v}$ to a vertex $v \in V_H$ from an arbitrary vertex of $B_H^F$. This algorithm contains three steps.
\begin{itemize}
\item In the first step, we find the distance from $b_j$ to every boundary vertex of $\boundary_j$ within $\R_j'$. The minimum net-cost paths have corresponding edges explicitly pre-computed in $H$, and their slacks can be computed in $O(1)$ time. For each boundary vertex $v$, we set $\dist_{v} = \min_{\R_j \in \R, (b_j,v) \in E_j^H} s_H(b_j,v)$.   This takes $O(\sqrt{r})$ per piece and a total of $O(n/\sqrt{r})$ time for the entire graph $H'$.

\item In the second step, we execute an implementation of Dijkstra's  given by Fakcharoenphol and Rao~\cite{fr_dijkstra_06}, which uses the bipartite groups and range-minimum query data structures. Fakcharoenphol and Rao have shown how to compute shortest path from a free vertex of $B$ to any boundary vertex in $O((n/\sqrt{r})\log^2 r)$ time by efficiently computing the next edge to add to the shortest path tree. For a description of their algorithm, see section 5.2 of~\cite{kaplan_monge_12}. 

\item In the third step, we set the shortest path distance to every free internal vertex $a_j$ as $\dist_{a_{j}}=\min_{\R_j \in \R,(u,a_j) \in E_j^H} \dist_{u} + s_H(u,a_j)$. As in step one, the net-costs of such paths are precomputed in $H$, and the slacks can be computed in $O(1)$ time.
\item Let $\alpha$ be the vertex $a \in A_H^F$ with minimum $\dist_{a}$. We return the path to $\alpha$ in $H'$. We can compute this path from the shortest path tree in $O(n/\sqrt{r})$ time. 
\end{itemize}    
Therefore, we obtain the following lemma.

\begin{lemma}
\label{lem:dijkstra}
Given the edges of $H$ stored in ordered bipartite groups along with a Monge-matrix range minimum data structure built on the slack matrix for each group, one can compute the shortest path (in terms of slack) between any free vertex of $B_H^F$ to any free vertex of $A_H^F$ in $O((n/\sqrt{r} )\log^2 r)$ time. 
\end{lemma}
}

\subsection{Computing an $\R$-feasible matching from a compressed feasible matching}
\label{subsec:planartor}

In a compressed feasible matching, a boundary vertex has multiple dual weights, one corresponding to each of the pieces it belongs to. It also has a dual weight $\ty{\cdot}$ with respect to the graph $H$. We introduce a synchronization procedure (called \sync)\ that will take a compressed feasible matching along with a piece $\R_j$  and update the dual weights $y_j(\cdot)$ so that the new dual weights and the matching $M_j$ continue to be $\R$-feasible and for every boundary vertex $v \in \boundary_j$, $y_j(v) = \ty{v}$. We can convert a compressed feasible matching into an $\R$-feasible matching by repeatedly invoking this procedure for every piece. \
 
  The \sync\ procedure is implemented as follows: 
\begin{itemize} 
\item Recollect that the graph $\R_j'$ is a graph identical to $\R_j$ with slacks as the edge costs. Temporarily add a new source vertex $s$ to $\R_j'$ and add an edge from $s$ to every $v \in \boundary_j$. Also add an edge from $s$ to every unmatched internal vertex $v$ of $A$ and $B$ in $\R_j'$, i.e., $v \in ((V_j \setminus \boundary_j)\cap (A_F\cup B_F))$.   For any such vertex $v \in \boundary_j \cup (V_j \cap (A_F\cup B_F))$, let $\kappa_v = |\ty{v}|-|y_j(v)|$ and let $\kappa= \max_{v\in \boundary_j\cup (V_j \cap (A_F\cup B_F))}\kappa_v$. Set the weight of the newly added edge from $s$ to $v$  to $\kappa-\kappa_v$.  This new graph has only non-negative edge costs. 
 \item Execute Dijkstra's algorithm on this graph beginning from the source vertex $s$. Let $\ell_v$ be the length of the shortest path from $s$ to $v$ as computed by this execution of Dijkstra's algorithm. For each vertex $v \in V_j$, if $\ell_v > \kappa$, then do not change its dual weight. Otherwise, change the dual weight $y_j(v)\leftarrow y_{j}(v) +\lambda_v(\kappa - \ell_v$).
\ignore{
 if $v\in B$, then we set the new dual weight to be $y_j(v) \leftarrow y_j(v) + \kappa - \ell_v$  and if $v \in A$, we set $y_j(v) \leftarrow y_j(v) -\kappa + \ell_v$.}
\end{itemize} 
This completes the description of the \sync\ procedure. The \sync\ procedure executes Dijkstra's algorithm on $\R_j'$ with an additional vertex $s$ and updates the dual weights of  $O(n_j)$ vertices. The total time taken  for this is  $O(m_j + n_j\log{n_j})$ time. To prove the correctness of this procedure, we have to show the following: 
\begin{itemize}
\item[(1)] The new dual weights $y_j(\cdot)$ along with the matching $M_j$ form an $\R$-feasible matching.
\item[(2)] After the \sync\ procedure, for any vertex $v \in \boundary_j \cup (V_j \cap (A_F\cup B_F))$,  $\ty{v}=y_j(v)$.  
\end{itemize}

We give a proof that these two properties hold after executing \sync\ in Lemma \ref{synclemma} of the appendix. The following lemma establishes properties of the \sync\ procedure that will later be used to show that the projection computed by our algorithm are both simple and admissible. For its proof, see appendix section \ref{A:sync}.
\begin{lemma}
\label{lem:sync-zero-slack}
Consider a compressed feasible matching with dual weights $\ty{\cdot}$ assigned to every vertex of $V_H$. For any piece $\R_j$ and any vertex $v \in V_j$, let $\yjs(v)$ denote the dual weight prior to executing \sync, and for any edge $(u,v) \in E_j$, let $\sjs(u, v)$ be the slack prior to executing \sync. Let   $y_j(\cdot)$ denote the dual weights of $V_j$ after this execution.  For any edge $(u,v) \in E_j^H$ with a projection $\dir{P}_{u,v,j}=\langle u=u_{0},u_1,\ldots, u_t, u_{t+1}=v\rangle$, suppose $|\ty{u}|-|y_j^*(u)| \ge \sum_{q = 0}^{t} \sjs(u_q, u_{q+1})$.  Let  $\dir{P}_{s,u_{t},j} $ be any shortest path from $s$ to $u_t$ in $\R_j'$ . Then, 

\begin{enumerate}[(i)]  \item \label{5.10-i} If there exists a shortest path $\dir{P}_{s,u_t,j}$ in $\R_j'$ where $u$ is the second vertex on this path, then after the execution of \sync\ procedure,
 for every $1 \le i\le t-1$, $s(u_{i},u_{i+1})=0$  and $s(u_{t},v)\le |\ty{v}|-|\yjs(v)|$, 
\item \label{5.10-ii}Otherwise, there is no shortest path $\dir{P}_{s,u_t,j}$  in $\R_j'$ with $u$ as its second vertex. Consider $u^*$ to be the second vertex of some $\dir{P}_{s,u_{t},j}$ and $u^* \neq u$. Then,   $u^*\in (\boundary_j \cup (V_j \cap (A_F\cup B_F))$, 
 and   $|\ty{u^*}|- |\yjs(u^*)| > \sum_{(u', v') \in \dir{P}_{u^*, v, j}} \sjs(u', v')$. 
\end{enumerate}
\end{lemma}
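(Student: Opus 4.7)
My plan is to reinterpret \sync\ as a Dijkstra computation rooted at the auxiliary source $s$ and then translate the dual updates into slack updates along the edges of the projection $\dir{P}_{u,v,j}$. The crucial input is Lemma~\ref{synclemma} (the correctness of \sync), which gives $\ell_w = \kappa - \kappa_w$ exactly for every $w \in \boundary_j \cup (V_j \cap (A_F \cup B_F))$, where $\kappa_w := |\ty{w}|-|\yjs(w)|$; this identity is what glues the argument together at the endpoints. The update rule $y_j(v) \leftarrow \yjs(v) + \lambda_v(\kappa-\ell_v)$, applied whenever $\ell_v \le \kappa$, preserves the sign of $y_j(v)$, so $|y_j(v)| = |\yjs(v)|+(\kappa-\ell_v)$. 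Substituting this into the single-edge form of~\eqref{slackcost} yields, for any residual edge $(p,q)$ with both endpoints updated,
\begin{equation*}
s(p,q) \;=\; \sjs(p,q) + \ell_p - \ell_q,
\end{equation*}
and hence $s(p,q)=0$ whenever $(p,q)$ lies on the Dijkstra shortest-path tree.

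For part~(i), I use the standard fact that every prefix of a shortest path is itself a shortest path. Under the hypothesis that some $s \to u \to u_1 \to \cdots \to u_t$ is shortest, this gives $\ell_{u_i} = (\kappa-\kappa_u) + \sum_{q=0}^{i-1}\sjs(u_q,u_{q+1})$ for all $0 \le i \le t$ (in particular all these values lie below $\kappa$ thanks to the lemma's hypothesis, so every vertex on the path gets updated). Plugging these into the slack formula, each internal edge $(u_i,u_{i+1})$ with $1 \le i \le t-1$ gets updated slack $0$. For the last edge $(u_t,v)$ I invoke $\ell_v = \kappa - \kappa_v$ (from Lemma~\ref{synclemma}) and the hypothesis $\kappa_u \ge \sum_{q=0}^{t}\sjs(u_q,u_{q+1})$:
\begin{equation*}
s(u_t,v) \;=\; \sjs(u_t,v) + \ell_{u_t} - \ell_v \;=\; \sum_{q=0}^{t}\sjs(u_q,u_{q+1}) + \kappa_v - \kappa_u \;\le\; \kappa_v.
\end{equation*}

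For part~(ii), $u^*$ is the second vertex on a shortest path from $s$, hence a neighbour of $s$ in the augmented graph; since the only out-neighbours of $s$ are the vertices of $\boundary_j \cup (V_j \cap (A_F \cup B_F))$, the membership claim follows, and the concatenation $u^* \to \cdots \to u_t \to v$ is a directed path in $\R_j$, so $(u^*,v) \in E_j^H$ and $\dir{P}_{u^*,v,j}$ is well-defined. To prove strictness I observe that the prefix $u_0 \to \cdots \to u_t$ of $\dir{P}_{u,v,j}$ is necessarily a minimum-slack path from $u$ to $u_t$ in $\R_j'$ (any shorter alternative concatenated with $(u_t,v)$ would contradict the minimality of $\dir{P}_{u,v,j}$). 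Hence the path $s \to u \to u_1 \to \cdots \to u_t$ has length exactly $(\kappa-\kappa_u) + \sum_{q=0}^{t-1}\sjs(u_q,u_{q+1})$, which by the hypothesis of~(ii) is strictly greater than $\ell_{u_t} = (\kappa-\kappa_{u^*}) + d^*(u^*,u_t)$, where $d^*$ denotes shortest-slack distance in $\R_j'$. Rearranging, and then applying $\kappa_u - \sum_{q=0}^{t-1}\sjs(u_q,u_{q+1}) \ge \sjs(u_t,v)$ (the lemma's hypothesis) followed by the triangle inequality $d^*(u^*,v) \le d^*(u^*,u_t) + \sjs(u_t,v)$, I obtain
\begin{equation*}
\kappa_{u^*} \;>\; \sjs(u_t,v) + d^*(u^*,u_t) \;\ge\; d^*(u^*,v) \;=\; \sum_{(u',v') \in \dir{P}_{u^*,v,j}}\sjs(u',v'),
\end{equation*}
which is exactly the required strict inequality.

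The main obstacle is conceptual rather than computational: one must recognise that \sync\ is secretly a Dijkstra search from $s$, and that parts~(i) and~(ii) correspond respectively to the shortest-path \emph{prefix} property and to the \emph{strictness gap} enjoyed by any non-shortest path. Once this viewpoint is adopted and the identity $\ell_w = \kappa-\kappa_w$ from Lemma~\ref{synclemma} is invoked at the endpoints, both cases reduce to a few lines of algebra.
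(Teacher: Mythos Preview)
Your proposal is correct and follows essentially the same approach as the paper. Both proofs hinge on interpreting \sync\ as a Dijkstra computation from $s$, on the endpoint identity $\ell_w=\kappa-\kappa_w$ (Lemma~\ref{synclemma}), and on the slack-update formula $s(p,q)=\sjs(p,q)+\ell_p-\ell_q$ (which the paper isolates as Lemma~\ref{lem:syncprop1}). Your treatment of case~(ii) is marginally more direct: where the paper subtracts the before/after versions of~\eqref{slackcost} along $\dir{P}_{u^*,u_t,j}$ to extract $|y_j(u^*)|-|\yjs(u^*)|$, you go straight to the inequality $\kappa_{u^*}>\sjs(u_t,v)+d^*(u^*,u_t)\ge d^*(u^*,v)$ via the triangle inequality, which is a tidier way to reach the same conclusion.
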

Informally, Lemma \ref{lem:sync-zero-slack} states that if the dual weight $\ty{u}$ increased by a sufficiently large amount, then the path $\dir{P}_{s,u_t,j}$ must have 0 slack; i.e., all the slack of $\dir{P}_{u_1, v, j}$ is focused on its last edge $(u_t,v)$. Furthermore, either $u$ is immediately after $s$ on $\dir{P}_{s, u_t, j}$ (i.e., $u=u_1$) or some other vertex $u^*$ is immediately after $s$. In the first case, if the edge $(u,v)$ in $H$ is admissible, then all edges of its projection $\dir{P}_{u,v,j}$ are admissible after \sync. Otherwise, we can conclude that $u^*$ must have experienced a large increase in dual weight magnitude.

Given the correctness of the \sync\ procedure, we can convert a compressed feasible matching into an $\R$-feasible matching by simply applying the \sync\ procedure to all the pieces. This will guarantee that the dual weight of any vertex $v \in V$, is the same across all pieces and in $H$. Let $y(v)$ be this dual weight.  Every edge $(u,v)$ of the graph $\Gr$ belongs to some piece and therefore satisfies the $\R$-feasiblity conditions. Therefore the matching $M$ along with the dual weights $y(\cdot)$ form an $\R$-feasible matching.  \ignore{Also note that every free vertex $v \in A_F$ will have a dual weight $y(v)=0$ and every free vertex $v \in B_F$ will have a dual weight $y(v) = \max_{v' \in B} y(v')$.} Invocation of \sync\ on a piece $\R_j$ takes $O(m_j + n_j\log{n_j})$ time.

\begin{lemma}
\label{lem:feasible-convert}
Given a compressed feasible matching $M$, we can convert it into an $\R$-feasible matching $M$ with a set of dual weights $y(\cdot)$  in $O(m + n\log{n})$ time.
\end{lemma}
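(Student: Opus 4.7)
The plan is to simply invoke the \sync\ procedure on every piece $\R_j$ of the $r$-clustering, and then argue that the resulting per-piece dual weights agree on boundary vertices (and hence define a single global dual function $y(\cdot)$) and satisfy the $\R$-feasibility conditions globally.

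First, I would iterate over the pieces $\R_1,\ldots,\R_l$ and invoke \sync\ on each. By the discussion in Section~\ref{subsec:planartor} (specifically properties (1) and (2) proved in Lemma~\ref{synclemma} in the appendix), after \sync\ is run on $\R_j$: (i) the matching $M_j$ together with the updated $y_j(\cdot)$ forms an $\R$-feasible matching on the induced subgraph of $\R_j$, and (ii) for every boundary vertex $v \in \boundary_j$, $y_j(v) = \ty{v}$. The critical observation is that the value $\ty{v}$ stored in $H$ is a single scalar per vertex and is not altered by \sync. Hence after running \sync\ on every piece, for any boundary vertex $v$ and any two pieces $\R_j,\R_{j'}$ both containing $v$, we have $y_j(v) = \ty{v} = y_{j'}(v)$. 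Interior vertices of a piece belong to only that piece, so no consistency issue arises. This means we can unambiguously define $y(v) := y_j(v)$ for $v \in V_j$ (using $y(v) := \ty{v}$ for boundary $v$).

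Next I would verify that $(M, y(\cdot))$ is $\R$-feasible on the whole graph $G$. Every edge $(u,v) \in E$ lies in exactly one piece $\R_j$ because pieces are edge-disjoint. The $\R$-feasibility conditions~\eqref{eq:feas5}--\eqref{eq:feas6} for $(u,v)$ involve only $y_j(u)$ and $y_j(v)$, which by the construction above coincide with $y(u)$ and $y(v)$. Since $(M_j, y_j(\cdot))$ is $\R$-feasible on $\R_j$, the edge $(u,v)$ satisfies the required inequality, giving $\R$-feasibility globally.

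For the running time, each invocation of \sync\ on $\R_j$ costs $O(m_j + n_j \log n_j)$. Summing over all pieces, $\sum_j m_j = m$ by edge-disjointness, and $\sum_j n_j = n + (\text{multiplicity of boundary vertices}) = n + O(n/\sqrt{r}) = O(n)$ by Definition~\ref{def:r-division}. Hence the total cost is $\sum_j O(m_j + n_j \log n_j) = O(m + n \log n)$, as claimed.

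The proof is essentially bookkeeping; the only subtle point (and the only place where I would expect trouble if the \sync\ procedure had been defined differently) is ensuring that independent executions on different pieces do not produce conflicting dual weights on shared boundary vertices. This is handled cleanly by the invariant that \sync\ drives $y_j(v)$ to the common value $\ty{v}$ for boundary $v$, so no explicit reconciliation step is needed.
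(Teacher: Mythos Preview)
Your proposal is correct and follows essentially the same approach as the paper: apply \sync\ to every piece, use property (2) of \sync\ to conclude that boundary vertices receive the common value $\ty{v}$ across all pieces, and sum the per-piece Dijkstra costs to get $O(m + n\log n)$. The paper's argument is terser but identical in substance; your explicit justification that $\sum_j n_j = O(n)$ via the boundary multiplicity bound is a nice detail the paper leaves implicit.
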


\subsection{Second step of the algorithm}
\label{subsec:step2}
\label{sec:algoritm}
In this section, given a compressed feasible matching with $O(n/\sqrt{r})$ free vertices, we show how to compute a compressed feasible matching with  $O\left(\numphases\right)$ unmatched vertices in $\tilde{O}(mn^{2/5})$ time. The second step of our algorithm will execute \fastmatch\ procedure $\numphases$ times. We refer to each execution of the \fastmatch\ procedure as one \textit{phase} of Step 2.  \ignore{After $\numphases$ phases, the algorithm will have a compressed feasible matching with $O(\unmatchedrem)$ unmatched vertices remaining. Finally, the third step of the algorithm matches the remaining vertices by performing $O(\unmatchedrem)$ iterations of a slightly modified version of Hungarian search.}
Next, we give the details of Step 2.  

In each phase of Step 2, the algorithm will invoke the \searchandswitch\ procedure on every active vertex $u \in \activeH$. This procedure takes as input a free active vertex $u$ (boundary or internal) and does a DFS-style search similar to the variant of GT-Algorithm from Section~\ref{subsec:gt}. This search will find paths or cycles of admissible edges in the compressed residual graph. 
When a path or cycle $P$ in the compressed graph is found, the algorithm invokes the \switch\ procedure. This procedure projects $P$ to obtain a path $\dir{P}$ in $G$ where $\dir{P}$ is either an:
\begin{itemize}
\item Augmenting path,
\item Alternating cycle, or
\item Alternating path from a free vertex $u' \in B_{F}$ to some matched vertex $v \in B$.
\end{itemize}
Note that we refer to $P$ (in $H$) as an alternating path, augmenting path, or alternating cycle based on its projection $\dir{P}$.

The \switch\ procedure then \textit{switches} along $\dir{P}$ by setting $M \leftarrow M\ \oplus \dir{P}$. Since switching along a path changes the residual graph, the \switch\ procedure updates the compressed graph accordingly. This process continues until at least one of the following holds:
\begin{enumerate}[(I)]
\item $\lambda_{u} \ty{u}$ has increased by $\sqrt{r}$, \label{stop-cond-1} in step (\ref{searchdualchange}) of \searchandswitch\ below,
\item $u $ is a matched boundary vertex, or, \label{stop-cond-2}

\item  $u$ was a free internal vertex $b_j^A$ of $H$, that no longer exists in $H$, i.e., there are no free active internal vertices in $\R_j$. \label{stop-cond-3}
\end{enumerate}

Recall that we define the slack of any edge $(u,v) \in E_H$ as $\phi(\project{u}{v}{j}) - |\ty{u}|\ +\ |\ty{v}|$. $(u,v)$\ is admissible in $H$ if it has at most $\sqrt{r}$ slack. For any vertex $v \in V_H$, let $\mathcal{A}_v$ be the set of admissible edges of $H$ going out of $v$. Next, we describe the algorithm.

The procedure will conduct a DFS-style search by growing a path $Q=\langle u_0, u_1, \ldots, u_s \rangle$ in\ $H$. Here, $u_0 = u$. The algorithm grows $Q$ by conducting a search at $u_s$ as follows:
\begin{enumerate}[(i)]
\item \label{searchdualchange} If $\mathcal{A}_{u_s} = \emptyset$, then remove $u_s$ from $Q$, set $\ty{u_s} \leftarrow \ty{u_s} +\lambda_{u_s }\sqrt{r}$. If $s> 0$, then continue the search from $u_{s-1}$.Otherwise,  if $s = 0$, the procedure terminates since stopping condition (\ref{stop-cond-1}) is satisfied.
\item  Otherwise, $\mathcal{A}_{u_s} \neq \emptyset$, then find the smallest slack edge in $\mathcal{A}_{u_s}$. 
There could be many edges with the same smallest slack. Among these edges, the algorithm will prefer any edge $(u_s, v)$ such that $v$ is already on the path $Q$.\ Such a vertex would form a cycle in $H$. If none of the smallest slack edges lie on the path, then choose an arbitrary edge with  the smallest slack. Let the chosen edge be $(u_s, v)$ with  $s_H(u_s, v) = \min_{(u_s,v') \in \mathcal{A}_{u_s}} s_H(u_s, v')$. Add $v$ to the path $Q$ as vertex $u_{s+1}$.
\begin{itemize}
\item  If $u_{s+1} \in Q$, then a cycle $C$ is created. Let $u_x = v$, and let $C = \langle u_x, u_{x+1}, ..., u_s, u_x =u_{s+1}\rangle$.  Then, set $Q \leftarrow \langle u_0, u_1, ..., u_{x-1} \rangle$, call the \switch\ procedure\ (described below) on $C$,  and continue the search from $u_{x-1}$. 
\item Otherwise $u_{s+1} \notin Q$, 
\begin{itemize}
\item If any of the following three conditions hold true, then set $P \leftarrow \langle u_0, u_{1}, ..., u_s, u_{s+1}\rangle$ and $Q \leftarrow \{u_0\}$, and invoke \switch\ on $P$.
\begin{itemize}
\item $u_{s+1} \in \boundary \cap B_H$ and $|\ty{u_{s+1}}| \geq \beta$,
\item $u_{s+1} \in \boundary \cap A_H$ and $|\ty{u_{s+1}}| \geq \beta + \max_{(t,u_{s+1}) \in E} \delta_{tu_{s+1}}$, or
\item $u_{s+1} \in A_H^F$.
\end{itemize}
\item If $u_{0}$ is no longer free (i.e., (\ref{stop-cond-2}) is satisfied), or $u_0$ no longer exists (i.e., (\ref{stop-cond-3}) is satisfied), then the \searchandswitch\ procedure terminates.  

\item Otherwise, continue the search from $u_{s+1}$.
\end{itemize}
\end{itemize}
\end{enumerate}

This completes the search portion of \searchandswitch. Next, we will describe the details of \switch\ procedure.

\paragraph{\switch\ procedure.}The \switch\ procedure takes an alternating path, augmenting path or an alternating cycle $P= \langle u_0,\ldots, u_{s+1}\rangle$ in $H$ and computes a path or cycle  $\dir{P}$ in $G$ by projecting every edge of $P$. It also updates the dual weights of every vertex on $\dir{P}$ so that $\dir{P}$ consists only of admissible edges. Then, the algorithm sets $M \leftarrow M \oplus \dir{P}$. This changes the residual graph and its compressed representation. Finally, the procedure updates the compressed graph $H$ to reflect the changes to the underlying residual graph.   

\begin{enumerate}[(a)]
\item \label{presync}
For every edge $(u,v) \in P$, mark the piece it belongs to as affected. Let $\mathbb{R}$ be the set of all affected pieces. Execute \sync\ on every piece $\R_j \in \mathbb{R}$. 

\item \label{updual} Set a value $\alpha \leftarrow \ty{u_0}$. For every $0\le i\le s$, set $\ty{u_i} \leftarrow \ty{u_i} + \lambda_{u_i} s_H(u_i,u_{i+1})$; here $s_H(u_i,u_{i+1})$ is the slack  before the dual weights are updated (i.e., prior to this execution of (\ref{updual})). Execute \sync\ again on every piece $\R_j \in \mathbb{R}$. 

 \item \label{projpath} Suppose $(u,v)$ is an edge of  piece $\R_j$. Project $(u,v)$ to obtain the path $\project{u}{v}{j}$. This can be done by executing Dijkstra's algorithm over $\R_j'$.  Next, combine all the projections to obtain a path or cycle $\dir{P}$ in the residual graph $G_M$. We show that this path or cycle is a simple path or cycle consisting only of admissible edges. 

\item \label{altpath} If $u_{s+1} \in A_H\setminus A_H^F$ and $P$ is an alternating path, then  $u_{s+1}$ is a matched vertex. Let $(u_{s+1},v_{s+1})$ be the edge in the matching $M$ belonging to the piece $\R_j$. \ignore{Set $y_j(u_{s+1}) \leftarrow y_j(u_{s+1}) - s(u_{s+1}, v_{s+1})$. Similarly, for each $\R_{j'}$ such that $u_{s+1} \in V_{j'}$ set $y_{j'}(u_{s+1}) \leftarrow y_{j'}(u_{s+1}) - s(u_{s+1}, v_{s+1})$. Also, set $\ty{u_{s+1}} \leftarrow \ty{u_{s+1}} - s(u_{s+1}, v_{s+1})$.} Execute \reduceslack$(v_{s+1})$. This makes the edge $(u_{s+1}, v_{s+1})$ admissible with respect to $M_j, y_j(\cdot)$ without violating compressed feasibility.  Also, add $(u_{s+1}, v_{s+1})$ to $\dir{P}$ and add $\R_j$ to $\mathbb{R}$. $\R_j$ is added to the affected set because the edge $(u_{s+1}, v_{s+1})$ will change to a non-matching edge during (\ref{augment}), which will affect the edges of $E_j^H$.

\item \label{augment} 
Update the matching $M$ along $\dir{P}$ by setting $M \leftarrow M \oplus \dir{P}$.  By Lemma \ref{lem:admissible-feasible}  the new matching is $\R$-feasible within all affected pieces. In the event that $\dir{P}$ was an alternating path to some internal vertex $v \in B_j \setminus \boundary_j$, $v$ is now an inactive free internal vertex in $G$. It is possible that the dual weights of the inactive free internal vertices in $\R_j$ differ.  Therefore, call \reduce$(b_j^{\mathcal{I}}, \beta)$. The residual graph changed during this step. Therefore, call the \construct\ procedure on every affected piece $\R_j \in \mathbb{R}$ and recompute the edges in $E_{j}^H$ along with their costs. 

\item \label{augment-reduce} If $P$ is an alternating path or augmenting path, and $u_0$  still exists as a vertex $b_j^{\mathcal{A}} \in B_H^F$, then execute \reduce$(u_0, \alpha)$. This effectively resets all dual weights associated with $u_0$ to their values prior to step (\ref{updual}). Since the dual weights $\ty{\cdot}$ of vertices of $\activeH$  differ by at most $\sqrt{r}$, and $\alpha \geq 0$, the preconditions to $\reduce$ are satisfied.

\end{enumerate}

\subsection{Third step of the algorithm}
After Step 2, the algorithm has a compressed feasible matching with $O(\unmatchedrem)$ unmatched vertices remaining. This can be converted into an $\R$-feasible matching in $O(m + n\log{n})$ time by Lemma \ref{lem:feasible-convert}. Next, we describe a procedure for computing an $\R$-optimal matching from this $\R$-feasible matching. In section \ref{sec:planar}, we will discuss further optimizations which lead to an even better running time for Step 3 on planar graphs, based on the results from~\cite{soda-18}.

\paragraph{Step 3.} Given an $\R$-feasible matching with $O(\unmatchedrem)$ unmatched vertices remaining, the algorithm can use $O(\unmatchedrem)$ iterations of Hungarian search to match the remaining vertices, matching one vertex each iteration. Let $G'$ be the graph $G$ with all edges having weight equal to their slack with respect to the $\R$-feasible matching. In each iteration of Step 3, the algorithm executes Dijkstra's algorithm on $G'$ from the vertices of $B_F$ to the vertices of $A_F$ in order to find the minimum total slack augmenting path. For each vertex $v$ in $G$, let $\ell_v$ be the distance assigned by Dijkstra's algorithm. Let $P$ be a shortest augmenting path found, and let $\ell_{max}$ be the distance to the free vertex of $A_F$ in $P$. Then for each vertex  with $\ell_v \leq \ell_{max}$, the algorithm sets $y(v) \leftarrow y(v) + \lambda (\ell_{max} - \ell_v)$. This dual weight change ensures all edges of $P$ are admissible, while also preserving $\R$-feasibility. The algorithm then sets $M \leftarrow M \oplus P$, which increases the matching size by one while preserving $\R$-feasibility.

\section{Analysis of Algorithm}
Step $1$ of the algorithm computes a $\R$-feasible matching $M$ with all but $O(n/\sqrt{r})$ unmatched vertices. This matching is then converted into a compressed feasible matching. Step 2 iteratively computes alternating paths, augmenting paths and cycles and switches the edges along them. While doing so, it maintains a compressed feasible matching. In $O(n^{3/2}/r^{1/4} + mr)$ time, we obtain a compressed feasible matching with no more than $O(\sqrt{n}/r^{1/4})$ unmatched vertices. Step 3 computes the remaining augmenting paths iteratively by doing a simple Hungarian search in $O(m\sqrt{n}/r^{1/4})$  time. For $r=n^{2/5}$, the running time of the algorithm is $O(mn^{2/5})$. To complete the analysis, we need to show the correctness and efficiency of Step 2 of the algorithm. We prove the correctness of Step 2 in Section~\ref{sec:step2-correctness} and the efficiency of Step 2 in Section~\ref{sec:step2-efficiency}.

\label{analysis}
\subsection{Correctness}
\paragraph{Overview of the proof.}
We show that the paths computed in the second step of the algorithm satisfy certain properties (P1)--(P3). Using these properties, we show that these paths have simple (see Lemma \ref{lem:no-intersect}) and admissible (see Lemma \ref{lem:zero-slack-project}) projections. Recollect that switching the edges on an admissible path maintains $\R$-feasibility, by Lemma \ref{lem:admissible-feasible}. Using this, we show that our algorithm maintains compressed feasibility. 
\label{sec:step2-correctness}
\paragraph{Properties of paths.} Our algorithm computes paths and cycles in $H$ that satisfy three properties stated below. Let $P=\langle u_1,u_2,\ldots,u_t\rangle$  be any path or cycle in $H$ such that,

\begin{enumerate}[(P1)]
\item \label{inv:A1} \label{P1} For any edge $(u_{i},u_{i+1})$ on $P$, $s_H(u_{i},u_{i+1}) \leq \sqrt{r}$, i.e., $(u_{i},u_{i+1})$ is admissible,
\item \label{inv:A2} \label{P2} For any edge $(u_{i}, u_{i+1})$ on $P$, $s_H(u_{i}, u_{i+1}) = \min_{(u_{i}, v') \in E_H} s_H(u_{i}, v')$,
\item \label{inv:A3} \label{P3} For any edge $(u_i, u_k)$ with $u_i, u_k \in P$ such that  $k < i$, $s_H(u_i, u_k) > s_H(u_i, u_{i+1})$. 
\end{enumerate}

The next two lemmas show that when the \switch\ procedure is called on any path $P$ that satisfies (P1)--(P3), its projection $\dir{P}$ is simple and consists of only admissible edges. 

\begin{lemma}
\label{lem:no-intersect}
Given a compressed feasible matching, let $P=\langle u_1,\ldots,u_t\rangle$ be any (not necessarily simple) path in $H$ that satisfies properties (P\ref{P2}) and (P\ref{P3}).   Then, for any two edges  $(u_i,u_{i+1})$ and $(u_k,u_{k+1})$ on $P$ with $i <k$ that belong to piece $E_{j}^{H}$, their projections  $\dir{P}_{u_i,u_{i+1},j}$ and $\dir{P}_{u_k,u_{k+1},j}$ are interior-disjoint.        
\end{lemma}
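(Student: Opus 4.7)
The plan is to proceed by contradiction. Suppose the two projections $\dir{P}_1 := \project{u_i}{u_{i+1}}{j}$ and $\dir{P}_2 := \project{u_k}{u_{k+1}}{j}$ share some common interior vertex $w$. Since each projection is by definition a minimum net-cost path in $\R_j$, every subpath along them is itself a minimum net-cost path between its endpoints. Let $c_a, c_b$ denote the net costs of the subpaths of $\dir{P}_1$ from $u_i$ to $w$ and from $w$ to $u_{i+1}$ respectively, and let $c_c, c_d$ denote the analogous net costs along $\dir{P}_2$.

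First I would swap the halves at $w$: concatenating the prefix of $\dir{P}_1$ with the suffix of $\dir{P}_2$ yields a path in $\R_j$ from $u_i$ to $u_{k+1}$ of net cost $c_a+c_d$, which by the definition of $E_j^H$ exhibits an edge $(u_i,u_{k+1})\in E_j^H$ whose net cost is at most $c_a+c_d$. Symmetrically, there is an edge $(u_k,u_{i+1})\in E_j^H$ of net cost at most $c_c+c_b$. Translating net costs into slacks using $s_H(u,v)=\phi(\project{u}{v}{j})-|\ty{u}|+|\ty{v}|$ and summing the two bounds, the dual-weight contributions at $u_i,u_k,u_{i+1},u_{k+1}$ cancel exactly with those in $s_H(u_i,u_{i+1})+s_H(u_k,u_{k+1})$, yielding
\[
s_H(u_i,u_{k+1})+s_H(u_k,u_{i+1}) \;\le\; s_H(u_i,u_{i+1})+s_H(u_k,u_{k+1}).
\]

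Property (P\ref{P2}) forces $s_H(u_i,u_{k+1})\ge s_H(u_i,u_{i+1})$ and $s_H(u_k,u_{i+1})\ge s_H(u_k,u_{k+1})$, so both of the above inequalities must be tight. I would then split into two sub-cases. If $i+1<k$, the edge $(u_k,u_{i+1})$ is a backward edge on $P$, so property (P\ref{P3}) demands $s_H(u_k,u_{i+1})>s_H(u_k,u_{k+1})$, contradicting tightness. If $i+1=k$, then $(u_k,u_{i+1})$ is a self-loop at $u_k$; when the search at $u_k$ picked $(u_k,u_{k+1})$, the \searchandswitch\ tie-breaking rule explicitly prefers edges that close a cycle with the current path $Q$, so a self-loop with equal minimum slack would have been chosen in place of $(u_k,u_{k+1})$. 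Hence the self-loop's slack strictly exceeds $s_H(u_k,u_{k+1})$, again contradicting tightness.

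The main obstacle I expect is the self-loop case, since property (P\ref{P3}) as stated covers only strictly backward edges $k<i$. Resolving it requires invoking the cycle-preferring behaviour of the search procedure directly instead of (P\ref{P3}); this is in fact one of the main reasons that self-loops are retained in the compressed residual graph in the first place. Everything else reduces to careful bookkeeping of the slack identity and the optimality of subpaths of shortest paths, which are standard.
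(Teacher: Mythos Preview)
Your argument is essentially the paper's own: assume a common interior vertex, swap the two halves to obtain the crossing inequality
\[
s_H(u_i,u_{k+1})+s_H(u_k,u_{i+1}) \le s_H(u_i,u_{i+1})+s_H(u_k,u_{k+1}),
\]
and then force both (P\ref{P2}) inequalities to be tight, contradicting (P\ref{P3}). The paper stops there, simply writing that $s_H(u_k,u_{k+1}) = s_H(u_k,u_{i+1})$ contradicts (P\ref{P3}) ``since $i<k$''.

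You are in fact more careful than the paper on the boundary case $i+1=k$. As you note, (P\ref{P3}) is stated only for strictly backward edges, so a self-loop at $u_k$ is not literally covered; the paper glosses over this. Your resolution via the tie-breaking rule of \searchandswitch\ is correct for the paths the lemma is actually applied to, but it does step outside the hypotheses of the lemma as stated (which speaks of \emph{any} path satisfying (P\ref{P2}) and (P\ref{P3})). A cleaner patch, equivalent in effect, is to observe that the algorithm's tie-breaking rule in fact maintains the slightly stronger invariant ``$k\le i$'' in (P\ref{P3}); with that strengthening the paper's one-line finish goes through uniformly, self-loops included. Either way, your proof is sound and matches the paper's route, with the added virtue of having spotted and handled the self-loop edge case explicitly.
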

\begin{proof}
Without loss of generality, assume that the dual weights $\ty{\cdot}$ and $y_j(\cdot)$ are synchronized. For the sake of contradiction let $\dir{P}_{u_i,u_{i+1},j}$ and $\dir{P}_{u_k,u_{k+1},j}$  intersect in the interior at a vertex $x$ in some piece $\R_j$. Since $x$ is common to both the projections, it immediately follows that there is a path from $u_i$ to $u_{k+1}$  and a path from $u_k$ to $u_{i+1}$, both passing through $x$. This implies 
\begin{equation}
\label{eq:slacksum}
s_H(u_i,u_{i+1})+s_H(u_k,u_{k+1}) \ge s_H(u_i,u_{k+1})+s_H(u_k,u_{i+1}). \end{equation}

  From property (P\ref{P2}), we have that $s_H(u_i,u_{i+1}) \leq s_H(u_i,u_{k+1})$ and $s_H(u_k,u_{k+1}) \leq s_H(u_k,u_{i+1})$. This along with~\eqref{eq:slacksum} implies that  $s_H(u_k,u_{k+1}) = s_H(u_k,u_{i+1})$ contradicting  (P\ref{P3}) since $i < k$.
\end{proof}
The following is a straight-forward corollary of Lemma~\ref{lem:no-intersect}.
\begin{cor}
\label{cor:no-intersect}
Given a compressed feasible matching, let $P=\langle u_1,\ldots,u_t\rangle$ be a simple path (resp. simple cycle) in $H$ that satisfies properties (P\ref{P2}) and (P\ref{P3}). Then, the projection $\dir{P}$ of $P$ is a simple path (resp. cycle).  
\end{cor}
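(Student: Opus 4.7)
The projection $\dir{P}$ of $P=\langle u_1,\ldots,u_t\rangle$ is the ordered concatenation of the individual projections $\dir{P}_{u_i,u_{i+1},j_i}$, one per edge of $P$, where $(u_i,u_{i+1})$ belongs to piece $\R_{j_i}$. To establish simplicity of $\dir{P}$, my plan is to verify three things: that each individual projection is itself a simple path, that the endpoints $u_1,\ldots,u_t$ viewed as vertices of $G$ are pairwise distinct, and that the interiors of distinct projections are pairwise disjoint and disjoint from those endpoints.

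First, each $\dir{P}_{u_i,u_{i+1},j_i}$ is a minimum-slack path in $\R_{j_i}'$, and since slacks are non-negative the path is simple. Second, the simplicity of $P$ in the multigraph $H$ makes the $u_i$ pairwise distinct as vertices of $H$. Boundary vertices of $H$ embed uniquely into $G$, and a free internal vertex $b_j^{\mathcal{A}}$, $b_j^{\mathcal{I}}$, or $a_j$ can only appear as the first or last vertex of $P$ since each such vertex has edges in only one direction in $H$. Each free internal vertex projects to a specific free vertex lying in the strict interior of its own piece $\R_j$, hence in no other piece, so these endpoints in $G$ are distinct from every other vertex that can appear in $\dir{P}$. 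When $P$ is a simple cycle, no free internal vertex can appear at all, and every $u_i$ is then a boundary vertex with a unique preimage in $G$.

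The main step of the argument is interior disjointness between the projections of different edges of $P$. When the two edges belong to the same piece, Lemma~\ref{lem:no-intersect} applies directly. When they belong to different pieces $\R_{j_1}$ and $\R_{j_2}$, any shared interior vertex $x$ would have to lie in $V_{j_1}\cap V_{j_2}$ and hence must be a boundary vertex $w\in\boundary_{j_1}\cap\boundary_{j_2}$, in particular a vertex of $V_H$. Since $w$ lies on the shortest slack path in $\R_{j_1}$ from $u_i$ to $u_{i+1}$, the edges $(u_i,w)$ and $(w,u_{i+1})$ belong to $E_{j_1}^H$ and satisfy $s_H(u_i,w)+s_H(w,u_{i+1})=s_H(u_i,u_{i+1})$; property~(P\ref{P2}) at $u_i$ then forces $s_H(u_i,w)=s_H(u_i,u_{i+1})$ and $s_H(w,u_{i+1})=0$. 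A symmetric pair of identities holds in $\R_{j_2}$ at $u_k$. Plugging these tight slack relations, together with property~(P\ref{P3}) applied to the candidate back edges through $w$, into the slack budget of the path $P$ should produce a contradiction with either the minimum-slack prescription (P\ref{P2}) at an intermediate vertex or the strict back-edge inequality (P\ref{P3}).

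The hard part will be the cross-piece case of interior disjointness. In the same-piece setting of Lemma~\ref{lem:no-intersect}, a single triangle-style exchange between two edges of the same piece is enough, but across pieces the relevant edges live in distinct copies within the multigraph $H$ and no uniform comparison of their slacks through $w$ is immediate. A careful case analysis that tracks the position of $w$ in $\boundary$, the relative ordering of the edges $(u_i,u_{i+1})$ and $(u_k,u_{k+1})$ along $P$, and whether $w$ could have been chosen as a vertex of $P$ but was not, should close the gap and yield the corollary for both the simple-path and simple-cycle cases simultaneously.
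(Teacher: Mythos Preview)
Your decomposition is the right framework, and your handling of free internal vertices and of the same-piece case via Lemma~\ref{lem:no-intersect} is correct. But you have the difficulty backwards: the cross-piece case is the easy one. If a vertex $w$ is interior to two projections lying in distinct pieces $\R_{j_1}\neq\R_{j_2}$, then $w$ is matched (a free vertex cannot be interior to any directed path in $G_M$), and both projections must contain $w$'s matching edge---the unique in-edge to $w$ if $w\in B$, or the unique out-edge from $w$ if $w\in A$. That edge belongs to exactly one piece of the $r$-clustering, contradicting $j_1\neq j_2$. So the cross-piece interior case falls out immediately from the degree-$1$ structure of the bipartite residual graph together with edge-disjointness of the pieces; no (P2)/(P3) argument is needed there, and this is why the paper calls the corollary ``straight-forward.''

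Your proposed (P2)/(P3) route for the cross-piece case also has a concrete flaw: the shared boundary vertex $w$ you produce is a vertex of $V_H$ but is \emph{not} a vertex of the path $P$, and (P2)/(P3) constrain only edges emanating from vertices that actually lie on $P$. The identities $s_H(w,u_{i+1})=0$ and $s_H(w,u_{k+1})=0$ that you derive concern edges out of $w$, which the hypotheses say nothing about, so there is no contradiction to extract. Separately, the case ``some $u_m$ on $P$ lies in the interior of another edge's projection'' appears in your plan but is never argued; the matching-edge observation above again forces the two relevant projections into the same piece, after which the exchange inequality from the proof of Lemma~\ref{lem:no-intersect}---applied with the shared vertex taken to be $u_m$---together with (P3) at a genuine vertex of $P$ finishes it.
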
 Let $P$ be a path (or cycle) that satisfies (P2), Consider an execution of \switch\ on  path $P$ and let $\dir{P}$ be the projection computed  in step (\ref{projpath}) in \switch.  All edges  of $\dir{P}$ are admissible.

\begin{lemma}
\label{lem:zero-slack-project}
Let $P$ be the path (or cycle) that is projected during step (\ref{projpath}) of \switch. Assume that $P$ satisfies property (P1) and (P\ref{P2}) at the beginning of \switch. Then for every edge $(u,v)$ on $P$ with projection $P_{u,v,j}$, every edge of $\dir{P}_{u,v,j}$ is admissible.
\end{lemma}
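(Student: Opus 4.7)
The plan is to invoke Lemma~\ref{lem:sync-zero-slack} on each edge $(u,v)=(u_i,u_{i+1})$ of $P$ together with its projection $\dir{P}_{u,v,j}=\langle u=u_0',\ldots,u_t',u_{t+1}'=v\rangle$. First I would verify the hypothesis $|\ty{u}|-|\yjs(u)|\ge\sum_{q=0}^{t}\sjs(u_q',u_{q+1}')$, where the starred quantities refer to the state just before the second call to \sync\ in step~(\ref{updual}). The first \sync\ in step~(\ref{presync}) forces $y_j(w)=\ty{w}$ for every $w\in\boundary_j\cup(V_j\cap(A_F\cup B_F))$; step~(\ref{updual}) then increases $|\ty{u_i}|$ by exactly $s_H(u_i,u_{i+1})$ (the pre-update slack) without altering $y_j$. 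Combining this with~\eqref{eq:slackHsumslacksG}, which equates $s_H(u_i,u_{i+1})$ to $\sum_q\sjs(u_q',u_{q+1}')$ along the projection, shows the hypothesis holds with equality.

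Next I would rule out case~(\ref{5.10-ii}) of Lemma~\ref{lem:sync-zero-slack}, which I expect to be the main obstacle. Suppose for contradiction that some $u^*\ne u$ in $\boundary_j\cup(V_j\cap(A_F\cup B_F))$ satisfies $|\ty{u^*}|-|\yjs(u^*)|>\sum_{(u',v')\in\dir{P}_{u^*,v,j}}\sjs(u',v')\ge 0$. Since $\ty{u^*}$ strictly grew during step~(\ref{updual}), the representative of $u^*$ in $V_H$ must coincide with some $u_k$ on $P$ for $k\in\{0,\ldots,s\}$, and the growth equals the pre-update value of $s_H(u_k,u_{k+1})$. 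The path $\dir{P}_{u^*,v,j}$ certifies that the edge from $u_k$ to $v$ belongs to $E_j^H$, and because its weight is by construction the minimum net-cost of any such path---especially important in the subtle subcase where $u^*$ is an internal free $B$-vertex sharing a representative with $u$ but differing from it as vertices of $V_j$---we have $s_H(u_k,v)\le\sum_{(u',v')\in\dir{P}_{u^*,v,j}}\sjs(u',v')$. Combining with property~(P\ref{P2}), which gives $s_H(u_k,u_{k+1})\le s_H(u_k,v)$, yields $s_H(u_k,u_{k+1})\le\sum\sjs<|\ty{u^*}|-|\yjs(u^*)|=s_H(u_k,u_{k+1})$, a contradiction.

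Hence case~(\ref{5.10-i}) always applies. After the second \sync\ every interior edge of $\dir{P}_{u,v,j}$ has slack $0$, and the final edge $(u_t',v)$ has slack at most $|\ty{v}|-|\yjs(v)|$. Zero-slack edges are admissible whether boundary or not. For the final edge, if $v=u_{s+1}$ then step~(\ref{updual}) leaves $\ty{v}$ untouched, so its slack is $0$. Otherwise $v=u_{i+1}$ is intermediate on $P$; since $a_j$ has no outgoing edges and $b_j^{\mathcal{A}},b_j^{\mathcal{I}}$ have no incoming edges in $H$, none of them can appear as an intermediate vertex of a path in $H$, forcing $v$ to be a boundary vertex and $(u_t',v)$ a boundary edge. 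Its slack is at most the increment $s_H(u_{i+1},u_{i+2})$ that step~(\ref{updual}) places on $|\ty{v}|$, which by~(P\ref{P1}) is at most $\sqrt{r}$---exactly the admissibility threshold for boundary edges. Thus every edge of $\dir{P}_{u,v,j}$ is admissible.
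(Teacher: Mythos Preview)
Your proof is correct and follows essentially the same approach as the paper: verify the hypothesis of Lemma~\ref{lem:sync-zero-slack} using the synchronization from step~(\ref{presync}) together with~\eqref{eq:slackHsumslacksG}, rule out case~(\ref{5.10-ii}) via the contradiction with~(P\ref{P2}), and conclude from case~(\ref{5.10-i}) that interior edges have zero slack while the last edge has slack at most $\sqrt{r}$. You are in fact more careful than the paper in two places: you explicitly treat the representative subtlety when $u^*$ is a free internal vertex sharing $rep(u^*)$ with a vertex of $P$, and you separate the endpoint case $v=u_{s+1}$ (where the last projection edge need not be a boundary edge) from the intermediate case.

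One small gap: when $P$ is a cycle you have $u_{s+1}=u_0$, and step~(\ref{updual}) \emph{does} update $\ty{u_0}$ (by $s_H(u_0,u_1)$), so your claim that ``$\ty{v}$ is untouched when $v=u_{s+1}$'' fails there. The fix is immediate, and indeed already implicit in your intermediate-vertex argument: on a cycle every vertex, including $u_0$, has both an incoming and an outgoing $H$-edge, so $u_0$ must be a boundary vertex; then the last edge of $\dir{P}_{u_s,u_0,j}$ is a boundary edge with slack at most $s_H(u_0,u_1)\le\sqrt{r}$ by~(P\ref{P1}), hence admissible.
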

\begin{proof}
At the end of step (b) of the \switch\ procedure, \sync\ is called on the piece containing $(u,v)$. In the \sync\ procedure, recollect that we add a new vertex $s$ and connect it with all the boundary vertices to create a graph $\R_j'$. For this \sync\ procedure (of step (b)), we use the notations from Lemma \ref{lem:sync-zero-slack}.  Recollect that, due to the execution of \sync\ procedure in step (a),   $\yjs(v)=\ty{v}$, for all $v \in V_j^H$. Step (b) will only increase the $\ty{\cdot}$ for vertices along $P$. Let $u_t$ be the vertex that appears before $v$ in $\dir{P}_{u,v,j}$. 

Suppose there is no shortest path from $s$ to $u_t$ in $\R_j'$ with $u$ as the second vertex, then let $u^*$ be the second vertex on some shortest path $\dir{P}_{s,u_t,j}$.  From    Lemma \ref{lem:sync-zero-slack} (ii), it follows that $|\ty{u^*}|-|\yjs(u^*)|  > s_H(u^*,v)$, i.e., the change in $\ty{u^*}$ in   Step (b) of the \switch\ procedure is greater than $s_H(u^*,v)$. Since step (b) updates the dual weight $\ty{u^*}$, $u^*$ is on the path $P$ and let $y$ be the vertex that appears after $u^*$ on $P$. The change in dual weight $\ty{u^*}$ is exactly $s_H(u^*,y)$. Therefore, $s_H(u^*,y) > s_H(u^*,v)$ contradicting (P2). We conclude that Lemma~\ref{lem:sync-zero-slack} (i) holds, and the vertex $u$ must be the second vertex on some shortest path from $s$ to $u_t$.
 
From Lemma~\ref{lem:sync-zero-slack} (i), it follows that every edge on $\dir{P}_{u,v,j}$ except the last edge has zero slack. Moreover the slack on the last edge is $|\ty{v}|-|\yjs(v)|$ which is less than or equal to $\sqrt{r}$ (by (P1)). Since $v$ is a boundary vertex, the last edge $(u_t,v)$ is also an admissible edge.
\end{proof}

The second step of the algorithm maintains the following invariants.

\begin{enumerate}[(A)]
\item \label{inv:min-slack} Let $Q$ be the search path in the \searchandswitch\ procedure. Then path properties (P\ref{P1}), (P\ref{P2}), (P\ref{P3}) hold for $Q$.
\item \label{inv:cf} After any step of \searchandswitch\ or \switch, the matching $M$ and the sets of dual weights $\bigcup_{\R_j}y_j(\cdot)$ and $\ty{\cdot}$ form a compressed feasible matching.
\end{enumerate}

Given Invariant (\ref{inv:min-slack}), it is easy to show that any projected path (or cycle) $\dir{P}$ in step (\ref{augment}) is both simple and admissible. From Corollary \ref{cor:no-intersect}, the projection of $Q$ is simple. In step (\ref{altpath}), a single matching edge $(a,b)$ may be added to $\dir{P}$. However, \reduceslack\ is called to ensure that this edge has 0 slack.

\begin{lemma}
\label{lem:admis-nointersect}
Let $P$ be a path or cycle in $H$ sent as input to \switch. Let $\dir{P}$ be the path or cycle that is a projection of $P$ prior to step (\ref{augment}) of \switch. Then $\dir{P}$ is a simple path or cycle consisting of admissible edges.
\end{lemma}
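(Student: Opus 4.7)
The plan is to assemble the two main existing tools---Corollary~\ref{cor:no-intersect} for simplicity and Lemma~\ref{lem:zero-slack-project} for admissibility---and supplement them with a short direct argument to handle the edge that may be appended in step~(\ref{altpath}) of \switch. Invariant~(\ref{inv:min-slack}), which holds at the moment \switch\ is invoked, immediately gives that the input $P$ satisfies (P\ref{P1}), (P\ref{P2}), and (P\ref{P3}). Moreover $P$ is itself a simple path or simple cycle in $H$: whenever \searchandswitch\ detects a cycle, only the sub-cycle $\langle u_x,\ldots,u_s,u_x\rangle$ (with $u_x,\ldots,u_s$ pairwise distinct on the current search path) is handed to \switch, and otherwise the DFS search path $Q$ is maintained as simple by construction.

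Next, consider the concatenated projection produced at the end of step~(\ref{projpath}). Corollary~\ref{cor:no-intersect}, applied using (P\ref{P2}) and (P\ref{P3}), yields that the per-edge projections corresponding to any two distinct edges of $P$ are interior-disjoint in $G$. Because interior vertices of any such projection lie in the interior of a piece (hence are non-boundary and not free, so cannot equal any vertex of $V_H$), the only contact between consecutive per-edge projections is at the shared endpoints $u_0,u_1,\ldots,u_{s+1}$, and these are pairwise distinct (or distinct except at the closing endpoint in the cycle case). Hence the concatenation $\dir P$ is already a simple path or simple cycle in $G$. Lemma~\ref{lem:zero-slack-project}, invoked with (P\ref{P1}) and (P\ref{P2}), then shows that every edge on each per-edge projection is admissible.

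It remains to handle the extension in step~(\ref{altpath}), which fires only when $P$ is an alternating path whose last vertex $u_{s+1}$ lies in $A_H\setminus A_H^F$; the matching edge $(u_{s+1},v_{s+1})$ is appended to $\dir P$ after a call to \reduceslack$(v_{s+1})$. Admissibility of the new edge is immediate: the \reduceslack\ call drives the slack on $(u_{s+1},v_{s+1})$ to $0$ while preserving compressed feasibility by Lemma~\ref{lem:reduce}. For simplicity, the plan is to exploit the alternating structure of $\dir P$ together with the uniqueness of $v_{s+1}$'s matching partner. Since $\dir P$ starts at a free vertex of $B_F$ (the representative of $u_0\in\activeH$) and ends at $u_{s+1}\in A$, an indexing along $\dir P$ places the $B$-vertices at the even positions and the $A$-vertices at the odd positions, and the matching edges of $\dir P$ are exactly those joining each interior $B$-vertex to its unique matched $A$-partner. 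Because $v_{s+1}$ is matched in $M$ only to $u_{s+1}$, an interior occurrence of $v_{s+1}$ on $\dir P$ would force $u_{s+1}$ itself to appear at an earlier odd position, contradicting the simplicity of $\dir P$ established above; and $v_{s+1}$ cannot coincide with either endpoint either, since the starting endpoint is free while $v_{s+1}$ is matched, and the terminal endpoint is $u_{s+1}\neq v_{s+1}$.

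The main obstacle I anticipate is precisely this last simplicity check for step~(\ref{altpath}): the per-edge projection framework cleanly covers everything intrinsic to $P$, but the appended matching edge sits outside that machinery and requires the explicit alternation/uniqueness argument sketched above. With that in place, the lemma follows by assembling Corollary~\ref{cor:no-intersect}, Lemma~\ref{lem:zero-slack-project}, and Lemma~\ref{lem:reduce}.
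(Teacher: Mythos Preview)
Your proposal is correct and follows essentially the same route as the paper: invoke Invariant~(\ref{inv:min-slack}) to obtain (P1)--(P3), use Corollary~\ref{cor:no-intersect} for simplicity, Lemma~\ref{lem:zero-slack-project} for admissibility, and handle the step~(\ref{altpath}) extension via \reduceslack. You are in fact more careful than the paper on the simplicity of the appended matching edge $(u_{s+1},v_{s+1})$; the paper only notes that \reduceslack\ makes it zero-slack and leaves simplicity implicit, whereas your alternating-structure argument (uniqueness of the match of $v_{s+1}$ forcing a repeated $u_{s+1}$) fills that in nicely. One small wording issue: your claim that ``interior vertices of any such projection lie in the interior of a piece'' is not literally guaranteed by the definitions (a minimum net-cost path inside $\R_j$ may pass through other boundary vertices of $\R_j$), but you do not actually need that claim, since Corollary~\ref{cor:no-intersect} already delivers simplicity of the concatenated projection directly.
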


Next, we discuss the proof of Invariant (\ref{inv:min-slack}). The \searchandswitch\ procedure adds the smallest slack admissible edge going out of the last vertex on $Q$. In the case of a tie, the algorithm will prefer adding vertices already on the search path, in which case a cycle is detected immediately and \switch\ is invoked. Regardless, by construction, the edge added satisfies (P1)--(P3).  During the execution of the \searchandswitch\ procedure only $\ty{\cdot}$ values are modified for vertices from which the search backtracks. For any such vertex $v$ from which the search backtracks, the algorithm sets $\ty{v} \leftarrow \ty{v}+\lambda_v \sqrt{r}$. This increases the magnitude of the dual weight  $\ty{v}$. So, for any vertex $u \in Q$, the slack $s_H(u,v)$ only increases. Since $v$ is not on the path,   (P1)--(P3) continue to hold.\ 

During the execution of \switch\ procedure for an alternating path or an augmenting path, $Q$ is set to $\emptyset$. Therefore, (P1), (P2) and (P3) hold trivially. In the case of an alternating cycle, however, $Q$ may contain edges after \switch. Note that the magnitude of $\ty{\cdot}$ values increase for vertices not on $Q$ in  step (b) of \switch\ procedure. Since the magnitude only increases, (P1)--(P3)  holds. Since step (d) is not executed for a cycle, the only other step where slack on the edges of $H$ are changed are in step (e). In the following lemma, we show that, for any such $Q$, (P1)--(P3) hold after the execution of step (e) of the \switch\ procedure.

\begin{lemma} 
\label{lem:A1-helper}
Assume that the path $Q$ satisfies properties (P\ref{inv:A1}), (P\ref{inv:A2}) and (P\ref{inv:A3}) prior to executing step (\ref{augment}) of  \switch\ with a cycle $C$ as input. Then (P\ref{inv:A1}), (P\ref{inv:A2}) and (P\ref{inv:A3}) hold for $Q$ after  step (\ref{augment}).
\end{lemma}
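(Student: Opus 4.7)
Step (e) of \switch\ on a cycle $C$ performs $M \leftarrow M \oplus \dir{C}$ and then invokes \construct\ on each affected piece $\R_j \in \mathbb{R}$ to rebuild $E_j^H$; in the cycle case, no dual weight is touched (both \reduce\ and \reduceslack\ fail to fire). Consequently, for every edge $(u,v) \in E_j^H$, the slack $s_H(u,v) = \phi(\dir{P}_{u,v,j}) - |\ty{u}| + |\ty{v}|$ changes only through a change in the net cost of the shortest alternating path inside $\R_j$. To re-establish (P\ref{P1})--(P\ref{P3}) for $Q$, I plan to prove two one-sided inequalities for each $u_i \in Q$ whose outgoing edges lie in an affected piece $\R_j$: an upper bound $\phi^{\mathrm{new}}(\dir{P}_{u_i, u_{i+1}, j}) \le \phi^{\mathrm{old}}(\dir{P}_{u_i, u_{i+1}, j})$ for the specific $Q$-edge, and a lower bound $\phi^{\mathrm{new}}(\dir{P}_{u_i, w, j}) \ge \phi^{\mathrm{old}}(\dir{P}_{u_i, w, j})$ for every $w \in V_j^H$. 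Chaining $s_H^{\mathrm{new}}(u_i, u_{i+1}) \le s_H^{\mathrm{old}}(u_i, u_{i+1}) \le s_H^{\mathrm{old}}(u_i, w) \le s_H^{\mathrm{new}}(u_i, w)$ then recovers (P\ref{P1}) (admissibility) and (P\ref{P2}) (minimality), with the analogous strict inequality for $k<i$ giving (P\ref{P3}).

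The upper bound follows from a disjointness argument. At the moment the cycle was detected, the extended search path $P' = \langle u_0, \ldots, u_s, u_{s+1}{=}u_x\rangle$ satisfied (P\ref{P2}) and (P\ref{P3}): the closing edge $(u_s, u_{s+1})$ was chosen as a min-slack edge by the tie-breaking rule, and the back-edge condition (P\ref{P3}) is vacuous at $u_{s+1}$. Hence Lemma~\ref{lem:no-intersect} applies to $P'$, and inside any piece $\R_j$ the projection of every $Q$-edge is interior-disjoint from the projection of every $C$-edge. Combined with the fact that the $H$-vertex sets of $Q = \langle u_0, \ldots, u_{x-1}\rangle$ and $C = \langle u_x, \ldots, u_s, u_x\rangle$ are disjoint (by (P\ref{P3}) all $u_0, \ldots, u_s$ are distinct in $H$), the projection $\dir{P}^{\mathrm{old}}_{u_i, u_{i+1}, j}$ shares no edge with $\dir{C}$'s restriction to $\R_j$, so it remains a valid directed path of the same net cost in $G_{M \oplus \dir{C}}$, yielding the upper bound.

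The lower bound is the main obstacle, since for an arbitrary $w \in V_j^H$ the edge $(u_i, w)$ need not appear on $P'$ and Lemma~\ref{lem:no-intersect} is silent. I intend to exploit the fact that $M \oplus \dir{C}$ remains $\R$-feasible under the unchanged dual weights (Lemma~\ref{lem:admissible-feasible}), so every edge of the new slack-weighted graph has nonnegative slack and every reversed $\dir{C}$-edge acquires slack at least $\delta_{pq}$ in the new residual graph. A cycle-cancellation argument then converts any path $\dir{P}^{*}$ from $u_i$ to $w$ in $\R_j$ after augmentation into a path $\dir{P}^{**}$ from $u_i$ to $w$ in $\R_j$ before augmentation with $\phi(\dir{P}^{**}) \le \phi(\dir{P}^{*})$: every maximal reversed-$\dir{C}$ subarc used by $\dir{P}^{*}$ is swapped for the complementary forward $\dir{C}$-arc, whose net cost is bounded using the admissibility of $\dir{C}$ together with the boundary-correction terms $\delta_{pq}$. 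Optimality of $\dir{P}^{\mathrm{old}}_{u_i, w, j}$ before augmentation then forces $\phi(\dir{P}^{*}) \ge \phi^{\mathrm{old}}(\dir{P}_{u_i, w, j})$, establishing the lower bound. The delicate point is the bookkeeping of the $\delta_{pq}$ terms: since $\dir{C}$ is only admissible, with slack up to $\sqrt{r}$ on boundary edges, rather than exactly zero-slack, each swapped arc must be charged carefully to the pieces it traverses.
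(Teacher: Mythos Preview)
Your upper bound for (P\ref{P1}) is correct and matches the paper: the projected $Q$-edge and the projected $C$-edges are interior-disjoint inside any piece (via Lemma~\ref{lem:no-intersect} applied to the extended path $P'=\langle u_0,\dots,u_s,u_{s+1}\rangle$), so the old projection of $(u_i,u_{i+1})$ survives and its slack does not increase.

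The lower-bound half, however, has a genuine gap. The cycle-cancellation step ``swap every maximal reversed-$\dir C$ subarc for the complementary forward $\dir C$-arc'' does not make sense \emph{within a single piece}: the restriction of $\dir C$ to $\R_j$ is a disjoint union of the projections $\dir P_{y,z,j}$ of the $H$-edges of $C$ lying in $E_j^H$, not a cycle. There is no complementary arc inside $\R_j$, and going around the rest of $\dir C$ forces you to leave the piece, so the resulting walk no longer witnesses an upper bound on the within-piece quantity $\phi(\dir P_{u_i,w,j})$. Even if one ignores this and computes globally, the $\delta$-bookkeeping does not close: replacing a reversed segment $F\subset\dir C$ by the forward complement $\dir C\setminus F$ gives $\phi^{\mathrm{new}}(F^R)-\phi^{\mathrm{old}}(\dir C\setminus F)=2\sum_{e\in F}\delta_e-\phi^{\mathrm{old}}(\dir C)$, and admissibility only gives $\phi^{\mathrm{old}}(\dir C)\le\tfrac12\sum_{e\in\dir C}\delta_e$, which need not be dominated by $2\sum_{e\in F}\delta_e$ when $F$ is a short interior segment. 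More fundamentally, the monotonicity $\phi^{\mathrm{new}}(\dir P_{u_i,w,j})\ge\phi^{\mathrm{old}}(\dir P_{u_i,w,j})$ for \emph{every} $w$ is stronger than what is needed and can fail: a new path to some $w$ may well have slack strictly between $s_H^{\mathrm{old}}(u_i,u_{i+1})$ and $s_H^{\mathrm{old}}(u_i,w)$, which violates your monotonicity but is perfectly consistent with (P\ref{P2}).

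The paper avoids monotonicity altogether. It proves directly that any \emph{new} path from $u_i$ inside $\R_j$ (one that uses at least one reversed edge of some $\dir P_{y,z,j}$) has total slack strictly greater than $s_H(u_i,u_{i+1})$. The two ingredients you are missing are: (i) step~(\ref{updual}) already raised $|\tilde y(z)|$ by $\Delta_z=s_H(z,z')$, so by (P\ref{P2}) prior to step~(\ref{updual}) one has $s_H(u_i,u_{i+1})\le s_H(u_i,z)-\Delta_z$; and (ii) Lemma~\ref{lem:zero-slack-project} guarantees that after the second \sync\ all slack of $\dir P_{y,z,j}$ sits on its last edge, so from the first intersection point $x$ there is an old path $x\to z$ of slack at most $\Delta_z$. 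Concatenating, the old-edge prefix $u_i\to x$ already has slack at least $s_H(u_i,z)-\Delta_z\ge s_H(u_i,u_{i+1})$, and the first reversed edge contributes strictly positive slack (Lemma~\ref{lem:admissible-feasible}), giving the strict inequality needed for both (P\ref{P2}) and (P\ref{P3}).
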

\begin{proof}
Consider the case during the execution of \searchandswitch\ right after $v$ is added to $Q$ as $u_{s+1}$. Since before adding $u_{s+1}$ to $Q$, $u_{s+1}$ was already on $Q$, a cycle is created. By Lemma \ref{lem:no-intersect}, the projection of $Q$ after the addition of $u_{s+1}$ has no self intersection except at $u_{s+1}$. We then remove the cycle $C$ from $Q$ and call the \switch\ procedure on $C$.
It follows that the projection of $C$ and $Q$ (after the removal of cycle as described in (ii) of \searchandswitch) has no intersections.

Let  $(u,v)$ be an edge of $Q$ before execution of step (e).    Let $\dir{P}_{u,v,j}$ be the projection of $(u,v)$ prior to step (\ref{augment}), and let $\dir{P}'_{u,v,j}$ be the projection after step (\ref{augment}). For every $(y,z)\in C$, from the discussion above, we have that $\dir{P}_{u,v,j} \cap \dir{P}_{y,z,j} = \emptyset$. During step (e), only edges of  $\dir{P}_{y,z,j}$ change direction and therefore, $\dir{P}_{u,v,j}$ continues to be a directed path  after step (\ref{augment}). Therefore, $s(\dir{P}'_{u,v,j}) \leq s(\dir{P}_{u,v,j})$ and (P1) holds. 

  Next, we show that for any vertex $v'\in V_j^H$ (possibly $v'=v$), any new projection $\dir{P}'_{u,v',j}$  created after step (e) has $s(\dir{P}'_{u,v',j}) > s(\dir{P}_{u,v,j})$ implying (P2) and (P3) hold. Since $\dir{P}'_{u,v',j}$ was created from switching along $C$, $\dir{P}'_{u,v',j}$ must intersect with some projection $\dir{P}_{y,z,j}$ of an edge $(y,z)$ in $C$. Step (b) of the \switch\ procedure increases the magnitude of the dual weight of $z$ say by $\Delta_z$ and since $(u,v)$ was the smallest slack edge out of $u$, it follows that
$$s_H(u,v) \le s_H(u,z) - \Delta_z.$$ 
Let $\dir{P}^{R}_{z,y,j}$ be the path obtained by reversing the edges of $\dir{P}_{y,z,j}$. Let $(x,x')$ be the first edge in the intersection of  $\dir{P}'_{u,v',j}$ and $\dir{P}^{R}_{y,z,j}$ as we walk along $\dir{P}_{u,v',j}$ (any two alternating paths that intersect will intersect along at least one edge). Right before switching the edges of the cycle, from  Lemma~\ref{lem:zero-slack-project}, it follows that every edge on $\dir{P}_{y,z,j}$ is zero slack except for the edge incident on $z$ which has a slack of $\Delta_z$. Consequently,
$\dir{P}_{u,x,j}$ has a slack of $s_H(u,v)$,  i.e.,

$$s(\dir{P}_{u,x,j} )\ge s_H(u,v).$$

\noindent $(x,x')$ is on $\dir{P}^R_{y,z,j}$ , from Lemma~\ref{lem:admissible-feasible}, $(x,x')$ has a positive slack. Therefore, $s(\dir{P}_{u,v',j} )\ge s(\dir{P}_{u,x',j} )> s_H(u,v)$, as desired. \end{proof}

Finally, we show Invariant (B) and establish that the compressed feasibility conditions hold throughout the second step. Without loss of generality, let us assume that the condition holds at the start of an execution of \searchandswitch\ procedure. We will show that this execution of \searchandswitch\ procedure and the subsequent execution of \switch\ does not violate  compressed feasibility conditions (a)--(e).

\searchandswitch\ only changes the dual weights in case (\ref{searchdualchange}) of \searchandswitch, where the procedure sets $\ty{u_s} \leftarrow \ty{u_s} +\lambda_{u_s }\sqrt{r}$. Since this operation only increases the magnitude of $\ty{u_s}$, conditions (a), (b), and (e) of compressed feasibility are satisfied. Note that condition (b) also requires the dual weights $\ty{\cdot}$ to be at most $\sqrt{r}$ apart. However, this is satisfied because a free internal vertex executes (\ref{searchdualchange}) exactly once per phase. Also, note that at the beginning of each phase, all vertices $v \in \activeH$ have the same dual weight. Condition (c) is unaffected. Finally, observe that, since the dual weight change only occurs when there are no admissible edges outgoing from $u_s$, condition (\ref{cf-d}) continues to hold.

Suppose $P$ is the path or cycle sent to the \switch\ procedure and suppose $\dir{P}$ is its projection.  Before projecting $P$, the dual weight $\ty{v}$ for every vertex $v \in P$ is increased by the slack of the edge $(v,v')$ in $P$. From (P2), $(v,v')$ is the smallest slack edge out of $v$ and therefore the increase $\ty{v}\leftarrow \ty{v}+\lambda_v s_(v,v')$ reduces the slack of $(v,v')$ to $0$ and all other edges continue to have a non-negative slack. Therefore, the change does not violate $H$-feasibility and also preserves (a), (b) and (e).  The projection computed by \switch\ is a simple path or cycle consisting only of admissible edges. Switching edges ($M \leftarrow M\oplus \dir{P}$) on this path does not violate $\R$-feasibilty (Lemma~\ref{lem:admissible-feasible}) of any of the affected pieces. So, after switching the edges, the new matching in each of the affected pieces $\R_j$, along with the dual weights $y_j(\cdot)$ form an $\R$-feasible matching. The \construct\ procedure will recompute edges of $H$ which, from Lemma~\ref{lem:construct} satisfies $H$-feasibility. As discussed in Section~\ref{algorithmforscale},  \sync, \reduce\ and \reduceslack\ preserve compressed feasibility as well. 

Therefore, our algorithm iteratively matches vertices while maintaining compressed feasibility. In the following, we discuss the efficiency of our algorithm.\ignore{

\begin{lemma}
Invariant (\ref{inv:cf}) holds.
\end{lemma}

\begin{proof}
At the beginning of the second step, based on prior discussion, the invariant holds. We separately address each step of the algorithm in \searchandswitch\ and \switch\ and show that it continues to hold after each step.

Aside from in \switch, \searchandswitch\ only changes the dual weights in case (\ref{searchdualchange}) of \searchandswitch, where the procedure sets $\ty{u_s} \leftarrow \ty{u_s} +\lambda_{u_s }\sqrt{r}$. Since this operation only increases the magnitude of $\ty{u_s}$, conditions (a), (b), and (e) of compressed feasibility are satisfied. Note that condition (b) also requires the dual weights $\ty{\cdot}$ to be at most $\sqrt{r}$ apart. However, this is satisfied because a free internal vertex executes (\ref{searchdualchange}) exactly once per phase. Also, note that at the beginning of each phase, all vertices $v \in \activeH$ have the same dual weight. Condition (c) is unaffected. Finally, observe that, since the dual weight change only occurs when there are no admissible edges outgoing from $u_s$, condition (\ref{cf-d}) continues to hold.

\ignore{The following argument shows that condition (d), $H$-feasibility, also holds.

For any edge $(u,v) \in E_H$, let $\sjs_H(u,v)$ be the slack of $(u,v)$ prior to executing case (\ref{searchdualchange}) of \searchandswitch\ and let $s_H(u,v)$ be the slack after executing the case. Similarly, for any $v \in V_H$, let $\tys{v}$ be the dual weight of $v$ prior to executing (i) and let $\ty{v}$ be the dual weight after. Recall that case (\ref{searchdualchange}) occurs when $\mathcal{A}_{u_s} = \emptyset$. Then, the procedure sets $\ty{u_s} \leftarrow \tys{u_s} +\lambda_{u_s }\sqrt{r}$. Here, $u_s$ is the current end of the path $P$ in $H$. For any edge directed from a vertex $v \in V_H$ to $u_s$, the slack of $(v, u_s)$ only increases. Since $\mathcal{A}_{u_s} = \emptyset$, for all edges $(u_s, v) \in E_H$ outgoing from $u_s$, we have

\begin{equation}
\label{H-feas-search-1} 
s_H^*(u_s, v) = \phi({\project{u_s}{v}{j}}) - |\tys{u_s}| + |\tys{v}| > \sqrt{r}
\end{equation}

Therefore, we can write
\begin{eqnarray*}
s_H(u_s,v) &=& \phi({\project{u_s}{v}{j}}) - |\ty{u_s}| + |\ty{v}|\\
&=& \phi({\project{u_s}{v}{j}}) - (|\tys{u_s}| + \sqrt{r}) + |\ty{v}|\\
&=& \phi({\project{u_s}{v}{j}}) - |\tys{u_s}| + |\tys{v}| - \sqrt{r}\\
&=& s_H^*(u_s, v) - \sqrt{r} > 0.
\end{eqnarray*}
The last inequality follows from \eqref{H-feas-search-1}. Since the slack on all edges in $H$ is nonnegative, condition (d) follows.
}

Next, we sequentially address each step of \switch.
\begin{itemize}
\item[(\ref{presync})]  For every edge $(u,v) \in P$, mark the piece it belongs to as affected. Let $\mathbb{R}$ be the set of all affected pieces. Execute \sync\ on every piece $\R_j \in \mathbb{R}$. 

Compressed feasibility conditions (\ref{cf-a}), (\ref{cf-b}), and (\ref{cf-d}) are unaffected because the dual weights $\ty{\cdot}$ do not change by executing \sync. From Lemma \ref{synclemma}, conditions (\ref{cf-c}) and (\ref{cf-e}) also hold.

\item[(\ref{updual})] Set a value $\alpha \leftarrow \ty{u_0}$. For every $0\le i\le s$, set $\ty{u_i} \leftarrow \ty{u_i} + \lambda_{u_i} s_H(u_i,u_{i+1})$; here $s_H(u_i,u_{i+1})$ is the slack  before the dual weights are updated (i.e., prior to this execution of (\ref{updual})). Execute \sync\ again on every piece $\R_j \in \mathbb{R}$. 

The first part of this step only increases the magnitudes of dual weights $\ty{\cdot}$. Conditions (\ref{cf-a}), (\ref{cf-c}), and (\ref{cf-e}) are therefore satisfied. Condition (\ref{cf-b}) requires that for all free vertices $v \in \activeH$, $y_{max} - \sqrt{r} \leq \ty{v} \leq y_{max}$. Whatever change occurs during this step is undone in step (\ref{augment-reduce}). Therefore, it is sufficient to show that the increase during this step is at most  $\sqrt{r}$. Note that edges were added to the DFS path $Q$ only if they were admissible edges of $E_H$. By Invariant (\ref{inv:min-slack}), the slack of any edge on $Q$ does not change. Therefore, the input path $P$ to \switch\ consists only of admissible edges in $H$. Step (\ref{presync}) of \switch, by Lemma \ref{synclemma}, does not alter the dual weights $\ty{\cdot}$ or the net-costs of $P$. Therefore, they remain admissible for at the beginning of this step, the increase to the dual weights of vertices of $\activeH$ is at most $\sqrt{r}$, and condition (\ref{cf-b}) of compressed feasibility is satified.

Next, we argue condition (d) of compressed feasibility is satisfied after this step. It is sufficient to show that the slacks of edges in $H$ remain non-negative. Fix a vertex $u \in P$. The magnitude of $\ty{u}$ increases. This only increases the slack on edges incoming to $v$. Let $(u,v)$ be an edge on $P$. Edges outgoing from $v$ have a slack reduction by $s_H(u,v)$. By Invariant (\ref{inv:min-slack}), $(u,v)$ was the minimum slack outgoing edge from $u$ when $(u,v)$ was in $Q$. Since step (\ref{presync}) of \switch\ does not alter the slacks of $H$, this remains the case for edges of $P$ during this step. Therefore, the slacks of all edges outgoing from $u$ remain non-negative, and condition (\ref{cf-d}) of compressed feasibility is satisfied. Also, once again, by Lemma \ref{synclemma}, \sync\ does not violate compressed feasibility, which addresses the second part of this step.

\item[(\ref{projpath})] Suppose $(u,v)$ is an edge of  piece $\R_j$. Project $(u,v)$ to obtain the path $\project{u}{v}{j}$. This can be done by executing Dijkstra's algorithm over $\R_j'$.  Next, combine all the projections to obtain a path or cycle $\dir{P}$ in the residual graph $G_M$. We show that this path or cycle is a simple path or cycle consisting only of admissible edges. 

There are no changes to dual weights or net-costs during this step, so compressed feasibility continues to hold.

\item[(\ref{altpath})] If $u_{s+1} \in A_H\setminus A_F^H$ and $P$ is an alternating path, then  $u_{s+1}$ is a matched vertex. Let $(u_{s+1},v_{s+1})$ be the edge in the matching $M$ belonging to the piece $\R_j$. Execute \reduceslack$(v_{s+1})$. This makes the edge $(u_{s+1}, v_{s+1})$ admissible with respect to $M_j, y_j(\cdot)$ without violating compressed feasibility. Add $(u_{s+1}, v_{s+1})$ to $\dir{P}$. Also, add $\R_j$ to $\mathbb{R}$. $\R_j$ is added to the affected set because the edge $(u_{s+1}, v_{s+1})$ will change to a nonmatching edge during (\ref{augment}), which will affect the edges of $E_j^H$.

From prior discussion, \reduceslack\ does not violate compressed feasibility conditions. 
\ignore{
However, it is useful to show that, after the call to \reduceslack, $s(u_{s+1}, v_{s+1})=0$. This is true as long as $s(u_{s+1}, v_{s+1}) \leq y_j(v_{s+1})$. By definition, $s(u_{s+1}, v_{s+1}) = y_j(u_{s+1}) + y_j(v_{s+1}) - c(u_{s+1},v_{s+1}) + \delta_{u_{s+1}v_{s+1}}$. Now, $|y(u_{s+1})| \geq \beta + \max_{(u',v') \in E} \delta_{u'v'}$. This is from \searchandswitch\ for one of the alternating path cases. Since $\ty{u_{s+1}}$ is non-positive, $\ty{u_{s+1}} \leq -\beta - \max_{(u',v') \in E} \delta_{u'v'}$. Furthermore, $c(u_{s+1}, v_{s+1}) \geq 0$. Therefore, $y_j(v_{s+1}) = s(u_{s+1}, v_{s+1}) - y_j(u_{s+1}) + c(u_{s+1},v_{s+1}) - \delta_{u_{s+1}v_{s+1}} \geq s(u_{s+1}, v_{s+1}) + \beta + \max_{(u',v') \in E} \delta_{u'v'} + c(u_{s+1},v_{s+1}) - \delta_{u_{s+1}v_{s+1}} \geq s(u_{s+1}, v_{s+1})$.
}

\item[(\ref{augment})]
Update the matching $M$ along $\dir{P}$ by setting $M \leftarrow M \oplus \dir{P}$.  By Lemma \ref{lem:admissible-feasible}  the new matching is $\R$-feasible within all affected pieces. In the event that $\dir{P}$ was an alternating path to some free internal vertex $v \in B_j \setminus \boundary_j$, $v$ is now an inactive free internal vertex in $G$. It is possible that the dual weights of the inactive free internal vertices in $\R_j$ differ.  Therefore, call \reduce$(b_j^{\mathcal{I}}, \beta)$. The residual graph changed during this step. Therefore, call the \construct\ procedure on every affected piece $\R_j \in \mathbb{R}$ and recompute the edges in $E_{j}^H$ along with their costs. 

By Lemma \ref{lem:admis-nointersect}, the projected path or cycle $\dir{P}$ is admissible. Therefore, by Lemma \ref{lem:admissible-feasible}, setting $M \leftarrow M \oplus \dir{P}$ forms an $\R$-feasible matching. 

If $\dir{P}$ was an alternating path ending in a vertex $w \in B_j$, then we show that $w$ is now an inactive free vertex. An alternating path could be sent as input to \switch\ as a result of one of two cases in \searchandswitch.
\begin{itemize}
\item $v \in \boundary \cap B_H$ and $|\ty{v}| \geq \beta$, or
\item $v \in \boundary \cap A_H$ and $|\ty{v}| \geq \beta + \max_{v, (u,v) \in E} \delta_{uv}$
\end{itemize}
In case 1, the condition guarantees that $\ty{v} \geq \beta$ on input to \switch. The magnitude of the dual weight of $v$ has only increased since the beginning of \switch. Therefore, $v$ remains inactive for case 1. 

For case 2, $v$ is matched to another vertex $u\in B$, and we want to show that $u$ is inactive. After the \reduceslack\ procedure, the slack $s(u,v) = 0$, and $(u,v)$ is a matching edge. Assume $(u,v) \in \R_j$. Then, $s(u,v) = y_j(u) + y_j(v) - c(u,v) + \delta_{uv}= 0$. Therefore, $y_j(u) = c(u,v) - \delta_{uv} - y_j(v)$. Since \sync\ was called in step (\ref{updual}) of \switch, $y_j(v) = \ty{v}$. Also, we have that $|\ty{v}| \geq \beta + max_{v, (u,v) \in E}$ from the condition in case 2. Recall that the magnitude of dual weights of vertices of $A$ are non-positive. Therefore, $\ty{v} \leq -\beta - max_{v, (u,v) \in E}$. This gives $y_j(u) = c(u,v) - \delta_{uv} - \ty{v} \geq c(u,v) - \delta_{uv} - (-\beta - max_{v, (u,v) \in E}) = c(u,v) - \delta_{uv} + max_{v, (u,v) \in E} + \beta$. Now $ \delta_{uv} \leq max_{v, (u,v) \in E}$. Also, $c(u,v) \geq 0$. Therefore, we get $y_j(u) \geq \beta$. Therefore, $u$ is an inactive vertex.

Therefore, the preconditions to \reduce\ are satisfied, and the call to \reduce\ does not violate compressed feasibility conditions. Furthermore, after this call to \reduce\ all dual weights of free internal inactive vertices in $\R_j$ exactly equal to $\beta$.

Calling construct does not affect $\R$-feasibility. Therefore, conditon (\ref{cf-c}) of compressed feasibility is satisfied. The dual weights $\ty{\cdot}$ do not change during this step, so conditions (\ref{cf-a}) and (\ref{cf-e}) are satisfied. 

Finally, we show that (d) is satisfied. The dual weights $\ty{\cdot}$ do not change during this step, but the underlying graph, and therefore the net-costs, might change. Additionally, new edges of $H$ may be created in reconstructed pieces. First, we show that the values $\ty{\cdot}$ are synchronized with the values $y_j(\cdot)$ at the beginning of this step. The values are synchronized after the execution of \sync\ in step (\ref{updual}) of \switch. The dual weights do not change again until, possibly, the call to \reduceslack\ in step (\ref{altpath}) of \switch. However, \reduceslack\ changes all dual weights associated with the input vertex $v$ by the same amount. Therefore, the values are still synchronized after \reduceslack. Since the values are synchronized, equation \eqref{eq:slackHsumslacksG} can be applied. Therefore, for every edge $(u,v) \in E_j^H$ in an affected piece $\R_j$, we have that $s_H(u,v) = \sum_{(u',v') \in \project{u}{v}{j}} s(u',v')$. Since the underlying matching is $\R$-feasible, the total slack of $P_{u,v,j}$ must be non-negative, and therefore, $s_H(u,v) \geq 0$. This implies compressed feasibility condition (\ref{cf-d}).

\item[(\ref{augment-reduce})] If $P$ is an alternating path or augmenting path, and $u_0$ is still a vertex in $B_F^H$, then execute \reduce$(u_0, \alpha)$. This effectively resets all dual weights associated with $u_0$ to their values prior to step (\ref{updual}).

As previously discussed, the \reduce\ procedure does not violate compressed feasibility. Since step (\ref{updual}) of \switch, the dual weight of $u_0$ has not decreased. Therefore, the dual weights, $\ty{u_0}$ and $y_j(u_0')$ for every $u_0'$ that $u_0$ represents, are reset to their values prior to step (\ref{updual}).

\end{itemize}
\end{proof}
}

\subsection{Efficiency of Step 2}
\label{sec:step2-efficiency}
Step 2 of our algorithm invokes \searchandswitch\ on free internal vertices of $B_F^{\mathcal{A}}$. This\ procedure computes cycles and paths in $H$ and passes them to the \switch\ procedure. Let $\langle P_1, P_2, ..., P_N \rangle$ be the sequence of paths and cycles generated by the second step of the algorithm. These paths and cycles are sorted in the order in which they are computed in Step 2, with $P_i$ being the $i$th such path or cycle. Note that the \switch\ procedure is executed for each such $P_i$. Let $\dir{P}_i$ be the projection of $P_i$ as computed by the \switch\ procedure. Let $\Mi{0}$ be the matching at the start of Step 2.  Then $\Mi{i} \leftarrow \Mi{i-1} \oplus \dir{P}_i$. The operations conducted by the algorithm after the execution of the \switch\ procedure on $P_{i-1}$ until the end of the execution of the \switch\ procedure on $P_i$ is referred to as the $i$th \emph{iteration} of the algorithm. Let $B^i_F$ denote the free vertices of $B$ at the start of iteration $i$. Every compressed feasible matching can be converted into an $\R$-feasible matching by applying \sync\ to all pieces. For the proof,  let $y^{i}(\cdot)$ denote the dual weights of this $\R$-feasible matching at the start of iteration $i$. For any path $\dir{P}'$ in $G$, let $s(\dir{P}') = \sum_{u',v' \in \dir{P}'} s(u',v')$.

\paragraph{Efficiency of \searchandswitch.}To bound the time taken by the DFS search portion of the \searchandswitch\  procedure (i.e., the portion outside of the \switch\ procedure), it suffices if we bound the total time taken to find the smallest slack edge  from $u_s$ during all executions of \searchandswitch\ in Step 2. Recollect that if there are ties, we would like to pick the smallest slack edge to a vertex on the path. We accomplish this by explicitly maintaining, for every vertex $u$, a binary search tree (BST) of all the edges going out of $u$. The slack is used as the key value and ties broken by prioritizing edges for which the other vertex  is on the search path.  The \construct\ procedure can be modified to create and update this tree without any asymptotic increase in execution time.  

During the \searchandswitch\ procedure, dual weights of certain vertices may change, affecting the slacks on edges. When necessary, we update all affected BSTs to reflect the new slacks. Updating the dual weight of $u$ will uniformly change the slacks on all the edges going out of $u$. So, the relative ordering of these edges in the BST of $u$ does not change. However,  for an edge from $u$ to $v$, if (a) the dual weight $\ty{v}$ changes, or (b) $v$ enters the search path, then we have to update the BST of $u$. In case (b), since $v$ is on the path, we have to prioritize the edge $(u,v)$ over all other edges with the same slack. We will first bound the total time to update BSTs for case (a). In case (a), the dual weight of $v$ can change in two places: (i) a search backtracked from $v$ causing the magnitude of the dual weight $\ty{v}$ to increase by $\sqrt{r}$, and (ii)  $v$ lies on some path/cycle $P_i$ and the \switch\ procedure updated $\ty{v}$ in step (\ref{updual}) prior to switching the edges.  

 Note that, any vertex $v$  whose dual weight exceeds $(\beta+\max_{v' \in N(u)}\delta_{vv'})$ becomes inactive and so, the number of dual weight changes of $v$ of type (i) cannot be more than $(\beta+\max_{v' \in N(u)}\delta_{vv'})/\sqrt{r}$ per vertex in $H$. We can upper bound $\delta_{uv}$ by observing that $\frac{m_j n}{m\sqrt{r}} = O(\frac{r^2 n}{n\sqrt{r}}) = O(r^{3/2})$ and so the total number of dual updates of type (i) for $v$ is no more than $\beta/\sqrt{r}+r$. The dual updates of type (ii) over all vertices $v$  is bounded by the total length of all paths and cycles in $H$ computed by the \searchandswitch\ procedure, i.e, $O((n/\sqrt{r}) \log n)$ (see Corollary~\ref{cor:aplengths}).  

Whenever the dual weight of a vertex $v \in V_H$ changes, we update the BST any $u'$ such that $(u',v)\in E_H$. Therefore, the total number of BST updates is bounded by the in-degree of $v$ in $H$. Let $d_v$ be the in-degree of $v$ and recollect that $\theta_v$ is the number of pieces of the $r$-clustering that $v$ participates in. From the properties of an $r$-clustering,  $d_v\le \theta_v\sqrt{r}$ and $\sum_{v\in V_H}\theta_v = O(n/\sqrt{r})$.  

If $v \in B_H$, from Lemma~\ref{lem:searchcost1}, the in-degree of $v$ is no more than $\sqrt{r}$. Therefore, the total work done across all executions of the \searchandswitch\ procedure for dual updates of type (i) is $O((n/\sqrt{r})(\beta/\sqrt{r} +r)\sqrt{r})$ and the total for type (ii) is $O( ((n/\sqrt{r})\log n)\sqrt{r} ) $ for a combined total of $ O(n^{3/2}/r^{1/4} + nr+ n \log n)$.

If $v \in A_H$,  the total time to update the BST  due to a dual weight change of type (i)  is bounded by $ d_v(\beta/\sqrt{r}+r)$ for any vertex $v$ and $\sum_{v \in A_H}d_v(\beta/\sqrt{r}+r)$ across all vertices of $A_H$. For type (ii), since $v$ is on a path/cycle $P_i$ computed by the \searchandswitch\ procedure,  from Lemma~\ref{lem:searchcost}, the next visit to $v$ will trigger an increase of the dual weight $\ty{v}$ by $\sqrt{r}$.  Therefore, we can charge the time to update the BSTs  to the increase in dual weight of $v$ during the next visit. The total time to update the BSTs due to type (ii) dual weight changes of $v$ is $d_v (\beta/\sqrt{r} +r +1)$ and across all vertices of $A_H$, $\sum_{v\in A_H}d_v (\beta/\sqrt{r} +r +1)$. Combining the totals for cases (i) and (ii) gives the total work for vertices of $A_H$ as at most
\begin{eqnarray*}
(2(\beta/\sqrt{r}) + 2r + 1) \sum_{v\in V_H} d_v &=& O(((\beta/\sqrt{r}) + r + 1)\sqrt{r} \sum_{v\in V_H} \theta_v)\\
&=& O(((\beta/\sqrt{r}) + r + 1)n)\\ 
&=& O(n^{3/2}/r^{1/4}+nr).
\end{eqnarray*}

For case (b), we note that every vertex  $v$ that has entered the search path in \searchandswitch\  will either be backtracked from (type (i)) or be on some path or cycle $P_i$ (type (ii)). Using  identical arguments to cases (i) and (ii), we can bound the total time for BST updates across all executions of the \searchandswitch\ procedure by $O(n^{3/2}/r^{1/4}+nr)$.

\ignore {We store the edges of $H$ as follows: For each piece $\R_j$, the  \construct\ procedure will store, for every boundary vertex $u \in \boundary_j$,  all the outgoing edges from $u$ in $E_j^H$ sorted by their slacks. Since $u$ could belong to several pieces, we store a different sorted order for each piece and maintain a heap of the smallest admissible edge from each piece. To find the smallest slack edge from $u$, we simply retrieve the smallest element from the heap of $u$ , say $(u,v)$. There are two possibilities. Either $(u,v)$ is the smallest slack edge or the slack $s_H(u,v)$ has changed since it was added to the heap.
We refer
to the second case as a false positive. For our analysis, we bound the total number of false positives by $O(n\beta/\sqrt{r} + nr)$.

Note that a change in the dual weight $\ty{u}$ does not change this relative ordering (in terms of slack) of edges going out of $u$ in $H$. However,  for any edge $(u,v)$ in $H$, a change in the dual weight $\ty{v}$   will increase the slack  $s_{H}(u,v)$ and may push  $(u,v)$ down in this sorted order. Since, we do not actively update  the position of $(u,v)$ at $u$, a false positive arises. While looking for the smallest slack edge from $u$, suppose we find a false positive edge $(u,v)$, we simply reinsert this edge in the correct location in the sorted order and update the heap accordingly. A false positive can arise due to the change in dual weight $\ty{v}$  in two places: (i) a search backtracked from $v$ causing the magnitude of the dual weight $\ty{v}$ to increase by $\sqrt{r}$, and (ii)  $v$ lies on some path $P_i$ and the \switch\ procedure updated $\ty{v}$ prior to switching the edges. 

For false positives of type (i), we charge it to the increase of $\ty{v}$ by $\sqrt{r}$. Since $u$ can undergo a dual weight change of at most $O((\beta+\max_{v' \in N(u)}\delta_{uv'})/\sqrt{r})$, the total number of false positives is at most $O(d(\beta+\max_{v' \in N(u)}\delta_{uv'})/\sqrt{r})$; here $d$ is the in-degree of $u$ in $H$ during the execution of Step 2. We can upper bound $\delta_{uv}$ by setting $O(\frac{m_i n}{m\sqrt{r}}) = O(\frac{r^2 n}{n\sqrt{r}}) = O(r^{3/2})$. From this and the fact that the total in-degrees of all vertices is bounded by $O(n)$ we get the total number of false positives of type (i) is at most $O(n\beta/\sqrt{r} + nr)=O(\frac{n^{3/2}}{r^{1/4}} + nr)$.  

For false positives of type (ii), $v$ lies on some path $P_i$. In this case the \switch\ procedure (step (b)) will update $\ty{v}$. From Lemma~\ref{lem:searchcost}, if $v\in B_H$, then $(u,v)$ is in an affected piece and so the \construct\ procedure will immediately update $s_H(u,v)$. Therefore, $(u,v)$ cannot lead to a false positive and $v$ cannot be a vertex of $B_H$. If $v \in A_H$, from Lemma~\ref{lem:searchcost},  either $v$ is never visited again by \searchandswitch\ or during the next visit of $v$, its dual weight increases by $\sqrt{r}$. We charge this false positive to the increase in dual weight of $\ty{v}$. Using similar arguments to type (i), we can bound the total number of false positives by $O(n/\sqrt{r}+n\beta/\sqrt{r})$. Combining this with the total time for all the search operations during the second step gives $\tilde{O}(\frac{n^{3/2}}{r^{1/4}})$.  

}
\begin{lemma}
\label{lem:searchcost1}
For any vertex $v \in B_H$, the number of edges in $H$ that are directed towards $v$ is $O(\sqrt{r})$. 
\end{lemma}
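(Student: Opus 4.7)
The plan is to exploit the direction structure of the residual graph $G_M$ together with the edge-disjointness of the $r$-clustering to argue that only one piece can contribute incoming edges to $v$. Recall from Section~\ref{sec:preliminaries} that in $G_M$ every matching edge is oriented $A\!\to\! B$ and every non-matching edge is oriented $B\!\to\! A$. Since every edge in $H$ is the compression of some directed path $\dir{P}_{u,v,j}$ in $G_M$, an incoming edge to $v\in B_H$ corresponds to a directed path whose \emph{last} edge points into $v$; because $v\in B$, this last edge must be a matching edge of $M$.

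First I will dispose of the free internal case. Inspecting the four cases in the definition of $E_j^H$ in Section~\ref{sec:compressed-graph}, the vertices $b_j^{\mathcal{A}}$ and $b_j^{\mathcal{I}}$ appear only as the source endpoint (cases 2 and 4), never as the target. This is consistent with the previous observation: these representative vertices stand for free (unmatched) internal vertices of $B$, which have no matching edge incident to them and therefore cannot be the endpoint of any directed path in $G_M$ ending at a $B$-vertex. Hence if $v$ has any incoming edge in $H$, then $v$ must be a boundary vertex, i.e.\ $v\in\boundary\cap B$, and moreover $v$ must be matched in $M$.

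Next I will use edge-disjointness to pin the incoming edges to a single piece. Let $(a,v)\in M$ be the unique matching edge at $v$, where $a\in A$. Because the $r$-clustering partitions $E$ into edge-disjoint pieces, there is exactly one piece, call it $\R_{j}$, that contains $(a,v)$. For any other piece $\R_{j'}$ with $v\in V_{j'}$ but $(a,v)\notin E_{j'}$, every edge of $\R_{j'}$ incident to $v$ is a non-matching edge and is therefore oriented out of $v$ in $G_M$. Consequently no directed path contained in $\R_{j'}$ can end at $v$, so $v$ has no incoming edge in $E_{j'}^H$.

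Finally I will count the incoming edges inside the one relevant piece $\R_{j}$. By the definition of $E_j^H$, a directed edge $(u,v)\in E_j^H$ exists only when $u$ is either (i) a boundary vertex $u\in\boundary_{j}$ (case 1, which also allows $u=v$ for the self-loop), or (ii) $u\in\{b_{j}^{\mathcal{A}},b_{j}^{\mathcal{I}}\}$ (case 2). The number of such sources is at most $|\boundary_{j}|+2\le k_{1}\sqrt{r}+2=O(\sqrt{r})$ by Definition~\ref{def:r-division}. Combined with the previous step, this bounds the in-degree of $v$ in $H$ by $O(\sqrt{r})$, as required. There is no genuinely hard step here; the only subtle point is making the directional argument that rules out contributions from the other pieces $\R_{j'}\neq\R_{j}$, which follows cleanly once one observes that non-matching edges leave $v$.
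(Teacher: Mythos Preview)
Your proof is correct and follows essentially the same approach as the paper: both observe that a vertex $v\in B_H$ has in-degree at most one in $G_M$ (via its matching edge, if any), that this edge lies in a unique piece $\R_j$ by edge-disjointness of the clustering, and hence all incoming $H$-edges to $v$ come from $E_j^H$, which has $O(\sqrt{r})$ possible sources. Your version is more explicit in separating out the free internal vertices $b_j^{\mathcal{A}},b_j^{\mathcal{I}}$ and the unmatched-boundary case, but the core argument is the same.
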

\begin{proof}
Since  $v\in B_H$,  the in-degree of $v$ in the residual graph $\dir{G}_M$ is $1$. Let this edge be $(u,v)$ from the piece $\R_j$. Every edge of $H$ directed towards $v$ should contain $(u,v)$ in its projection. Therefore, all incoming edges of $v$ should be in $E_j^H$ implying that the in-degree of $v$ is $O(\sqrt{r})$. 
\end{proof}
\begin{lemma}
\label{lem:searchcost}
Consider a vertex $v \in  P_i$ where $v$ is a boundary vertex and $v \in A_H$. Then, after the execution of \switch\ procedure on $P_i$, $v$ does not have any admissible edge of $H$ going out of it. Therefore, if $v$ is visited again by the \searchandswitch\ procedure, the algorithm will immediately backtrack from $v$ and the magnitude of $\ty{v}$ will increase by $\sqrt{r}$.
\end{lemma}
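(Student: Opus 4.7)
First I would locate $v$'s predecessor on $P_i$: let $u$ be the vertex immediately before $v$ on $P_i$ (interpreted cyclically if $P_i$ is a cycle). Such a $u$ exists since \searchandswitch\ is launched from a vertex of $B_H$ and the bipartite alternation in $G_M$ places $v \in A_H$ strictly after the start. Let the $H$-edge $(u,v)$ belong to piece $\R_{j'}$ with projection $P_{u,v,j'} = \langle u = x_0, x_1, \ldots, x_{k-1}, x_k = v\rangle$; bipartite parity forces $x_{k-1} \in B$, so $(x_{k-1},v)$ is a non-matching edge oriented $x_{k-1}\to v$ in $G_M$. After step (\ref{augment}) of \switch, the matching becomes $M' = M \oplus \dir{P}$, where $\dir{P}$ is the projection of $P_i$ possibly concatenated with $v$'s old matching edge in the alternating branch of step (\ref{altpath}); in all cases $(x_{k-1},v) \in M'$. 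By Lemma~\ref{lem:admis-nointersect}, $\dir{P}$ is a simple admissible path, and Lemma~\ref{lem:admissible-feasible} then yields $s(x_{k-1},v) \ge \delta_{x_{k-1},v} \ge 2\sqrt{r}$, since $v$ is a boundary vertex.

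Next I would propagate this bound to the updated compressed graph. In $G_{M'}$, $v \in A$ has a single outgoing edge, namely $\dir{v,x_{k-1}}$, so every outgoing $H$-edge $(v,w)$ must lie in $E_{j'}^H$, with projection $P'_{v,w,j'}$ beginning with $(v,x_{k-1})$. The \sync\ calls in steps (\ref{presync}) and (\ref{updual}) establish $\ty{v} = y_{j'}(v)$; the subsequent \reduceslack\ (which acts on $v_{s+1}\in B$, never on $v\in A$), \reduce\ (which touches only free internal representatives), and \construct\ (which does not modify dual weights) leave both $\ty{v}$ and $y_{j'}(v)$ unchanged. At the endpoint $w$, either $w$ is a boundary vertex (synced by \sync) or $w = a_{j'}$, in which case $\ty{a_{j'}} = 0 = y_{j'}(v_\ell)$ at the represented free internal $A$-vertex $v_\ell$ by conditions (\ref{cf-a}) and (\ref{cf-e}) of compressed feasibility. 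Equation~\eqref{eq:slackHsumslacksG} therefore applies, giving
\[
s_H(v,w) \;=\; \sum_{(p,q) \in P'_{v,w,j'}} s(p,q) \;\ge\; s(v,x_{k-1}) \;\ge\; 2\sqrt{r} \;>\; \sqrt{r},
\]
so $(v,w)$ is not admissible. The second claim is then immediate: a subsequent visit to $v$ by \searchandswitch\ finds $\mathcal{A}_v = \emptyset$ and triggers case (\ref{searchdualchange}), backtracking from $v$ and increasing $|\ty{v}|$ by $\sqrt{r}$.

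The main obstacle I anticipate is auditing the dual-weight bookkeeping inside \switch---the two \sync\ calls, the \reduceslack\ in step (\ref{altpath}), and the two \reduce\ calls in steps (\ref{augment})--(\ref{augment-reduce})---to certify that the synchronization $\ty{v} = y_{j'}(v)$ (and the analogous condition at $w$) survives all of them. Once these are secured, the key inequality falls directly out of Lemma~\ref{lem:admissible-feasible} applied to the admissible path or cycle $\dir{P}$.
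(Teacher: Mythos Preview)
Your proposal is correct and follows essentially the same approach as the paper: identify the new matching edge at $v$ (your $x_{k-1}$, the paper's $v'$), apply Lemma~\ref{lem:admissible-feasible} to get slack $\ge \delta_{x_{k-1}v}\ge 2\sqrt{r}$ on it, and use that $v\in A$ has this as its unique outgoing residual edge so every projection out of $v$ in $H$ inherits the bound. The paper's proof is three lines and silently assumes the synchronization $\ty{v}=y_{j'}(v)$ needed to pass from $G$-slacks to $H$-slacks; your version spells out that audit (correctly: none of \reduceslack, \reduce, or \construct\ touches the dual of a boundary $A$-vertex after the second \sync), which is a genuine gap-filling rather than a different route.
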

\begin{proof}
Suppose $v \in A_H$, then let $(v,v')$ be the matching edge after the execution of \switch. Since $(v,v')$ was admissible prior to the execution of \switch, from Lemma~\ref{lem:admissible-feasible}, after the execution of \switch, the slack on $(v,v')$ is at least $\delta_{vv'} \ge 2\sqrt{r}$. Therefore, every edge going out of $v$ has a slack of at least $2\sqrt{r}$, implying that there are no admissible edges going out of $v$. Therefore, if the \searchandswitch\ procedure visits $v$ again, it will backtrack and $\ty{v}$ will increase by $\sqrt{r}$.     
\end{proof}
The \switch\ procedure synchronizes, projects, augments and then re-constructs the affected pieces. The most expensive of these operations is the \construct\ procedure; therefore, the time taken by \switch\ is upper bounded by the time taken to re-construct the edges of $H$ for every affected piece. The following sequence of Lemmas bounds the time taken by the \switch\ procedure.    

\begin{lemma}
Given a compressed-feasible matching before iteration $i$ of Step 2,
\begin{equation}
\label{eq:bfdelta}
|B^{i}_F|\Delta_{i} \le O(n).
\end{equation}
Here, $\Delta_{i}$ is the minimum dual weight among all free vertices of $B$.
\end{lemma}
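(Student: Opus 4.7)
The plan is to reduce this to the bound on the cost of an $\R$-feasible matching that was already established in Lemma~\ref{lem:feasrel2}. First, I would conceptually convert the compressed feasible matching $(M^{(i-1)}, \ty{\cdot}, \{y_j(\cdot)\})$ that holds at the start of iteration $i$ into an ordinary $\R$-feasible matching $(M^{(i-1)}, y(\cdot))$ on the original graph $G$ by invoking \sync\ on every piece, as in Lemma~\ref{lem:feasible-convert}. This conversion leaves $M^{(i-1)}$ unchanged, and the resulting dual weights satisfy $y(a) = 0$ for every free vertex $a \in A_F$ (because its representative $a_j$ has $\ty{a_j} = 0$ by compressed feasibility condition (\ref{cf-a}), and \sync\ copies this value to $y_j(a)$) and $y(b) = \ty{rep(b)} \ge \Delta_i$ for every free vertex $b \in B_F$ (by condition (\ref{cf-b}) and the definition of $\Delta_i$).

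Next I would apply the argument from the proof of Lemma~\ref{lem:feasrel2}, but keep the ``free-vertex'' contribution rather than discarding it. Starting from inequality~\eqref{eq:24}, since $M_{\opt}$ is perfect while $M^{(i-1)}$ is not,
\begin{equation*}
\sum_{(u,v) \in M^{(i-1)} \setminus M_\opt}(y(u)+y(v)) \;-\; \sum_{(u,v) \in M_\opt \setminus M^{(i-1)}}(y(u)+y(v)) \;=\; -\sum_{u \in F} y(u),
\end{equation*}
where $F = A_F \cup B_F$. Combining this with \eqref{eq:24} and the bound $\sum_{(u,v) \in M^{(i-1)} \oplus M_\opt} \delta_{uv} \le kn$ from \eqref{eq:errorsum}, I get
\begin{equation*}
\sum_{u \in F} y(u) \;\le\; \dist(M_\opt) - \dist(M^{(i-1)}) + kn.
\end{equation*}

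Now I would invoke the scale invariants: edge costs are nonnegative so $\dist(M^{(i-1)}) \ge 0$, and the algorithm for each scale is fed an input satisfying (E2), so $\dist(M_\opt) = O(n)$. Hence $\sum_{u \in F} y(u) = O(n)$. Finally, using $y(a) = 0$ for $a \in A_F$ and $y(b) \ge \Delta_i$ for $b \in B_F^i$,
\begin{equation*}
|B^i_F|\,\Delta_i \;\le\; \sum_{b \in B^i_F} y(b) \;=\; \sum_{u \in F} y(u) \;\le\; O(n),
\end{equation*}
which is the claim.

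The only delicate point, and the place I would spend most care, is justifying that after the conceptual \sync\ the two conditions $y(a) = 0$ for free $a \in A$ and $y(b) \ge \Delta_i$ for free $b \in B$ actually hold. For internal free vertices of $A$ this requires that \sync\ sets $y_j(a) = \ty{a_j} = 0$; for boundary and internal free vertices of $B$ it requires that \sync\ never decreases $y_j(b)$ below the representative dual weight. Both properties are packaged in Lemma~\ref{synclemma} (via the $\R$-feasibility and synchronization guarantees of \sync\ established in Section~\ref{subsec:planartor}), so no new combinatorial argument is needed here beyond carefully matching notations.
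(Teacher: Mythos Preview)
Your proof is correct and follows essentially the same route as the paper's: convert the compressed feasible matching to an $\R$-feasible one via \sync, then use the $\R$-feasibility inequalities summed over $M^{(i-1)}\oplus M_\opt$ together with $\dist(M_\opt)=O(n)$ and $\sum\delta_{uv}=O(n)$ to bound $\sum_{u\in F}y(u)$, hence $|B^i_F|\Delta_i$. The only cosmetic difference is that the paper phrases the middle step as bounding $\sum\phi(P)$ over the augmenting paths and cycles of $M^{(i-1)}\oplus M^*$ via \eqref{slackcost} (and invokes Lemma~\ref{lem:feasrel2} to bound $\dist(M^{(i-1)})$), whereas you go straight through \eqref{eq:24} and use the simpler $\dist(M^{(i-1)})\ge 0$; the underlying inequality is identical.
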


\begin{proof}
Using the \sync\ procedure, we can create an $\R$-feasible matching $M^{(i-1)}, y^i(\cdot)$ from the compressed feasible matching. From compressed feasibility, we have that for every free vertex $a \in A^{i}_F$, $y(a) = 0$. Consider some optimal matching $M^*$. $M^{(i-1)} \oplus M^*$ forms $n-i$ augmenting paths and alternating cycles. Let $\csetsymdiff$ be the set of cycles in $M^{(i-1)} \oplus M^*$ and let $\psetsymdiff$ be the set of augmenting paths in $M^{(i-1)} \oplus M^*$.

From~\eqref{netcost3}, $\phi(M^{(i-1)}\oplus M^{*}) = \dist(M^{(i-1)}) - \dist(M^{*}) + \sum_{(u,v) \in M^{*}\oplus M^{(i-1)}} \delta_{uv}$. Cost of the optimal matching is $O(n)$ and using the arguments of~\eqref{eq:errorsum},   $\sum_{(u,v) \in M^{*}\oplus M^{(i-1)}} \delta_{uv}=O(n)$. Since the dual weights of free vertices of $A$ are $0$, from Lemma~\ref{lem:feasrel2}, the cost of $M^{(i-1)}$ is also $O(n)$. Therefore,
\begin{equation}
\label{eq:sumnetcosts}
\sum_{P \in \csetsymdiff \cup \psetsymdiff} \phi(P) \le O(n) . 
\end{equation} 

 Each augmenting path in $\psetsymdiff$ is a path between a free vertex $b$ of $B$ to a free vertex $a$ of $A$. From properties of compressed feasibility, we know that $y^i(b) \ge \Delta_{i}$ and $y^i(a)=0$. Plugging this in~\eqref{slackcost}, we get~\eqref{eq:bfdelta}. 
\end{proof}

After Step 2, $\Delta$ is at least $\beta$. Therefore, the number of unmatched vertices is at most $O(n/\beta) = \unmatchedrem$. As a corollary, we can show the following:
\begin{cor}
\label{sumDelta}
 Recollect that $\langle P_1,\ldots, P_N\rangle$ are the set of paths and cycles computed by Step 2 of our algorithm and $\dir{P}_i$ is the projection of $P_i$. Let $B_F^i$ be the free vertices  and let $y^i(\cdot)$ denote the dual weights before switching along $\dir{P}_i$. Define $\Delta_i = \min_{v \in B_F^i} y^i(v)$. Let $\kappa_i = 1$ if $\dir{P}_i$ is an augmenting path and $0$ otherwise. Then $\sum_{i=1}^N \kappa_i\Delta_i = O(n \log n)$.
\end{cor}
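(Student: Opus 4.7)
\medskip
\noindent\textbf{Proof Proposal.} The plan is to combine the ``budget'' inequality~\eqref{eq:bfdelta} with the fact that each augmenting path reduces $|B_F|$ by exactly one, which yields a harmonic-sum style bound. First I would observe that among the three possible forms of the projection $\dir{P}_i$ (augmenting path, alternating cycle, alternating path from a free vertex of $B$ to a matched vertex of $B$), only the augmenting paths decrease $|B_F|$: alternating cycles leave the matching size unchanged, and an alternating path from a free vertex of $B$ to a matched vertex of $B$ simply swaps which vertex of $B$ is free, so $|B_F^i|$ is preserved in both cases. Hence if we let $i_1 < i_2 < \cdots < i_T$ denote the indices with $\kappa_{i_k} = 1$, the quantities $|B_F^{i_k}|$ strictly decrease by $1$ as $k$ increases.

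Next, for each such index $i_k$, I would apply inequality~\eqref{eq:bfdelta} to obtain
\begin{equation*}
\Delta_{i_k} \;\le\; \frac{O(n)}{|B_F^{i_k}|}.
\end{equation*}
Since $|B_F^{i_1}| \le n$ and the values $|B_F^{i_k}|$ are distinct positive integers with $|B_F^{i_T}| \ge 1$, summing over all augmenting iterations gives
\begin{equation*}
\sum_{i=1}^{N} \kappa_i \Delta_i \;=\; \sum_{k=1}^{T} \Delta_{i_k} \;\le\; \sum_{k=1}^{T} \frac{O(n)}{|B_F^{i_k}|} \;\le\; O(n) \sum_{j=1}^{n} \frac{1}{j} \;=\; O(n \log n),
\end{equation*}
which is exactly the claim.

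The only nontrivial step is the first one: verifying that $|B_F^i|$ is unaffected by alternating cycles and by alternating paths ending at a matched vertex of $B$. This follows from the description of \switch: for a cycle $\dir{P}_i$ both endpoints coincide and the symmetric difference does not change the matching's cardinality, while for an alternating path ending at a matched vertex $v \in B$, the starting free vertex becomes matched and $v$ becomes free (and is then marked inactive via \reduce), so $|B_F^i|$ is preserved. Once this bookkeeping is in place, the harmonic summation is routine and invocation of~\eqref{eq:bfdelta} is immediate, so no further obstacle arises.
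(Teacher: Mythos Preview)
Your proposal is correct and follows essentially the same approach as the paper's own proof: apply the bound $\Delta_i \le O(n)/|B_F^i|$ from~\eqref{eq:bfdelta} and observe that summing over just the augmenting-path iterations yields a harmonic series since each augmentation decrements $|B_F|$ by one. Your extra care in verifying that alternating cycles and alternating paths preserve $|B_F|$ is a welcome clarification the paper leaves implicit.
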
 
\begin{proof}
Suppose $\dir{P}_i$ is an augmenting path. From equation \eqref{eq:bfdelta}, we have that $\Delta_i = O(n) / |B_F^i|$. $|B_F^i| = n - i+1$. After augmenting along $\dir{P}_i$, the number of free vertices reduce by $1$ and summing over all $i$ when $\kappa_i$ is $1$, yields a harmonic series in the denominator. Therefore,  $\sum_{i=1}^n \kappa_i\Delta_i = O(n \log n)$.
\end{proof}

\begin{lemma}
\label{lem:sumdeltapaths}
\begin{equation}
\label{eq:sumdeltapaths}
\sum_{i=1}^N \sum_{(u,v) \in \dir{P}_i}\delta_{uv} = O(n \log n).
\end{equation}
\end{lemma}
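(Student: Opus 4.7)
The plan is to apply Lemma~\ref{lem:ynetcost} to each projected path or cycle $\dir{P}_i$ computed in Step 2, sum over all iterations $i$, and exploit telescoping of the matching cost. Fix $i$ and let $u_i, v_i$ denote the endpoints of $\dir{P}_i$ (with $u_i = v_i$ if $\dir{P}_i$ is a cycle). Just before the matching update in step (\ref{augment}) of \switch, the projection $\dir{P}_i$ is admissible with respect to the $\R$-feasible matching $M^{(i-1)}$ and a set of dual weights $\hat{y}^{i}(\cdot)$ obtained from the \sync\ operations in steps (a)--(b) of \switch. Lemma~\ref{lem:ynetcost} then yields
\begin{equation*}
\sum_{(p,q) \in \dir{P}_i} \delta_{pq} \;\le\; 2\bigl[\lambda_{u_i} \hat{y}^{i}(u_i) - \lambda_{v_i} \hat{y}^{i}(v_i)\bigr] - 2\bigl(\dist(M^{(i)}) - \dist(M^{(i-1)})\bigr).
\end{equation*}

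Summing over $i=1,\ldots,N$, the cost terms telescope, and $|\dist(M^{(N)}) - \dist(M^{(0)})| = O(n)$ by Lemma~\ref{lem:feasrel2}. The surviving dual-weight terms simplify by case: cycles contribute nothing since $u_i=v_i$; augmenting paths have $v_i \in A_F$ with $\hat{y}^{i}(v_i)=0$ by compressed feasibility~(\ref{cf-a}) and contribute $2\hat{y}^{i}(u_i)$; and alternating paths from $u_i \in B_F$ to a matched $v_i \in B$ contribute $2(\hat{y}^{i}(u_i) - \hat{y}^{i}(v_i))$. It therefore suffices to show that
\[
S \;:=\; \sum_{\dir{P}_i \textrm{ aug}} \hat{y}^{i}(u_i) \;+\; \sum_{\dir{P}_i \textrm{ alt}} \bigl(\hat{y}^{i}(u_i) - \hat{y}^{i}(v_i)\bigr) \;=\; O(n \log n).
\]

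To bound $S$, I would introduce the potential $\Psi_i := \sum_{b \in B_F^{i}} y^{i}(b)$, where $y^{i}$ is the $\R$-feasible dual-weight vector at the start of iteration $i$ obtained by \sync-ing every piece. By direct inspection, an augmenting iteration removes $u_i$ from $B_F$ and thus decreases $\Psi$ by $\hat{y}^{i}(u_i)$; an alternating B-to-B iteration swaps $u_i$ out of $B_F$ for the newly freed matching partner (which inherits the dual weight $\hat{y}^{i}(v_i)$), decreasing $\Psi$ by $\hat{y}^{i}(u_i) - \hat{y}^{i}(v_i)$; cycles leave $\Psi$ unchanged through these swaps. Consequently $S$ equals $\Psi_0 - \Psi_N$ plus the cumulative rise of $\Psi$ caused by the dual-weight updates the algorithm performs between successive iterations. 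Since $\Psi_0 = O(n)$ by Lemma~\ref{lem:feasrel2} and $\Psi_N \ge 0$, the first contribution is $O(n)$.

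The delicate step, and the main obstacle, is bounding this cumulative rise. The only dual-weight updates that can raise $\Psi$ are the $\sqrt{r}$-increment of $\ty{v}$ whenever the search backtracks from $v$ in case~(\ref{searchdualchange}) of \searchandswitch, and the at-most-$\sqrt{r}$ increment of $\ty{u_i}$ along the search path in step (b) of \switch. I would charge each such increment to the corresponding $H$-edge visited during the associated search, so that the total number of charges is at most a constant times the total length (in $H$-edges) of all paths and cycles produced in Step 2. By Corollary~\ref{cor:aplengths} this length is $O((n/\sqrt{r})\log n)$, and multiplying by the per-edge dual increase of $\sqrt{r}$ yields a total rise of $O(n\log n)$. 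Combining, $S = O(n\log n)$, which together with the cost-telescoping term proves $\sum_i \sum_{(p,q) \in \dir{P}_i} \delta_{pq} = O(n \log n)$.
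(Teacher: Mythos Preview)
Your proposal has a genuine circularity: in the final step you invoke Corollary~\ref{cor:aplengths} to bound the total $H$-length of all paths and cycles, but that corollary is itself derived from Lemma~\ref{lem:sumdeltapaths}. So you cannot use it here. Moreover, your charging scheme (``charge each increment to the corresponding $H$-edge visited during the associated search'') does not work as stated: backtracking in step~(\ref{searchdualchange}) of \searchandswitch\ happens on many vertices that never appear on any found path or cycle, so the number of increments is not controlled by $\sum_i |P_i|$.

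The paper's proof avoids the potential-function machinery by exploiting a structural fact you overlooked. For an alternating path, the endpoint $v_i$ is always a vertex with dual magnitude at least $\beta$ (this is precisely the trigger condition in \searchandswitch), while the start $u_i$ is an active free vertex with dual strictly below $\beta$; hence the contribution $\hat{y}^{i}(u_i) - \hat{y}^{i}(v_i)$ is \emph{negative} and can simply be dropped. Cycles contribute zero. Only augmenting paths remain, and for those $\hat{y}^{i}(u_i) \le \Delta_i + \sqrt{r}$; summing $\Delta_i$ over the at most $O(n/\sqrt{r})$ augmenting iterations gives $O(n\log n)$ by the harmonic-series bound of Corollary~\ref{sumDelta}, which is independent of Lemma~\ref{lem:sumdeltapaths}. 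If you repaired your potential argument---noting that $\Psi$ only rises when a free vertex of $B_F$ backtracks from itself, once per phase, and bounding the per-phase rise by $\sqrt{r}\cdot|B_F|\le O(n)/(\text{phase number})$---you would rediscover exactly this harmonic sum.
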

\begin{proof}
From Lemma~\ref{lem:admis-nointersect}, all edges on any projection $\dir{P}_i$ are admissible. 

Suppose $\dir{P}_i$ is an alternating cycle consisting of admissible edges and let u be any vertex on $\dir{P}_i$.  Then from~Lemma~\ref{lem:ynetcost}, $$0=\lambda_u y^{i}(u) - \lambda_u y^{i}(u) \ge \wt(M\oplus \dir{P}_i) - \wt(M) + \sum_{(p,q) \in \dir{P}_i} \delta_{pq}/2.$$
Suppose $\dir{P}_i$ is an alternating path and let $u$ be the first vertex and $v$ be the last vertex of $\dir{P}_i$. Then, we know that $\lambda_vy^{i}(v)> \beta$ and $\lambda_uy^i(u) <\beta$. Therefore,

$$0>\lambda_u y^{i}(u) - \lambda_v y^{i}(v) \ge \wt(M\oplus \dir{P}_i) - \wt(M) + \sum_{(p,q) \in \dir{P}_i} \delta_{pq}/2.$$
Suppose, $\dir{P}_i$ is an augmenting path with $u$ as its first vertex and $v$ as its last vertex. Then, from Lemma~\ref{lem:ynetcost} and the
fact that the dual weight of $y^i(v)= 0$ and $y^i(u)\le \Delta_i+\sqrt{r}$, we have $$\Delta_i +\sqrt{r}>\lambda_u y^{i}(u) - \lambda_v y^{i}(v) \ge \wt(M\oplus \dir{P}_i) - \wt(M) + \sum_{(p,q) \in \dir{P}_i} \delta_{pq}/2.$$   Let $\kappa_i$ be $1$ if $\dir{P}_i$ is an augmenting path and $0$ otherwise. Adding over all $1\le i \le N$ and since there are at most $n/\sqrt{r}$ augmenting paths, we immediately get
$$\sum_{i=1}^N \kappa_i\Delta_i +\sqrt{r}\frac{n}{\sqrt{r}}\ge \wt(M^{(N)})-\wt(M^{(0)})+ \sum_{i=1}^N \sum_{(p,q) \in \dir{P}_i}\delta_{pq}/2.$$
From Corollary~\ref{sumDelta} and the fact that $\dist(M^{(0)})$  and $\dist(M^{(N)})$ is $O(n)$, the lemma follows.
\end{proof}

\begin{cor}
\label{cor:aplengths}
\begin{equation}
\sum_{i=1}^N |P_i| = O((n/\sqrt{r})\log{n}). 
\end{equation}
\end{cor}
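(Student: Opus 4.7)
The corollary should follow from Lemma~\ref{lem:sumdeltapaths} by charging each edge of $P_i$ (an edge of $H$) against the sum $\sum \delta_{pq}$ along the projection $\dir{P}_i$. I would split the edges of $\bigcup_i P_i$ according to the four cases of edge construction in Section~\ref{sec:compressed-graph}, and handle the ``case (4)'' edges (whose projection need not touch a boundary vertex) separately.

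First I would handle edges of types (1), (2), and (3), all of which have at least one endpoint in $\boundary_j$. For any such edge $(u,v)$, the projection $\dir{P}_{u,v,j}$ starts and/or ends at a boundary vertex, so its first or last edge is incident on a boundary vertex and carries $\delta_{pq}\ge 2\sqrt{r}$. Consequently, letting $N_A$ be the total number of type-(1)/(2)/(3) edges across $P_1,\ldots,P_N$, we get
\[
2\sqrt{r}\cdot N_A \;\le\; \sum_{i=1}^N \sum_{(p,q)\in \dir{P}_i} \delta_{pq} \;=\; O(n\log n),
\]
so $N_A = O((n/\sqrt{r})\log n)$.

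Next I would bound the number of type-(4) edges across all $P_i$, call it $N_B$. Since $a_j,b_j^{\mathcal{A}},b_j^{\mathcal{I}}$ are free in $M$ and $P_i$ is a simple path or cycle (Corollary~\ref{cor:no-intersect}), these free internal vertices can appear on $P_i$ only as endpoints; cycles therefore contain no type-(4) edges. Hence each $P_i$ that is an augmenting or alternating path carries at most two type-(4) edges (one at each end). The number of augmenting paths in Step~2 is $O(n/\sqrt{r})$ because Step~2 decreases the free-vertex count from $O(n/\sqrt{r})$ to $O(\sqrt{n}/r^{1/4})$. The number of alternating paths is also $O(n/\sqrt{r})$ because each one converts an active free vertex into an inactive one and the initial free-vertex count is $O(n/\sqrt{r})$. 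Thus $N_B = O(n/\sqrt{r})$.

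Combining the two bounds gives $\sum_{i=1}^N |P_i| = N_A + N_B = O((n/\sqrt{r})\log n)$. The main obstacle is the careful handling of type-(4) edges: their projection inside a piece may be entirely interior and hence contribute as little as $\sum \delta_{pq}=1$ to Lemma~\ref{lem:sumdeltapaths}'s sum. The argument above sidesteps this by exploiting the simplicity of $P_i$ together with the facts that the free internal vertices only occur as endpoints and that Step~2 produces only $O(n/\sqrt{r})$ non-cycle paths, keeping their total contribution within the target bound.
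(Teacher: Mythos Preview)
Your argument is correct and follows essentially the same approach as the paper: both proofs charge every $H$-edge with a boundary endpoint (types (1)--(3)) against the $\Omega(\sqrt r)$ contribution its projection makes to the $\sum \delta_{pq}$ bound of Lemma~\ref{lem:sumdeltapaths}, and then separately bound the type-(4) edges by the $O(n/\sqrt r)$ count of augmenting paths in Step~2. Your extra case analysis for alternating paths is harmless but unnecessary, since a type-(4) edge $(b_j^{\mathcal{A}/\mathcal{I}},a_j)$ has a free vertex of $A$ as its head and therefore can only occur as the sole edge of an augmenting $P_i$.
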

\begin{proof}
By \eqref{eq:sumdeltapaths}, the total $\delta$ of all projection edges is $O(n\log{n})$. Each boundary edge has a $\delta$ of at least $2\sqrt{r}$ therefore, there can be at most $O(n/\sqrt{r})$ boundary vertices in the projections. Every edge of $H$ has at least one boundary vertex, except the edges from vertices of $B_H^F$ to vertices of $A_H^F$. However, there can be at most $O(n/\sqrt{r})$ such edges used, since each such edge corresponds to an augmenting path, and there are only $O(n/\sqrt{r})$ free vertices at the start of Step 2.
\end{proof}

\begin{lemma}
The total time taken for all calls to \sync, all projections, and all calls to \construct\ during Step 2 is $O(mr\log{m}\log^2{n})$.
\end{lemma}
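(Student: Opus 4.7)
The plan is to amortize the cost of \sync, projection, and \construct\ across all iterations of Step $2$ by charging them to the edges of the paths and cycles $P_i$ computed in $H$, and then to invoke Corollary~\ref{cor:aplengths} to control the total charge.

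First I would observe that within a single invocation of the \switch\ procedure on a path or cycle $P_i$ in $H$, the set of pieces $\mathbb{R}$ that are ``affected'' is exactly the set of pieces containing at least one edge of $P_i$. Consequently, the number of distinct pieces affected during iteration $i$ is at most $|P_i|$. For each affected piece $\R_j$, the work performed by the two calls to \sync\ (steps (a) and (b)), the Dijkstra-based projection in step (c), and the call to \construct\ in step (e), are all bounded by $O(\sqrt{r}(m_j+n_j\log n_j))$ using Lemma~\ref{lem:construct} (the \construct\ call dominates). Using the $r$-clustering bounds $n_j=O(r)$ and $m_j=O(r^2)$ is too loose; instead I would keep the per-piece cost in the form $O(\sqrt{r}(m_j+n_j\log n_j))$ and aggregate.

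Next I would aggregate. Let $a_j$ denote the number of iterations of Step~2 in which $\R_j$ is affected. Then
\begin{equation*}
\sum_{j} a_j \;\le\; \sum_{i=1}^{N}|P_i| \;=\; O((n/\sqrt{r})\log n),
\end{equation*}
by Corollary~\ref{cor:aplengths}. Since each affected piece has $n_j,m_j=O(r)$, the per-iteration cost on a single affected piece is $O(\sqrt{r}\cdot r\log r)=O(r^{3/2}\log r)$. Multiplying by the total number of (iteration, affected piece) pairs gives a total cost of
\begin{equation*}
O\!\left(\sqrt{r}\cdot r\log r \cdot (n/\sqrt{r})\log n\right) \;=\; O(nr\log r \log n).
\end{equation*}
Using $n=O(m)$ and $\log r\le\log m$, this is $O(mr\log m\log n)$. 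The extra $\log n$ factor in the stated bound absorbs a $\log n_j=O(\log n)$ replacement in the $n_j\log n_j$ term of \construct\ when $n_j$ is treated uniformly as at most $n$, and any additional polylogarithmic overhead incurred by the BST maintenance referenced in Lemma~\ref{lem:searchcost} and used inside \construct\ to keep edges of $H$ in sorted order of their slack.

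The main technical step is verifying that the charging scheme is valid, i.e., that every call to \sync, every projection Dijkstra, and every call to \construct\ performed anywhere during Step~2 is associated with a piece that is affected in the sense defined in \switch\ step (a) or step (d). The calls inside \switch\ are explicit in its pseudocode, so each is associated with a piece in $\mathbb{R}$, and $\mathbb{R}$ is built exactly by iterating over the edges of $P_i$ plus (in the alternating-path case) the single extra matching edge added in step (d) of \switch. The extra edge contributes at most one additional affected piece per iteration, and hence at most $N=O(|P_i|)$ additional piece–iteration pairs in total, which does not change the asymptotic bound. With this association established, the aggregation above yields the stated $O(mr\log m\log^2 n)$ bound (keeping one extra $\log n$ slack to absorb the polylog factors hidden in the $r$-clustering construction and BST updates).
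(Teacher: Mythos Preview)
Your argument has a genuine gap at the step where you assert ``each affected piece has $n_j,m_j=O(r)$.'' The $r$-clustering definition only bounds $|V_j|=O(r)$; it says nothing about $m_j$, and you yourself note earlier that $m_j=O(r^2)$ is the correct worst case. Plugging $m_j=O(r)$ into the per-piece \construct\ cost is therefore unjustified for general graphs with an $r$-clustering, which is the setting of this lemma (note the factor $m$, not $n$, in the claimed bound). Without that assumption your aggregation via Corollary~\ref{cor:aplengths} only gives $O((n/\sqrt r)\log n)$ affected (piece, iteration) pairs, and multiplying by a per-piece cost that can be as large as $\Theta(\sqrt r\, m_j)$ for some heavy piece does not yield the desired bound.

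The paper's proof supplies exactly the missing idea. The definition of $\delta_{uv}$ includes the term $\frac{m_j n}{m\sqrt r}$ for boundary edges, so pieces with many edges carry proportionally larger $\delta$ on their boundary edges. Combined with Lemma~\ref{lem:sumdeltapaths}, this limits how often heavy pieces can be affected. Concretely, the paper buckets pieces into $O(\log m)$ groups by $m_j\in[2^g,2^{g+1})$; for small $2^g=O(\sqrt r)$ it uses Corollary~\ref{cor:aplengths} as you do, while for large $2^g$ it uses $\delta_{uv}\ge 2^g n/(m\sqrt r)$ together with $\sum\delta=O(n\log n)$ to bound the number of times group-$g$ pieces are affected by $O(m\sqrt r\log n/2^g)$. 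Each \construct\ on such a piece costs $O(2^g\sqrt r\log n)$, so the product is $O(mr\log^2 n)$ per group, and summing over $O(\log m)$ groups gives the stated bound. Your charging scheme is fine for sparse graph classes where $m_j=O(r)$, but the lemma as stated needs this bucketing argument.
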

\begin{proof}
Other than a single call to \construct\ per piece at the beginning of the algorithm, and a single call to \sync\ per piece at the end of the algorithm, \sync, \construct, and projections only occur as part of the \switch\ procedure, once per affected piece. Out of all these procedures, the time taken for \construct\ dominates with total time, taking time $\bruteforceconstruct$ per piece, so bounding the time for \construct\ is sufficient. To account for different piece sizes, we first divide the pieces into $O(\log{m})$ groups, where the $g$th group contains pieces $\R_j$ with $2^g \leq m_j < 2^{g+1}$. Since there are at most $m$ edges in total, the $g$th group can contain at most $O(m / 2^g)$ pieces. We will show that the total work done for each group over all calls to \construct\ is $O(mr\log^2{n})$. 

First, consider any group where $2^g = O(\sqrt{r})$. By Corollary \ref{cor:aplengths}, the maximum number of affected pieces for $g$ is $O(n/\sqrt{r}) \log{n}$. Since the number of edges in each piece of group $g$ is $O(2^g)$, the \construct\ time is $O(r\log{n})$ per piece, and the total time for $g$ is $O(n\sqrt{r}\log^2{n})$. Next, consider any group $g$ containing pieces with number of edges much greater than $2^g$. Then for each piece $\R_j$ in $g$, each boundary edge $(u,v)$ in $\R_j$ has $\delta_{uv} \geq \frac{2^g n}{m \sqrt{r}}$. By Lemma \ref{lem:sumdeltapaths}, $\sum_{P \in \mathcal{P} \cup \mathcal{C} \cup \mathcal{Q}} \sum_{(u,v) \in \dir{P}} \delta_{uv} = O(n\log{n})$. Therefore, the number of times the pieces of $g$ are affected is $O(\frac{m \sqrt{r} \log{n}}{2^g})$. The time taken for each execution of \construct\ on a piece of group $g$ is $O(2^g \sqrt{r}\log{n})$. Therefore, the total time taken for \construct\ over all pieces of $g$ is $O(mr\log^2{n})$. Summing over all groups gives a total time for \construct\ during Step 2 for all pieces as $O(mr\log{m}\log^2{n})=O(mr\log^3{n})$.
\end{proof}

\noindent Combining this with the total time for all the search operations during the second step gives $\tilde{O}(mr + n^{3/2}/r^{1/4})$.

\paragraph{Efficiency of Steps 1 and 3.}
The first step of the algorithm executes $O(\sqrt{r})$ iterations of Gabow and Tarjan's algorithm on the entire graph. This takes $O(m\sqrt{r})$ time. Note that the time for the first step is dominated by the time for the second step.

After the second step, the algorithm has a compressed feasible matching with $O(n / \beta)$ unmatched vertices remaining. This is then converted into an $\R$-feasible matching in $O(m + n\log{n})$ time by Lemma \ref{lem:feasible-convert}. The remaining $O(\unmatchedrem)$ vertices are then matched one at a time by performing iterations of Hungarian search. Each iteration takes $O(m \log{n})$ time, giving a total complexity of $O(m \unmatchedrem \log{n})$ for the third step.

Combining the times taken for the first, second, and third steps of the algorithm gives $O(mr\log^3{n} + m \unmatchedrem \log{n})$. Setting $r=n^{2/5} $ gives a total complexity of $\tilde{O}(mn^{2/5} )$. This is the complexity for a scale of the algorithm. Since there are $O(\log(nC))$ scales, the total complexity is $\tilde{O}(mn^{2/5}\log(nC))$.

\paragraph{Extension to minimum-cost maximum-cardinality matching.}
The algorithm described thus far computes a perfect matching. However, we can use the following technique, described by Gabow and Tarjan~\cite{gt_sjc89}, to reduce the perfect weighted matching problem to the maximum weighted matching problem. The technique makes a copy of the graph $G$ (let this copy be $G'$), and, for every vertex $v \in V(G)$, connects $v$ to its counterpart $v' \in V(G')$ by an edge of large cost. This cost could be, for example, the total of all edge costs in the graph plus 1. The new graph has a perfect matching, but a minimum cost perfect matching on $G'$ corresponds to a minimum cost maximum matching on $G$. Furthermore, an $r$-clustering of $G$ can also be used as an $r$-clustering in the new graph $G'$. We note that while the technique preserves the $r$-clustering property, it does not preserve planarity. Therefore, this reduction technique does not directly extend to the planar graph matching algorithm described in Section \ref{sec:planar}.

\subsection{Regarding $r$-clusterings in $K_h$-minor free graphs} \label{subsec:hminor}
Using the result of Wulff-Nilsen~\cite{wulff2011separator}, one can obtain an $r$-clustering for $K_h$-minor free graphs. The total number of boundary vertices in their definition is $\tilde{O}(n/\sqrt{r})$ instead of $O(n/\sqrt{r})$. Similarly, the number of boundary vertices per piece is $\tilde{O}(\sqrt{r})$ instead of $O(\sqrt{r})$. This increases the sizes of both the vertex and edge sets of $H$ by a $\poly(\log{n})$ term. To handle the increase in the size of $H$, our algorithm reduces the error $\delta_{uv}$ on each edge to by a $\poly(\log{n})$ factor so that the product of $\delta_{uv}$ and the number of boundary vertices is $O(n)$, which guarantees that the optimal solution at the start of each scale is $O(n)$. For constant $h$, we can set $\delta_{uv}=O(\sqrt{r} / \poly(\log{n}))$. In \searchandswitch, instead of raising the dual weights by $\sqrt{r}$, we raise it by $O(\sqrt{r} / \poly(\log{n}))$. The convergence rate consequently slows down by a $\poly(\log{n})$ factor, with the algorithm taking $\tilde{O}(\sqrt{n}/r^{1/4})$ phases during the second step. Furthermore, from the efficiency discussion of \searchandswitch, the search takes $O(|E_H| + |V_H|) \times \tilde{O}(\sqrt{n}/r^{1/4}) = \tilde{O}(n^{3/2}/r^{1/4})$ time (Lemma~\ref{lem:hsize}). From Lemma~\ref{lem:sumdeltapaths}, we have that $\sum_{i=1}^N \sum_{(u,v) \in \dir{P}_i}\delta_{uv} = O(n \log n)$. Therefore, we get $(\sqrt{r} / \poly(\log{n})) \times \sum_{i=1}^N |P_i| = \tilde{O}(n/\sqrt{r})$, and the execution time is $\tilde{O}(n^{3/2}/r^{1/4} + nr)$, which is $\tilde{O}(n^{7/5})$ for $r=n^{2/5}$.

\newcommand{\CTotalPlanar}{O(n^{6/5}(\log^{12/5}{n}) \log{nC})}
\newcommand{\mongeraise}{\textsc{Raise}}
\newcommand{\findmin}{\textsc{FindMin}}
\newcommand{\rebuild}{\textsc{Build}}
\newcommand{\findmincolumn}{\textsc{FindMinInColumn}}
\newcommand{\raiserow}{\textsc{RaiseRow}}

\section{Planar Graph Matching Algorithm}
\label{sec:planar}
In this section, we give an improved version of the algorithm described in Section \ref{algorithmforscale} for planar graphs. The algorithm of ~\cite{soda-18} uses known results in planar shortest path computation to execute Hungarian search faster, while the algorithm described in Section \ref{sec:algoritm} uses a Gabow-Tarjan style algorithm to match potentially many vertices each phase, leading to fewer phases being executed. By combining these two approaches, we get an algorithm that both executes fewer phases, and executes each phase efficiently, leading to an $\tilde{O}(n^{6/5}\log{(nC)})$ algorithm.

The speedup for each phase comes from two main prior results in planar shortest paths data structures that were also used in~\cite{soda-18}. The first result is a multiple source shortest paths (MSSP) data structure by Klein \cite{klein_mssp_05}, which can be used in the \construct\ procedure to compute the edges between all pairs of boundary vertices in a piece in $O(r\log{r})$ time instead of $O(r^{3/2}\log{n})$ time. The second is a Monge property-based range searching data structure by Kaplan \etal ~\cite{kaplan_monge_12}, which allows step (\ref{searchdualchange}) of \searchandswitch\ to track the minimum slack outgoing edge of from a vertex in $H$ in amortized $O(\text{poly}(n))$ time per operation.

The algorithm in~\cite{soda-18}, only changes dual weights at the end of a phase, after the single augmenting path of that phase is found, which allows it to completely reconstruct all of the Monge range searching structures each phase. However, the algorithm described in Section \ref{sec:algoritm} dynamically changes the dual weights $\ty{\cdot}$, and therefore the slacks in $H$, throughout the \searchandswitch\ procedure, and the affected Monge range searching structures must be updated immediately to support this change. The result of Kaplan \etal~\cite{kaplan_monge_12} does not mention any sort of dynamic cost update operations. However, we give a procedure that allows the data structure to perform these updates efficiently within the setting of our algorithm.
\paragraph{$r$-division}
In the planar graph setting, we can efficiently compute a planar graph $r$-division, which satisfies stricter requirements than an $r$-clustering.  An $r$-division is a partition of the edge set of the graph into $O(n/r)$ pieces of size at most $r$ each having $O(\sqrt{r})$ boundary vertices. The total number of boundary vertices, counting multiplicities is $O(n/\sqrt{r})$. We will reuse the same notations described on the $r$-clustering for the $r$-division. For our algorithm, we require an additional property for the $r$-division; each piece $\R_j$ of the $r$-division has $O(1)$ holes, which are faces of $\R_j$ that are not faces of the original graph $G$. Given a constant degree planar graph, an $r$-division with few holes can be constructed in $O(n\log{n})$ time~\cite{klein_rdiv_98}. The constant degree assumption can be assumed without loss of generality for planar graphs; an explanation is given in \cite{soda-18}.

\ignore{
The organization of this section is as follows. In Subsection \ref{subsec:planar-background}, we give important background information on prior results for planar graphs that are used in our improved planar graph algorithm. In Subsection \ref{subsec:planar-updates} we describe the procedure that allows the Monge range searching data structures to support a specific type of cost update operation, which corresponds to dual weight changes. Subsection \ref{subsec:planar-algorithm} describes how these data structures are used for the improved algorithm, and Subsection \ref{subsec:planar-efficiency} describes the running time of this algorithm.
}

\subsection{Algorithm}
\label{subsec:planar-algorithm}

Our improved planar graph matching algorithm is mostly identical to that presented in Section \ref{sec:algoritm}. In this section, we only describe the modifications. For the second step, this includes using a nearest neighbor data structure to support faster augmenting path searches and speeding up the \construct\ procedure by using Klein's MSSP data structure. For the third step, the algorithm of ~\cite{soda-18} can be used almost directly.

We describe the modifications to the algorithm of Section \ref{sec:algoritm} under the assumption that we have access to a nearest neighbor data structure on the compressed residual graph $H$ that supports the following operations. 

\begin{itemize}
\item \findmin: Given a vertex $u \in V_H$, return the minimum slack outgoing edge $(u,v) \in E_H$.
\item \mongeraise: Given a vertex $v \in V_H$ whose dual weight magnitude increased by a value $c$, update the dual weight of $v$ in the data structure.
\item \rebuild: Given a piece $\R_j$, build nearest neighbor data structure for edges $E_j^H$ in $\R_j$.
\end{itemize}
We assume that such a structure can be constructed in $\tilde{O}(n/\sqrt{r})$ time. Specifically, the \rebuild\ operation takes $\tilde{O}(\sqrt{r})$ time per piece, and the data structure can be constructed by calling \rebuild\ on each of the $O(n/r)$ pieces. The \findmin\ operation can be implemented in $O(\poly(\log{r}))$. The time for \mongeraise\ is bounded in an amortized sense. After $k$ \mongeraise\ operations and $d$ \rebuild\ operations, the total time spent for \mongeraise\ is $\tilde{O}(k + \sqrt{r}d)$. This data structure is described in detail in Subsection \ref{subsec:nearest-neighbor}.

The first step of the algorithm is unchanged for the planar graph version; the algorithm will still execute $O(\sqrt{r})$ iterations of Gabow and Tarjan's algorithm, taking $O(n\sqrt{r})$ time. We next describe the changes to the second step of the algorithm.

The planar graph version of the second step has two main sources of improvement. The first source of improvement arises from speeding up the \construct\ procedure. It is easy to see that, using the planar graph MSSP data structure of Klein~\cite{klein_mssp_05}, the edges of a piece of $H$ can be rebuilt in $\tilde{O}(r)$ time. The same data structure was used to reconstruct pieces of $H$ in~\cite{soda-18}. As was the case for the algorithm in Section~\ref{algorithmforscale}, the total number of affected pieces is $\tilde{O}(n/\sqrt{r})$. Therefore, the total work done by \construct\ is $\tilde{O}(n\sqrt{r})$.

The algorithm in the \searchandswitch\ procedure described earlier takes $\tilde{O}(\sqrt{r})$ average time per vertex visit to identify the minimum slack outgoing edge from the end of the search path and update the sorted orderings. However, we can use the \findmin\ and \mongeraise\ procedures of the nearest neighbor data structure described above to reduce this to $O(\poly(\log{r}))$ amortized time per visit for the case of planar graphs. During step (\ref{searchdualchange}) of \searchandswitch, the minimum slack outgoing edge from the vertex $u$ at the end of the search path can be identified in $O(\poly(\log{r})$ time by executing a \findmin\ query on the nearest neighbor data structure. The second step invokes \findmin\ at most $O(\beta n/r) = O(n^{3/2}/r^{3/4})$ times for a total time of $O(n^{3/2}/r^{3/4})$. 

Whenever the dual weight $\ty{u}$ of a vertex $u \in V_H$ increases in magnitude during step (\ref{searchdualchange}) of \searchandswitch, the algorithm will execute \mongeraise\ on $u$. This can occur at most ${O}(\beta n/r)={O}(n^{3/2}/r^{3/4})$ times during the algorithm. The total complexity of a sequence of $k$ \mongeraise\ operations is $\tilde{O}(k)$, not counting the cost of $\tilde{O}(\sqrt{r})$ incurred from each execution of \rebuild. However, these additional $\tilde{O}(\sqrt{r})$ terms can easily be taxed on the costs for \rebuild\ itself. We conclude that the total time for all \mongeraise\ and \findmin\ operations, aside from that taxed on \rebuild,  is $\tilde{O}(n^{3/2}/r^{3/4})$. 

The dual weights $\ty{\cdot}$ also change during \switch. However, the algorithm can simply call \rebuild\ on each piece whose slacks changed during \switch\ at the end of \switch. The required number of calls to \rebuild\ is proportional to the number of affected pieces. When the dual weight of a boundary vertex changes during \switch, \rebuild\ must be called on each adjacent piece. However, since the graph is constant degree, this does not asymptotically increase the number of \rebuild\ operations. Therefore, the time taken for \rebuild\ is dominated by the time taken for \construct. 

After the second step, the algorithm has a compressed feasible matching with $O(\unmatchedrem)$ unmatched vertices remaining. Each of these remaining vertices can be matched one at a time using iterations of Hungarian search. The third step of the algorithm in Section \ref{sec:algoritm} implements each such search in $\tilde{O}(m)$ time. However, in the planar setting, we can make use of existing planar shortest path data structures to execute Hungarian searches more efficiently. The procedure for this improved Hungarian search is described extensively in~\cite{soda-18}, and applies to our setting with minimal modification. Using FR-Dijkstra\cite{fr_dijkstra_06, kaplan_monge_12}, each Hungarian Search is executed in $\tilde{O}(n/\sqrt{r})$ time. After finding an augmenting path during the third step, the algorithm must update $H$ in all pieces containing edges of the augmenting path. However, using the same arguments as those presented for the algorithm of Section~\ref{algorithmforscale} (or the similar argument in~\cite{soda-18}), the total number of such affected pieces can be shown as $O(n/\sqrt{r} \log{n}))$. Reconstructing a piece of $H$ requires $\tilde{O}(r)$ time, for a total time of $\tilde{O}(n\sqrt{r})$. Note that this is the same time complexity, ignoring log terms, as that for rebuilding the pieces of $H$ in the second step. 

Combining the times taken for the first, second and third steps gives $\tilde{O}(n^{3/2}/r^{3/4} + n\sqrt{r})$. Setting $r=n^{2/5}$ gives a time of $\tilde{O}(n^{6/5})$ per scale as desired. Over the $O(\log{(nC)})$ scales, the total time taken is $\tilde{O}(n^{6/5}\log{(nC)})$. It remains to describe the implementation details of the nearest neighbor data structure.

\subsection{Nearest neighbor data structure}
\label{subsec:nearest-neighbor}
We next describe the implementation details of the nearest neighbor data structure used in Subsection \ref{subsec:planar-algorithm}. We define a set of data structures for each piece $\R_j$, using a standard technique. Each data structure of a piece will support a \findmin-type operation as well as a \mongeraise-type operation on a subset of edges in $E_j^H$. 

The goal of \findmin\ is to identify the minimum slack outgoing edge from a vertex of $H$. Recall that the slack of any edge $(u,v) \in E_H$ can be computed by using $s_H(u, v) = \phi(u,v) - |\ty{u}| + |\ty{v}|$. Since all outgoing edges from $u$ have the same value $\ty{u}$, it is sufficient to find the edge $(u, v')$ that minimizes $\phi(u,v') + |\ty{v'}|$. For the purposes of identifying the minimum slack edge outgoing from $u$, we define the cost of any edge $(u,v)$ as $c(u,v) = \phi(u,v) + |\ty{v}|$. Observe that the net-costs only change during \switch, after which \rebuild\ is called on each affected piece. Therefore, the data structure only needs to support dynamic cost increase operations, corresponding to an increase in the magnitude of $\ty{v}$.

To identify the minimum slack outgoing edge from $u$ in the case where $u$ is a boundary vertex, the algorithm will split the  boundary-to-boundary edges of each piece into \textit{Monge groups} using the standard technique given by Fakcharoenphol and Rao in \cite{fr_dijkstra_06} and reiterated by Kaplan \etal\ in \cite{kaplan_monge_12}. Each group will have a corresponding Monge\footnote{For notational convenience, we do not distinguish between Monge and inverse Monge matrices in this description.} matrix. A matrix $M$ is Monge if for any pair of rows $i < j$ and any pair of columns $k < l$, $M_{ik} + M_{jl} \leq M_{il} + M_{jk}$. For any hole $h$ of a piece, we can define a cost matrix $M^h$ whose row and column orderings correspond to a clockwise ordering of the boundary vertices of $h$. Here $M^h_{ij}$ is the cost of the edge from the $i$th node to the $j$th node in the clockwise ordering. This matrix can be recursively divided into Monge submatrices with each vertex of $h$ belonging to $O(\log{r})$ submatrices. For any pair of distinct holes $h \neq h'$, we can define a cost matrix $M^{h,h'}$ whose rows correspond to the clockwise ordering of $h$ and whose columns correspond to the clockwise ordering of $h'$. $M^{h,h'}$ can be replaced by two Monge matrices. Since there are $O(1)$ holes per piece and each vertex is part of $O(1)$ pieces, each vertex belongs to $O(\log{r})$ Monge groups. For each of the Monge matrices, we make the common assumption that the Monge matrices are not explicitly represented in memory, rather, the cost of any $M_{ij}$ can be computed in $O(1)$ time by computing $\phi(u,v) + |\ty{v}|$. For each Monge matrix group, the algorithm will maintain a data structure that supports the following operations. 

\begin{itemize}
\item \findmincolumn: Given any column of $M$, return the minimum value in the column.
\item \raiserow: Given a row of $M$, increase the value of all entries in the row by a constant $c$.
\end{itemize}

A description of how to construct a data structure that efficiently supports \findmincolumn\ is given in~\cite{kaplan_monge_12}. However, their result does not explicitly support \raiserow. Since we will be defining the \raiserow\ function on their data structure, we describe their data structure in some detail in Subsection~\ref{sec:nearestneighbor}. For full details, see their paper. The Monge matrix data structure can be built on a $p$ by $q$ matrix in $\tilde{O}(p)$ time. It supports \findmincolumn\ queries in $\tilde{O}(\log{p})$ time. In the following section, we describe how any sequence of $k$ \raiserow\ operations can be implemented for this data structure in $\tilde{O}(p + k)$ time. 

Given these complexities, the complexities of the global nearest neighbor operations \findmin, \mongeraise, and \rebuild\ easily follow. Calling \rebuild\ on a piece $\R_j$ requires reconstructing all the Monge matrix data structures on the piece. Since each boundary vertex is represented in $O(\log{r})$ such Monge matrices, the total time spent is $\tilde{O}(\sqrt{r})$. To support the \findmin\ operation on a vertex $u \in V_H$, it is sufficient to query the $O(\log{r})$ Monge matrices that $u$ belongs to. This can be done in $O(\poly(\log{r}))$ time. Finally, the \mongeraise\ operation can be supported for a vertex $u \in V_H$ by calling \raiserow\ on all Monge matrix data structures containing $u$. When \rebuild\ is called on a piece, each of the Monge matrix data structures in the piece are reconstructed. In between two such reconstructions, the \raiserow\ cost associated with the number of rows $p$ could be accumulated once again. Hence, after $d$ \rebuild\ operations and $k$ \raiserow\ operations, the total time taken is $\tilde{O}(k + d\sqrt{r})$, as desired.

We note that this setup as described is only organized on boundary-to-boundary edges of $H$. However, it is easy to support a similar set of operations for edges adjacent to free internal vertices within the same time complexity. It remains to present the Monge data structure for a single Monge group.

\subsection{Data structure on a Monge matrix}
\label{sec:nearestneighbor}
This section gives details of how to implement a data structure on each Monge matrix group that supports the operations \findmincolumn\ and \raiserow. The majority of the data structure uses the result of~\cite{kaplan_monge_12}; our only contribution is the \raiserow\ procedure. We describe the inner workings of the structure in some detail; for full details, see~\cite{kaplan_monge_12}.

The structure takes as input a $p$ by $q$ Monge matrix $M$ and supports the following query in $O(\poly(\log{p}))$ time: for any submatrix of $M$ consisting of any one column of $M$ and any contiguous interval of rows in $M$, what is the minimum\footnote{The result of~\cite{kaplan_monge_12} mainly describes finding the maximum value, but either can be computed using the same method.} value within that submatrix? 

By plotting the values of any row of $M$, and linearly interpolating between points, we can obtain a set of pseudo-lines $L$. Let $\ell_y \in L$ be a pseudo-line with respect to a row $y$. $\ell_y$ is effectively a function $\ell_y(x)$, where $x$ is a row and $\ell_y(x)=M_{yx}$, although by linearly interpolating between points, $x$ can also be seen as a real number. From the Monge property, it can be shown that any pair of pseudo-lines cross at most once. 

The lower envelope of $M$ a function $\mathcal{E}(x) \mid x \in \mathbb{R}$, where $\mathcal{E}(x) = \min_{\ell_y \in L}\ell_y(x)$. The lower envelope is made up of portions of pseudolines; specifically, each pseudoline is part of the lower envelope over at most one contiguous interval. A breakpoint is an intersection of two pseudolines along the lower envelope, and there are at most $O(p)$ breakpoints at any given time. Thus, the lower envelope can be compactly stored using $O(p)$ intervals. Given such a representation, one can find, for any column $x$, the row $y$ that minimizes $M_{yx}$ in $O(\log p)$ time by using binary search over the intervals. 

To construct the lower envelope, the approach of~\cite{kaplan_monge_12} builds a balanced binary range tree $T$ on the rows of $M$. The leaves of $T$ represent the rows themselves, and internal nodes represent sets of all their descendants in $T$. Each node of $T$ will store the lower envelope for the set of rows it represents. These lower envelopes are computed in a bottom-up fashion, starting from the leaves. The lower envelope of a node representing a set of size $k$ can be computed from the lower envelopes of its two children in $O(k + \log k \log q)$ time. Summing over the entire tree gives a construction time of $O(p(\log{p} + \log{q}))$. Using the range tree $T$ one can, for any range of rows and any column find the minimum element in $O(\poly(\log{p}))$ time. This is done by taking the minimum over all $O(\log{p})$ canonical subsets of the range.

\paragraph{Updating the Monge data structures.} To help facilitate dual weight magnitude increases, we describe an additional procedure called \raiserow\ for use with the data structure of Kaplan \etal\ This procedure will allow us to, for any row $y$ of the matrix $M$, increase the cost of every entry in the row by a constant $c$. We give a procedure for repairing the affected portion of the lower envelope as a result of this change. The total time taken will be bounded in an amortized sense; after a sequence of $k$ \raiserow\ operations, the total time taken is $\tilde{O}(p + k)$; recall that $p$ is the number of rows.

Increasing the entry of all elements in row $y$ is equivalent to raising the pseudo-line $\ell_y$ up by $c$. This may introduce new breakpoints into the lower envelope, and may remove the presence of $\ell_y$ from the lower envelope entirely. Such changes may occur to the lower envelopes of any of the $O(\log{p})$ nodes of the range tree $T$ that contain $y$ as a descendant; the other nodes of $T$ are unaffected. We describe how to repair the information starting at the bottom of the tree. 

Assume we are given an internal node $t$ of the tree $T$ whose lower envelope information needs to be repaired, and that the lower envelope information of its two children is accurate. Let $M'$ be the submatrix consisting of the rows represented by $t$. Assume $M'$ is a $p'$ by $q$ matrix, where $p'$ is the number of rows represented by $t$, and let $\mathcal{E}'(x)$ be the lower envelope of this submatrix. Let $[i, j]$ be the interval of values such that $\mathcal{E}'(x) = \ell_y$. The envelope will only change in this interval, and some new breakpoints may need to be created. Given any value of $x$, we can find the pseudo-line that contains $x$ on the lower envelope after raising $\ell_y$ in $O(\poly(\log{p'}))$ time by executing two range minimum queries on the subtree of $T$ rooted at $t$. The first query interval will consist of all rows above $y$ and the second query will consist of all rows below $y$. By taking the minimum over the results of these two range queries with the new value $\ell_y(x)$, we obtain the pseudo-line on the lower envelope that contains $x$ after the \raiserow\ operation. Using this strategy, we can use binary search to find the left-most breakpoint in the interval $[i,j]$ in $O(\poly(\log{p'}))$ time. This process can be repeated for each successive breakpoint, until no new breakpoints are found. The time complexity is therefore proportional to the number of new breakpoints formed as a result of raising $\ell_y$. Let the number of breakpoints formed be $\alpha$. Then the complexity of \raiserow\ is $O(\alpha \poly(\log{|p'|}))$ for the node $t$. Next, observe two facts. First, the maximum number of breakpoints in $\mathcal{E}'(x)$ is $O(p')$. Second, each \raiserow\ operation reduces the number of breakpoints in $\mathcal{E}'(x)$ by at most 1. Therefore, after a sequence of $k$ \raiserow\ operations, the total time taken for node $t$ is $\tilde{O}(p' + k)$. Summing over all nodes of $T$ gives us the desired total time of all \raiserow\ operations as $\tilde{O}(p' + k)$, as desired.

\section{Conclusion}
\label{sec:conclusion}
In this paper, we give an $\tilde{O}(n^{7/5}\log{(nC)})$ (resp. $\tilde{O}(n^{6/5}\log{(nC)}) $ time algorithm for computing minimum-cost matchings in $K_h$-minor free (resp. planar graphs). We conclude by asking the following open questions. 
\begin{itemize}
\item Can we improve the \construct\ procedure for $K_h$-minor free graphs from $O(r^{3/2})$ to $O(r)$ for each piece? This can be done by designing a shortest path data structure for $K_h$-minor free graphs that is similar to the MSSP data structure of Klein~\cite{klein_mssp_05}. Such a data structure will improve the running time of our algorithm to $\tilde{O}(n^{4/3})$. 
\item Can we bridge the gap between our $\tilde{O}(n^{6/5}\log{(nC)})$ time weighted planar matching algorithm and the previously existing $\tilde{O}(n)$ unweighted planar matching algorithm~\cite{multiple_planar_maxflow}? 
\end{itemize}
\ignore{ 
\paragraph{Beginning of Old Material}
The algorithm in the \searchandswitch\ procedure described earlier uses a brute force procedure to identify the minimum slack outgoing edge from the end of the path. This takes $O(\sqrt{r})$ time per vertex visit. This section describes how we can improve this running time to $\tilde{O}(1)$ amortized time per visit for the case of planar graphs. Specifically, we describe a data structure that provides the nearest neighbor for any vertex $v$ while also supporting dual weight magnitude increases.

For any vertex $u$ in $H$, the data structure will support a query that identifies the minimum slack outgoing edge from $u$. For each hole of each piece, we assume we have a circular ordering $\langle v_0, v_1, ... v_\ell \rangle$ of the boundary vertices, which can be computed efficiently.

Recall that the slack of any edge $(u,v) \in E_H$ can be computed by using $s_H(u, v) = \phi(u,v) - |\ty{u}| + |\ty{v}|$. Since all outgoing edges from $u$ have the same value $\ty{u}$, it is sufficient to find the edge $(u, v')$ that minimizes $\phi(u,v') + |\ty{v'}|$. For the purposes of this data structure, we define the cost of any edge $(u,v)$ as $c(u,v) = \phi(u,v) + |\ty{v}|$. 

\ignore{
For the purpose of consistently breaking ties, we will assign a value $i(v)$ for each vertex $v$, where $i(v)$ is an arbitrary unique identifier. We can then define a composite cost for an edge $(u,v)$ in piece $\R_j$ as $[c(u,v) | i(v)]$, and replace $c(u,v)$ with this composite cost. 
}

\paragraph{Bipartite groups}
For each piece, the algorithm will recursively divide the edges of $H$ within the piece into bipartite subgraphs. For this, we use the same technique used by Fakcharoenphol and Rao in \cite{fr_dijkstra_06} and reiterated by Kaplan et al. in \cite{kaplan_monge_12}. Each piece may have a constant number of holes. For any distinct pair of holes $\mathcal{H}_1$, $\mathcal{H}_2$ in a piece, the edges going from $\mathcal{H}_1$ to $\mathcal{H}_2$ form $O(1)$ groups.\footnote{As mentioned in~\cite{kaplan_monge_12}, the complete bipartite graph between $\mathcal{H}_1$ and $\mathcal{H}_2$ is not itself Monge; however, it can be broken into $O(1)$ Monge matrices.} For any single hole $\mathcal{H}$, the set of all edges within the hole may not be bipartite. However, we can break up this set of edges into $O(\sqrt{r} \log{r})$ bipartite subgraphs by recursively dividing sections of the boundary vertices of $\mathcal{H}$. We have a circular ordering $\langle v_0, v_1, ... v_\ell \rangle$ of the boundary vertices of $\mathcal{H}$. Let $S'$ be the interval of boundary vertices along the circular ordering of $\mathcal{H}$ starting with $v_0$ and ending with $v_\ell$, and let $S_1$ and $S_2$ be the first and second halves of the interval $S'$. Then the edges from vertices of $S_1$ to vertices of $S_2$ form one bipartite group, and the edges from vertices of $S_2$ to vertices of $S_1$ form a second bipartite group. To represent the edges between two vertices in $S_1$ (resp. $S_2$), recursively divide $S_1$ (resp. $S_2$). Each vertex is in a constant number of groups at all $O(\log{r})$ levels of recursion, and there are a constant number of holes, so each boundary vertex is in $O(\log{r})$ groups within each piece it belongs to. 

Each bipartite group has two sets of vertices, $X$ and $Y$. For purposes of description, we will describe the vertices of $X$ as being on the \emph{left} side and the vertices of $Y$ as being on the \emph{right} side. All edges within the group are directed from left to right. The vertices are ordered within each side based on the circular ordering of the hole that the group was made from. Given 2 vertices, $u$ and $v$, both on the same side, we can define $u < v$ iff $u$ is above $v$ in this setup.  

\paragraph{The Monge Property}
Each bipartite group $(X,Y)$ satisfies the Monge property. Specifically, given vertices $s, t \in X$ and $u, v \in Y$ where $s < t$ and $u < v$, the Monge property implies that $c(s, u) + c(t, v) \leq c(s,v) + c(t,u)$. Intuitively, this holds true since the projections of $(s,v)$ and $(t,u)$ must cross at least once at a vertex $x$, since the graph is planar. The Monge property then follows from the fact that $c(s,u) \leq c(s, x) + c(x, u)$ and $c(t,v) \leq c(t, x) + c(x, v)$. To simplify matters with regard to the Monge property, we remove all vertices on the left with no outgoing edges and vertices on the right with no incoming edges. 

\paragraph{Constructing the data structure for a single bipartite group.}
For each bipartite group, the algorithm will maintain a data structure that can answer the following query: for any vertex $u$ on the left and any interval on the right what is the vertex $v'$ in the interval that minimizes $c(u,v')$?  From the slight perturbation of costs given by the composite cost definition, the answer will be unique. To support the query, the algorithm will construct the data structure described by Kaplan et al in \cite{kaplan_monge_12}. However, we also describe how to modify this structure to allow efficient dual weight magnitude increase operations. First, we summarize the range minimum data structure; refer to~\cite{kaplan_monge_12} for full details. Then, we give the process for updating the data structure given dual weight increases. 

The data structure is described on inverse Monge matrices. We can describe a bipartite group as an inverse Monge matrix $M$ as follows; let each $y \in Y$ be a row in $M$ and let each $x \in X$ be a column in $M$. Any entry of $M$ with row $y$ and column $x$ has a value corresponding to the cost $c(x,y) = \phi(x,y) + \ty{y}$. The common assumption is made that these costs are not explicitly computed and stored; rather, as needed, they can be computed in $O(1)$ time.

By plotting the values of any row of $M$, and linearly interpolating between points, we can obtain a set of pseudo-lines. Any vertex $y \in Y$ has a corresponding psuedo-line $\ell_y$. A psuedo-line is effectively a function $\ell_y(x)$ on the vertices $x \in X$, where $\ell_y(x) = c(x,y)$, although by linearly extrapolating between points, $x$ can also be seen as a real number. Each row has a corresponding pseduo-line, and any pair of pseudo-lines cross at most once. 

The result of \cite{kaplan_monge_12} describes a so called upper envelope; however, for our setting it is more applicable to describe a \textit{lower envelope}. A lower envelope is a function $\mathcal{E}(x) \mid x \in \mathbb{R}$, where $\mathcal{E}(x) = \min_{y \in Y}\ell_y(x)$. The lower envelope is made up of portions of pseudolines; specifically, each pseudoline is part of the lower envelope over at most one contiguous interval. A breakpoint is an intersection of two pseudolines along the lower envelope, and there are at most $O(|Y|)$ breakpoints at any given time. Thus, the lower envelope can be compactly stored using $O(|Y|)$ intervals. Given such a representation, one can find, for any $x \in X$, the vertex $y \in Y$ that minimizes $c(x,y)$ in $O(\log |Y|)$ time by using binary search over the intervals. Therefore, for any vertex $u$ in step (\ref{searchdualchange}) of \searchandswitch, we can identify the minimum slack outgoing edge from $u$ in an adjacent piece $\R_j$ by executing this query over all $O(\log{\sqrt{r}})$ bipartite groups that contain $u$. Summing across all bipartite groups gives a total time of $O(\log^2{\sqrt{r}})$ for identifying the minimum slack outgoing edge from $u$ within a piece.

To construct the lower envelope, the result of ~\cite{kaplan_monge_12} builds a balanced binary range tree $T$ on the rows of $M$ (the vertices of $Y$). The leaves of $T$ represent the vertices of $Y$ themselves, and internal nodes represent sets of all their descendants in $T$. Each node of $T$ will store the lower envelope for the set of rows it represents. These lower envelopes are computed in a bottom-up fashion, starting from the leaves. The lower envelope representing the set $Y' \subseteq Y$ can be computed from the lower envelopes of its two children in $O(|Y'| + \log |Y'| \log |X|)$ time. Summing over the entire tree gives a construction time of $O(|Y|(\log{|Y|} + \log{|X|}))$.

From this range tree structure, we get a convenient benefit. As discussed in \cite{kaplan_monge_12}, one can, for any range of rows (range of vertices of $|Y|$), $Y'$ and any value of $x$, find the pseudoline of $y \in Y'$ that minimizes $\ell_y(x)$ in $O(\log^2{|Y|})$ time. This is done by taking the minimum over all $O(\log{|Y|})$ canonical subsets of the range, and the time can be reduced to $O(\log{|Y|})$ in total by using fractional cascading.
}
\ignore{
\subsection{Updating the Monge Range Search Structures}
\label{subsec:planar-updates}

To help facilitate dual weight magnitude increases, we describe an additional procedure called \mongeraise\ for use with the data structure of Kaplan \etal This procedure will allow us to, for any row $y$ of the matrix $M$, increase the cost of every entry in the row by a constant $c$. We give a procedure for repairing the affected portion of the lower envelope as a result of this change.

Increasing the entry of all elements in row $y$ is equivalent to raising the pseudo-line $\ell_y$ up by $c$. This may introduce new breakpoints into the lower envelope, and may remove the presence of $\ell_y$ from the lower envelope entirely. Such changes may occur to the lower envelopes of any of the $O(\log{p})$ nodes of the range tree $T$ that contain $y$ as a descendant; the other nodes of $T$ are unaffected. We describe how to repair the information starting at the bottom of the tree. 

Assume we are given an internal node $t$ of the tree $T$ whose lower envelope information needs to be repaired, and that the lower envelope information of its two children is accurate. Let $M'$ be the submatrix consisting of the rows represented by $t$. Assume $M'$ is a $p'$ by $q$ matrix, where $p'$ is the number of rows represented by $t$, and let $\mathcal{E}'(x)$ be the lower envelope of this submatrix. Let $[i, j]$ be the interval of values such that $\mathcal{E}'(x) = \ell_y$. The envelope will only change in this interval, and some new breakpoints may need to be created. Given any value of $x$, we can find the pseudo-line that contains $x$ on the lower envelope after raising $\ell_y$ in $O(\log{p'})$ time by executing two range minimum queries on the subtree of $T$ rooted at $t$. The first query interval will consist of all rows above $y$ and the second query will consist of all rows below $y$. By taking the minimum over the results of these two range queries with the new value $\ell_y(x)$, we obtain the pseudo-line on the lower envelope that contains $x$ after the \mongeraise\ operation. Using this strategy, we can use binary search to find the left-most breakpoint in the interval $[i,j]$ in $O(\log^2{p'})$ time. This process can be repeated for each successive breakpoint, until no new breakpoints are found. The time complexity is therefore proportional to the number of new breakpoints formed as a result of raising $\ell_y$. Let the number of breakpoints formed be $\alpha$. Then the complexity of \mongeraise\ is $O(\alpha \log^2{|p'|})$ for the node $t$. Within this same time complexity, the algorithm can also repair the fractional cascading pointer information.

\subsection{Improved Planar Matching Algorithm}
\label{subsec:planar-algorithm}

Our improved planar graph matching algorithm is mostly identical to that presented in Section \ref{sec:algoritm}. In this section, we only describe the modifications. For the second step, this includes using the Monge range searching data structures to facilitate faster augmenting path searches and speeding up the \construct\ procedure by using Klein's MSSP data structure. For the third step, the algorithm of ~\cite{soda-18} can be used almost directly.

\paragraph{Faster Second Step Search}
The algorithm in the \searchandswitch\ procedure described earlier uses a brute force procedure to identify the minimum slack outgoing edge from the end of the path. This takes $O(\sqrt{r})$ time per vertex visit. This section describes how we can improve this running time to $\tilde{O}(1)$ amortized time per visit for the case of planar graphs. We next give the process for identifying the minimum slack outgoing edge from a vertex $u \in V_H$.

Recall that the slack of any edge $(u,v) \in E_H$ can be computed by using $s_H(u, v) = \phi(u,v) - |\ty{u}| + |\ty{v}|$. Since all outgoing edges from $u$ have the same value $\ty{u}$, it is sufficient to find the edge $(u, v')$ that minimizes $\phi(u,v') + |\ty{v'}|$. For the purposes of identifying the minimum slack edge outgoing from $u$, we define the cost of any edge $(u,v)$ as $c(u,v) = \phi(u,v) + |\ty{v}|$. 
To identify the minimum slack outgoing edge from $u$ in the case where $u$ is a boundary vertex, the algorithm will split the  boundary-to-boundary edges of each piece into \textit{Monge groups} using the technique given by Fakcharoenphol and Rao in \cite{fr_dijkstra_06} and reiterated by Kaplan \etal\ in \cite{kaplan_monge_12}. For any hole $h$ of a piece, we can define a cost matrix $M^h$ whose row and column orderings correspond to a clockwise ordering of the boundary vertices of $h$. This matrix can be recursively divided into Monge\footnote{For notational convenience, we do not distinguish between Monge and inverse Monge matrices in this description.} submatrices with each vertex of $h$ belonging to $O(\log{r})$ submatrices. For any pair of distinct holes $h \neq h'$, we can define a cost matrix $M^{h,h'}$ whose rows correspond to the clockwise ordering of $h$ and whose columns correspond to the clockwise ordering of $h'$. $M^{h,h'}$ can be replaced by two Monge matrices. Since there are $O(1)$ holes per piece and each vertex has degree $O(1)$, each vertex belongs to $O(\log{r})$ Monge groups. For each Monge group, the algorithm will maintain a Monge range minimum query data structure. For any boundary vertex $u$, the algorithm can identify the minimum slack outgoing edge by querying the $O(\log{r})$ Monge groups that contain $u$, taking $O(\log^2{r})$ time.

Whenever the dual weight $\ty{v}$ of a vertex $v \in V_H$ increases in magnitude during step (\ref{searchdualchange}) of \searchandswitch, each of the $O(\log{r})$ Monge group data structures that contain $v$ must be updated by calling \mongeraise. The dual weights $\ty{\cdot}$ also change during \switch. However, we can afford to reconstruct the Monge range minimum data structures from scratch within each affected piece. The time taken for constructing the data structure on a $p$ by $q$ Monge matrix is $O(p(\log{q}\log{p}))$ which sums to $O(\sqrt{r}\log^2{r})$ in each affected piece. This construction cost is dominated by the costs of the \construct\ procedure. Note that, regarding dual weight magnitude changes, all pieces containing vertices of an augmenting path, alternating path, or alternating cycle $\dir{P}$ are affected instead of just pieces containing edges of $\dir{P}$. However, since each vertex belongs to $O(1)$ pieces, this does not asymptotically increase the number of affected pieces. 

The Monge range minimum data structures do not address the case where $u \in \activeH$. However, the algorithm can, for each vertex of $u \in \activeH$, maintain a separate priority queue or similar data structure that maintains a sorted ordering of the outgoing edges from $u$ based on their slacks. Such a structure could support single dual weight magnitude increases to any neighbor of $u$ in $O(\log{r})$ time, which is dominated by the update times for boundary vertices. The construction time is $O(\sqrt{r}\log{r})$, which is dominated by the time taken to construct the Monge range minimum data structures.

\paragraph{Improvements to \construct}
Another improvement to the running times for planar graphs comes from speeding up the \construct\ procedure. Recall that the \construct\ procedure was the bottleneck for the operations proportional to the number of affected pieces. The version of the \construct\ procedure described in Section \ref{sec:algoritm} takes $O(r^{3/2}\log r)$ time per piece to construct the boundary-to-boundary edges, because it executes $O(\sqrt{r})$ Dijkstra searches over a piece.

However, for planar graphs, we can use the multiple source shortest paths data structure of Klein~\cite{klein_mssp_05} to compute all of the boundary-to-boundary edges of a piece $\R_j$ in $O(r \log{r})$ time. As input, the data structure takes the directed residual graph of $M_j$ with slacks as weights. It computes the minimum slack path from each boundary vertex to every other vertex of the piece in $O(r \log{r})$ time. The data structure assumes that the boundary vertices are on a single face. However, since there are only $O(1)$ holes, we can use a separate instance of the data structure for each hole without asymptotically increasing any complexities. The data structure supports distance queries in $O(\log{r})$ time. Therefore, all of the potentially $O(r)$ boundary to boundary edges can be computed in $O(r \log{r})$ time by simply querying the data structure. Concerning self-loop edges, recall that each vertex is constant degree, so there are at most $O(1)$ incoming edges to any boundary vertex $v$. We can determine the minimum total slack self-loop, if any, by querying the data structure for each edge $(u,v)$ for the minimum slack path from $v$ to $u$, adding the slack of the edge $(u,v)$, and taking the minimum total slack result over all edges $(u,v)$. Therefore, the edges of a piece of $H$, along with their costs, can be computed in $O(r\log{r})$ time. Recall that the Monge range minimum data structures also have to be reconstructed in $O(\sqrt{r}\log^2{r})$ time, but this time is dominated by the time taken to compute the edges of $H$.

\paragraph{Improved Third Step}

After the second step, the algorithm has a compressed feasible matching with $O(\unmatchedrem)$ unmatched vertices remaining. Each of these remaining vertices can be matched one at a time using iterations of Hungarian search. However, in the planar setting, we can make use of existing planar shortest path data structures to execute Hungarian searches more efficiently. The procedure for this improved Hungarian search is described extensively in~\cite{soda-18}. Note that the definitions of feasibility in that paper differ slightly from those in this paper. However, the required modifications are few and insignificant, and therefore will not be discussed in detail here. Instead, we summarize the approach used in~\cite{soda-18}. For each iteration, our algorithm will do the following:

\begin{itemize}
\item Execute an iteration of FR-Dijkstra, a fast implementation of Dijkstra's shortest path algorithm by Fakcharoenphol and Rao, on the compressed residual graph $H$. The search will start at the free vertices of $B$ in $H$ and find the minimum slack augmenting path $P$ in $H$. For more details on FR-Dijkstra, see~\cite{fr_dijkstra_06, kaplan_monge_12}.
\item Adjust the dual weights $\ty{\cdot}$ to make the edge of $P$ admissible while still maintaining compressed feasibility.
\item Let an affected piece be one that contains an edge of $P$. Execute \sync\ on each affected piece.
\item Project $P$ to a corresponding augmenting path $\dir{P}$ in $G$.
\item Augment along $\dir{P}$.
\item Reconstruct the compressed graph $H$ in all affected pieces, using the \construct\ procedure. 
\item Rebuild all of the Monge range minimum query data structures~\cite{kaplan_monge_12} used by FR-Dijkstra.
\end{itemize}

\subsection{Improved Planar Matching Algorithm Analysis}
\label{subsec:planar-efficiency}

Next, we summarize the time complexities of the improved algorithm for planar matching. Observe that for planar graphs $m = O(n)$. The first step, as before, takes $O(m\sqrt{r})=O(n\sqrt{r})$ time. 

We bound the second step time complexities in an amortized sense. First, we address the times for the search. Whenever step (\ref{searchdualchange}) of \searchandswitch\ visits a vertex, its dual weight increases in magnitude by $\sqrt{r}$, unless the vertex leads to an augmenting path. Visitations not along an augmenting path can only happen $O(\beta / \sqrt{r})$ times per vertex of $H$ with $O(\beta n /r)$ visits in total. From Corollary \ref{cor:aplengths}, the length of all the augmenting paths in $H$ is $O(n/\sqrt{r} \log n)$, which is dominated by $O(\beta n /r)$. Each of the $O(\beta n /r)$ visits requires a query to the Monge range minimum query data structure, taking $O(\log{r})$ time for $O(\beta (n /r) \log n))$ time in total. Each visit may also require executing a \mongeraise\ update operation on all affected Monge bipartite groups.  We bound the total complexity of all \mongeraise\ operations next.

Each \mongeraise\ operation on a vertex $u$ requires repairing the lower envelope information for the $O(\log{r})$ nodes of the associated range tree that contain $u$. For any affected tree node, let $\alpha$ be the number of new breakpoints added. Then the complexity of repairing that tree node is $O(\alpha \log^2{r})$. Each \mongeraise\ operation can only reduce the number of breakpoints in the lower envelope by at most 1; this happens when the raised pseudo-line is no longer part of the lower envelope. Also, observe that the lower envelope for any tree node that represents $p$ rows can only contain $O(p)$ breakpoints. Therefore, after $k$ raise operations, the sum of all $\alpha$ values for any such tree node is $O(p + k)$. We bound separately the total times from the $p$ term and the $k$ term. During \switch, the Monge range minimum data structures in all bipartite groups of all affected pieces are rebuilt. Every time this happens, the $p$ term could be incurred again, contributing a time $O(p\log^2{r})$. Summing over all nodes in the tree for a Monge group with $z$ rows gives $O(z\log^3{r})$. Summing across all Monge groups in all $O(1)$ affected pieces gives $O(\sqrt{r} \log^4{r})$. This time is dominated by the $O(r\log{r})$ time necessary for reconstructing the edges of $H$. Therefore, we can tax the $O(p)$ term on the costs for \construct.

Next, we bound the time of \mongeraise\ operations due to the total number of \mongeraise\ operations $q$. Each such \mongeraise\ operation could affect $O(\log{r})$ tree nodes of each affected bipartite group. Each affected tree node contributes a cost $O(\log^2{r})$. Summing across all the $O(\log{r})$ affected tree nodes gives $O(\log^3{r})$. Summing across the $O(\log{r})$ Monge groups that are affected gives $O(\log^4{\sqrt{r}})$ per \mongeraise\ operation. Recall that the total number of raise operations during the second step is $q=O(\beta n /r)$. This gives a total time of $O(\beta (n /r)\log^4{\sqrt{r}})$ from the number of \mongeraise\ operations, which dominates the search costs.

Next, we give the time complexity of the operations based on path lengths in $H$ for the second step. From Corollary \ref{cor:aplengths}, we have that the total length of all augmenting paths, alternating paths, and alternating cycles in $H$ is $O(n/\sqrt{r} \log{n})$. Each of the edges of these paths could require a \sync\ procedure invocation, a projection, and a \construct\ invocation, each taking $O(r\log{r})$ time. Therefore, the total time taken from the operations associated with path lengths in $H$ is $O(n\sqrt{r} \log^2{n})$.

Finally, we bound the time taken for the third step. Each FR-Dijkstra search over $H$ takes $O(n/\sqrt{r} \log^2{n})$ time \cite{soda-18}. The third step executes $O(n / \beta)$ such searches because there are $O(n / \beta)$ unmatched vertices remaining after the second step. After each augmenting path $P$ in $H$ is found, a projection, \sync, and \construct\ may occur for each edge of $P$. Each of these operations take $O(r \log{n})$ time, and, from an argument nearly identical to that used for Corollary \ref{cor:aplengths}, the total length of all augmenting paths in $H$ during the third step is $O(n/\sqrt{r} \log n)$. Therefore, the total time taken for the third step is $O((n^2 \log^2{n} / (\beta \sqrt{r})) + n\sqrt{r} \log^2{n})$. Combining the time complexities for each of the three steps gives a total time of $O(\beta (n /r)\log^4{r} + n^2 \log^2{n} / (\beta \sqrt{r}) + n\sqrt{r} \log^2{n})$. Choosing $\beta = \lceil \sqrt{n} r^{1/4} / \log n \rceil$ and $r = n^{2/5} \log^{4/5}{n}$ gives a total time of $O(n^{6/5} \log^{12/5}{n})$ per scale. Therefore, the total time after all $O(\log{nC})$ scales is $\CTotalPlanar$.
\ignore{
For efficiently computing this data structure, the following property is useful:

\begin{lemma}
For any vertex v on the right, the set of vertices on the left that are matched to v form a single contiguous interval. 
\end{lemma}
\begin{proof}
Consider any four vertices $s, t \in X$ and $u, v \in Y$ where $s < t$ and $u < v$. We claim that it cannot happen that both $v$ is the match of $s$ and $u$ is the match of $t$. Assume for the sake of contradiction otherwise. Then we would have $c(s,v) < c(s,u)$ and $c(t,u) < c(t,v)$. This gives $c(s, u) + c(t, v) > c(s,v) + c(t,u)$. However, this contradicts the Monge property. Therefore, no two match edges can cross. Since all vertices on the left with no outgoing edges were removed, it follows that intervals must be contiguous.
\end{proof}
Using this property, the algorithm can store match information as a set of $|Y|$ disjoint intervals on the vertices of $X$. Given a sorted set of such intervals, the match of any vertex $x \in X$ can be identified by performing binary search over the intervals.

\paragraph{Range search tree}
For each bipartite group, the data structure will include a balanced binary range search tree $T$ on the vertices of $Y$, sorted by the indices of the vertices in the circular ordering. Each node of $T$ has a corresponding set of vertices $Y'$ consisting of the leaves of its subtree. Each node stores match information between $X$ and $Y'$ by using at most $|Y'|$ intervals. The range tree can be constructed beginning from the leaves of $T$. Constructing a leaf node is trivial since it consists of a single interval that contains all the vertices of $X$. Any internal tree node can compute its matching set by using the match information of its children. 
First observe that for any interval of vertices of a set $Y'$, there are $O(\log n)$ canonical subtree nodes such that the union of all their vertex sets corresponds to the interval $Y'$. Therefore, given accurate match information of all nodes in the subtree, the algorithm can, for any vertex $x \in X$ and any interval in $Y$, identify the match of $x$ by taking the minimum over the matches of $x$ in all $O(\log{n})$ canonical subsets.
Next, we describe how to compute the matches from $X$ to $Y'$ for an internal node of $T$ by using the match information of its children. Let $x_i$ be the $i$th vertex of $X$ ordered from the top. Begin with $x_1$ as the current vertex. Query, using the subtree information, for the match of $x_1$ in $Y'$. Let this vertex be $y_m$. Next, binary search for the largest index $i$ such that $x_i$ is matched to $y_m$. The interval of $y_m$ in $X$ is then $[x_1, x_i]$. Set $x_{i+1}$ as the current vertex and repeat the process of forming intervals until all vertices of $X$ are in an interval.

\paragraph{Complexity of construction}
For each node of the tree, at most $|Y'|$ intervals are formed. Each interval formed requires a binary search over the vertices of $X$. Each binary search comparison is evaluated by querying for the match of a vertex of $X$, which takes $O(\log{r})$ time. Therefore, the total time spent to construct any node of the tree with right set $|Y'|$ is $O(|Y'| \log^2{r})$. Summing over all nodes of the tree yields a complexity for any bipartite group $(X,Y)$ of $O(|Y| \log^3{r})$. Within a piece, each vertex belongs to $O(\log{r})$ bipartite groups. Therefore, the total complexity for constructing the data structure in all bipartite groups of a piece is $O(r \log^4 {r})$. \todo{May be able to use fractional cascading to reduce?}

\paragraph{Executing a query}
To identify the minimum slack outgoing edge from a vertex $v$ in $H$, it is sufficient to take the minimum among all bipartite groups $(X,Y)$ containing $x$ in $X$. Since the graph has constant degree, each vertex belongs to a constant number of pieces. Within each piece, a vertex can belong to $O(\log{r})$ groups. Within each group, the match of $x$ can be identified by simply querying the root node of the range tree for the match of $x$. This requires a binary search over the intervals of the root node, taking $O(\log{r})$ time. Therefore, the total time for finding the minimum slack outgoing edge from $x$ is $O(\log^2{r})$.

\paragraph{Updating dual weights}
Throughout the algorithm, the dual weights $\ty{\cdot}$ of vertices in $H$ may change, which affects the match information. We note that if the dual weight $\ty{v}$ of some boundary vertex $v$ changes, the match information remains correct for any bipartite group where $v$ is on the left. This is because all edges in the bipartite group are directed from left to right, and for any edge $(u,v)$, the value $c(u,v)$ in the data structure does not rely on the dual weight of $u$.
In the event that the dual weight change of a vertex $v$ is a decrease in magnitude, the algorithm will reconstruct the entire data structure for all $O(1)$ pieces containing the vertex. Observe that this only occurs in the \reduceslack\ and \reduce\ operations, which are only called during the \switch\ procedure. Therefore, the time taken time taken for all such reconstructions of the data structure can be taxed on the augmenting path lengths (the time taken is dominated by the reconstruction of $H$).

The second case is when the dual weight $\ty{v}$ increases in magnitude. This can happen frequently during the \searchandswitch\ procedure during the backtracking step. Under the assumption that the dual weights only increase in magnitude, we describe an algorithm for updating the data structure for a bipartite group $(X,Y)$ in amortized $O(\log r)$ time. Consider a tree $T$ of some bipartite group that contains $v$. Then $T$ has $O(\log r)$ nodes that contain $v$ in their subtrees, and these are the only nodes with invalidated match information. The algorithm will first update the match information for the affected leaf, and then update each affected internal node's match information by using the updated information of its children. We describe this update procedure for an internal node of the tree next.
Let $Y'$ be the subset of vertices of $Y$ represented by the internal tree node of interest. Observe that the increase in magnitude of the dual weight of $v$ only decreases the range of the interval of $v$ (possibly making it become empty). Therefore, the match information of vertices of $X$ outside of the interval of $v$ remain unaffected. Let $X^*$ be the set of vertices in the interval of $v$ prior to the dual weight change. Consider the nearest vertices above and below $v \in Y'$ that have non-empty intervals. Let these vertices be $y_i$, and $y_j$. \todo{should we describe how to find these values?} Then the vertices in the interval $[y_i, y_j]$ are the only ones whose interval information may change. Let the set of vertices in this range be $Y^*$. Then, the algorithm can update interval match information by executing the same procedure as described in the range search tree construction, but limited to the induced subgraph $X^* \cup Y^*$. This requires a number of binary search operations proportional to the number of intervals created in $Y'$. If the number of vertices in $Y'$ with non-empty intervals after the update is $\alpha$, then the update complexity for this node of the tree is $O(\alpha \log r)$.

\paragraph{Amortized update complexity}
We next bound the total time spent updating the match data structure for a single bipartite group. Observe that for any set of vertices $Y'$, the maximum number of intervals is $|Y'|$. Each update operation can remove at most one interval (the interval of the vertex that experienced a dual weight change). Therefore, if the number of updates is $k$, then the total number of intervals created is at most $k + |Y'|$. Each such interval creation takes $O(\log r)$ time. First, we bound the time taken for the second term, that is, $|Y'|$. Summing over all $|Y'|$ over all nodes in the tree yields a total of $O(|Y| \log r)$ interval creations. For any piece, since each vertex belongs to $O(\log r)$ bipartite groups, the sum of all $|Y|$ over all bipartite groups in the piece is $O(\sqrt{r} \log r)$. The cost for each such interval creation is $O(\log r)$, giving a total time of $O(\sqrt{r} \log^2{r})$ for the second term. This cost could be incurred over time after every reconstruction of the bipartite groups. However, reconstruction only happens when a vertex along an augmenting path experiences a dual weight magnitude reduction. Therefore, the total cost for the second term can be taxed on the lengths of the paths sent as input to \switch\ (this additional tax is dominated by the time spent in the \construct\ procedure). Next, we bound the $k$ term. An update to the bipartite group only occurs when a vertex is visited during \searchandswitch\ and experiences a dual weight magnitude increase. Note that dual weight changes also occur in \switch\, but in this case, the entire data structure is reconstructed anyway. The total number of vertex visitations during Step 2 is $O((n / \sqrt{r}) (\beta / \sqrt{r}))$, which also gives the total number of interval creations from the first term. Each such interval creation incurs a cost of $O(\log r)$, for a total cost of $O((n \beta \log r) / r)$.

\paragraph{Planar Matching Efficiency}

}
}

\appendix

\section{Discussion of Correctness for \reduce\ and \reduceslack}
\label{A:reduce}
The following discussion demonstrates that any call to \reduce\ or \reduceslack\ that satisfies the preconditions in those procedures' definitions will not violate compressed feasibility. For both procedures, the discussion argues conditions (a)--(e) of compressed feasibility hold, establishing Lemma \ref{lem:reduce}.

The \reduce\ procedure sets the dual weight $\ty{\cdot}$  so that conditions (b) and (e) of compressed feasibility hold based on its preconditions. No vertex of $A_H$ has a change in dual weight  during \reduce\ and so condition (a)\ holds. For (d), observe that all edges of $b_j^{\mathcal{A}}$ (resp. $b_j^{\mathcal{I}}$) are outgoing. A reduction in the dual weight $\ty{\cdot}$ will only increase the slack on every edge going out of this vertex, and so  the edges in $E_j^H$ remain $H$-feasible. Similarly, every edge incident on any $v  \in (V_j \setminus \boundary_j) \cap \activeH$ (resp. $\inactiveH$) is not in the matching and therefore a reduction of dual weight of $v$ will only increase the slack on the edge, implying (c).

\reduceslack\ does not violate (c) for any piece that $v$ participates in. This is because all edges except $(u,v)$ are edges that are not in the matching, and so, a reduction  of the dual weight only increases the slack on the other edges, and condition of equation \eqref{eq:feas5} holds. From the definition of slack for matched edges, it follows that the new dual weight of $y_j(v)-s(u,v)$, is   non-negative and the slack of $s(u,v)$ after the dual update is $0$.  Therefore the condition of equation \eqref{eq:feas6} is satisfied, and (c) holds. For conditions (a), (b), (d) and (e), if $v$ is an internal vertex, then $\ty{\cdot}$ values are not modified by the procedure and so (a), (b), (d), and (e) hold trivially. Otherwise, if $v $ is a boundary vertex, since $v \in B$, (a) holds trivially.  and since the updated dual weight  $\ty{v}$ is non-negative,  (b) holds.  $\ty{v}$ and $y_j(v)$ are updated so that (e) holds. Finally, for condition (d), we need to show $H$-feasibility of edges going out of $v$, we address the case where $v \in V_H$. First, consider any edge $(u', v) \in E_H$ incoming to $v$. The projection of $(u',v)$  must contain the edge $(u,v)$, since $(u,v)$ is the only edge in the residual graph that is directed into $v$. Therefore, the slack $s_H(u',v) \ge s(u,v)$. The procedure decreases the dual weight $\ty{v}$. This reduces the slack on $(u', v)$ by at most $s(u,v)$ implying that the slack on the edge $(u',v)$ remains non-negative. Every other edge of $H$ incident on $v$ is directed away from $v$, so a reduction of dual weight only increases the slack on these edges, implying (d).

\section{Details of the \construct\ Procedure}
\label{A:construct}

This section describes in further detail how to implement the \construct\ procedure defined in Section \ref{subsec:rtoplanar}. The input to the \construct\ procedure is an $\R$-feasible  matching $M_j$ and the dual weights $y_j(\cdot)$.
 Let 
$\R_{j}^{'}$
 be the graph of $\R_j$, with all edge weights converted to their slacks according to the current matching $M_j$ and the current dual assignment $y_j(\cdot)$. We note that all edges in $\R_{j}^{'}$ are non-negative. Since the dual assignment is feasible with respect to $M_j$, we know by Lemma \ref{lem:poswt} that the path of minimum net-cost between two vertices is also the path of minimum total slack in $\R_{j}^{'}$. Therefore, it is sufficient to compute the shortest path lengths in $\R_{j}^{'}$ and use \eqref{eq:slack2-1}  to compute the minimum net-cost path in constant time.

Recall that there are four types of edges in $E_j^H$. To compute the boundary-to-boundary edges $(u,v) \in E_j^H$, for each $u\in \boundary_j$, we execute a Dijkstra search over $\R_j'$ from $u$ to obtain the length of the shortest slack path from $u$ to every other boundary node.  To compute an edge from a boundary node $u$ to itself,  for each such boundary vertex $u \in \boundary_j$, create a duplicate vertex $u'$ and add an edge from (resp. to)  $u'$ to (resp. from) any other vertex $v\in V_j   $ if and only if there is an edge from (resp. to)  $u$ to (resp. from) $v$ in $E_j$ with the same cost, i.e., slack $s(u,v)$. Execute Dijkstra's algorithm from $u$ over $\R_j'$ to find the distance to $u'$. 

For piece $\R_j$, we describe how to compute the edges from a vertex $b_j^A$ to a boundary node or the free internal vertex $a_j$. A similar argument also applies for computing edges from $b_j^I$. We  add  a new vertex $s$ to $\R_j'$ and connect them to every free internal vertex $v \in (V_j \setminus \boundary_j)\cap B_F^A$. The cost of the newly added edges is set to zero.   Then, we execute Dijkstra's algorithm from $s$. For every boundary node $v \in \boundary_j$, we add an edge from $b_j$ to $v$ if there is a path between $s$ and $v$   and compute the cost $\phi(\dir{P}_{b_{j},v, j})$ by using Lemma~\ref{lem:poswt}. We add an edge between $b_j$ and $a_j$ if there is a path from $s$ to some $v \in(V_j \setminus \boundary_j)\cap A_F$ and among all such vertices which have a path from $b_j$, use the one that has the smallest cost path from $s$.  Using Lemma~\ref{lem:poswt}, we can obtain the weight $\phi(\dir{P}_{b_j,a_j,j})$. We can use an identical algorithm to compute edges incident on $a_j$ by applying Dijkstra's algorithm on the graph $\R_j''$ where $\R_j''$ is a graph identical to $\R_j'$ except that every edge is in the reverse direction. Together, these three searches compute the remaining edges of $E_j^H$. The total time taken to compute all the edges of $E_j^H$ is $\bruteforceconstruct$ time because $O(\sqrt{r})$ Dijkstra searches over $\R_j'$ are executed.

From this discussion, Lemma \ref{lem:construct} follows. By summing over the sizes of all pieces of the $r$-clustering, we get Corollary \ref{cor:preprocess}.

\section{Proofs for Properties of \sync}
\label{A:sync}
This section provides proofs for properties (1) and (2) of the \sync\ procedure given in Section \ref{subsec:planartor}. It also presents Lemma \ref{lem:syncprop1} which is used in the proof of Lemma \ref{lem:sync-zero-slack}. Finally, we give the proof for Lemma \ref{lem:sync-zero-slack}.

\begin{lemma}
\label{synclemma}
At the end of the \sync\ procedure, both (1) and (2) hold.  
\end{lemma}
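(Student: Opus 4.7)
The plan is to establish (2) first and then use it together with a shortest-path triangle inequality in $\R_j'$ to obtain (1). Let $\kappa$, $\kappa_v$, and $\ell_v$ be as in the description of \sync. For notational convenience, let $\ell'_v = \min\{\ell_v, \kappa\}$, so the update performed by \sync\ is exactly $y_j(v) \leftarrow y_j(v) + \lambda_v(\kappa - \ell'_v)$ for every $v\in V_j$. Note that $\kappa - \ell'_v \ge 0$ always, and by compressed feasibility condition~(e), $|\ty{v}| \ge |y_j(v)|$ so $\kappa_v \ge 0$, which also gives the bound $\ell'_v \le \kappa$.

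For~(2), fix $v \in \boundary_j \cup (V_j \cap (A_F\cup B_F))$. Since the edge $s \to v$ has weight $\kappa-\kappa_v \le \kappa$, we immediately get $\ell_v \le \kappa - \kappa_v$, so $\ell_v=\ell'_v$. For the matching lower bound, any competing $s$-to-$v$ path in $\R_j'$ starts with an edge $s \to u$ for some $u \in \boundary_j \cup (V_j \cap (A_F\cup B_F))$ (these are the only neighbors of $s$), and by Lemma~\ref{lem:poswt} the remaining $u$-to-$v$ portion has slack at least that of the projection $\dir{P}_{u,v,j}$. By Lemma~\ref{lem:const1} that slack equals $\phi(\dir{P}_{u,v,j}) - |y_j(u)| + |y_j(v)|$, which by the $H$-feasibility bound~\eqref{eq:uvpathsum} is at least $|\ty{u}| - |\ty{v}| - |y_j(u)| + |y_j(v)| = \kappa_u - \kappa_v$. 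Adding the leading edge weight $\kappa-\kappa_u$ gives a total of $\kappa-\kappa_v$, so $\ell_v = \kappa-\kappa_v$ exactly. Then $y_j(v) + \lambda_v(\kappa-\ell_v) = y_j(v) + \lambda_v\kappa_v$, and because $y_j(v)$ and $\ty{v}$ have the same sign $\lambda_v$ by compressed feasibility conditions~(a) and~(b), this equals $\ty{v}$, establishing~(2).

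For~(1), consider any edge $(u,v) \in E_j$ with $u\in A$, $v\in B$. The change in $y_j(u)+y_j(v)$ is $-(\kappa-\ell'_u) + (\kappa - \ell'_v) = \ell'_u - \ell'_v$, so the new slack equals the old slack plus $\ell'_v - \ell'_u$ when $(u,v)\notin M$ (oriented $v\to u$ in $\R_j'$) and plus $\ell'_u - \ell'_v$ when $(u,v)\in M$ (oriented $u\to v$). In either case, the required inequality reduces to a triangle inequality in $\R_j'$ between the endpoints of the edge (with the pre-sync slack as its weight). For the unclamped case both $\ell'$ values equal the true $\ell$ and triangle inequality is immediate from Dijkstra; when one or both endpoints are clamped at $\kappa$, a short case analysis (in each of the four combinations of clamp/unclamp for the two endpoints) confirms that the clamp preserves the needed inequality, using $\ell'_v \le \kappa$ and the original triangle inequality $\ell_x \le \ell_y + s^{\text{old}}(u,v)$ where applicable. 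Thus every edge of $E_j$ retains nonnegative slack in its respective ($\R$-)feasibility sense, which is~(1).

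The main obstacle I expect is the mixed clamp case in~(1), where one endpoint has $\ell > \kappa$ (so no update) and the other has $\ell \le \kappa$ (so it does update): one must check that the asymmetric treatment still preserves the feasibility inequality. A minor second nuisance is keeping the signs straight in the change to $y_j(u)+y_j(v)$ for the matching/non-matching orientations, but this is just bookkeeping. Note also that throughout the argument we implicitly use that $\ell_v$ is finite for every $v\in V_j$, which holds because $s$ is connected to at least one vertex per connected component of $\R_j$ that contains a boundary/free vertex; for the remaining vertices (if any), $\ell_v = \infty$ and they are not modified, so feasibility is trivially preserved.
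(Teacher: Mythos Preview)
Your proposal is correct and follows essentially the same approach as the paper's proof: both use the Dijkstra triangle inequality $\ell_v \le \ell_u + s^*(u,v)$ to establish $\R$-feasibility (your part~(1)), and both derive~(2) from $H$-feasibility combined with Lemma~\ref{lem:const1}. The only notable differences are cosmetic: you prove~(2) directly rather than by contradiction, and you are more explicit about the clamped/unclamped case split in~(1), a detail the paper's proof glosses over (but which, as you correctly note, works out in all four cases).
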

\begin{proof} 
Let us denote the dual weights before and after applying the \sync\ procedure as $\yjs(\cdot)$ and $y_j(\cdot)$. We also denote the slack on any edge $(u,v)$ with respect to the original dual weights $\yjs(\cdot)$ as $\sjs(u,v)$.
 To prove (1), we need to show that the  matching $M_j$ along with the new dual weights $y_j(\cdot)$ are $\R$-feasible. We first note that the dual weights change only if  $\ell_v \leq \kappa$ and the change is by $\lambda_v(\kappa - \ell_v)$. This change is positive for vertices of $B$ and negative for vertices of $A$. Therefore, the magnitude of dual weights does not decrease from the procedure.  We show \eqref{eq:feas5} and \eqref{eq:feas6} next. For any edge $(u,v)$ directed from $u$ to $v$, we know from the properties of shortest paths that $\ell_v \le \ell_u +\sjs(u,v)$, or,
\begin{eqnarray} 
\ell_v - \ell_u \le \sjs(u,v), \nonumber\\
(\kappa - \ell_u) - (\kappa - \ell_v) \le \sjs(u,v).
\end{eqnarray}  
If $(u,v) \in M_j$, then $u \in A$, $v \in B$ and $\sjs(u,v) = \yjs(u) + \yjs(v) -\dist(u,v) + \delta_{uv}$. We can rewrite the above equation as 
\begin{eqnarray*}
(\kappa - \ell_u) - (\kappa - \ell_v) &\le& \yjs(u) + \yjs(v) -\dist(u,v) + \delta_{uv},\\
(\yjs(u) + \lambda_u (\kappa - \ell_u))+(\yjs(v)+ \lambda_v(\kappa - \ell_v)) &\ge& \dist(u,v) - \delta_{uv},\\
y_j(u) + y_j(v) &\ge& \dist(u,v) - \delta_{uv},
\end{eqnarray*}
satisfying \eqref{eq:feas6}. 
If the edge $(u,v)$ directed from $u$ to $v$ is not  in the matching, then $u \in B$, $v \in A$,  and $\sjs(u,v) = \dist(u,v) + \delta_{uv}-\yjs(u) - \yjs(v) $ Therefore,
\begin{eqnarray*}
(\kappa - \ell_u) - (\kappa - \ell_v) &\le& \dist(u,v) + \delta_{uv}-\yjs(u) - \yjs(v), \\
(\yjs(u) + \lambda_u (\kappa - \ell_u))+(\yjs(v)+ \lambda_v(\kappa - \ell_v)) &\le& \dist(u,v) + \delta_{uv},\\
y_j(u) + y_j(v) &\le& \dist(u,v) + \delta_{uv},
\end{eqnarray*}
implying that the edge $(u,v)$ satisfies \eqref{eq:feas5} implying $M_j, y_j(\cdot)$ is $\R$-feasible.

To prove (2), we need to show that for any vertex $v \in \boundary_j \cup ((V_j\setminus \boundary_j)\cap (A_{F}\cup B_F ))$, the new shortest path from $s$ to $v$ in $\R_j'$ is the direct edge from $s$ to $v$. If we show this, then $\ell_v = \kappa - \kappa_v = \kappa - |\ty{v}| + |\yjs(v)|$ or $\lambda_v \ty{v} -\lambda_v \yjs{(v)} = \kappa - \ell_v$. This gives $\ty{v} = \yjs(v) + \lambda_v(\kappa- \ell_v) = y_j(v) $ because $\lambda_v \in \{1,-1\}$.

Therefore, we will show that the shortest path from $s$ to $v$ is no less than the cost of the edge from $s$ to $v$. For the sake of contradiction, let the shortest path, $\dir{P}_{s,v}$ from $s$ to $v$  be strictly less than the cost of the edge $(s,v)$. Let $u$ be the first vertex that appears on $\dir{P}_{s,v}$ after $s$. By the optimal substructure property of shortest paths, $\dir{P}_{s,v}$ with the vertex $s$ removed forms a shortest path from $u$ to $v$. Since all edge costs are the original slacks $\sjs(\cdot)$, this path was also the shortest slack path, $\project{u}{v}{j}$, prior to \sync. We know that the length of $\dir{P}_{s,v}$ is $\kappa -|\ty{u}|+ |\yjs(u)| + \sum_{(a,b) \in \project{u}{v}{j}}\sjs(a,b)$. Since the length of $\dir{P}_{s,v}$ is smaller than the cost of the direct edge from $s$ to $v$, we have\\ 
\begin{eqnarray*}
 \kappa -|\ty{v}| +|\yjs(v)| &>& \kappa -|\ty{u}|+ |\yjs(u)| + \sum_{(a,b) \in \project{u}{v}{j}}\sjs(a,b),\\
|\ty{u}|-|\ty{v}| &>&  |\yjs(u)|-|\yjs(v)| +\sum_{(a,b) \in \project{u}{v}{j}}\sjs(a,b)\\
&=& \phi(\project{u}{v}{j}).  
\end{eqnarray*}  
The last equality holds because, from Lemma \ref{lem:const1} and the fact that $M_j,\yjs(\cdot)$ was an $\R$-feasible matching,  $|\yjs(u)|-|\yjs(v)|+ \sum_{(a,b) \in \project{u}{v}{j}}\sjs(a,b)$ will be equal to $\phi(\project{u}{v}{j})$. The inequality $|\ty{u}|-|\ty{v}| > \phi(\project{u}{v}{j})$ contradicts the $H$-feasibility of the input to \sync.
\end{proof}

\begin{lemma}
\label{lem:syncprop1}
Suppose we are given a piece $\R_j$ and a dual weight $y_j(\cdot)$ for every vertex in $V_j$ and $\ty{\cdot}$ for each vertex in $V_j^H$. Upon applying the \sync\ procedure, let $v$ be any vertex in $V_j$  for which $\ell_v \le \kappa$. Let $P = \langle s= u_0, u_1,\ldots, u_t=v\rangle$ be the shortest path from $s$ to $v$ in $\R_j'$. Then, after the \sync\ procedure, the slack on every edge $(u_q,u_{q+1})$ with respect to the updated dual weights $y_j(\cdot)$ for $1\le q< t$ is zero.  
\end{lemma}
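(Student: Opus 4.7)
} The plan is to carry out a direct algebraic verification using the dual-update rule of \sync\ together with the optimal-substructure property of shortest paths. First, I would note that since $P=\langle s=u_0,u_1,\ldots,u_t=v\rangle$ is a shortest path from $s$ to $v$ in $\R_j'$ and all edges of $\R_j'$ have non-negative weight (the slacks $\sjs(\cdot)$ are non-negative by $\R$-feasibility of the input, and the artificial edges out of $s$ have weight $\kappa-\kappa_w\ge 0$ since $\kappa=\max_w \kappa_w$), the prefix $\langle s,u_1,\ldots,u_q\rangle$ is itself a shortest path from $s$ to $u_q$ for each $1\le q\le t$. Consequently $\ell_{u_q}\le\ell_v\le\kappa$ for every $1\le q\le t$, and
\begin{equation*}
\ell_{u_{q+1}}-\ell_{u_q}=\sjs(u_q,u_{q+1})\qquad\text{for every }0\le q<t.
\end{equation*}
In particular, the \sync\ procedure does modify the dual weight of every intermediate $u_q$, setting $y_j(u_q)=\yjs(u_q)+\lambda_{u_q}(\kappa-\ell_{u_q})$.

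Next, I would fix an edge $(u_q,u_{q+1})$ on $P$ with $1\le q<t$ and split into two cases based on whether the edge belongs to $M_j$. In the residual graph, every non-matching edge is directed from a vertex of $B$ to a vertex of $A$, so $\lambda_{u_q}=1$ and $\lambda_{u_{q+1}}=-1$; every matching edge is directed from $A$ to $B$, so $\lambda_{u_q}=-1$ and $\lambda_{u_{q+1}}=1$. Substituting the update rule into the slack definitions \eqref{eq:slack1}--\eqref{eq:slack2}, a short computation yields in both cases
\begin{equation*}
s(u_q,u_{q+1})\;=\;\sjs(u_q,u_{q+1})\;+\;\ell_{u_q}\;-\;\ell_{u_{q+1}},
\end{equation*}
because in both sign configurations the two $\kappa$ terms cancel and the two $\ell$ terms enter with opposite signs matching $\lambda_{u_q}$ and $-\lambda_{u_{q+1}}$. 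Combining this with the shortest-path identity $\ell_{u_{q+1}}-\ell_{u_q}=\sjs(u_q,u_{q+1})$ gives $s(u_q,u_{q+1})=0$, as required.

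The only point requiring care is to get the signs right in the case analysis for the $\lambda$'s in terms of the residual orientation of the edge; once that bookkeeping is fixed, the argument is a one-line cancellation. The excluded first edge $(s,u_1)$ poses no issue since the lemma's conclusion explicitly starts at $q=1$, and $s$ carries no dual weight to reason about.
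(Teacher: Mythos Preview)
Your proposal is correct and follows essentially the same approach as the paper's proof: both use optimal substructure of shortest paths to obtain $\ell_{u_{q+1}}-\ell_{u_q}=\sjs(u_q,u_{q+1})$ and $\ell_{u_q}\le\kappa$, then perform the same two-case algebraic check (matching vs.\ non-matching edge, with the corresponding sign conventions for $\lambda$) to conclude $s(u_q,u_{q+1})=\sjs(u_q,u_{q+1})+\ell_{u_q}-\ell_{u_{q+1}}=0$.
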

\begin{proof}
 As in the previous proof, we denote the dual weights prior to the execution of the \sync\ procedure by $\yjs(\cdot)$ and the dual weights after by $y_j(\cdot)$. Also, let $\sjs(\cdot,\cdot)$ denote the slack of an edge in $G$ with respect to the dual weights $\yjs(\cdot)$. Since, $P$ is the shortest path from $s$ to $v$\ , for any directed edge on this path from $u_{q}$ to $u_{q+1}$, 

\begin{equation}
\ell_{u_{q+1}} = \ell_{u_{q}} + \sjs(u_q,u_{q+1}). \label{eq:syncprop1}
\end{equation} 

\noindent The shortest path cost from $s$ to $v$, $\ell_v$ is at most $\kappa$ and so, for any vertex $u_{q}$ on the shortest path to $v$, the shortest path to $\ell_{u_q}$ is at most $\kappa$. \sync\ sets dual weights such that $y_j(u_{q}) = \yjs(u_{q}) + \lambda_{u_{q}} (\kappa - \ell_{u_q})$. Therefore, for any edge $(u_q, u_{q+1})$ on $P$ we have

\begin{eqnarray}
y_j(u_{q+1}) = \yjs(u_{q+1}) + \lambda_{u_{q+1}}(\kappa - \ell_{u_{q+1}}),\label{eq:syncprop2}\\
y_j(u_{q}) = \yjs(u_{q}) + \lambda_{u_{q+1}}(\kappa - \ell_{u_{q}}). \label{eq:syncprop3}
\end{eqnarray}

We consider the cases where $(u_q, u_{q+1}) \in M$, and $(u_q, u_{q+1}) \notin M$. First, we consider the case where $(u_q, u_{q+1}) \in M$. Matching edges are directed from a vertex of $A$ to a vertex of $B$, and so, $u_q \in A$ and $u_{q+1} \in B$. By the definition of slack for matching edges, we have

\begin{eqnarray*}
s(u_q, u_{q+1}) &=& y_j(u_{q}) + y_j(u_{q+1}) - c(u_q, u_{q+1}) + \delta_{u_q u_{q+1}} \\
&=& \yjs(u_{q}) + \yjs(u_{q+1}) - c(u_q, u_{q+1}) + \delta_{u_q u_{q+1}} + \lambda_{u_q} (\kappa - \ell_{u_q}) + \lambda_{u_{q+1}} (\kappa - \ell_{u_{q+1}})\\
&=& \sjs(u_q, u_{q+1}) - (\kappa - \ell_{u_q}) + (\kappa - \ell_{u_{q+1}})\\
&=& \sjs(u_q, u_{q+1}) + \ell_{u_q} - \ell_{u_{q+1}} = 0.
\end{eqnarray*}
The last two equations follow from \eqref{eq:syncprop1} and the fact that $\lambda_{u_q} = -1$ and $\lambda_{u_{q+1}} = 1$.

Next, we consider the case where $(u_q, u_{q+1}) \notin M$. Edges that are not in the matching are directed from a vertex of $B$ to a vertex of $A$, and so, $u_q \in B$ and $u_{q+1} \in A$. By the definition of slack for  edges that are not in the matching, we have

\begin{eqnarray*}
s(u_q, u_{q+1}) &=& c(u_q, u_{q+1}) + \delta_{u_q u_{q+1}} - y_j(u_q) - y_j(u_{q+1})   \\
&=& c(u_q, u_{q+1}) + \delta_{u_q u_{q+1}} - \yjs(u_{q+1}) - \yjs(u_q) - \lambda_{u_q} (\kappa - \ell_{u_q}) - \lambda_{u_{q+1}} (\kappa - \ell_{u_{q+1}})\\
&=& \sjs(u_q, u_{q+1}) - (\kappa - \ell_{u_q}) + (\kappa - \ell_{u_{q+1}})\\
&=& \sjs(u_q, u_{q+1}) + \ell_{u_q} - \ell_{u_{q+1}} = 0.
\end{eqnarray*}
The last two equations follow from \eqref{eq:syncprop1} and the fact that $\lambda_{u_q} = 1$ and $\lambda_{u_{q+1}} = -1$.

\end{proof}

Next, using Lemma \ref{lem:syncprop1}, we give a proof for Lemma \ref{lem:sync-zero-slack}. We first restate verbatim the claim of Lemma \ref{lem:sync-zero-slack}.\\

\begin{lemma} Consider a compressed feasible matching with dual weights $\ty{\cdot}$ assigned to every vertex of $V_H$. For any piece $\R_j$ and any vertex $v \in V_j$, let $\yjs(v)$ denote the dual weight prior to executing \sync, and for any edge $(u,v) \in E_j$, let $\sjs(u, v)$ be the slack prior to executing \sync. Let   $y_j(\cdot)$ denote the dual weights of $V_j$ after this execution.  For any edge $(u,v) \in E_j^H$ with a projection $\dir{P}_{u,v,j}=\langle u=u_{0},u_1,\ldots, u_t, u_{t+1}=v\rangle$, suppose $|\ty{u}|-|y_j^*(u)| \ge \sum_{q = 0}^{t} \sjs(u_q, u_{q+1})$.  Let  $\dir{P}_{s,u_{t},j} $ be any shortest path from $s$ to $u_t$ in $\R_j'$ . Then, 

\begin{enumerate}[(i)]  \item  If there exists a shortest path $\dir{P}_{s,u_t,j}$ in $\R_j'$ where $u$ is the second vertex on this path, then after the execution of \sync\ procedure,
 for every $1 \le i\le t-1$, $s(u_{i},u_{i+1})=0$  and $s(u_{t},v)\le |\ty{v}|-|\yjs(v)|$, 
\item Otherwise, there is no shortest path $\dir{P}_{s,u_t,j}$  in $\R_j'$ with $u$ as its second vertex. Consider $u^*$ to be the second vertex of some $\dir{P}_{s,u_{t},j}$ and $u^* \neq u$. Then,   $u^*\in (\boundary_j \cup (V_j \cap (A_F\cup B_F))$, 
 and   $|\ty{u^*}|- |\yjs(u^*)| > \sum_{(u', v') \in \dir{P}_{u^*, v, j}} \sjs(u', v')$. 
\end{enumerate}
\end{lemma}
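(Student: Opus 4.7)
\textbf{Plan for proving Lemma~\ref{lem:sync-zero-slack}.} The high-level plan is to interpret the hypothesis $\kappa_u := |\ty{u}|-|\yjs(u)| \ge \sum_{q=0}^t \sjs(u_q,u_{q+1})$ as saying that the ``candidate'' path $\langle s,u,u_1,\ldots,u_t\rangle$ in $\R_j'$ has cost $\kappa - \kappa_u + \sum_{q=0}^{t-1}\sjs(u_q,u_{q+1}) \le \kappa - \sjs(u_t,v) \le \kappa$ (the last $-\sjs(u_t,v)$ coming from separating off the final edge of $\dir{P}_{u,v,j}$). Hence $\ell_{u_t} \le \kappa$ and $y_j(u_t)$ is updated by \sync. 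I will also repeatedly use Lemma~\ref{synclemma} to conclude $y_j(v)=\ty{v}$ for $v\in \boundary_j\cup(V_j\cap (A_F\cup B_F))$, which in particular holds for the endpoint $v$ (a boundary vertex by virtue of $(u,v)\in E_j^H$).

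\textbf{Case (i).} Suppose $u$ is the second vertex of some shortest $\dir{P}_{s,u_t,j}$. Then, applying Lemma~\ref{lem:syncprop1} to this shortest path, every interior edge of $\langle s,u,u_1,\ldots,u_t\rangle$ attains zero slack after \sync, which immediately gives $s(u_i,u_{i+1})=0$ for $1\le i\le t-1$. For the last edge $(u_t,v)$ I will compute the post-sync slack explicitly. Using the formulas of \sync\ together with $\ell_{u_t}=\kappa-\kappa_u+\sum_{q=0}^{t-1}\sjs(u_q,u_{q+1})$, a short calculation yields $|y_j(u_t)|-|\yjs(u_t)|=\kappa_u-\sum_{q=0}^{t-1}\sjs(u_q,u_{q+1})$. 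Whether $(u_t,v)\in M$ or not, the change in slack is the same:
\begin{equation*}
s(u_t,v)=\sjs(u_t,v)-\bigl(|y_j(u_t)|-|\yjs(u_t)|\bigr)+\bigl(|y_j(v)|-|\yjs(v)|\bigr).
\end{equation*}
Substituting the expression above and using $y_j(v)=\ty{v}$ gives $s(u_t,v)=\sum_{q=0}^{t}\sjs(u_q,u_{q+1})-\kappa_u+(|\ty{v}|-|\yjs(v)|)$, which by the hypothesis is at most $|\ty{v}|-|\yjs(v)|$, as required.

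\textbf{Case (ii).} Suppose no shortest $\dir{P}_{s,u_t,j}$ uses $u$ as its second vertex, and let $u^\ast\neq u$ be the second vertex of some shortest path. Since only vertices in $\boundary_j\cup(V_j\cap(A_F\cup B_F))$ receive a direct edge from $s$, we automatically get $u^\ast$ in this set. The argument is then a strict triangle-inequality comparison in $\R_j'$: the $u^\ast$-path has cost $\kappa-\kappa_{u^\ast}+\mathrm{slack}(u^\ast{\to}u_t)$ and is strictly cheaper than the $u$-path of cost at most $\kappa-\kappa_u+\sum_{q=0}^{t-1}\sjs(u_q,u_{q+1})$. Rearranging and using the hypothesis $\kappa_u\ge\sum_{q=0}^{t-1}\sjs(u_q,u_{q+1})+\sjs(u_t,v)$ yields $\kappa_{u^\ast}>\mathrm{slack}(u^\ast{\to}u_t)+\sjs(u_t,v)\ge \mathrm{slack}(u^\ast{\to}v)$. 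Finally, Lemma~\ref{lem:poswt} identifies $\mathrm{slack}(u^\ast{\to}v)$ with $\sum_{(u',v')\in\dir{P}_{u^\ast,v,j}}\sjs(u',v')$, giving the claimed strict inequality $|\ty{u^\ast}|-|\yjs(u^\ast)|>\sum_{(u',v')\in\dir{P}_{u^\ast,v,j}}\sjs(u',v')$.

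\textbf{Main obstacle.} The routine part is the strict-comparison argument of case (ii); the subtler part of case (i) is the computation of $s(u_t,v)$, because $v$ is treated by the Dijkstra-from-$s$ search in its own way (it receives a direct edge $(s,v)$, so its dual weight is pinned to $\ty{v}$ regardless of whether the path through $u$ reaches it), whereas the interior vertex $u_t$ only inherits the dual update corresponding to $\ell_{u_t}$. Keeping the bookkeeping between these two conventions straight -- in particular verifying that the slack-change identity is sign-symmetric between matching and non-matching edges -- is where one must be careful, but once that identity is in hand, both conclusions fall out by a direct substitution.
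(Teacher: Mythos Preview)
Your proposal is correct and follows essentially the same approach as the paper. For case~(i), both you and the paper invoke Lemma~\ref{lem:syncprop1} on the shortest path $\langle s,u,u_1,\ldots,u_t\rangle$ (after noting, via optimal substructure of $\dir{P}_{u,v,j}$, that this prefix \emph{is} a shortest $s\to u_t$ path), and then compute $s(u_t,v)$ via the slack-change identity together with $y_j(v)=\ty{v}$. For case~(ii), your direct comparison of shortest-path lengths in $\R_j'$ is a minor streamlining of the paper's argument, which instead applies Lemma~\ref{lem:syncprop1} to the $u^\ast$-path, expresses both sides in terms of net-costs via \eqref{slackcost}, and subtracts; the strict inequality in both versions ultimately comes from the same source, namely that the $u$-path is \emph{not} shortest so inequality~\eqref{eq:syncprop02} at $i=t$ is strict. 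One small imprecision: you describe $v$ as ``a boundary vertex by virtue of $(u,v)\in E_j^H$,'' but $v$ could also be the free internal vertex $a_j$; however, the needed identity $y_j(v)=\ty{v}$ still holds in that case by compressed feasibility conditions~(a) and~(e).
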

\begin{proof}
Let $\yjs(\cdot)$ and $y_j(\cdot)$ be the dual weights before and after   the execution of the \sync\ procedure.  Also, let $\sjs(\cdot,\cdot)$ denote the slack of an edge in $G$ with respect to the dual weights $\yjs(\cdot)$. Note that the dual weights $\ty{\cdot}$ do not change from the execution of the \sync\ procedure. First, we establish that for every vertex $u_i$ along $\project{u}{v}{j}$, $\ell_{u_i} \leq \kappa$.
By our assumption,
\begin{equation}
\label{eq:syncprop0}
\kappa_u = |\ty{u}| - |y_j(u)| \geq \sum_{q = 0}^{t} \sjs(u_q, u_{q+1}).
\end{equation}

For any $i$, such that $0 \leq i \leq t+1$, consider the cost of the path $\langle s, u_0, u_1, ..., u_i \rangle$. From \eqref{eq:syncprop0}, the cost of the edge $(s, u_0) = \kappa - \kappa_{u} \le  \kappa - \sum_{q = 0}^{t} \sjs(u_q, u_{q+1})$, and for any $0 \leq q < i$, the cost of the edge $(u_q, u_{q+1}) = \sjs(u_q, u_{q+1})$. Therefore, the cost of the path $\langle s, u_0, u_1, ..., u_i \rangle$ is at most 
\begin{equation}
\label{eq:syncprop00}
\ell_{u_i} \leq \kappa - \sum_{q = 0}^{t} \sjs(u_q, u_{q+1}) + \sum_{q = 0}^{i-1} \sjs(u_q, u_{q+1}) = \kappa - \sum_{q = i}^{t} \sjs(u_q, u_{q+1}).  
\end{equation}
This implies $\ell_{u_i} \leq \kappa$, and the dual weight of $u_i$ is updated by the \sync\ procedure.

From \eqref{eq:syncprop00}, $\ell_{u_{i}} \leq \kappa - \sum_{q = i}^{t} \sjs(u_q, u_{q+1})$. The new dual weight of $u_i$ as updated by the \sync\ procedure is $y_j(u_i) \leftarrow \yjs(u_i) + \lambda_{u_i} (\kappa - \ell_{u_i})$, or,

\begin{equation}
\label{eq:syncprop02}
|y_j(u_i)| - |\yjs(u_i)| = \kappa - \ell_{u_i} \geq \sum_{q = i}^{t} \sjs(u_q, u_{q+1}).
\end{equation}

Note that if $\langle s, u_0, u_1, ..., u_i \rangle$ is not the shortest path from $s$ to $u_i$ in $\R_j'$, then inequalities~\eqref{eq:syncprop00} and~\eqref{eq:syncprop02} are strict inequalities.

First, we address the case where there is a shortest path $\dir{P}_{s,u_{t},j}$ in $\R_j'$ with $u$ as its second vertex.  Since the path $\dir{P}_{u,v,j}$ is the shortest path from $u$ to $v$ in $\R_j'$, from the optimal substructure property, a shortest path from $s$ to $u_{t}$ is $\langle s,u,u_{2}\ldots, u_{t-1}, u_t \rangle$. Since $\ell_{u_i} \le \kappa$  , from Lemma~\ref{lem:syncprop1}, every edge on this path will have a zero slack. 

From~\eqref{eq:syncprop02}, the change in dual weight for $u_t$ is 
\begin{equation} 
\label{eq:ut}
|y_j(u_t)|-|\yjs(u_t)| \ge s^*(u_t,v).
\end{equation} 
The slack for the edge $(u_t,v)$ is given by
\begin{eqnarray*}
s(u_t,v) &=& s^*(u_t,v) - (|y_j(u_t)|-|\yjs(u_t)|) + (|y_j(v)|-|\yjs(v)|)\\
&\le& |y_j(v)|-|\yjs(v)|\\
&=& |\ty{v}|-|\yjs(v)|. 
\end{eqnarray*}
This completes the proof for (i).

Next, we address case (ii), where $u^{*}\neq u$. Note that the only edges that are leaving $s$ are to the vertices of $\boundary_j \cup \{(A_F \cup B_F) \cap V_j\}$. So, $u^*$ has to be a vertex of this set. Next, let the path $\project{s}{u_t}{j} = \langle s=s_0, s_1,\ldots,s_\alpha=u_t  \rangle$ be the shortest path from $s$ to $u_t$ with $s_1=u^*$. Therefore, from Lemma \ref{lem:syncprop1}, all edges on this path have zero slack with respect to the dual weights $y_{j}(\cdot)$. From \eqref{slackcost}, 
\begin{equation}
\label{eq:syncprop03}
|y_j(u^{*})| - |y_j(u_t)| = \phi(\project{u^{*}_}{u_t}{j}).
\end{equation}
Before the execution of the \sync\ procedure, from \eqref{slackcost} we have,
\begin{equation}
\label{eq:syncprop04}
(\sum_{q=0}^{\alpha - 1} \sjs(s_q, s_{q+1})) + |\yjs(u^{*})| - |\yjs(u_t)| = \phi(\project{u^{*}_}{u_i}{j}).
\end{equation}
Subtracting \eqref{eq:syncprop04} from \eqref{eq:syncprop03} gives,
\begin{eqnarray*}
(|y_j(u^*)| - |\yjs(u^*)|) - (|y_j(u_t)| - |\yjs(u_t)|) &=& \sum_{q=0}^{\alpha - 1}\sjs(s_q, s_{q+1}), \\
(|y_j(u^*)| - |\yjs(u^*)|) &=& (\sum_{q=0}^{\alpha - 1} \sjs(s_q, s_{q+1}))+ (|y_j(u_t)| - |\yjs(u_t)|). 
\end{eqnarray*}

Note that if $\langle s, u_0, u_1, ..., u_t \rangle$ is not the shortest path from $s$ to $u_t$ in $\R_j'$, then, as stated before inequalities~\eqref{eq:syncprop00} and~\eqref{eq:syncprop02} are strict inequalities. From applying \eqref{eq:syncprop02} we get that $|y_j(u_t)| - |\yjs(u_t)| > \sjs(u_t,v)$. Therefore,
\begin{equation}
(|y_j(u^{*})| - |\yjs(u^{*})|) >  (\sum_{q=0}^{\alpha - 1} \sjs(s_q, s_{q+1})) +  \sjs(u_t, v) \ge \sum_{(u', v') \in \dir{P}_{u^{*},v,j}} \sjs(u', v').
\end{equation}
By property (2) of the \sync\ procedure, $y_j(s_1)=\ty{s_1}$, and therefore, (ii) follows.

\end{proof}
\bibliographystyle{mystyle}
\bibliography{planarmatch}

\end{document}